\documentclass[11pt]{article}

\setlength{\textwidth}{\paperwidth}
\addtolength{\textwidth}{-6cm}
\setlength{\textheight}{\paperheight}
\addtolength{\textheight}{-4cm}
\addtolength{\textheight}{-1.1\headheight}
\addtolength{\textheight}{-\headsep}
\addtolength{\textheight}{-\footskip}
\setlength{\oddsidemargin}{0.5cm}
\setlength{\evensidemargin}{0.5cm}

\usepackage{breakurl}

\usepackage{url}            
\usepackage{wrapfig}

\usepackage{amsmath,amssymb,fullpage,graphicx,enumitem}
\usepackage{comment}
\usepackage{color}
\usepackage{setspace}
\usepackage{pdflscape}
\usepackage{enumitem}

\usepackage{amssymb,amsmath,amsthm}
\usepackage[english]{babel}

\usepackage{breakurl} 

\RequirePackage[colorlinks,citecolor=blue,urlcolor=blue]{hyperref}
\usepackage[latin1]{inputenc}
\usepackage{pifont}
\usepackage{enumitem}
\usepackage{amsmath,amssymb,amsfonts,amsthm,bbm}
\usepackage[ruled,vlined,linesnumbered,noresetcount]{algorithm2e}
\usepackage{array}
\usepackage{color}
\usepackage{comment}

\usepackage{floatpag}
\usepackage{afterpage}
\usepackage{float}
\usepackage{subfig}
\usepackage{epsfig}

\usepackage[normalem]{ulem}

\RequirePackage{natbib}
\bibliographystyle{apalike}

\usepackage{wrapfig}
\usepackage{parskip}
\usepackage{tcolorbox}
\usepackage{soul}
\usepackage{comment}
\usepackage{rotating}
\usepackage{bm}
\usepackage{array}

\usepackage{soul}

\newcommand{\norm}[1]{\left\lVert#1\right\rVert}

\renewcommand{\theequation}{\thesection.\arabic{equation}}
\numberwithin{equation}{section}

\renewcommand{\hat}{\widehat}

\renewcommand{\hat}{\widehat}

\newcommand{\bfm}[1]{\ensuremath{\mathbf{#1}}}

   \def\bI{\bfm I}

   \def\bM{\bfm M}

\def\bq{\bfm q}

\def\bt{\bfm t}     
   \def\bU{\bfm U}  
   \def\bV{\bfm V}  
     
\def\bx{\bfm x}     
\def\by{\bfm y}   \def\bY{\bfm Y}  
\def\bz{\bfm z}

                  \def\hbfeta {\hat{\bfsym {\eta}}}

 \def\htheta{\hat {\theta}}             \def\hbtheta {\hat{\bfsym {\theta}}} 
               
                                \def\bphi{\bfsym{\phi}}

 \def\hpi{\hat{\pi}}

\def\bbR {\mathbb R}
\def\bbP {\mathbb P}
\def\calC{\mathcal C}
\def\calF{\mathcal F}
\def\calG{\mathcal G}

\def\calJ{\mathcal J}

\def\calL{\mathcal L}

\def\calM{\mathcal M}

\def\calP{\mathcal P}
\def\calQ{\mathcal Q}
\def\calS{{\mathcal S}}

\def\calY{\mathcal Y}
\def\calV{\mathcal V}
\def\calI{\mathcal I}

\def\wtG    {\widetilde G}
\def\hp {\hat p}

\def\tbtheta {\tilde{\btheta}}
\def\tbfeta  {\tilde{\bfeta}}
\def\talpha  {\tilde{\alpha}}

\def\aG     {\check{G}}
\def\atheta {\check{\theta}}
\def\abtheta{\check{\btheta}}
\def\abfeta {\check{\bfeta}}
\def\aeta   {\check{\eta}}

\def\aalpha {\check{\alpha}}
\def\aomega {\check{\omega}}

\newcommand{\name}{GSF }
\def\package{\texttt{GroupSortFuse}}

\def\bbE{\mathbb E}

\def\hG      {\hat{G}}

\def\KL{\text{KL}}

\newcommand{\bfsym}[1]{\ensuremath{\boldsymbol{#1}}}

 \def\bbeta{\bfsym \beta}
 \def\bgamma{\bfsym \gamma}             
            
 \def\bfeta{\bfsym {\eta}}              
 \def\bmu{\bfsym {\mu}}                 
                  \def\bPsi {\bfsym {\Psi}}
 \def\btheta{\bfsym {\theta}}           \def\bTheta {\bfsym {\Theta}}
           
              \def\bSigma{\bfsym \Sigma}

 \def\brho   {\bfsym {\rho}}
 
 \def\bxi{\bfsym {\xi}}
 
 \def\bpi{\bfsym {\pi}}
 
 \def\bvarsigma{\bfsym{\varsigma}}


\DeclareMathOperator*{\argmax}{argmax}
\DeclareMathOperator*{\argmin}{argmin}

\DeclareMathOperator*{\esssup}{ess \, sup}

\DeclareMathOperator{\tr}{tr}

\newtheorem{lemma}{Lemma}
\newtheorem{theorem}{Theorem}
\newtheorem{proposition}{Proposition}
\newtheorem{corollary}{Corollary}

\newcounter{CondCounter}
  
  \theoremstyle{theorem}
\newtheorem{definition}{Definition}

\newenvironment{customthm}[1]
  {\innercustomthm}
  {\endinnercustomthm}

\newenvironment{customlem}[1]
  {\innercustomlem}
  {\endinnercustomlem}

\newenvironment{customprop}[1]
  {\innercustomprop}
  {\endinnercustomprop}

\SetKwFor{kwEstep}{E-Step}{:}{}%
\SetKwFor{kwMstep}{M-Step}{:}{}%

\SetKwInput{init}{Input}

\usepackage{graphicx}
\graphicspath{{images/}}

\setlength\parindent{15pt}
\setlist{  
  listparindent=\parindent,
  parsep=3pt,
}

\makeatletter
\def\thm@space@setup{%
  \thm@preskip=\parskip \thm@postskip=0pt
}
\makeatother

\definecolor{grey}{HTML}{C0C0C0}
\definecolor{brown}{HTML}{595035}
\definecolor{navyblue}{HTML}{0B0084} 
\definecolor{violet}{rgb}{0, .42, 0} 
\definecolor{dgreen}{rgb}{0, .42, 0}


\date{\vspace{-5ex}}

\begin{document}

\begin{center} {\LARGE{\bf{Estimating the Number of Components in Finite \\[0.06in] Mixture Models
via the Group-Sort-Fuse Procedure}}}
 
\vspace*{.3in}

{\large 


{ 
\begin{tabular}{cccc}
Tudor Manole$^1$, Abbas Khalili$^2$\\
\end{tabular}
 
{
\vspace*{.1in}
\begin{tabular}{c}
				$^1$Department of Statistics and Data Science, Carnegie Mellon University\\
				$^2$Department of Mathematics and Statistics, McGill University  \\
 				\texttt{tmanole@andrew.cmu.edu, abbas.khalili@mcgill.ca}
\end{tabular} 
}

\vspace*{.1in}

}}

\today

\end{center}
\vspace*{.2in}

 \begin{abstract}
Estimation of the number of components (or order) of 
a finite mixture model is a long standing and challenging problem in statistics.
We propose the Group-Sort-Fuse (GSF) procedure---a new penalized likelihood approach
for simultaneous estimation of 
the order and mixing
measure in multidimensional finite mixture models. 
Unlike methods which fit and compare mixtures with varying 
orders 
using criteria involving model complexity, our approach directly penalizes a 
continuous function of the model parameters. More specifically,
given a conservative upper bound on the order, 
the \name groups and sorts mixture component parameters 
to fuse those which are redundant. 
For a wide range of finite mixture models, we show that
the \name is consistent in estimating the true 
mixture order
and achieves the $n^{-1/2}$ convergence 
rate for parameter estimation up to polylogarithmic factors. 
The GSF is implemented for several univariate and multivariate
mixture models in the R package \package. 
Its finite sample performance is supported by a
thorough simulation study, and its application
is illustrated on two real data examples.
\end{abstract} 
 
\section{Introduction}

Mixture models are a flexible tool for modelling data from a population consisting of 
multiple hidden homogeneous subpopulations. Applications in economics \citep{BOSCH2010}, 
machine learning \citep{goodfellow2016}, 
genetics \citep{BECHTEL1993} and other life sciences \citep{THOMPSON1998, MORRIS1996} 
frequently employ mixture distributions. A comprehensive review of statistical inference and 
applications of finite mixture models can be found in the book by \cite{MCLACHLAN2000}.

Given integers $N, d \geq 1$, let ${\cal F} = \{ f(\by; \btheta) : \btheta= (\theta_1, \ldots, \theta_d)^{\top} 
\in \Theta \subseteq \bbR^d, \ \by \in \mathcal{Y} \subseteq \bbR^N \}$ 
be a parametric family of density  
functions with respect to a $\sigma$-finite measure $\nu$, 
with a compact parameter space $\Theta$.
The density function of a finite mixture model with respect to $\calF$ is given by
\begin{equation}
\label{mixtureModel}
p_G(\by) = \int_{\Theta} f(\by; \btheta) \text{d} G(\btheta) = 
\sum_{j=1}^K \pi_j f(\by; \btheta_j),
\end{equation}
where
\begin{equation}
G = \sum_{j = 1}^K \pi_j \delta_{\btheta_j}
\end{equation}
 is the mixing measure with $\btheta_j = (\theta_{j1}, \dots, \theta_{jd})^{\top}\in \Theta$, 
 $j=1, \dots, K$,
and the mixing probabilities $0 \leq \pi_j \leq 1$ 
satisfy $\sum_{j=1}^K \pi_j = 1$. 
Here, $\delta_{\btheta}$ denotes a Dirac measure placing mass at $\btheta \in \Theta$. 
The $\btheta_j$ are said to be atoms of $G$, and $K$ 
is called the \textit{order} of the model. 

 Let $\bY_1, \dots, \bY_n$ be a random sample from a finite mixture model \eqref{mixtureModel}
 with true mixing measure $G_0 = \sum_{j=1}^{K_0} \pi_{0j} \delta_{\btheta_{0j}}$.  
 The true order $K_0$ is defined as the smallest number 
 of atoms of $G_0$ for which the component densities $f(\cdot; \btheta_{0j})$ are different, and 
 the mixing proportions $\pi_{0j}$ are non-zero.
 This paper is concerned with parametric 
 estimation of 
 $K_0$.
 
In practice, the order of a finite mixture model may not be known. An assessment of the order is 
important even if it is not the main object of study. Indeed, a mixture model whose order is less 
than the true number of underlying subpopulations provides a poor fit, while a model with too 
large of an order, which is said to be overfitted, 
may be overly complex and hence uninformative. From a theoretical standpoint, estimation 
of overfitted finite mixture models 
leads to a deterioration in rates of convergence of standard parametric estimators. Indeed, 
given a consistent estimator 
$G_n$ of $G_0$ with $K > K_0$ atoms, the parametric $n^{- 1/ 2}$ convergence rate is 
generally not achievable. 
Under the so-called second-order strong identifiability condition, 
\cite{CHEN1995} and \cite{HO2016strong} showed that 
the optimal pointwise rate of 
convergence in estimating $G_0$ is bounded below by $n^{-1 /4}$ with respect to an appropriate 
Wasserstein metric. In particular, this rate is achieved by the 
maximum likelihood estimator up to a polylogarithmic factor. 
Minimax rates of convergence have also been established 
by \cite{HEINRICH2018}, under stronger regularity conditions on the parametric family $\calF$. 
Remarkably, these rates deteriorate as the upper bound $K$ increases. 
This behaviour has also been noticed for pointwise estimation 
rates 
in mixtures which do not satisfy the second-order strong identifiability assumption---see 
for instance \cite{CHEN2003} and \cite{HO2016weak}. 
These results warn against fitting finite mixture models with an incorrectly specified order. 
In addition to poor convergence rates, 
the consistency of $G_n$ does not guarantee the consistent estimation 
of the mixing probabilities 
and atoms of the true mixing measure, 
though they are of greater interest in most applications.
 
The aforementioned challenges have resulted in the development of many
methods for estimating the order of a finite mixture model. It is difficult to provide 
a comprehensive list of the research on this problem, and thus we give a selective overview. 
One class of methods involves hypothesis testing on the order using likelihood-based 
procedures \citep{MCLACHLAN1987, DACUNHA1999, LIUSHAO2003}, and 
 the EM-test \citep{CHEN2009, LI2010}. 
These tests typically assume knowledge of a candidate order; 
when such a candidate is unavailable, estimation methods can be employed. 
Minimum distance-based methods for estimating $K_0$ have been considered by \cite{CHEN1996}, 
\cite{JAMES2001}, \cite{WOO2006}, \cite{HEINRICH2018}, and \cite{HO2017robust}.
The most common parametric methods 
involve the use of an information criterion, whereby a penalized likelihood function is evaluated for a sequence of candidate models. Examples  
include Akaike's Information Criterion (AIC; \citet{AKAIKE1974}) and the Bayesian Information Criterion (BIC; \citet{SCHWARZ1978}).  
The latter is arguably the most frequently used method 
for mixture order estimation \citep{LEROUX1992, KERIBIN2000, MCLACHLAN2000},  
though it was not originally developed for non-regular models.
This led to the development of information criteria such as
the Integrated Completed Likelihood (ICL; \citet{BIERNACKI2000}), 
and the Singular BIC (sBIC; \cite{DRTON2017}).   
Bayesian approaches include the method of Mixtures of Finite Mixtures, 
whereby a prior is placed on the number of components 
\citep{nobile1994,RichardsonGreen1997,STEPHENS2000,miller2018},
and model selection procedures based on Dirichlet Process mixtures, 
such as those of \cite{ISHWARAN2001}
and the Merge-Truncate-Merge method of 
\cite{guha2019}.
Motivated by regularization techniques in regression, \cite{CHENKH2008} proposed a 
penalized likelihood method for order estimation 
in finite mixture models with
a one-dimensional parameter space $\Theta$,
where the regularization is applied to the difference between sorted 
atoms of the overfitted mixture model. 
\cite{HUNG2013} adapted this method to estimation of  
the number of states in Gaussian Hidden Markov models, which was also limited to 
one-dimensional parameters for different states. 
Despite its model selection consistency and good finite sample
performance, the extension of this method to multidimensional mixtures has not been addressed.
In this paper, we take on this task and propose a far-reaching generalization 
called the Group-Sort-Fuse (GSF) procedure.

The \name postulates an overfitted
finite mixture model with a large tentative order $K > K_0$.
The true order $K_0$ and the mixing measure $G_0$
are simultaneously estimated by merging redundant mixture components, 
by applying two penalties to the log-likelihood function of the model. 
The first of these penalties groups the
estimated atoms, while the second penalty  
shrinks the distances between those which are in high proximity. 
The latter is achieved by applying a sparsity-inducing regularization function
to consecutive distances between these atoms, sorted using 
a so-called \textit{cluster ordering} (Definition \ref{cluster-order}).
Unlike most existing methods, this form of regularization, 
which uses continuous functions of the model parameters as 
penalties, circumvents the 
fitting of mixture models of all orders $1, 2, \dots, K$. 
In our simulations we noticed that using
EM-type algorithms \citep{DEMPSTER1977},
the \name is less sensitive to the choice of starting values than methods which 
involve maximizing likelihoods of mixture models  
with different orders. By increasing the amount of regularization,
the \name produces a series of fitted mixture models with decreasing orders, as shown 
in Figure \ref{fig:coeff-fig1} for a simulated dataset. 
This qualitative representation, 
inspired by coefficient plots in penalized regression \citep{FRIEDMAN2008}, 
can also provide insight on the mixture order and parameter estimates for 
purposes of exploratory data analysis.
\begin{figure}[t]
    \includegraphics[width=.95\textwidth]{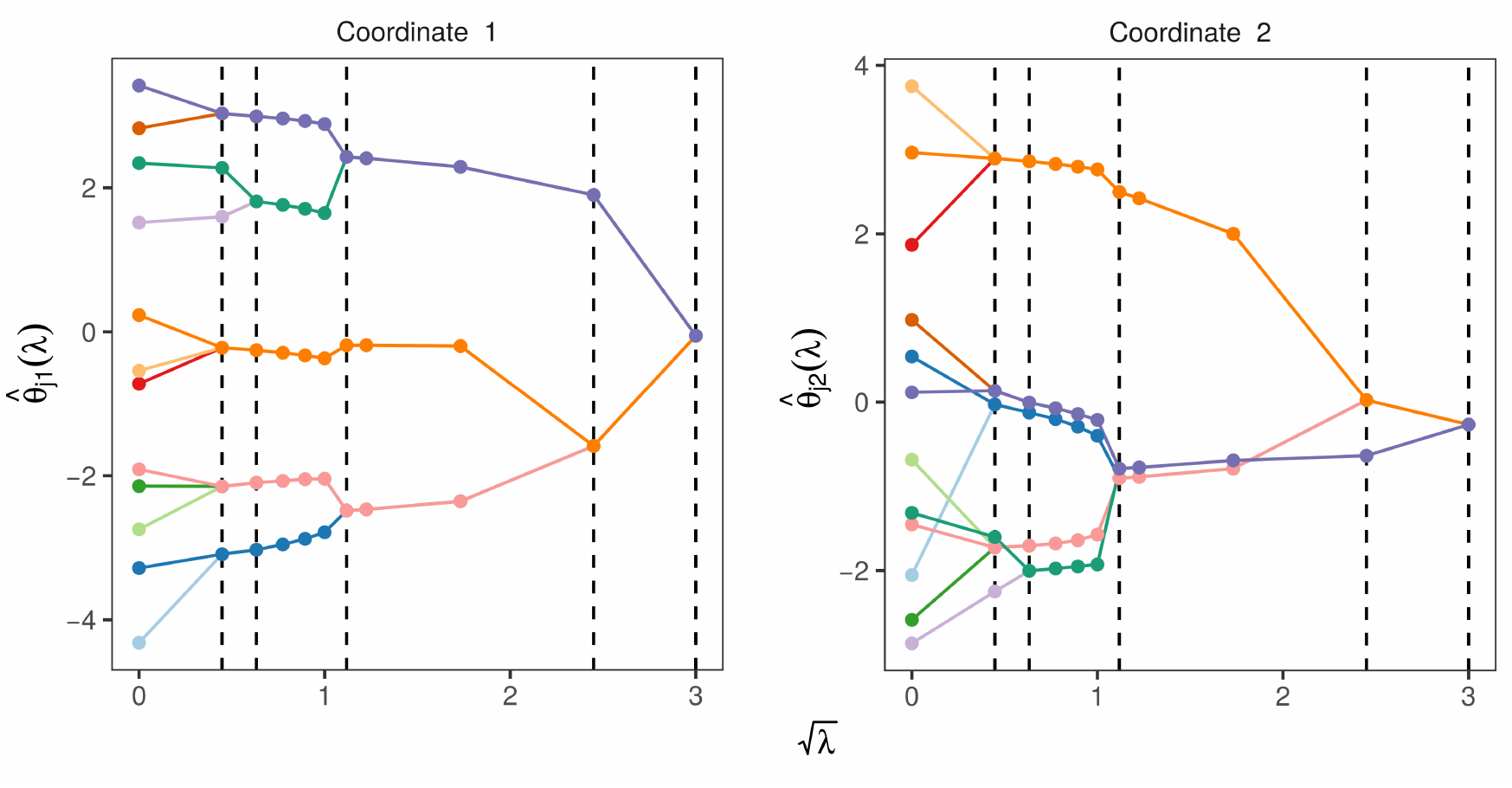}%
\caption{\label{fig:coeff-fig1}
Regularization plots based on simulated data from a 
location-Gaussian mixture  
with $K_0=5,d=2$.
The fitted atoms
$\hbtheta_j(\lambda) = (\htheta_{j1}(\lambda), \htheta_{j2}(\lambda))^\top$, $j=1, \dots, K=12$, 
are plotted against a regularization parameter $\lambda$. 
Across coordinates, each estimated atom is identified by a unique color.} 
\end{figure}

The main contributions of this paper are summarized as follows.
For a wide range of second-order strongly identifiable parametric families, the 
\name is shown to consistently
estimate the true order $K_0$, and achieves  the $n^{-1/2}$ rate of 
convergence in parameter estimation up to polylogarithmic factors. 
To achieve this result, the sparsity-inducing penalties used in the GSF 
must satisfy conditions which are nonstandard in the regularization literature. 
 We also derived, for the first time, sufficient conditions
 for the strong identifiability of multinomial mixture models.
Thorough simulation studies based on 
multivariate location-Gaussian and 
multinomial mixture models
show that the \name performs well in practice. 
The method is implemented for several univariate and multivariate
mixture models in the R package \package\footnote{
\url{https://github.com/tmanole/GroupSortFuse}}.

The rest of this paper is organized as follows.  
We describe the \name method, and compare it to a naive alternative 
in Section \ref{sec:GSF}. Asymptotic properties of the method are studied 
in Section \ref{sec:asymptotic}. 
Our simulation results and two real data examples are respectively 
presented in 
Sections \ref{sec:simulation} and \ref{sec:data},
and Supplement E.6. 
We close with some discussions in 
Section \ref{sec:discussion}. 
Proofs, numerical implementation, 
and additional simulation results are given in Supplements A--F.

\noindent{\bf Notation}. 
Throughout the paper, $|A|$ denotes the cardinality of a set $A$, 
and for any integer $K \geq 1$, $A^K = A \times ... \times A$ denotes 
the $K$-fold Cartesian product of $A$ with itself. $S_K$ denotes the set 
of permutations on $K$ elements $\{1, 2,\ldots, K \}$.
Given a vector $\bx =(x_1, \dots, x_d)^{\top} \in \bbR^d$, we denote its $\ell_p$-norm  
by $\norm{\bx}_p = \left(\sum_{j=1}^d |x_j|^p \right)^{1/p}$, for all $1 \leq p < \infty$. 
In the case of the Euclidean norm $\norm{\cdot}_2$, we omit the subscript and write $\norm{\cdot}$. 
The diameter of a set $A \subseteq \bbR^d$ is denoted $\text{diam}(A) = \sup\{\|x-y\|: x,y \in A\}$.
Given two sequences of real numbers $\{a_n\}_{n=1}^{\infty}$ and $\{b_n\}_{n=1}^{\infty}$, we write $a_n \lesssim b_n$ to indicate that there exists a constant $C > 0$ such that $a_n \leq C b_n$ for all $n \geq 1$. We write $a_n \asymp b_n$ if $a_n \lesssim b_n \lesssim a_n$. For any $a, b \in \bbR$, we write $a \wedge b = \min\{a, b\}$, $a \vee b = \max\{a, b\}$, and $a_+ = a \vee 0$. 
Finally, we let ${\cal G}_K = \{G: G = \sum_{j = 1}^K \pi_j \delta_{\btheta_j},~
\btheta_j \in \Theta, \pi_j \ge 0, \sum_{j=1}^K \pi_j = 1\}$
be the class of mixing measures with at most $K$ components.

\noindent
\textbf{Figures.} All the numerical and algorithmic details of the illustrative figures 
throughout this paper 
are given in Section 
\ref{sec:simulation} and Supplement D. 

\section{The Group-Sort-Fuse (GSF) Method}
\label{sec:GSF}
Let $\bY_1, \dots, \bY_n$ be a  
random sample arising from $p_{G_0}$, where $G_0 \in \calG_{K_0}$ 
is the true mixing measure with unknown order $K_0$. 
Assume an upper bound $K$ on 
$K_0$ is known---further discussion on the choice of $K$ is given 
in Section~\ref{remark}.  
The log-likelihood function of a mixing measure $G$ with $K > K_0$ 
atoms  
is said to be overfitted, and is defined by
\begin{equation}
\label{loglik}
l_n(G) = \sum_{i=1}^n \log p_G(\bY_i). 
\end{equation} 
The overfitted maximum likelihood estimator (MLE) of $G$ is given by 
\begin{equation}
\label{mle}
\bar G_n = \sum_{j=1}^{K} \bar{\pi}_j \delta_{\bar{\btheta}_j}
                    = \argmax_{G \in \calG_K} l_n(G).
 \end{equation}
As discussed in the Introduction, 
though the overfitted MLE is consistent in estimating $G_0$ under suitable metrics, 
it suffers from slow rates of convergence, and 
there may exist atoms of $\bar G_n$ whose corresponding  
mixing probabilities
vanish, and do not converge to any atoms of $G_0$.
Furthermore, from a model selection standpoint, $\bar G_n$ typically 
has order greater than $K_0$. 
In practice, $\bar G_n$ therefore overfits
the data in the following two ways which we will refer to below: 
(a) certain fitted mixing probabilities $\bar\pi_j$ may be near-zero, and
(b) some of the estimated atoms $\bar \btheta_j$  
may be in high proximity to each other.
In this section, we propose a penalized maximum likelihood approach which 
 circumvents both types of overfitting, thus leading to a consistent estimator of $K_0$.
               
Overfitting (a) can readily be addressed by imposing a lower bound on the mixing probabilities,
as was considered by \cite{HATHAWAY1986}. This lower bound, however, 
could be particularly challenging to specify in overfitted mixture models.
An alternative approach is to penalize against near-zero mixing
probabilities \citep{CHEN1996}.  
Thus, we begin by considering  
the following preliminary  
penalized log-likelihood function 
\begin{equation}
\label{tildeG} 
l_n(G) - \varphi(\pi_1, \dots, \pi_K), \quad \ G \in \calG_K,
\end{equation}
where $\varphi\equiv \varphi_n$ is a nonnegative penalty function  such that 
$\inf_{n\geq 1}\varphi_n(\pi_1, \dots, \pi_K) \to \infty$ as $\min_{1 \le j \le K} \pi_j $ $\to 0$.
We further require that $\varphi$ is invariant to relabeling
of its arguments, i.e.
$\varphi(\pi_1, \dots, \pi_K) = \varphi(\pi_{\tau(1)}, \dots, \pi_{\tau(K)})$,
for any permutation $\tau \in S_K$. 
Examples of $\varphi$ are given at the end of this section.  
The presence of this penalty ensures that the 
maximizer of \eqref{tildeG} has mixing probabilities which stay bounded away from zero.
Consequently, as shown in Theorem \ref{paramConsistency} below, 
this preliminary estimator is consistent in estimating the atoms of $G_0$, 
unlike the overfitted MLE in \eqref{mle}.
It does not, however, consistently estimate the order $K_0$ of $G_0$, 
as it does not address overfitting (b). 

Our approach is to introduce a second penalty 
which has the effect of merging
fitted atoms that are in high proximity. We achieve this by applying a 
sparsity-inducing penalty $r_{\lambda_n}$ to the distances between appropriately chosen
pairs of atoms of the overfitted mixture model with order $K$. 
It is worth noting that one could naively apply $r_{\lambda_n}$ to 
all ${K \choose 2}$ pairwise atom distances. Our simulations, however, suggest that
such an exhaustive form of penalization increases the sensitivity of the estimator 
to the upper bound $K$, as shown in Figure \ref{fig:comparisonFigure}.
Instead, given a carefully chosen sorting of the atoms in $\bbR^d$, 
our method merely penalizes their $K-1$
consecutive distances. This results in the double penalized 
log-likelihood $L_n(G)$ in \eqref{penloglik}, which we now describe 
using the following definitions. 
 
\begin{definition}
\label{cluster-part}
Let $\bt_1, \dots, \bt_K \in \Theta \subseteq \bbR^d$, 
and let $\calP = \{ \calC_1, \dots, \calC_H \}$ be a partition of 
$\{ \bt_1, \dots, \bt_K \}$, 
for some integer $1 \leq H \leq K$. Suppose
\begin{equation}
\label{clusterProperty}
\displaystyle 
\max_{\bt_i, \bt_j \in \calC_h} \norm{\bt_i - \bt_j} <
\min_{\substack{\bt_i \in \calC_h \\ \bt_l \not\in \calC_h}} \norm{\bt_i - \bt_l}, 
\quad  h = 1, \dots, H.
\end{equation}
Then, each set $\calC_h$ is said to be an atom cluster, and $\calP$ 
is said to be a cluster partition.
\end{definition}
   
According to Definition \ref{cluster-part}, 
a partition is said to be a cluster
partition if the within-cluster distances
between atoms are always smaller than the between-cluster distances.
The penalization in \eqref{tildeG} (asymptotically) 
induces a cluster partition $\{\calC_1, \dots, \calC_{K_0}\}$ of the estimated atoms.
Heuristically, the estimated atoms falling within each atom cluster $\calC_h$ approximate
some true atom $\btheta_{0j}$, and the goal of the \name is to merge these estimates, 
as illustrated in Figure \ref{fig:clusterPlot}. To do so, the \name hinges on  
the notion of \textit{cluster ordering}---a generalization
of the natural ordering on the real line, which we now define.

\begin{definition}
Let $\bt = (\bt_1, \dots, \bt_K) \in \Theta^K$. A cluster ordering is a 
permutation $\alpha_{\bt}\in S_K$
such that the following two properties hold.
\begin{enumerate}
\item[(i)] \textit{Symmetry.} For any permutation $\tau \in S_K$, if
$\bt' = (\bt_{\tau(1)}, \dots, \bt_{\tau(K)})$, then 
$\alpha_{\bt'} = \alpha_{\bt}$.
\item[(ii)] \textit{Atom Ordering.} For any integer 
$1 \leq H \leq K$ and for any cluster partition $\calP = \{ \calC_1, \dots, \calC_H \}$ 
of $\{\bt_1, \dots, \bt_K \}$, $\alpha_{\bt}^{-1}(\{j: \bt_j \in \calC_h \})$ is a 
set of consecutive integers for all $h=1, \dots, H$. 
\end{enumerate}
\label{cluster-order}
\end{definition}

\begin{figure}[H]
\begin{center}
  \includegraphics[width=4in]{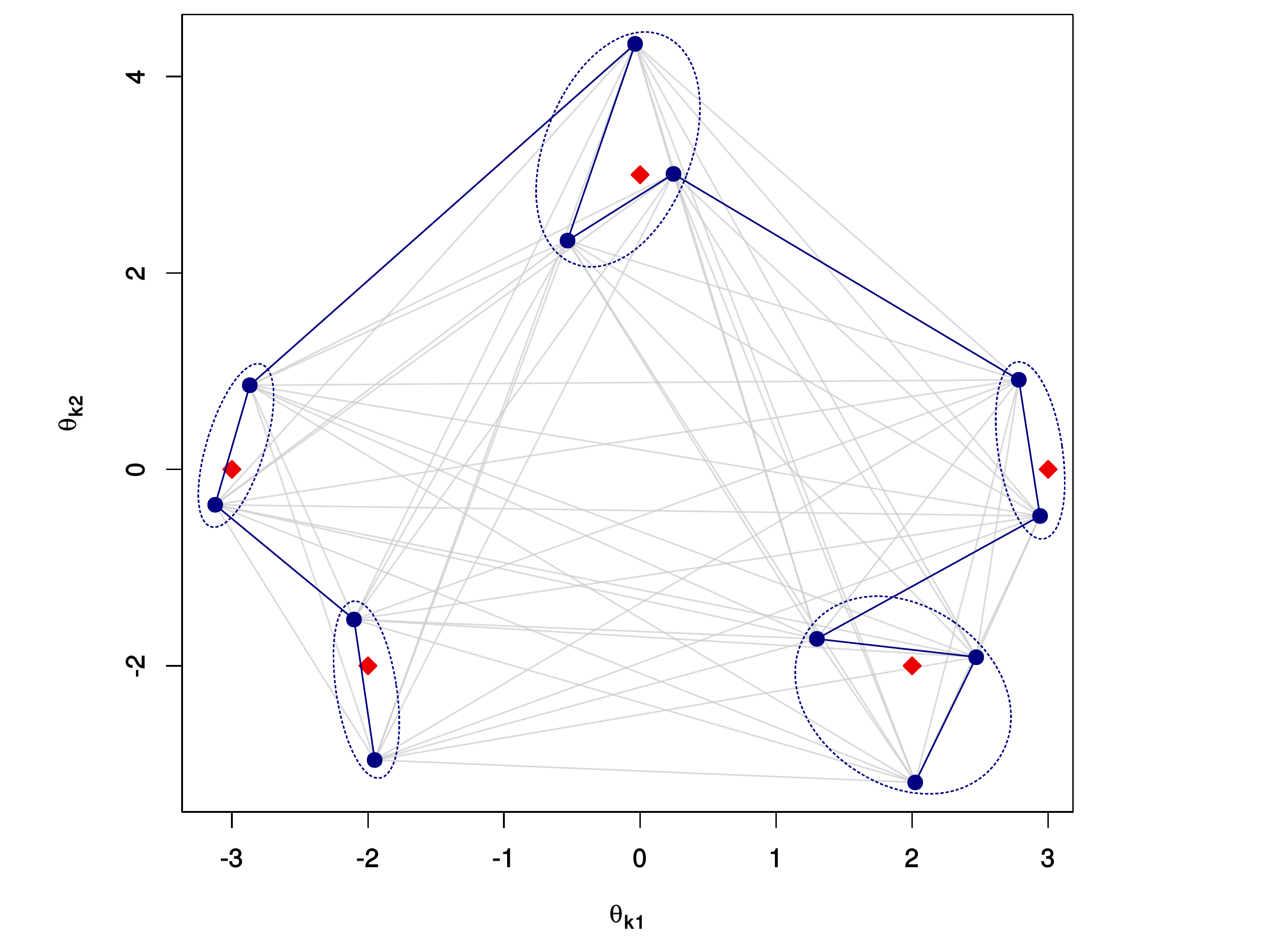} 
\end{center}
  \caption{Illustration of a cluster partition $\calP$ and a cluster ordering $\alpha_{\tilde\btheta}$ 
  with $K=12$, 
  based on the  simulated sample used in 
  Figure \ref{fig:coeff-fig1}, with
  true atoms $\btheta_{01}, \ldots, \btheta_{05}$
  denoted by lozenges ({\color{red}\ding{117}}),
  and 
  atoms $\tbtheta=(\tbtheta_1, \dots, \tbtheta_{12})$, obtained
  by maximizing the penalized log-likelihood \eqref{tildeG}, 
  denoted by disks ({\color{navyblue}\Large $\bullet$}).
  The ellipses ({\color{navyblue} \dots}) represent a choice of $\calP$    
  with $K_0=5$ atom clusters. The blue line (\textbf{\color{navyblue}---}) represents a cluster 
  ordering $\alpha_{\tbtheta}$, in the sense that $\alpha_{\tbtheta}(1)$ is the index of 
  the bottommost point, $\alpha_{\tbtheta}(2)$ 
  is the index of the following point on the line, etc. The grey lines
  ({\color{grey}---}) represent all the pairwise distances penalized by the naive method 
  defined in Figure \ref{fig:comparisonFigure}. }
\label{fig:clusterPlot}	
\end{figure}       
If $t_1, \dots, t_K \in \Theta \subseteq \bbR$ and $\bt=(t_1, \dots, t_K)$, 
then the permutation $\alpha_{\bt} \in S_K$ which induces the natural ordering
$t_{\alpha_{\bt}(1)} \leq \dots \leq t_{\alpha_{\bt}(K)}$ 
is a cluster ordering. 
When $\Theta \subseteq \bbR^d$,   
property (ii) is satisfied for any permutation $\alpha_{\bt} \in S_K$ such that
\begin{equation}
\label{alpha-example}
\alpha_{\bt}(k) = \displaystyle\argmin_{\substack{
		1 \leq j \leq K \\ 
		j \not\in \{\alpha_\bt(i):1 \leq i \leq k - 1 \}}} 
	\norm{\bt_j - \bt_{\alpha_\bt(k-1)}}, \quad k = 2,  \dots, K.
\end{equation}
$\alpha_{\bt}$ further satisfies property (i) provided $\alpha_{\bt}(1)$ is invariant to relabeling of 
the components of $\bt$. Any such 
choice of $\alpha_{\bt}$ is therefore 
a cluster ordering in $\bbR^d$, and an example is shown in 
Figure \ref{fig:clusterPlot} based on a simulated sample.

Given a mixing measure $G = \sum_{j=1}^K \pi_j \delta_{\btheta_j}$ with 
$\btheta=(\btheta_1, \dots, \btheta_K)$,
let $\alpha_{\btheta}$ be a cluster ordering. 
For ease of notation, in what follows we write $\alpha \equiv \alpha_{\btheta}$. 
Let $\bfeta_j = \btheta_{\alpha(j+1)} - \btheta_{\alpha(j)}$, 
for all $j=1, \dots, K-1$. We define the penalized log-likelihood function 
\begin{equation}
L_n(G) =
 l_n(G) - \varphi(\pi_1, \dots, \pi_K)
 - n \sum_{j=1}^{K-1} r_{\lambda_n}(\norm{\bfeta_j}; \omega_j),
\label{penloglik}
\end{equation}
where the penalty $r_{\lambda_n}(\eta; \omega)$ is a non-smooth function at $\eta= 0$ for all $\omega > 0$, 
satisfying conditions (P1)--(P3) discussed in Section \ref{sec:asymptotic}. 
In particular, $\lambda_n \geq 0$ 
is a regularization
parameter, and $\omega_j\equiv \omega_j(G) > 0$ are possibly random weights  
as defined in Section \ref{sec:asymptotic}.  
Property (i) in Definition \ref{cluster-order}, and the invariance of $\varphi$
to relabelling of its arguments, 
guarantee that $L_n(G)$ is well-defined 
in the sense that it does not change upon relabelling the atoms of $G$.
Finally, the Maximum Penalized Likelihood Estimator (MPLE) of  
$G$ is given by
\begin{equation}
\label{mple1}
\hat G_n = \sum_{j=1}^{K} \hpi_j \delta_{\hat{\btheta}_j}
           = \argmax_{G \in \calG_K} L_n(G).
 \end{equation}
To summarize,
the penalty $\varphi$ ensures  
 the asymptotic existence of a cluster partition 
$\{\calC_1, \dots, \calC_{K_0}\}$ of $\{\hbtheta_1, \dots, \hbtheta_K\}$. 
 Heuristically, the estimated atoms in each $\calC_h$ approximate one of the 
 atoms of $G_0$, and the goal of the GSF is to merge their values to be equal.
  To achieve this, Property (ii) of Definition 2 implies that 
any cluster ordering $\alpha$ is amongst the permutations in $S_K$ which maximize
the number of indices $j$ such that $\btheta_{\alpha(j)},\btheta_{\alpha(j+1)} \in \calC_h$, 
and minimize the number of indices $l$ such that $\btheta_{\alpha(l)} \in \calC_h$ and 
$\btheta_{\alpha(l+1)} \not\in \calC_h$, for all $h=1, \dots, K_0$.
Thus our choice of $\alpha$ maximizes the number of penalty 
terms $r_{\lambda_n}(\norm{\bfeta_j}; \omega_j)$
acting on distances between atoms of the same atom cluster $\calC_h$.  
The non-differentiability of $r_{\lambda_n}$ at zero ensures that, asymptotically,   
$\hbfeta_j = \boldsymbol 0$ or equivalently $\hbtheta_{\alpha(j)} = \hbtheta_{\alpha(j+1)}$
 for certain indices $j$, and thus the effective order of $\hat G_n$ 
becomes strictly less than the postulated upper bound $K$.
This is how the GSF simultaneously estimates both the mixture order 
and the mixing measure. The choice of the tuning parameter $\lambda_n$
determines the size of the penalty $r_{\lambda_n}$ and thus the estimated mixture order. 
In Section \ref{sec:asymptotic}, under certain regularity conditions, 
we prove the existence of a sequence $\lambda_n$
for which $\hat G_n$ has order $K_0$ 
with probability tending to one, 
and in Section \ref{sec:simulation} we discuss data-driven 
choices of $\lambda_n$.
\begin{figure}[H]
\begin{center}
  \includegraphics[width=5in,height=3.5in]{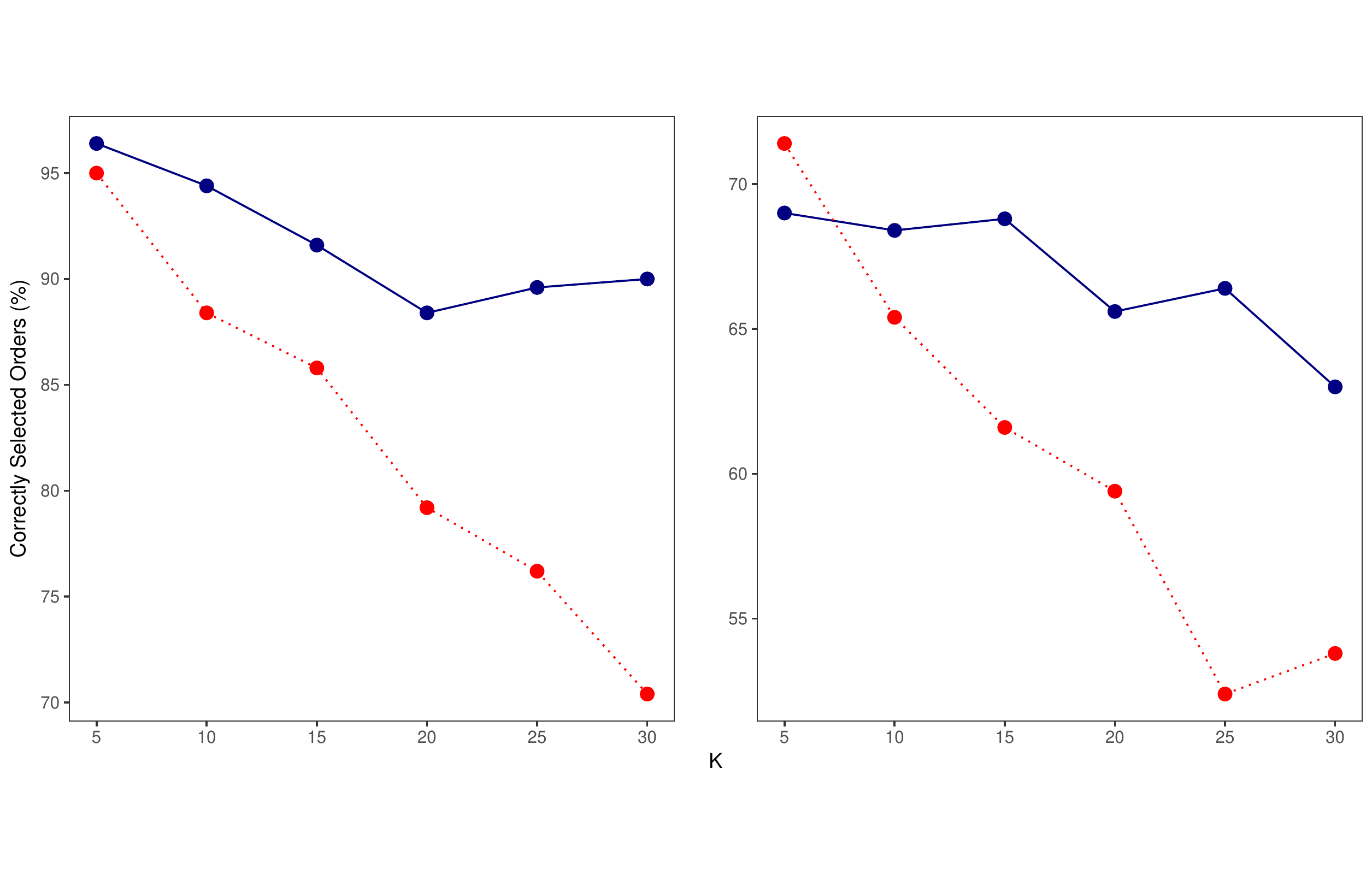}
  \vspace{-.5in}
  \caption{A comparison of the GSF ({\color{blue}---}), 
  and the naive alternative ({\color{red} \dots}) given by 
$\argmax_{G \in \calG_K} \left\{ l_n(G) - \varphi(\pi_1, \dots, \pi_K) - 
n \sum_{j\neq k} r_{\lambda_n}(\norm{\btheta_j - \btheta_k}; \omega_{jk})\right\}$.
 The results are based on 500 simulated samples of size $n=200$ from 
the bivariate Gaussian mixture Models F.1 (left, $K_0=2$) and 
F.2 (right, $K_0=3$) given in Supplement F. 
Each point represents the percentage of times that a method with 
varying upper bounds $K$ correctly estimated  $K_0$. 
} 
\label{fig:comparisonFigure}	
\end{center}
\end{figure}

{\bf Examples of the penalties $\varphi$ and $r_{\lambda_n}$.}
We now discuss some examples of penalty functions $\varphi$ and $r_{\lambda_n}$. 
The functions $\varphi(\pi_1, \dots, \pi_K) \propto -\sum_{j=1}^K \log\pi_j$ 
and $\varphi(\pi_1, \ldots, \pi_K) \propto \sum_{j=1}^K \pi^{-\iota}_j$ 
(for some $\iota > 0$) 
were 
used by \cite{CHEN1996} in the context of distance-based methods for 
mixture order estimation. As seen in Supplement D.1, 
the former is computationally convenient for EM-type algorithms,
and we use it in all demonstrative examples 
throughout this paper.
\cite{LI2009} also discuss the function 
$\varphi(\pi_1, \ldots, \pi_K) \propto -\min_{1 \le j \le K} \log \pi_j$ in 
the context of hypothesis testing for the mixture order,
which is more severe (up to a constant) than the former two penalties.

Regarding $r_{\lambda_n}$, satisfying conditions (P1)--(P3) in Section \ref{sec:asymptotic}, 
 we consider the following three penalties.  
For convenience, the first two penalties are written 
in terms of their first derivatives with respect to $\eta$.
\begin{enumerate}
\item The Smoothly Clipped Absolute Deviation 
(SCAD; \citet{FAN2001}),
\[ 
\label{scad}
r'_{\lambda_n}(\eta; \omega) \equiv r'_{\lambda_n}(\eta) = 
  \lambda_n ~I \{ |\eta| \leq \lambda_n \} +
  \frac{(a \lambda_n - |\eta|)_+}{a-1} ~I\{ |\eta| > \lambda_n \}, \qquad a > 2.
\] 
\item The Minimax Concave Penalty (MCP; \citet{ZHANG2010}), 
\[ 
\label{mcp}
r'_{\lambda_n}(\eta; \omega) \equiv r'_{\lambda_n}(\eta) = \left(\lambda_n - \frac{|\eta|}{a} \right)_+,
\qquad a > 1.
\] 

\item The Adaptive Lasso (ALasso; \citet{ZOU2006}), 
\[
r_{\lambda_n}(\eta; \omega) = \lambda_n w |\eta|.
\]
\end{enumerate}
The Lasso penalty $r_{\lambda_n}(\eta; \omega) = \lambda_n|\eta|$
does not satisfy all the conditions (P1)--(P3), and is further 
discussed in Section \ref{sec:asymptotic}.

\section{Asymptotic Study}
\label{sec:asymptotic}
In this section, we study asymptotic properties of the GSF,  
beginning with preliminaries. 
We also introduce more notation in the sequence 
that it will be needed. 
Throughout this section, except where otherwise stated, 
we fix $K \geq K_0$.

\subsection{\bf Preliminaries}
\label{prelim} 
Inspired by \cite{NGUYEN2013}, we 
analyze the convergence of 
mixing measures in $\calG_K$
using the Wasserstein distance. 
Recall that the Wasserstein distance of order $r \geq 1$ between 
two mixing measures $G = \sum_{j=1}^{K} \pi_j\delta_{\btheta_j} $ 
and $G' = \sum_{k=1}^{K'} \pi'_k\delta_{\btheta_k'} $ is given by
\begin{equation}
W_r(G, G') = \left( \inf_{\bq \in \calQ(\bpi, \bpi')} 
\sum_{j=1}^K \sum_{k=1}^{K'} q_{jk} \norm{\btheta_j - \btheta_k'}^r \right)^{\frac 1 r},
\end{equation}
where $\calQ(\bpi, \bpi')$ denotes the set of joint probability distributions 
$\bq = \{q_{jk}: 1 \leq j \leq K, \ 1 \leq k \leq K'\}$ supported on 
$\{1, \dots, K\} \times \{1, \dots, K'\}$, such that 
$\sum_{j=1}^K q_{jk} = \pi_k'$ and $\sum_{k=1}^{K'} q_{jk} = \pi_j$. 
We note that the $\ell_2$-norm of the underlying parameter space $\Theta$ is 
embedded into the definition of $W_r$. The distance between two mixing 
measures is thus largely controlled by that of their atoms. 
The definition of $W_r$ also bypasses the
non-identifiability issues arising from mixture label switching. 
These considerations make the Wasserstein distance a natural 
metric for the space $\calG_K$.  
  
A  condition which arises in likelihood-based asymptotic theory of 
finite mixture models with unknown order, called strong identifiability (in the second-order), 
is defined as follows.

\begin{definition}[Strong Identifiability; \citet{CHEN1995, HO2016strong}]
\label{def:SI}
The family $\mathcal F$ is said to be strongly identifiable (in the second-order)
if $f(\by; \btheta)$ is twice differentiable with respect to $\btheta$ for all $\by \in \calY$, 
and the following assumption holds for all integers $K \geq 1$. 
\begin{enumerate}
\item [(SI)] {Given distinct $\btheta_1, \dots, \btheta_K \in \Theta$, 
if we have $\zeta_j \in \bbR$, $\bbeta_j, \bgamma_j \in \bbR^d$, $j=1,\ldots, K$, such that
\begin{align*}
\esssup_{\by \in \calY} \left| \sum_{j=1}^K  \left\{ 
	\zeta_j f(\by; \btheta_j) + 
	\bbeta_j^\top \frac{\partial f(\by; \btheta_j)}{\partial \btheta} + 
	\bgamma_j^\top \frac{\partial^2 f(\by; \btheta_j)}{\partial \btheta \partial \btheta^\top} \bgamma_j
\right\} \right| &= 0 
\end{align*}
then $\zeta_j = 0$, $\bbeta_j = \bgamma_j = \boldsymbol 0 \in \bbR^d$, for all $j=1, \dots, K$.}
\end{enumerate}
\end{definition}

For strongly identifiable mixture models, the likelihood ratio statistic with
respect to the overfitted MLE $\bar G_n$ is stochastically bounded \citep{DACUNHA1999}.
In addition, under condition (SI), upper bounds relating the Wasserstein distance between 
a mixing measure  $G$ and $G_0$
to the Hellinger distance between the corresponding densities $p_G$ and $p_{G_0}$ have been 
established by \cite{HO2016strong}. 
In particular, there exist $\delta_0, c_0 > 0$ depending on the
true mixing measure $G_0$ such that for any $G \in \calG_K$ satisfying $W_2(G, G_0) < \delta_0$,
\begin{equation}
\label{hoIneq}
h(p_G, p_{G_0}) \geq c_0 W_2^2(G, G_0),
 \end{equation}
where $h$ denotes the Hellinger distance, 
$$h(p_G, p_{G_0}) = \left(\frac 1 2\int \big(\sqrt{p_G} - \sqrt{p_{G_0}} \big)^2 d\nu\right)^{\frac 1 2}.$$
Specific statements and discussion of these results 
are given in Supplement B, 
and are used throughout the proofs of our Theorems 
\ref{paramConsistency}-\ref{orderConsistency}. 
Further discussion of condition (SI) is given in Section 
\ref{remark}. 
We also require  
regularity conditions (A1)--(A4)
on the 
family $\calF$, condition (C) on the cluster ordering $\alpha_{\bt}$, and condition (F) on the penalty $\varphi$, 
 which we state below.  
 
Define the family of mixture densities 
\begin{equation}
\label{famden0}
\calP_K = 
\left\{ 
p_G(\by) = \int_{\bTheta} f(\by; \btheta) dG(\btheta): G \in \calG_K
\right\}.
\end{equation}
Let $p_0 = p_{G_0}$ be the density of the true finite mixture 
model with its corresponding probability distribution $P_0$. 
Furthermore, define the empirical process
\begin{equation}
\label{nuDef}
\nu_n(G) 
= 
\sqrt n\int_{\{p_0 > 0\}} \frac 1 2 \log\bigg\{\frac{p_G + p_0}{2p_0}\bigg \} d(P_n - P_0), \quad G \in \calG_K,
\end{equation}
where $P_n = \frac 1 n \sum_{i=1}^n \delta_{\bY_i}$
denotes the empirical measure.

For any $\btheta = (\theta_1, \dots, \theta_d)^\top \in \Theta$,
$\by\in\calY$, 
and $G \in \calG_K$, let
\begin{eqnarray}
\label{U0}
U(\by;\btheta, G) & = & \frac {1} {p_G(\by)}f(\by; \btheta)  \\
\label{U1}
U_{\kappa_1 \ldots \kappa_M}(\by; \btheta, G) 
& = & 
\frac 1 {p_G(\by)} \frac{\partial^{M} f(\by; \btheta)}{\partial \theta_{\kappa_1}\dots\partial \theta_{\kappa_M}}
\end{eqnarray}
for all $\kappa_1, \ldots, \kappa_M=1, \dots, d$, 
and any integer $M\ge 1$.

\label{section:regularityConditions}
The regularity conditions are given as follows.
\begin{enumerate}
\item [(A1)] {\textit{Uniform Law of Large Numbers.} \ We have,
\[
\sup_{G \in \calG_K} \frac 1 {\sqrt n} \left|\nu_n(G)\right| \overset{a.s.}{\longrightarrow} 0, 
\quad \text{as } n \to \infty.
\]
}

\item [(A2)] 
{\textit{Uniform Lipchitz Condition.} \ The kernel density $f$ is uniformly Lipchitz up to the second order \citep{HO2016strong}. That is, there exists $\delta > 0$ such that for any $\bgamma \in \bbR^d$ and $\btheta_1, \btheta_2 \in \Theta$, there exists $C > 0$ such that for all $\by \in \mathcal Y$
\begin{equation*}
\left| \bgamma^\top \left(
	\frac{\partial^2 f(\by; \btheta_1)}{\partial \btheta \partial \btheta^\top}- 
	\frac{\partial^2 f(\by; \btheta_2)}{\partial \btheta \partial \btheta^\top}
\right) \bgamma \right|
\leq C \norm{\btheta_1 - \btheta_2}_1^\delta \norm{\bgamma}_2^2.
\end{equation*}
}
\label{itm:lipschitz} 

\item [(A3)]  
\textit{Smoothness}.  
There exists $h_1 \in L^1(\nu)$ such that $|\log f(\by; \btheta)| \leq h_1(\by)$ 
$\nu$-almost everywhere. Moreover, the kernel density $f(\by; \btheta)$ 
possesses partial derivatives up to order 5 with respect to $\btheta$. 
For all $M \leq 5$, and all $\kappa_1, \dots, \kappa_M$,
\[
U_{\kappa_1 \ldots \kappa_M}(\cdot; \btheta, G_0) \in L^3 (P_0).
\] 
There also exists $h_2 \in L^3(P_0)$ and 
$\epsilon > 0$ 
such that for all $\by \in \mathcal Y$,
\[
\sup_{\norm{\btheta - \btheta_0} \leq \epsilon} 
\left| 
U_{\kappa_1 \ldots \kappa_5}(\by; \btheta, G_0)
\right|  
\leq h_2(\by).
\] 

\item[(A4)] 
\textit{Uniform Boundedness}. 
There exist $\epsilon_1, \epsilon_2 > 0$, and $q_1, q_2 \in L^2(P_0)$ 
 such that
 for all $\by \in \calY$, $|U(\by; \btheta, G)| \leq q_1(\by)$,
and for every $\kappa_1=1, \dots, d$, $|U_{\kappa_1}(\by; \btheta, G)| \leq q_2(\by)$,
uniformly for all~$G$ such that $W_2(G, G_0) < \epsilon_1$, and for all $\btheta \in \Theta$ such that $\norm{\btheta-\btheta_{0k}} < \epsilon_2$,
for some $k \in \{1, \dots, K_0\}$.
\end{enumerate}
(A1) is a standard condition required to establish consistency of nonparametric maximum likelihood estimators. A sufficient condition for (A1) to hold is that the kernel density $f(\by; \btheta)$
is continuous with respect to $\btheta$ for $\nu$-almost every $\by$ (see Example 4.2.4 of \cite{GEER2000}). 
Under condition (A2) and the Strong Identifiability
condition (SI) in Definition \ref{def:SI}, 
local upper bounds relating the Wasserstein distance over $\calG_K$ to the
Hellinger distance over $\calP_K$ in \eqref{famden0}
have been established by \cite{HO2016strong}---see Theorem \ref{thm:ho_long_results} of 
Supplement B. 
Under conditions (A3) and  (SI),  
\cite{DACUNHA1999} showed that
the likelihood ratio statistic for overfitted mixtures is stochastically bounded---see Theorem \ref{thm:lrs} 
of Supplement B. Condition (A4) is used to perform an order assessment for a score-type quantity
in the proof of the order selection consistency of the GSF
(Theorem \ref{orderConsistency}).

We further assume that the
cluster ordering $\alpha_{\bt}$ satisfies the following 
continuity-type condition.
\begin{itemize}
\item[(C)] {Let $\btheta_0=(\btheta_{01}, \dots, \btheta_{0K_0})$, and 
$\btheta=(\btheta_1, \dots, \btheta_K) \in \Theta^K$. 
Suppose there exists a cluster partition $\calP = \{\calC_1, \dots, \calC_{K_0}\}$ of $\btheta$ of size $K_0$.
Let
$\tau \in S_{K_0}$ be the permutation such that $(\btheta_{\alpha_{\btheta}(1)}, \dots, \btheta_{\alpha_{\btheta}(K)}) = (\calC_{\tau(1)}, \dots, \calC_{\tau(K_0)})$, as implied by the definition of cluster ordering.
Then, there exists $\delta > 0$ such that, if for all 
$k=1, \dots, K_0$ and $\btheta_j \in \calC_k$, we have $\norm{\btheta_j - \btheta_{0k}} < \delta$, then $\tau = \alpha_{\btheta_0}$.
}
\end{itemize} 
An illustration of condition (C) is provided in Figure \ref{conditionC}.
It is easy to verify that the
example of cluster ordering in \eqref{alpha-example} satisfies (C)
whenever the minimizers therein are unique.  
Finally, we assume that the penalty $\varphi \equiv \varphi_n$ satisfies the following condition.
\begin{enumerate}
\item [(F)] 
$\varphi_n = a_n \phi$,
where $0 < a_n = o(n)$, $a_n \not\to 0$, and $\phi:\cup_{j=1}^K (0,1]^j\to \bbR_+$ 
is Lipschitz on any compact subset of $(0,1]^j, 1 \leq j\leq K$. 
Also, for all $\pi_1, \dots, \pi_K \in (0,1]$ and
 $\rho_k \geq \pi_k, 1 \leq k \leq  K_0 \le K$, 
$\phi(\pi_1, \dots, \pi_K) \geq \phi(\rho_1, \dots, \rho_{K_0})$, and 
$\phi(\pi_1, \dots, \pi_K) \to \infty$ as $\min_j \pi_j \to 0$.

\end{enumerate}  
Condition (F) holds for all examples of functions $\varphi$ stated in Section~\ref{sec:GSF}.
When $r_{\lambda_n}(\eta; \omega)$ is constant 
with respect to $\eta$ away from zero, as is the case for 
the SCAD and MCP, condition (P2) below implies that
$a_n$ is constant with respect to $n$. 
For technical purposes, 
we require $a_n$ to diverge when $r_{\lambda_n}$ is the ALasso
penalty, ensuring that $\varphi_n$ and $nr_{\lambda_n}$ are of comparable order.
In practice, however, we notice that the GSF is hardly sensitive to the choice of $a_n$. 

Given $G = \sum_{j=1}^K \pi_j \delta_{\btheta_j} \in \calG_K$, 
we now define a choice of the weights $\omega_j\equiv \omega_j(G)$ for the 
penalty function $r_{\lambda_n}$ in \eqref{penloglik}, which are random 
and depend on $G$.  
It should be noted that the choice of these weights is relevant for 
the ALasso penalty but not for the SCAD and MCP. 
Define the estimator
\begin{equation}
\label{gntilde}
\widetilde G_n = \sum_{j=1}^K \tilde\pi_j \delta_{\tbtheta_j} = 
\argmax_{G \in \calG_K} \left\{l_n(G) - \phi(\pi_1, \ldots, \pi_K)\right\},
\end{equation}
and let $\tbtheta = (\tbtheta_1, \dots, \tbtheta_K)$. Define
$\tbfeta_j = \tbtheta_{\talpha(j+1)} - \tbtheta_{\talpha(j)}$, for all $j=1, \dots, K-1$, 
where $\talpha\equiv\alpha_{\tbtheta}$, 
and recall that 
$\bfeta_j = \btheta_{\alpha(j+1)} - \btheta_{\alpha(j)}$, where $\alpha \equiv \alpha_{\btheta}$. 
Let $u, v \in S_{K-1}$ be the permutations such that
$$\norm{\bfeta_{u(1)}} \geq \dots \geq \norm{\bfeta_{u(K-1)}}, \quad
  \norm{\tbfeta_{v(1)}} \geq \dots \geq \norm{\tbfeta_{v(K-1)}},$$
and set $\psi = v \circ u^{-1}$. 
Inspired by \cite{ZOU2006}, for some $\beta > 1$, we then define
\begin{equation}
\label{weightDef}
\omega_j = \norm{\tbfeta_{\psi(j)}}^{-\beta}, \quad j=1, \dots, K-1.
\end{equation}
Finally, we define the Voronoi diagram of 
the atoms $\{\hbtheta_1, \dots, \hbtheta_K\}$ of $\hat G_n$ in \eqref{mple1} by
$\{\hat\calV_k: 1 \leq k \le K_0\}$, where for all $k=1, \dots, K_0$,  
\begin{equation}
\label{voronoi-diag}
\hat\calV_k = \left\{\hbtheta_j: \big\lVert \hbtheta_j - \btheta_{0k} \big\rVert 
< \big\lVert \hbtheta_j - \btheta_{0l} 
\big\rVert \ , \forall l \neq k, \ 1 \le j \le K \right\},
\end{equation}
are called Voronoi cells with 
corresponding index sets   
$\hat \calI_k = \{1 \le j \le K: \hat{\btheta}_j \in \hat\calV_k \}$.

\begin{center}
\begin{figure}[t]
\centering
   \includegraphics[width=0.46\textwidth]{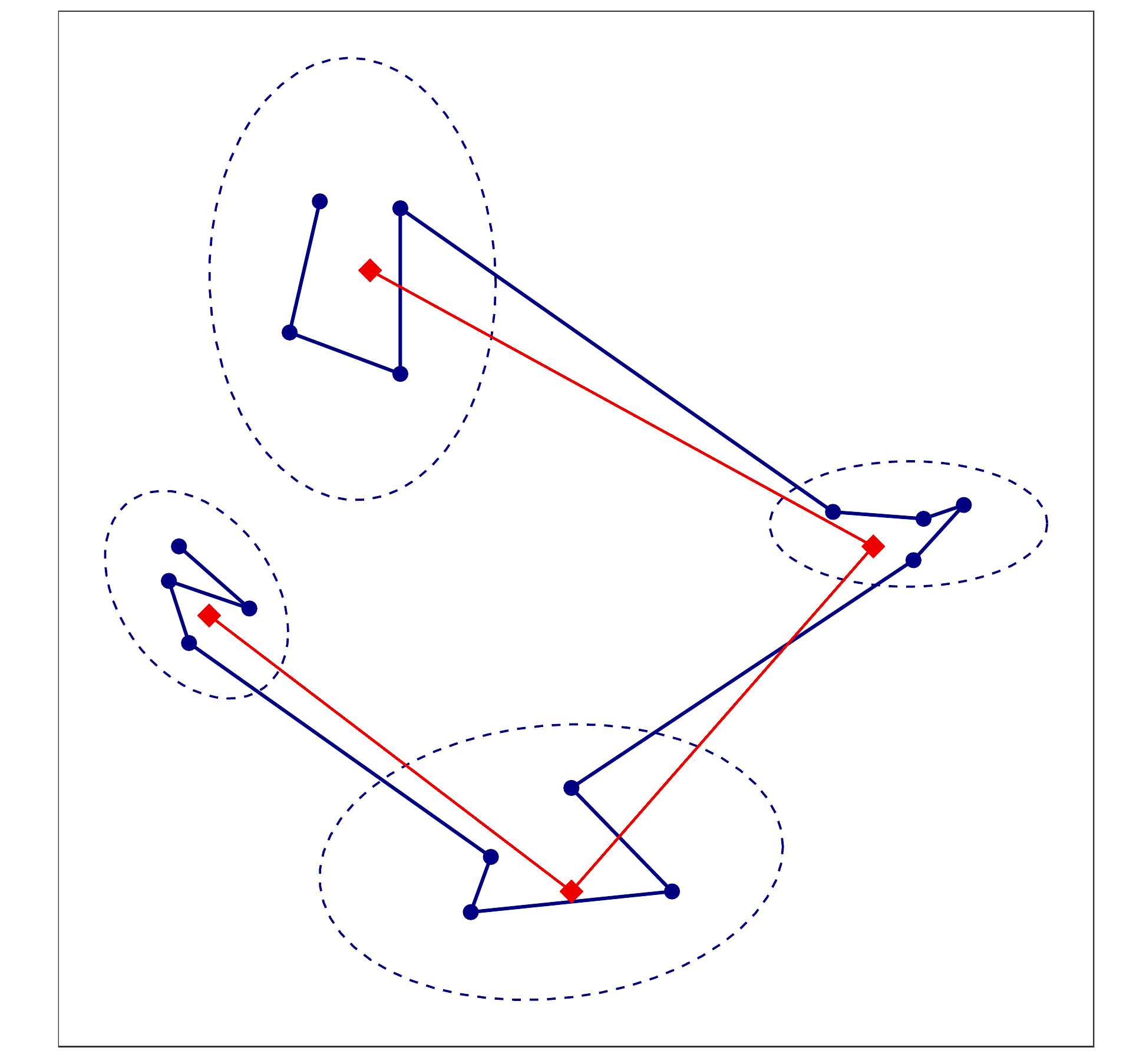}
   \includegraphics[width=0.46\textwidth]{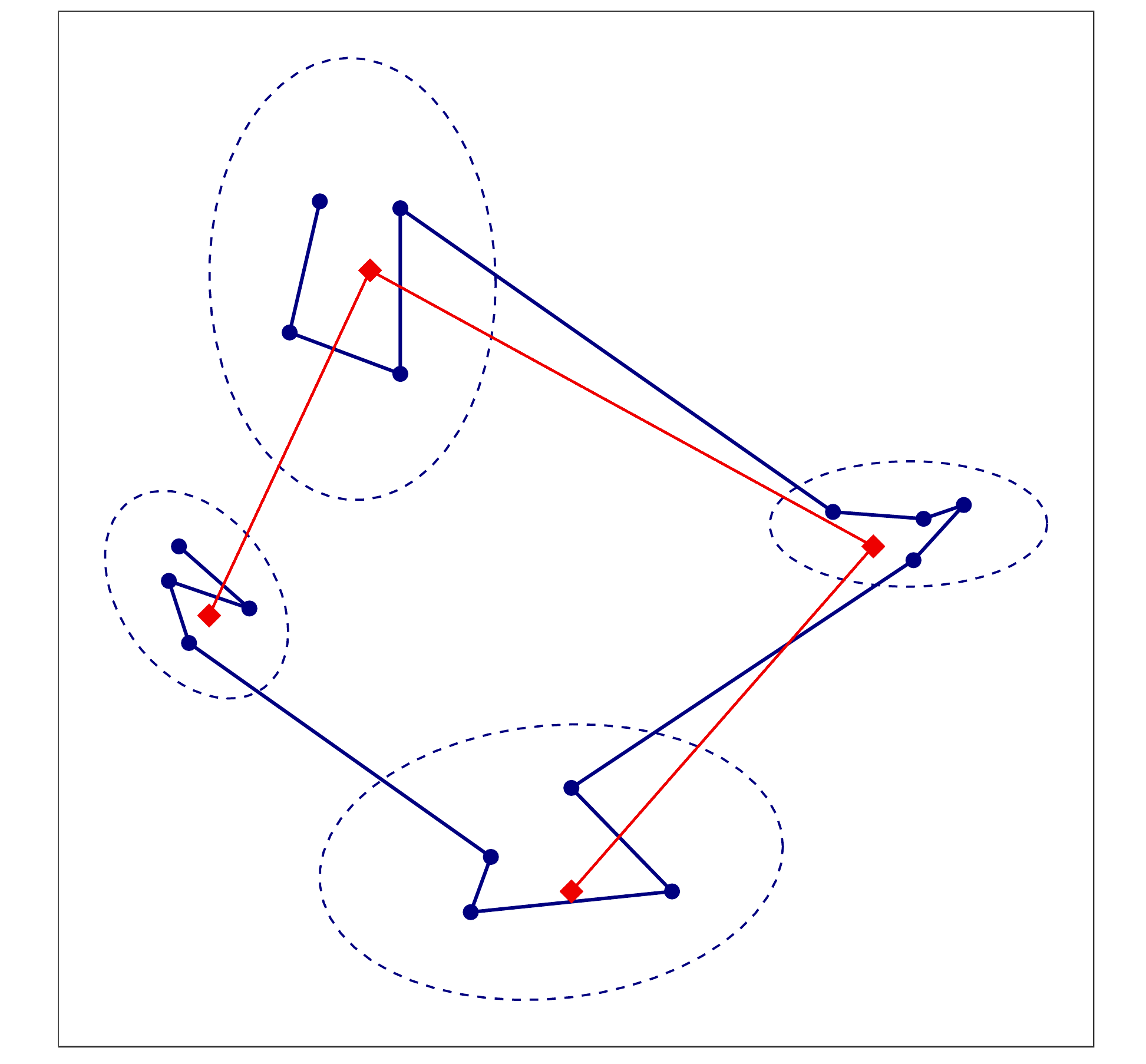}
  \caption{
  Illustration of condition (C). The points of $\btheta$ are depicted in blue ({\color{navyblue} {\Large $\bullet$}})
  and the points of $\btheta_0$ are depicted in red ({\color{red} \ding{117}}). 
  The blue solid lines ({\color{navyblue}---})
  denote the permutation $\alpha_{\btheta}$, while the red solid
  lines ({\color{red}---}) denote the permutation $\alpha_{\btheta_0}$. The ellipses ({\color{navyblue} - - -}) 
  represent a choice of cluster partition of $\btheta$. The choice of cluster ordering in the left plot satisfies 
  condition (C), while that of the right plot does not. }
  \label{conditionC}
\end{figure}
\end{center}

\subsection{\bf Main Results}
\label{main-resul}
  We are now ready to state our main results.
Theorem \ref{paramConsistency} below shows that $\{\hat\calV_k: 1 \leq k \leq K_0\}$ 
asymptotically forms a cluster partition of $\{\hbtheta_1, \dots, \hbtheta_K\}$. 
This result, together with the rate of convergence established in Theorem \ref{densityConsistency}, 
leads to the consistency of the GSF 
in estimating $K_0$, as stated in Theorem \ref{orderConsistency}. 

 \begin{theorem}
\label{paramConsistency}
Assume conditions (SI), (A1)--(A2) and (F) hold, 
and let the penalty function 
$r_{\lambda_n}$ satisfy the following condition,
\begin{enumerate}
\item [(P1)]  
$r_{\lambda_n}(\eta; \omega)\geq 0$ is a  nondecreasing function of $\eta \in \bbR_+$
which satisfies 
$r_{\lambda_n}(0; \omega) = 0$ and
$\lim_{n \to \infty} r_{\lambda_n}(\eta; \omega) = 0$, for all $\eta,\omega \in \bbR_+$.
Furthermore, for any fixed compact sets $I_1,I_2 \subseteq (0,\infty)$,
$r_{\lambda_n}(\cdot; \omega)$ is convex over $I_1$  for large $n$, 
and $\mathrm{diam}(nr_{\lambda_n}(I_1;I_2)) =O(a_n)$.
\end{enumerate}

Then, as $n \to \infty$, 
\begin{enumerate}
\item[(i)] 
 $W_r(\hat{G}_n, G_0) \to 0$, almost surely, for all $r \geq 1$. 
 \end{enumerate}
 Assume further that condition (A3) holds. Then, \begin{enumerate}
\item [(ii)]$\phi(\hpi_1, \dots, \hpi_K) = O_p(1)$. In particular, 
for every $k=1, \dots, K_0$, $\sum_{j \in \hat\calI_k} \hpi_j = \pi_{0k} + o_p(1)$. 
\item[(iii)]
For every $1 \le l \le K$, there exists a unique $1 \le k \le K_0$, such that 
\(
\big\lVert \hbtheta_l - \btheta_{0k} \big\rVert = o_p(1),
\)
 thus $\{\hat\calV_k: 1 \le k \le K_0\}$ is a cluster partition of $\{\hbtheta_1, \dots, \hbtheta_K\}$, 
with probability tending to one.
\end{enumerate}
\end{theorem}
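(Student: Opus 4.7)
The overall strategy is to prove parts (i)--(iii) sequentially using a single technical device, a \emph{duplication comparator}. Fix integers $m_k \geq 1$ with $\sum_{k=1}^{K_0} m_k = K$, and let $G^\star = \sum_{j=1}^K \pi_j^\star \delta_{\btheta_j^\star} \in \calG_K$ be obtained by replicating each true atom $\btheta_{0k}$ exactly $m_k$ times with equal shares $\pi_j^\star = \pi_{0k}/m_k$. Then $p_{G^\star} = p_{G_0}$; the vector $\pi^\star$ lies in a fixed compact subset of $(0,1]^K$ bounded away from $0$; and every consecutive difference $\bfeta_j^\star$ in any cluster ordering of $\btheta^\star$ equals zero, so $r_{\lambda_n}(\|\bfeta_j^\star\|;\omega_j^\star) = 0$ by the requirement $r_{\lambda_n}(0;\omega) = 0$ of (P1). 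The defining optimality $L_n(\hat G_n) \geq L_n(G^\star)$ then rearranges into the central inequality
\begin{equation*}
l_n(\hat G_n) - l_n(G_0) \;\geq\; \varphi_n(\hat\pi) - \varphi_n(\pi^\star) + n\sum_{j=1}^{K-1} r_{\lambda_n}(\|\hat\bfeta_j\|;\hat\omega_j),
\end{equation*}
where $\varphi_n(\pi^\star) = a_n \phi(\pi^\star) = O(a_n) = o(n)$ by the Lipschitz part of (F) and compactness of the range of $\pi^\star$.

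For part (i), the above forces $\tfrac{1}{n}[l_n(\hat G_n) - l_n(G_0)] \geq -o(1)$, since the two terms on the right involving $\hat G_n$ are nonnegative. Coupling this with a Wong--Shen type bound obtained by Jensen's inequality, the elementary inequality $\log x \le 2(\sqrt x - 1)$, and the empirical process \eqref{nuDef},
\begin{equation*}
\tfrac{1}{n}[l_n(\hat G_n) - l_n(G_0)] \;\le\; \tfrac{4}{\sqrt n}\,\nu_n(\hat G_n) - 4\,h^2\!\bigl((p_{\hat G_n}+p_0)/2,\,p_0\bigr),
\end{equation*}
together with the uniform law (A1) yields $h(p_{\hat G_n}, p_0) \to 0$ almost surely. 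Strong identifiability (SI) and (A2) activate the Ho--Nguyen inequality \eqref{hoIneq} (Theorem \ref{thm:ho_long_results} of Supplement B) in a local ball around $G_0$; a subsequence argument exploiting compactness of $\Theta$ (which bounds $W_r$ on $\calG_K$) upgrades local Hellinger convergence to $W_2(\hat G_n, G_0) \to 0$ almost surely and rules out spurious cluster points. Equivalence of Wasserstein distances on the bounded set $\calG_K$ extends this to arbitrary $r \geq 1$.

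For parts (ii)--(iii), assume additionally (A3). The Dacunha-Castelle--Gassiat stochastic boundedness of the overfitted likelihood ratio (Theorem \ref{thm:lrs} of Supplement B) gives $l_n(\bar G_n) - l_n(G_0) = O_p(1)$, and the same bound holds for $\hat G_n$ by maximality of $\bar G_n$. Substituting into the central inequality, dropping the nonnegative distance penalty, and dividing by $a_n$ (bounded away from $0$ by (F)) yields $\phi(\hat\pi) = O_p(1)$; the blow-up of $\phi$ as any argument tends to $0$ then forces $\min_j \hat\pi_j$ to be bounded below in probability. For (iii), if some $\hat\btheta_l$ failed to cluster around a unique $\btheta_{0k}$, a subsequence would satisfy $\min_k \|\hat\btheta_l - \btheta_{0k}\| \geq \epsilon > 0$, and the lower bound on $\hat\pi_l$ would force every coupling in $W_r(\hat G_n, G_0)$ to incur cost at least $c\epsilon^r$, contradicting (i). Uniqueness of the limit index $k$ follows from the separation of the true atoms. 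Consequently the Voronoi cells $\hat\calV_k$ eventually consist exactly of the $\hat\btheta_j$ converging to $\btheta_{0k}$: within-cell distances are $o_p(1)$ while between-cell distances approach $\min_{k \neq l}\|\btheta_{0k} - \btheta_{0l}\| > 0$, verifying \eqref{clusterProperty} with probability tending to $1$, and summing $\hat\pi_j$ over $j \in \hat\calI_k$ gives $\pi_{0k} + o_p(1)$.

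The main obstacle is part (i), specifically the possibly data-dependent weights $\hat\omega_j$ that, in the ALasso case, blow up when $\tilde\bfeta_j \to 0$; no a priori upper bound on $r_{\lambda_n}(\|\hat\bfeta_j\|;\hat\omega_j)$ is available from (P1) alone. The duplication comparator circumvents this entirely: we only need the \emph{nonnegativity} of the penalty at $\hat G_n$ and its \emph{exact vanishing} at $G^\star$, both immediate from (P1). The quantitative portion of (P1)---convexity and the $O(a_n)$-diameter on compact subsets of $(0,\infty)$---is not used here and is reserved for the order-selection argument of Theorem \ref{orderConsistency}.
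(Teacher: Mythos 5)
There is a genuine gap at the heart of your duplication comparator. You claim that \emph{every} consecutive difference $\bfeta_j^\star$ of $G^\star$ under a cluster ordering equals zero. This is false whenever $K_0 \geq 2$: the sets $\calC_k$ consisting of the $m_k$ copies of $\btheta_{0k}$ form a cluster partition of $\btheta^\star$, and by Property (ii) of Definition \ref{cluster-order} any cluster ordering must traverse these $K_0$ blocks consecutively, which forces exactly $K_0-1$ of the differences $\bfeta_j^\star = \btheta^\star_{\alpha(j+1)} - \btheta^\star_{\alpha(j)}$ to equal nonzero between-cluster gaps $\btheta_{0\alpha_0(k+1)} - \btheta_{0\alpha_0(k)}$. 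Consequently the penalty at $G^\star$ does not vanish, and your central inequality is missing the term $n\sum_{k=1}^{K_0-1} r_{\lambda_n}\bigl(\lVert\btheta_{0\alpha_0(k+1)} - \btheta_{0\alpha_0(k)}\rVert; \omega^\star_{j_k}\bigr)$ on its right-hand side. Your closing assertion that the comparator lets you use only nonnegativity plus exact vanishing, and that the convexity and $O(a_n)$-diameter clauses of (P1) are ``not used here,'' is therefore wrong: for part (i) you must still show that $r_{\lambda_n}$ evaluated at these fixed positive distances, with random weights determined through $\widetilde G_n$, tends to zero (which requires the consistency of $\widetilde G_n$ to keep the relevant weights in a compact subset of $(0,\infty)$, exactly as in the paper), and for part (ii) you must bound $n$ times this sum by $O(a_n)$ so that dividing by $a_n$ still yields $\phi(\hat\bpi) = O_p(1)$ --- this is precisely what the $\mathrm{diam}(nr_{\lambda_n}(I_1;I_2)) = O(a_n)$ clause of (P1) is for, and is how the paper's proof proceeds via the $K_0-1$ indices $j \in S$ with $\lVert\hbfeta_j\rVert \geq \delta_0$.

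Once this missing term is reinstated and controlled, your argument collapses back to essentially the paper's: the paper compares $L_n(\hat G_n)$ against $L_n(G_0)$ directly (using the monotonicity clause of (F) to handle the dimension mismatch between $\bpi$ and $\bpi_0$, rather than duplicating atoms), derives the same Hellinger basic inequality via concavity of the logarithm, and invokes Theorem \ref{thm:ho_long_results} and Theorem \ref{thm:lrs} exactly as you do. The remainder of your parts (ii)--(iii) (blow-up of $\phi$ forcing $\min_j\hpi_j$ bounded below, the Wasserstein contradiction for stray atoms, and the verification of \eqref{clusterProperty}) is sound and matches the paper's use of Lemma \ref{lem:ho}.
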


Theorem 1.(i) establishes the consistency of $\hat G_n$ 
under the Wasserstein distance---a property shared by 
the overfitted MLE $\bar G_n$ \citep{HO2016strong}. 
This is due to the fact that, by conditions (F) and (P1), the log-likelihood
function is the dominant term in $L_n$, in \eqref{penloglik}.
Theorem 1.(ii) implies that the estimated mixing proportions $\hpi_j$ 
are stochastically bounded away from 0, which then results 
in Theorem 1.(iii) showing that every atom of $\hat G_n$ is
consistent in estimating an atom of $G_0$.
A straightforward investigation of the proof shows that
this property also holds for $\widetilde G_n$ in~\eqref{gntilde}, but not
for the overfitted MLE $\bar G_n$, which may have a subset of atoms whose limit
points are not amongst those of $G_0$.

When $K > K_0$, the result of Theorem \ref{paramConsistency} does not imply the 
consistency of $\hat G_n$ in estimating $K_0$. The latter is achieved if
the number of distinct elements of each Voronoi cell $\hat\calV_k$ is equal to one with probability
tending to one, which is shown in Theorem \ref{orderConsistency} below. To establish
this result, we require an upper bound on the rate of convergence of $\hat G_n$
under the Wasserstein distance. We obtain 
this bound 
by studying the rate of convergence of the density 
$p_{\hat G_n}$ to $p_{0}$, with respect to the Hellinger distance, and
appeal to inequality \eqref{hoIneq}.
\cite{GEER2000} (see also \cite{WONG1995}) established convergence rates for nonparametric
maximum likelihood estimators under the Hellinger distance in terms of the bracket entropy 
integral 
$$\calJ_B\left(\gamma, \bar\calP_K^{\frac 1 2}(\gamma), \nu\right) = 
 \int_0^{\gamma} \sqrt{H_B \left(u, \bar\calP^{\frac 1 2}_K(u), \nu \right)}du , 
\quad \gamma > 0
,$$
where $H_B\big(u, \bar{\calP}_K^{\frac 1 2}(u), \nu\big)$ denotes the $u$-bracket entropy 
with respect
to the $L^2(\nu)$ metric of the density family 
\[
\bar\calP_K^{\frac 1 2}(u) = \left\{\sqrt{\frac {p_G+p_{0}} 2}: G \in \calG_K, \ h\left(\frac{p_G + p_{0}}{2}, p_{0}\right) \leq u\right\},
\quad u > 0.
\]
In our work, however, the main difficulty in bounding $h(p_{\hat G_n}, p_{0})$ 
is the presence of the penalty~$r_{\lambda_n}$. The following Theorem
shows that, as $n \to \infty$, if the growth rate of~$r_{\lambda_n}$ 
 away from zero, as a function of $\eta$, is carefully controlled, 
 then $p_{\hat G_n}$ achieves the same rate of 
 convergence as the MLE~$p_{\bar G_n}$.  
\begin{theorem}
\label{densityConsistency}
Assume the same conditions as Theorem \ref{paramConsistency}, and that the cluster ordering $\alpha_\bt$ satisfies condition (C). 
For a universal constant $J > 0$, assume 
there exists a sequence of real numbers $\gamma_n \gtrsim (\log n/n)^{1/2}$ such that for all $\gamma \geq \gamma_n$,
\begin{equation}
\label{bracketEntrop}
\calJ_B\left(\gamma, \bar\calP_K^{\frac 1 2}(\gamma), \nu\right) \leq J \sqrt n \gamma^2.
\end{equation}
Furthermore, assume $r_{\lambda_n}$ satisfies the following condition,
\begin{enumerate}
\item [(P2)] The restriction of $r_{\lambda_n}$
to any compact subset of $\{(\eta, \omega) \subseteq \bbR^2: \eta, \omega > 0\}$ is Lipschitz continuous
in both $\eta$ and $\omega$, with Lipschitz constant
$\ell_n = O(\gamma_n^{3/2}/\log n)$, and $a_n\asymp n\ell_n\vee 1$.
\end{enumerate}
Then, $h(p_{\hat G_n}, p_{0}) = O_p(\gamma_n).$
\end{theorem}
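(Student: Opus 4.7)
The plan is to adapt the classical convergence rate machinery for nonparametric maximum likelihood estimators under bracket entropy conditions (van de Geer, 2000; Wong and Shen, 1995) to our penalized estimator $\hat G_n$. The central difficulty is that the two penalties $\varphi_n$ and $n r_{\lambda_n}$ perturb the basic inequality used in the standard MLE argument, and one must verify that these perturbations contribute at most $O_p(n\gamma_n^2)$ so that the Hellinger rate is preserved.

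First, I would embed $G_0$ into $\calG_K$ by choosing $G_0^\ast \in \calG_K$ obtained from $G_0$ by duplicating one of its atoms $K - K_0 + 1$ times with positive mixing proportions summing to the original mass, so that $p_{G_0^\ast} = p_0$ and $\phi(G_0^\ast)$ is a fixed finite constant. The optimality of $\hat G_n$ in $L_n$ together with $l_n(G_0^\ast) = \sum_i \log p_0(\bY_i)$ yields the basic inequality
\[
l_n(\hat G_n) - l_n(G_0^\ast) \geq \varphi_n(\hat G_n) - \varphi_n(G_0^\ast) + n\bigl[R_n(\hat G_n) - R_n(G_0^\ast)\bigr],
\]
where $R_n(G) = \sum_{j=1}^{K-1} r_{\lambda_n}(\|\bfeta_j(G)\|; \omega_j(G))$. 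Combining this with the convexity bound $\log((p+p_0)/(2p_0)) \geq \tfrac12 \log(p/p_0)$ and the standard decomposition of $\int \log(\bar p_{\hat G_n}/p_0)\,dP_n$ into its expectation (bounded above by $-c\, h^2(p_{\hat G_n}, p_0)$ for some $c>0$) plus $2\nu_n(\hat G_n)/\sqrt n$, I would obtain
\[
c\, h^2(p_{\hat G_n}, p_0) \leq \frac{2}{\sqrt n}|\nu_n(\hat G_n)| + \frac{1}{2n}|\varphi_n(\hat G_n) - \varphi_n(G_0^\ast)| + \frac{1}{2}|R_n(\hat G_n) - R_n(G_0^\ast)|.
\]

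I would then bound each term on the right. The empirical process term is handled by the classical peeling argument over Hellinger shells combined with the entropy hypothesis \eqref{bracketEntrop}, giving $|\nu_n(\hat G_n)|/\sqrt n = O_p(\gamma_n^2 + \gamma_n h(p_{\hat G_n}, p_0))$. For the $\varphi_n$ perturbation, Theorem~\ref{paramConsistency}(ii) gives $\phi(\hat G_n) = O_p(1)$, so by (F) and (P2) the contribution is $O_p(a_n/n) = O_p(\ell_n \vee n^{-1})$, which is absorbed into $O_p(\gamma_n^2)$ up to polylogarithmic factors. For the $R_n$ perturbation, I would invoke Theorem~\ref{paramConsistency}(iii) together with condition (C) to argue that, with probability tending to one, the cluster ordering $\alpha_{\hat\btheta}$ groups atoms of $\hat G_n$ within the same Voronoi cell consecutively, so that the cluster orderings of $\hat G_n$ and $G_0^\ast$ are compatible. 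The $K_0-1$ between-cluster consecutive distances are then controlled via (P2), yielding a bound of order $\ell_n\, W_2(\hat G_n, G_0) \lesssim \ell_n\, h(p_{\hat G_n}, p_0)^{1/2}$ by inequality \eqref{hoIneq}, while the $K-K_0$ within-cluster consecutive distances $\|\hat\bfeta_j\| \to 0$ are handled via $r_{\lambda_n}(0;\omega)=0$, the monotonicity in (P1), and convexity of $r_{\lambda_n}(\cdot;\omega)$ near 0. Substituting these gives an inequality of the form $c\, h^2 \lesssim \gamma_n h + \gamma_n^2 + \ell_n h^{1/2}$; applying Young's inequality to absorb $\ell_n h^{1/2}$ into $h^2/4$ introduces an extra additive term $O(\ell_n^{4/3}) = O(\gamma_n^2/(\log n)^{4/3})$, and the resulting quadratic inequality solves to $h(p_{\hat G_n}, p_0) = O_p(\gamma_n)$.

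The main obstacle, as I see it, is the control of $R_n(\hat G_n) - R_n(G_0^\ast)$: condition (P2) provides Lipschitz continuity only on compact subsets of $(0,\infty)^2$, yet the $K - K_0$ within-cluster consecutive distances of $\hat G_n$ collapse to zero and thus escape the region where (P2) applies directly. This is where the cluster ordering machinery (condition (C) and Definition~\ref{cluster-order}) is indispensable, since it guarantees that the asymptotic clustering structure of $\hat G_n$ around $G_0$ is faithfully reflected in the indexing of the penalty terms, allowing within-cluster and between-cluster distances to be treated separately by (P1) and (P2) respectively. The specific scaling $\ell_n = O(\gamma_n^{3/2}/\log n)$ and the tight coupling $a_n \asymp n\ell_n \vee 1$ in (P2) are precisely calibrated so that, once coupled with the Wasserstein-to-Hellinger relation \eqref{hoIneq}, both penalty perturbations fit inside the polylogarithmic slack built into the target rate $\gamma_n \gtrsim \sqrt{\log n/n}$.
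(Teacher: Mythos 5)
Your overall architecture matches the paper's: a basic inequality for the penalized likelihood, a peeling argument over Hellinger shells driven by the entropy hypothesis \eqref{bracketEntrop}, and a separate treatment of within-cluster versus between-cluster consecutive distances via the cluster-ordering machinery and conditions (P1)--(P2). Your handling of the $r_{\lambda_n}$ perturbation is essentially the paper's Proposition C.1, and your Young-inequality step $\ell_n h^{1/2} \leq h^2/4 + C\ell_n^{4/3}$ with $\ell_n^{4/3} = \gamma_n^2 (\log n)^{-4/3}$ is sound. However, there is a genuine gap in your bound on the first-penalty perturbation. You bound $\frac 1 n\left|\varphi_n(\hat\bpi) - \varphi_n(\bpi_0^*)\right|$ by $O_p(a_n/n) = O_p(\ell_n)$ and assert this is absorbed into $O_p(\gamma_n^2)$ up to polylogarithmic factors. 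It is not: with $\gamma_n \asymp (\log n/n)^{1/2}$ one has $\ell_n \asymp \gamma_n^{3/2}/\log n \asymp n^{-3/4}(\log n)^{-1/4}$ while $\gamma_n^2 \asymp \log n / n$, so $\ell_n/\gamma_n^2 \asymp n^{1/4}(\log n)^{-5/4} \to \infty$ polynomially. Feeding the additive term $\ell_n$ into your quadratic inequality yields only $h(p_{\hat G_n},p_0) = O_p(\ell_n^{1/2}) = O_p(\gamma_n^{3/4}(\log n)^{-1/2})$, which is strictly weaker than the claimed rate. The fix, which is what the paper does, is to make this term scale with the radius of the peeling shell: by the monotonicity clause of condition (F), $\varphi(\bpi) \geq \varphi(\rho_1,\dots,\rho_{K_0})$ with $\rho_k = \sum_{j\in\calI_k}\pi_j$, and then the local Lipschitz property of $\phi$ together with Lemma B.3 and Theorem B.2 gives $\zeta_n(G) \lesssim \ell_n \sum_k|\pi_{0k}-\rho_k| \lesssim \ell_n\, h(\bar p_G,p_0)$, which is dominated by $h^2(\bar p_G,p_0)$ on every shell since $\ell_n/( 2^s\gamma_n) \to 0$.

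A secondary omission: your between-cluster Lipschitz step applies (P2) in both arguments $(\eta,\omega)$, but for the ALasso the weights $\omega_j = \lVert\tbfeta_{\psi(j)}\rVert^{-\beta}$ are functions of the auxiliary estimator $\wtG_n$, so before (P2) can be invoked you must show that the relevant $\omega_{j_k}$ and $\omega_{0k}$ lie in a fixed compact subset of $(0,\infty)$ and that $|\omega_{j_k}-\omega_{0k}|$ is small. This requires first establishing a convergence rate for $\wtG_n$ itself (the paper's Lemma 2, proved by the same peeling argument with $\lambda_n = 0$) and then a permutation-matching argument identifying $\psi(j_k)$ with the $k$-th between-cluster gap of $\wtG_n$; without it the claim that the between-cluster terms contribute only $O(\ell_n W_2(\hat G_n,G_0))$ is not justified.
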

Gaussian mixture models are known to 
satisfy condition \eqref{bracketEntrop} for $\gamma_n \asymp (\log n/n)^{\frac 1 2}$, 
under certain boundedness assumptions on $\Theta$ \citep{GHOSAL2001, GENOVESE2000}. 
Lemma 3.2.1 of \cite{HO2017} shows that 
\eqref{bracketEntrop} also holds for this choice of $\gamma_n$ for
many of the strongly identifiable density families which we discuss below. 
For these density families, $p_{\hat G_n}$ achieves the parametric 
rate of convergence up to polylogarithmic factors.

Let $\hat K_n$ be the order of $\hat G_n$, namely the number of distinct components $\hbtheta_j$ of $\hat G_n$
with non-zero mixing proportions.  We now 
prove the consistency of $\hat K_n$ in estimating $K_0$.
\begin{theorem}
\label{orderConsistency}
Assume the same conditions as Theorem~\ref{densityConsistency}, and assume that the 
family $\calF$ satisfies condition (A4). Suppose further
that the penalty $r_{\lambda_n}$ satisfies the following condition,
\begin{enumerate}
\item[(P3)]{$r_{\lambda_n}(\cdot; \omega)$ is differentiable for all $\omega > 0$, and 
$$\lim_{n \to \infty} \inf\left\{\gamma_n^{-1} \ \frac{\partial r_{\lambda_n}(\eta; \omega)}{\partial \eta} :
 0 < \eta \leq \gamma_n^{\frac 1 2} \log n, \ \omega \geq \left(\gamma_n^{\frac {\beta} 2} \log n \right)^{-1} \right\}= \infty,$$
 where $\gamma_n$ is the sequence defined in Theorem 2, and $\beta > 1$ is the constant in \eqref{weightDef}.}
\end{enumerate}
Then, as $n \to \infty$, 
\begin{enumerate}
\item[(i)] $\bbP(\hat K_n = K_0) \to 1.$ 
In particular, $\bbP \left( \bigcap_{k=1}^{K_0} \{|\hat \calV_k| = 1\} \right) \to 1.$
\item[(ii)] $W_1(\hat G_n, G_0) = O_p(\gamma_n).$
\end{enumerate}
\end{theorem}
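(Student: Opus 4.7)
The plan is a proof by contradiction via a first-order optimality analysis of $L_n$ at $\hat G_n$. First I would extract individual atom rates: from Theorem~\ref{densityConsistency}, $h(p_{\hat G_n}, p_0) = O_p(\gamma_n)$, and combined with inequality~\eqref{hoIneq}, valid on the event $\{W_2(\hat G_n, G_0) < \delta_0\}$ whose probability tends to one by Theorem~\ref{paramConsistency}(i), this yields $W_2(\hat G_n, G_0) = O_p(\gamma_n^{1/2})$. Since Theorem~\ref{paramConsistency}(ii) combined with condition~(F) forces $\min_l \hpi_l$ to be stochastically bounded away from $0$, the Wasserstein rate transfers to atom-wise rates $\max_{l \in \hat{\mathcal I}_k} \|\hbtheta_l - \btheta_{0k}\| = O_p(\gamma_n^{1/2})$ for each $k$; the analogous rate holds for $\widetilde G_n$. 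Under condition~(C), for large $n$, the cluster orderings $\alpha_{\hbtheta}$ and $\alpha_{\tbtheta}$ reproduce the cluster structure of $\btheta_0$, so among the $K-1$ consecutive differences, exactly $K_0 - 1$ are between-cluster (with norm bounded below by half the minimum pairwise separation of $\{\btheta_{0k}\}$) and the remaining $K - K_0$ are within-cluster (with norm $O_p(\gamma_n^{1/2})$). The rank-matching permutation $\psi$ then pairs within-cluster $\hbfeta$'s with within-cluster $\tbfeta$'s, yielding $\omega_j \geq (\gamma_n^{\beta/2} \log n)^{-1}$ with probability tending to one for within-cluster $j$, and $\omega_j = O_p(1)$ otherwise.

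Assume next, for contradiction, that $\mathbb{P}(\hat K_n > K_0)$ does not vanish. Along a subsequence, some cluster contains at least two distinct atoms, so a within-cluster index $j^*$ satisfies $\|\hbfeta_{j^*}\| > 0$; I would pick $j^*$ adjacent to a between-cluster index $j^*+1$, which always exists at the boundary of a cluster. Since $r_{\lambda_n}$ is differentiable at $\|\hbfeta_{j^*}\| > 0$ by (P3) and the cluster ordering is locally constant near $\hat G_n$ (under (C) and generic unique-argmin conditions), the first-order condition of $L_n$ with respect to $\btheta_{\alpha(j^*+1)}$ in direction $\bv = \hbfeta_{j^*}/\|\hbfeta_{j^*}\|$ takes the form
\begin{equation*}
\hpi_{\alpha(j^*+1)} \cdot \frac{1}{n}\sum_{i=1}^n \bv^\top \frac{\nabla_\btheta f(\bY_i; \hbtheta_{\alpha(j^*+1)})}{p_{\hat G_n}(\bY_i)} \;=\; r'_{\lambda_n}(\|\hbfeta_{j^*}\|;\omega_{j^*}) + O\!\left(r'_{\lambda_n}(\|\hbfeta_{j^*+1}\|;\omega_{j^*+1})\right).
\end{equation*}
The main right-hand term satisfies $r'_{\lambda_n}(\|\hbfeta_{j^*}\|;\omega_{j^*}) \gg \gamma_n$ by (P3), using the bounds on $\|\hbfeta_{j^*}\|$ and $\omega_{j^*}$ from the first paragraph. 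The secondary term is $o_p(\gamma_n)$: for SCAD and MCP, $r'_{\lambda_n}$ vanishes for $\eta > a\lambda_n$ and $\lambda_n \to 0$; for ALasso, $r'_{\lambda_n} = \lambda_n \omega_{j^*+1} = O_p(\lambda_n) = o_p(\gamma_n)$, with $\lambda_n = o(\gamma_n)$ itself forced by (P3) together with the within-cluster lower bound on $\omega_{j^*}$.

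It then remains to bound the left-hand side by $O_p(\gamma_n)$. I would decompose it into $P_0$-bias and fluctuation. Since $\int \nabla_\btheta f(\by; \btheta)\, d\nu = 0$, the bias reduces to $\bv^\top \int \nabla_\btheta f(\by; \hbtheta_{\alpha(j^*+1)}) (p_0 - p_{\hat G_n})/p_{\hat G_n}\, d\nu$; writing $p_0 - p_{\hat G_n} = (\sqrt{p_0} - \sqrt{p_{\hat G_n}})(\sqrt{p_0} + \sqrt{p_{\hat G_n}})$ and applying Cauchy-Schwarz gives a bound of order $h(p_0, p_{\hat G_n})$ times a factor controlled by the envelope $q_2$ from condition~(A4), hence $O_p(\gamma_n)$. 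The fluctuation is $O_p(n^{-1/2}) = o_p(\gamma_n)$ by standard empirical process bounds over a Donsker class with envelope $q_2$. Therefore LHS $= O_p(\gamma_n)$, contradicting the $\gg \gamma_n$ lower bound and proving (i). For (ii), on the event $\{\hat K_n = K_0\}$ of probability tending to one, $\hat G_n$ is exactly specified and first-order strong identifiability (implied by (SI)) yields $h(p_G, p_{G_0}) \gtrsim W_1(G, G_0)$ in a neighborhood of $G_0$, giving $W_1(\hat G_n, G_0) = O_p(\gamma_n)$. The hardest step will be the bias estimate: a bound via (A4) alone only gives $O_p(1)$ for the LHS, so exploiting the Hellinger rate through the Cauchy--Schwarz factorization is essential to reach the $O_p(\gamma_n)$ rate needed for the contradiction.
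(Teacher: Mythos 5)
Your strategy is genuinely different from the paper's. The paper never differentiates $L_n$ at $\hat G_n$: it shows directly that $L_n(G) - L_n(\check G_n) < 0$ for every $G$ in a Hellinger ball around $G_0$ with more than $K_0$ atoms, where $\check G_n$ is a constrained $K_0$-atom maximizer of $L_n$. This is done through a three-term expansion $\log(1+\Delta_i) \le \Delta_i - \Delta_i^2/2 + \Delta_i^3/3$ of the likelihood ratio, using the stationarity equations of $\check G_n$ (not of $\hat G_n$) to absorb the first-order score into penalty derivatives, strong identifiability to make the quadratic term $\sum_i \Delta_i^2$ bounded below by $n$ times the squared local moments of $G$, and (P3) to let the within-cluster penalty terms dominate the residual $O_p(n\gamma_n)\sum\norm{\btheta_h-\btheta_i}$ error. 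Your route---a first-order optimality contradiction at $\hat G_n$ itself---is the classical penalized-likelihood sparsistency argument, and your key observation (that $\int \nabla_\btheta f\, d\nu = 0$ plus Cauchy--Schwarz against the Hellinger rate upgrades the score bias from the naive $O_p(\gamma_n^{1/2})$, which is all a Taylor expansion around $G_0$ would give, to $O_p(\gamma_n)$) is correct and is indeed the step on which the contradiction hinges. What the paper's value-comparison buys in exchange for its heavier Taylor bookkeeping is that it never has to differentiate the penalty at a coincident pair of atoms.

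That last point is where your proposal has a concrete gap. If $\hat K_n > K_0$, some Voronoi cell contains two distinct atoms, so some within-cluster consecutive difference is nonzero---but that nonzero difference need not sit at the boundary of the cluster block. For a block $(\hbtheta_a,\hbtheta_b,\hbtheta_b)$ with $\hbtheta_a\neq\hbtheta_b$, the only nonzero within-cluster difference is interior, and the difference adjacent to the between-cluster edge is exactly zero. The stationarity condition for the atom you perturb then involves the penalty at a zero difference, where $r_{\lambda_n}(\norm{\cdot};\omega)$ is non-differentiable; the relevant subgradient ball has radius $r'_{\lambda_n}(0^+;\omega)$, which for SCAD/MCP equals $\lambda_n$ (the maximum of the main term) and for the ALasso equals $\lambda_n$ times another within-cluster weight of the same order as $\omega_{j^*}$. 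In all three cases the subgradient can exactly cancel the main term, and no contradiction results. The fix---perturbing each maximal group of coincident atoms jointly, so that internal zero differences contribute nothing to the directional derivative---is available but is a necessary additional step, not a cosmetic one. Three smaller issues: the first Cauchy--Schwarz factor in your bias bound requires $\int q_2^2\, p_{\hat G_n}\, d\nu = O_p(1)$, which (A4) does not give directly since it only asserts $q_2 \in L^2(P_0)$ (an extra H\"older step through (A3) is needed); the $O_p(n^{-1/2})$ fluctuation bound presumes a Donsker property for the class $\{U_{\kappa}(\cdot;\btheta,G)\}$ that is not among the stated conditions and must be verified; and $\lambda_n = o(\gamma_n)$ for the ALasso follows from (P2), not from (P3), which only constrains $\lambda_n$ from below.
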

Condition (P3) ensures that as $n \to \infty$, $r_{\lambda_n}$ grows sufficiently
fast in a vanishing neighborhood of $\eta=0$
 to prevent any mixing measure of order greater than $K_0$
from maximizing $L_n$. In addition to being model selection consistent, 
Theorem 3 shows that for most strongly identifiable parametric families $\calF$,
 $\hat G_n$ is a $(\log n/n)^{1/2}$-consistent estimator
of $G_0$. Thus, $\hat G_n$ improves on the $(\log n/n)^{1/4}$ rate of convergence 
of the overfitted MLE $\bar G_n$. This fact combined with Theorem 1.(iii) 
implies that the fitted atoms $\hbtheta_j$ 
are also $(\log n/n)^{1/2}$-consistent in estimating
the true atoms $\btheta_{0k}$, up to relabeling.

\subsection{\bf Remarks}\label{remark}
We now discuss several aspects of the GSF in regards to the 
(SI) condition, penalty $r_{\lambda}$, upper bound $K$, and its relation 
to existing approaches in Bayesian mixture modeling.

(I) \textbf{The Strong Identifiability (SI) Condition.} 
A wide range of univariate parametric families  
are known to be strongly identifiable, including 
most exponential families \citep{CHEN1995,CHEN2004},
and circular distributions \citep{HOLZMANN2004}. Strongly
identifiable families with multidimensional parameter space
include multivariate Gaussian distributions in location or scale, 
certain classes of Student-$t$ distributions, as well as von Mises, Weibull, 
logistic and Generalized Gumbel distributions 
\citep{HO2016strong}. In this paper, we also consider finite mixture of multinomial 
distributions. To establish conditions under which 
this family satisfies condition (SI), we begin with the following result.
\begin{proposition} 
\label{binomialSI}
Consider the binomial family with known number of trials $M\geq 1$,
\begin{equation}
\label{multinomSI}
\calF = \left\{f(y;\theta)={{M}\choose y} \theta^y (1-\theta)^{M-y}:
 \theta \in (0,1), \ y \in \{0, \dots, M\}\right\}.
\end{equation}
Given any integer $r \geq 1$, the condition $(r+1)K - 1 \leq M$ is
necessary and sufficient for $\calF$ to be strongly identifiable in 
the $r$-th order \citep{HEINRICH2018}.
That is, for any $K$ distinct points $\theta_1, \dots, \theta_K \in (0, 1)$, 
and $\beta_{jl} \in \bbR$, $j=1, \dots, K$, $l=0, \dots, r$, if 
\[
\sup_{y \in \{0, \dots, M\}} \left|\sum_{j=1}^K \sum_{l=0}^r 
\beta_{jl} \frac{\partial^l f(y; \theta_j)}{\partial\theta^l}\right| = 0,
\]
then $\beta_{jl} = 0$ for every $j=1, \dots, K$ and $l=0, \dots, r$.
\end{proposition}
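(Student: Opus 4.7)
The plan is to reduce the strong identifiability condition to a classical Hermite interpolation problem over the space of polynomials of degree at most $M$. The key observation is that, viewed as a function of $\theta$ for each fixed $y$, the kernel $f(y;\theta) = \binom{M}{y}\theta^y(1-\theta)^{M-y}$ is precisely the Bernstein basis polynomial of degree $M$ indexed by $y$; call it $b_y(\theta)$. Since the supremum in the statement is taken over the finite set $\{0,\ldots,M\}$, its vanishing is equivalent to requiring that, for every $y \in \{0,\ldots,M\}$,
\[
\sum_{j=1}^K \sum_{l=0}^r \beta_{jl}\, b_y^{(l)}(\theta_j) = 0.
\]
Because $\{b_0,\ldots,b_M\}$ forms a basis of the space of real polynomials of degree at most $M$, this is in turn equivalent to the single statement that $\sum_{j=1}^K \sum_{l=0}^r \beta_{jl}\, p^{(l)}(\theta_j) = 0$ for every polynomial $p$ of degree at most $M$.

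For \textbf{sufficiency}, assume $(r+1)K - 1 \leq M$. Classical Hermite interpolation at the distinct nodes $\theta_1,\ldots,\theta_K$ with common multiplicity $r+1$ guarantees, for each pair $(j_0,l_0)$ with $1 \leq j_0 \leq K$ and $0 \leq l_0 \leq r$, the existence of a polynomial $H_{j_0,l_0}$ of degree at most $(r+1)K-1$ satisfying $H_{j_0,l_0}^{(l)}(\theta_j) = \delta_{j j_0}\delta_{l l_0}$. Since $\deg H_{j_0,l_0} \leq M$, substituting $p = H_{j_0,l_0}$ into the polynomial identity above isolates and forces $\beta_{j_0,l_0}=0$ for every $(j_0,l_0)$, establishing strong identifiability in the $r$-th order. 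For \textbf{necessity}, suppose instead $(r+1)K - 1 > M$, i.e.\ $(r+1)K > M+1$. Regarding $\{(b_y^{(l)}(\theta_j))_{y=0}^M : 1 \leq j \leq K,\ 0 \leq l \leq r\}$ as a collection of $(r+1)K$ vectors in $\bbR^{M+1}$, a dimension count forces linear dependence, and any nontrivial relation produces coefficients $\{\beta_{jl}\}$ that falsify the identifiability requirement.

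The only nontrivial ingredient is the existence (and uniqueness) of the Hermite interpolants $H_{j_0,l_0}$ at distinct nodes with prescribed derivative data up to order $r$; this is a classical fact equivalent to the nonsingularity of a confluent Vandermonde matrix, and can be established either by an explicit determinant computation or by induction on $K$. Everything else in the argument is a bookkeeping reduction between the Bernstein representation of the binomial pmf and polynomial duality on the node set $\{\theta_1,\ldots,\theta_K\}$.
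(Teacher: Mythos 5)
Your proof is correct, and it takes a genuinely different route from the paper's. The paper works with the probability generating function $\psi(w;\theta)=(1+\theta w)^M$, expands $\sum_{j,l}\beta_{jl}\,\partial^l\psi(w;\theta_j)/\partial\theta^l$ as a polynomial in $w$, and sets up an explicit $(M+1)\times(r+1)K$ linear system whose coefficient matrix is reduced by elementary operations to one whose top square block is a confluent Vandermonde matrix with determinant $\prod_{i<j}(\theta_i-\theta_j)^{(r+1)^2}$; sufficiency follows from invertibility and necessity from the matrix having more columns than rows. You instead identify $f(y;\cdot)$ with the Bernstein basis polynomial $b_y$ of degree $M$, use that $\{b_0,\dots,b_M\}$ spans the polynomials of degree at most $M$ to convert the hypothesis into the vanishing of the linear functional $p\mapsto\sum_{j,l}\beta_{jl}p^{(l)}(\theta_j)$ on that whole space, and then test against the Hermite dual-basis polynomials $H_{j_0,l_0}$, which have degree at most $(r+1)K-1\le M$; your necessity argument is the same dimension count as the paper's, phrased as linear dependence of $(r+1)K$ vectors in $\bbR^{M+1}$. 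The two arguments rest on the same core fact—nonsingularity of the confluent Vandermonde matrix is precisely unisolvence of Hermite interpolation, as you note—but your version buys a cleaner conceptual reduction that avoids the generating-function expansion and the bookkeeping with the coefficients $(M)_l\binom{M-l}{s-l}$, while the paper's version buys a fully explicit matrix and determinant that make the elementary row reductions verifiable by direct computation. If you were to write yours out in full, the one ingredient you should either prove or cite precisely is the existence of the Hermite interpolants at distinct nodes with prescribed derivative data up to order $r$; everything else is, as you say, bookkeeping.
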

The inequality $(r+1)K-1 \leq M$ is comparable to the classical 
identifiability result of \cite{TEICHER1963}, which states that binomial mixture models
are identifiable with respect to their mixing measure if and only if $2K -1\leq M$. 
Using Proposition \ref{binomialSI}, we can readily establish the following result.
\begin{corollary}
\label{corollaryMultinomial}
A sufficient condition for the multinomial family
\begin{equation}
\label{multinomSI}
\textstyle \calF = \left\{{{M}\choose{y_1, \dots, y_d}} \prod_{j=1}^{d} \theta_j^{y_j}:
 \theta_j \in (0,1), \  0\leq  y_j\leq M ,\ \sum_j^d \theta_j = 1, \ \sum_j^d y_j=M\right\}
\end{equation}
with known number of trials $M \geq 1$,
to satisfy condition (SI) is $3K-1 \leq M$.
\end{corollary}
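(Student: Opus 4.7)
The plan is to reduce the second-order strong identifiability of the $d$-category multinomial family to that of the binomial family (Proposition \ref{binomialSI}) via a probability generating function (PGF) argument followed by a direction-slicing reduction. First I parameterize the open simplex through its first $d-1$ coordinates $\tilde\btheta_j = (\theta_{j,1},\ldots,\theta_{j,d-1})^\top$ with $\theta_{j,d} = 1 - \sum_{l<d}\theta_{j,l}$, and suppose (SI) holds at distinct $\btheta_1,\ldots,\btheta_K$ with scalars $\zeta_j \in \mathbb{R}$ and vectors $\bbeta_j, \bgamma_j \in \mathbb{R}^{d-1}$. Multiplying the (SI) identity by $\prod_l s_l^{y_l}$ and summing over the multinomial support translates it, using $\sum_{\by} f(\by;\btheta_j)\prod_l s_l^{y_l} = u_j^M$ with $u_j := \sum_l s_l\theta_{j,l} = s_d + \tilde s^\top\tilde\btheta_j$ and $\tilde s = (s_1-s_d,\ldots,s_{d-1}-s_d)^\top$, into the polynomial identity
\[
\sum_{j=1}^K\left\{\zeta_j u_j^M + M u_j^{M-1}\bbeta_j^\top\tilde s + M(M-1) u_j^{M-2}(\bgamma_j^\top\tilde s)^2\right\} = 0,\quad s \in \mathbb{R}^d.
\]

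Next I fix $\bv \in (0,1)^{d-1}$ generically so that $a_j := \bv^\top\tilde\btheta_j \in (0,1)$ are pairwise distinct; this holds on an open dense subset of $(0,1)^{d-1}$ since the $\tilde\btheta_j$ are distinct and the exceptional set is a finite union of hyperplanes. Setting $\tilde s = \bv$ and letting $t = s_d$ vary reduces the identity to $\sum_{j=1}^K\{\zeta_j(t+a_j)^M + Mb_j(t+a_j)^{M-1} + M(M-1)c_j^2(t+a_j)^{M-2}\} = 0$ for all $t \in \mathbb{R}$, with $b_j = \bbeta_j^\top\bv$ and $c_j = \bgamma_j^\top\bv$. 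Applying the substitution $t = 1/w$ and multiplying through by $w^M$ converts this into
\[
\sum_{j=1}^K\left\{\zeta_j(1+a_j w)^M + M b_j w(1+a_j w)^{M-1} + M(M-1) c_j^2 w^2(1+a_j w)^{M-2}\right\} = 0,
\]
valid for $w \neq 0$ and thus, by continuity, for all $w \in \mathbb{R}$. This is precisely the PGF form of the binomial second-order strong identifiability equation at the distinct parameters $a_1,\ldots,a_K \in (0,1)$ with coefficients $(\zeta_j, b_j, c_j^2)$. The assumption $3K - 1 \leq M$ therefore allows me to invoke Proposition \ref{binomialSI} with $r = 2$, yielding $\zeta_j = b_j = c_j^2 = 0$, hence $\zeta_j = b_j = c_j = 0$, for every $j = 1, \ldots, K$.

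Finally, $\zeta_j = 0$ independently of $\bv$, while the linear functionals $\bv\mapsto\bbeta_j^\top\bv$ and $\bv\mapsto\bgamma_j^\top\bv$ vanish on an open dense subset of $(0,1)^{d-1}$ and hence identically by continuity, forcing $\bbeta_j = \bgamma_j = \boldsymbol{0}$; this completes the verification of (SI). The main obstacle is the reduction in the middle step from the multidimensional PGF identity to a recognizable binomial PGF: the direction-slicing produces a polynomial identity in $t$ of the form $\sum_j (t+a_j)^{M-2} P_j(t) = 0$ with $\deg P_j \leq 2$, whose binomial PGF structure only becomes visible after the inversion $t \mapsto 1/w$. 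Once this algebraic correspondence is in place, Proposition \ref{binomialSI} delivers the scalar vanishing in each direction $\bv$, and a standard density/continuity argument promotes these to the vanishing of the full vectors.
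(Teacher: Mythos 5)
Your proof is correct, but it takes a genuinely different route from the paper's. The paper sums the (SI) identity over all coordinates of $\by$ except one, uses the fact that multinomial marginals are binomial, and applies Proposition \ref{binomialSI} with $r=2$ separately in each coordinate $s$ to the parameters $\theta_{1s},\dots,\theta_{Ks}$. You instead pass to the joint probability generating function, slice it along a generic direction $\bv$, and recognize the resulting one-dimensional identity (after the inversion $t\mapsto 1/w$) as the binomial PGF identity \eqref{binomialAssumption} at the projected parameters $a_j=\bv^\top\tilde\btheta_j$; a density/continuity argument in $\bv$ then recovers $\bbeta_j=\bgamma_j=\boldsymbol 0$. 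Two remarks. First, your reduction lands on the PGF identity rather than on the pointwise density identity appearing in the statement of Proposition \ref{binomialSI}; this is harmless because the two are equivalent (the PGF is a linear isomorphism on the coefficient sequence, and the paper's proof of the proposition in fact argues directly from the PGF identity), but you should say explicitly that you are invoking that equivalence rather than the proposition verbatim. Second, your generic-direction device buys something real: Proposition \ref{binomialSI} requires the binomial parameters to be \emph{distinct}, and while your projections $a_j$ are distinct for all $\bv$ outside a finite union of proper hyperplanes, the coordinates $\theta_{1s},\dots,\theta_{Ks}$ used in the paper's coordinatewise argument can coincide even when the vectors $\btheta_j$ are distinct, so the paper's proof silently requires an additional grouping step at that point. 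In that respect your argument is the more careful of the two, at the cost of being longer.
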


(II) \textbf{The Penalty Function $r_{\lambda_n}$.} 
Condition (P1) 
is standard and is satisfied by most well-known
regularization functions, including the Lasso, ALasso, SCAD and MCP,
 as long as $\lambda_n \to 0$, for large enough $a_n$, as $n \to \infty$. 
Conditions (P2) and (P3) are satisfied by SCAD and MCP  
when $\lambda_n \asymp \gamma_n^{\frac 1 2} \log n$.  
When $\gamma_n \asymp (\log n / n)^{1/2}$, it follows that $\lambda_n$
decays slower than the $n^{-1/4}$ rate,
contrasting the typical rate $\lambda_n \asymp n^{-1/2}$ 
encountered in variable selection problems for parametric regression 
(see for instance \cite{FAN2001}). 

We now consider the ALasso with the weights $\omega_j$ 
in \eqref{weightDef}, which are similar to 
those  
proposed by \cite{ZOU2006} in the context of 
variable selection in regression.
Condition (P2)  
implies $\lambda_n \gamma_n^{-\frac 3 2} \log n \to 0$, while
 condition 
(P3) implies $\lambda_n \gamma^{-\frac{\beta+2}{2}}_n \to \infty$, 
where $\beta$ is the parameter in the weights. Thus, both conditions 
(P2) and (P3) are satisfied by the ALasso with the weights 
in \eqref{weightDef} only when $\beta > 1$ 
and by choosing $\lambda_n \asymp \gamma_n^{3/2} / \log n$. 
In particular, the value $\beta=1$ is invalid. 
When $\gamma_n \asymp (\log n / n)^{1/2}$, it follows that 
$\lambda_n \asymp n^{-3/4} (\log n)^{-1/4}$ which decays 
much faster than the sequence $\lambda_n$ required for the 
SCAD and MCP discussed above. 
This discrepancy can be anticipated from the fact
the weights $\omega_j$ corresponding to
nearby atoms of $\widetilde G_n$ diverge.
It is worth noting that the typical
tuning parameter for the 
ALasso in parametric regression 
is required to satisfy
$\sqrt n \lambda_n \to 0$ and $n^{\frac{1+\beta}{2}}\lambda_n \to \infty$,
for any $\beta > 0$.

Finally, we note that the Lasso penalty $r_{\lambda_n}(\eta; \omega) = \lambda_n |\eta|$ 
cannot simultaneously 
satisfy conditions (P2) and (P3), since they would require opposing choices of $\lambda_n$.
Furthermore, for this penalty, when $\Theta \subseteq \bbR$ and $\alpha$ is the 
natural ordering on the real line, that is $\theta_{\alpha(1)} \leq \dots \leq \theta_{\alpha(K)}$,
we obtain the telescoping sum 
\[
\lambda_n \sum_{j=1}^{K-1} |\eta_j| = 
\lambda_n \sum_{j=1}^{K-1} (\theta_{\alpha (j+1)} - \theta_{\alpha (j)}) = \lambda_n(\theta_{\alpha (K)} - \theta_{\alpha(1)})
\]
which fails to penalize the vast majority of the overfitted components.

(III) \textbf{Choice of the Upper Bound $K$.}
By Theorem \ref{orderConsistency}, as long as the upper bound on the mixture order 
satisfies $K \ge K_0$, the GSF provides a consistent 
estimator of $K_0$.
The following result shows 
the behaviour of the GSF for a misspecified bound $K < K_0$. 
\begin{proposition}
\label{prop:misspec_K} 
Assume that the family $\calF$ satisfies condition (A3), and 
that the mixture family $\{p_G: G \in \calG_{K}\}$ 
is identifiable,
Then, for any $K < K_0$, as $n \to \infty$, 
the GSF order estimator $\hat K_n$ satisfies:
\(
\bbP(\hat K_n = K) \to 1.
\)
\end{proposition}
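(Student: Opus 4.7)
The plan is to show that the GSF estimator $\hat G_n$ converges to a KL pseudo-truth $G^* \in \calG_K$ with exactly $K$ distinct atoms and positive weights, after which a likelihood-dominates-penalty comparison rules out merging. For the consistency step, define $G^* = \argmax_{G \in \calG_K} \EE_{P_0}[\log p_G(\bY)]$, which is unique up to label-switching by identifiability of $\{p_G: G \in \calG_K\}$. Under (A3), $\log f(\cdot; \btheta)$ has an $L^1(P_0)$ envelope and one obtains a uniform law of large numbers $n^{-1}l_n(G) \to \EE_{P_0}[\log p_G(\bY)]$ on $\calG_K$. By (F) and (P1), the normalized penalties $n^{-1}\varphi_n$ and $\sum_j r_{\lambda_n}$ vanish on compact subsets of the interior. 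Adapting the argument of Theorem \ref{paramConsistency}.(i)--(ii) then yields $W_r(\hat G_n, G^*) \to 0$ in probability and $\min_j \hpi_j$ bounded away from zero.

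The crux is to show $G^*$ has $K$ distinct atoms. Write $G^* = \sum_{j=1}^{K'}\pi_j^*\delta_{\btheta_j^*}$ with $K' \leq K$ and $\pi_j^* > 0$, and suppose for contradiction $K' < K$. Since $K' + 1 \leq K$, the perturbation $G_\epsilon = (1-\epsilon)G^* + \epsilon\delta_{\btheta}$ lies in $\calG_K$ for every $\btheta \in \Theta$, and first-order optimality of $G^*$ yields the Lindsay--Jewell inequality
\[
D(\btheta) := \EE_{P_0}\!\left[\frac{f(\bY;\btheta)}{p_{G^*}(\bY)}\right] \leq 1, \quad \btheta \in \Theta.
\]
Integrating against $G_0$ and applying Cauchy--Schwarz,
\[
\int_{\Theta} D(\btheta)\, dG_0(\btheta) \;=\; \int \frac{p_{G_0}^2}{p_{G^*}}\, d\nu \;\geq\; 1,
\]
with equality iff $p_{G^*} = p_{G_0}$ $\nu$-a.e. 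By the minimality of $K_0$ and the hypothesis $K < K_0$, we cannot have $p_{G^*} = p_{G_0}$; hence this integral is strictly greater than $1$, forcing $D(\btheta^\dagger) > 1$ for some $\btheta^\dagger \in \mathrm{supp}(G_0)$, contradicting $D \leq 1$. Thus $K' = K$.

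Finally, from $G^*$ having $K$ distinct atoms with positive weights, one closes the proof by a likelihood-vs-penalty balance. The class $\calG_{K-1}$ is disjoint from the KL-maximizer $\{G^*\} \subseteq \calG_K \setminus \calG_{K-1}$, so by compactness $\sup_{G \in \calG_{K-1}} \EE_{P_0}[\log p_G(\bY)] < \EE_{P_0}[\log p_{G^*}(\bY)]$. By the uniform law of large numbers, there exists $c > 0$ such that $l_n(G^*) - \sup_{G \in \calG_{K-1}} l_n(G) \geq cn$ with probability tending to one. Meanwhile, by (P1) and (F), $\mathrm{diam}(n\,r_{\lambda_n}(I_1;I_2)) + \varphi_n = O(a_n) = o(n)$ on the relevant compacts. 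Hence $L_n(G^*) > \sup_{G \in \calG_{K-1}} L_n(G)$ for large $n$, which forces the maximizer $\hat G_n$ to lie outside $\calG_{K-1}$, i.e.\ $\hat K_n = K$, with probability tending to one. The main obstacle is the second step: the Lindsay--Jewell plus Cauchy--Schwarz argument ruling out degenerate pseudo-truths; both the perturbation direction adding a new atom and the integration trick against $G_0$ are essential, and the minimality of $K_0$ (rather than the weaker identifiability of $\calG_K$ alone) is exactly what makes the contradiction work.
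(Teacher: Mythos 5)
Your proof is correct in its essentials, but it reaches the conclusion by a more self-contained route than the paper. The paper's proof leans on two imported results (Theorem B.5 of Supplement B): Leroux's lemma that the Kullback--Leibler projections satisfy the strict decrease $\KL(p_{G^*_k},p_{G_0}) > \KL(p_{G^*_{k+1}},p_{G_0})$ for $k+1\le K_0$ --- which is precisely the statement that the best $K$-component approximation uses all $K$ components --- and Keribin's convergence $n^{-1}\{l_n(\bar G_n^{(k)})-l_n(G_0)\}\to-\KL(p_{G^*_k},p_{G_0})$ for the underfitted MLE. You re-derive the first of these from scratch via the Lindsay--Jewell gradient inequality $D(\btheta)\le 1$ together with the Cauchy--Schwarz/integration-against-$G_0$ trick, and you replace the second by a uniform law of large numbers over $\calG_K$; the closing likelihood-versus-penalty comparison (a population KL gap of order $n$ against penalties of order $o(n)$) is then the same in both arguments, with your comparison of $L_n(G^*)$ against $\sup_{G\in\calG_{K-1}}L_n(G)$ playing the role of the paper's comparison of $L_n(G^*_K)$ against each $L_n(\hat G_n^{(k)})$, $k<K$. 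What your route buys is transparency about \emph{why} the projection has full support and exactly where the minimality of $K_0$ enters; what it costs is that you must supply the standard justifications the citations package for you (existence of $G^*$, differentiation under the expectation via concavity of $\epsilon\mapsto\log p_{G_\epsilon}$, and the Glivenko--Cantelli property of $\{\log p_G: G\in\calG_K\}$ under the envelope in (A3)).

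Two small remarks. First, your opening paragraph claiming $W_r(\hat G_n,G^*)\to0$ and uniqueness of $G^*$ is not needed for the conclusion and is the one place where the argument is shaky: identifiability of $\calG_K$ does not make the (non-convex) KL projection unique, and the Wasserstein consistency argument of Theorem \ref{paramConsistency} relies on the Ho--Nguyen inequality, which is local to the true $G_0$ and is not available around a pseudo-truth. Since the final step only requires the existence of \emph{some} maximizer with $K$ distinct atoms, and your gradient argument applies to every maximizer, that paragraph can simply be deleted. Second, like the paper's own proof, you assert that $n\sum_j r_{\lambda_n}(\norm{\bfeta^*_j};\omega_j)=o(n)$; strictly speaking this requires the weights $\omega_j$ to remain in a compact subset of $(0,\infty)$, which calls for a word about the behaviour of $\widetilde G_n$ in the misspecified regime --- but this is a gap you share with the paper's argument, not one you introduced.
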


Guided by the above result, if the GSF chooses the prespecified 
upper bound $K$ as the estimated order, the bound is likely misspecified 
and larger values should also be examined. 
This provides a natural heuristic for  
choosing an upper bound $K$ for the GSF in practice, which we further elaborate upon 
in Section \ref{sensit} of the simulation study.

(IV) \textbf{Connections between the GSF and Existing Bayesian Approaches.} 
When 
$\varphi(\pi_1, \dots, \pi_K)= (1-\gamma) \sum_{j=1}^K \log\pi_j$, 
for some $\gamma > 1$, the estimator  $\widetilde G_n$ in \eqref{gntilde} 
can be viewed as the posterior mode of 
the overfitted Bayesian mixture model 
\begin{align}
\label{eq:overfitted_prior_theta} 
\btheta_1, \dots, \btheta_K ~&\overset{\text{iid}}{\sim}~ H, \\
\label{eq:overfitted_prior_pi}
(\pi_1, \dots, \pi_K) ~&\sim ~\text{Dirichlet}\left(\gamma, \dots,\gamma \right) ,~
\bY_i | G = \sum_{j=1}^K \pi_j \delta_{\btheta_j}
~~\overset{\text{iid}}{\sim} ~p_G,~
i=1, \dots, n,
\end{align}
where $H$ is 
a uniform prior on the (compact) set $\Theta \subseteq \bbR^d$. 
Under this setting, \cite{ROUSMENG2011} 
showed that 
when $\gamma < d/2$, the posterior
distribution has the effect of asymptotically emptying out redundant
components of the overfitted mixture model, such that  
the posterior expectation of the mixing probabilities of the $(K-K_0+1)$ extra 
components decay at the rate $n^{-1/2}$, up to polylogarithmic factors. 
On the other hand, if $\gamma > d/2$, 
two or more of the posterior atoms with non-negligible  
mixing probabilities will have the tendency to approach each other. 
The authors discuss that the former case results in more 
stable behaviour of the posterior distribution. 
In contrast, under our setting with the choice $\gamma > 1$, 
Theorem 
\ref{paramConsistency}.(i) implies that all the mixing probabilities of $\widetilde G_n$ 
are bounded away from zero with probability tending to one. 
This behaviour matches their above setting $\gamma > d/2$, though with
a generally different cut-off for $\gamma$.
We argue that the GSF does not suffer from the instability
described by \cite{ROUSMENG2011} in this setting, as it
proposes a simple procedure for merging nearby atoms using the second 
penalty $r_\lambda$ 
in \eqref{penloglik}, hinging upon the notion of cluster ordering. 
From a Bayesian standpoint, this penalty can be viewed as replacing the iid prior $H$ in 
\eqref{eq:overfitted_prior_theta} by the following exchangeable and non-iid prior
\begin{equation}
\label{eq:gsf_prior}
(\btheta_1, \dots, \btheta_K) \sim p_{\btheta}(\btheta_1, \dots, \btheta_K) \propto
    \prod_{j=1}^{K-1} \exp\Big\{-r_{\lambda}\big(\norm{\btheta_{\alpha_{\btheta}(j+1)}
- \btheta_{\alpha_{\btheta}(j)}}; \omega_j\big)\Big\}
\end{equation}
up to rescaling of $r_\lambda$, which  places high-probability mass on 
nearly-overlapping atoms. 
On the other hand, 
\cite{PETRALIA2012}, \cite{xie2020}
replace $H$ by 
so-called repulsive priors, which favour diverse atoms, 
and are typically used with
$\gamma < d/2$. For example, \cite{PETRALIA2012}
study the prior 
\begin{equation}
\label{eq:repulsive_prior}
(\btheta_1, \dots, \btheta_K) \sim p_{\btheta}(\btheta_1, \dots, \btheta_K) \propto
    \prod_{j < k}^K \exp\left\{-\tau \norm{\btheta_{j} - \btheta_{k}}^{-1}
    \right\},\quad \tau > 0.
\end{equation}
In contrast to the GSF, 
the choice $\gamma < d/2$ 
ensures vanishing posterior mixing probabilities
corresponding to redundant components, which is further encouraged 
by the repulsive prior \eqref{eq:repulsive_prior}. 
Without a post-processing step which thresholds
these mixing probabilities, however, this methods do not
yield consistent order selection. 
It turns out that by further placing a prior on $K$,
order consistency can be obtained 
\citep{nobile1994,miller2018}.
 
A distinct line of work in nonparametric Bayesian mixture modeling
places a prior, 
such as a Dirichlet process,
directly on the mixing measure $G$. 
Though the resulting 
posterior typically has infinitely-many atoms,
consistent estimators of $K_0<\infty$ 
can be obtained using post-processing techniques, such as the Merge-Truncate-Merge (MTM) 
method of \cite{guha2019}. 
Both the GSF and MTM aim at reducing the overfitted mixture order
by merging nearby atoms. 
Unlike the GSF, however, the Dirichlet
process mixture's posterior may have vanishing 
mixing probabilities, hence a single merging stage of its atoms 
is insufficient to obtain an asymptotically 
correct order. The MTM thus also truncates 
such redundant components,
and performs a second merging of their mixing probabilities 
to recover  a proper mixing measure.
Both the truncation and merging stages use hard-thresholding rules.  
We compare the two methods in our simulation study,
Section \ref{sec:sim-merging}. 
  
\section{Simulation Study}
\label{sec:simulation}
We conduct a simulation study to assess the finite-sample performance of the GSF. 
We  develop a modification of the EM algorithm 
to obtain an approximate solution to the optimization 
problem in \eqref{mple1}. The main ingredients
are the Local Linear Approximation algorithm of 
\cite{ZOU2008} 
for nonconcave penalized likelihood models, 
and the proximal gradient method
\citep{NESTEROV2004}.
Details of our numerical solution are given in 
Supplement D.1. The algorithm is implemented in our R package \package.
 
In the GSF, 
the tuning parameter $\lambda$ regulates the order of the fitted
model. Figure \ref{fig:coeff-fig1} (see also Figure
\ref{fig:tuning} in Supplement E.7)
 shows the evolution of 
the parameter estimates $\hbtheta_j(\lambda)$ 
for a simulated dataset, over a grid of $\lambda$-values. 
These qualitative representations can provide insight about the order of the mixture model, 
for purposes of exploratory data analysis. 
For instance, as seen in the figures, 
when small values of $\lambda$ lead to a significant reduction in the postulated order $K$,  
a tighter bound on $K_0$ can often be obtained. 
In applications where a specific choice of $\lambda$ is required, common techniques
include $v$-fold Cross Validation and the BIC, applied
directly to the MPLE for varying values of $\lambda$ \citep{Zhangetal2010}.
In our simulation,  
we use the BIC due to its low computational burden. 

\textbf{Default Choices of Penalties, Tuning Parameters, and Cluster Ordering.}
Throughout 
all simulations and real data analyses in this paper, 
including those contained in 
Figures~\ref{fig:coeff-fig1}-\ref{fig:comparisonFigure}, 
the following choices
were used by default unless otherwise specified. 
We used the penalty $\varphi(\pi_1, \dots, \pi_K) = (1-\gamma) \sum_{j=1}^K \log\pi_j$,
with the constant $1-\gamma \approx -\log 20$ 
following the suggestion of \cite{CHEN1996}. 
The penalty $r_{\lambda}$ is taken to be the SCAD by default,
though we also consider simulations below which employ
the MCP and ALasso penalties. For the ALasso, 
the weights $\omega_{j}$ are specified as in \eqref{weightDef}.
The tuning parameter $\lambda$ is selected using the BIC as described above. 
The cluster ordering $\alpha_{\btheta}$ is chosen
as in \eqref{alpha-example}. We recall that
this choice does not constrain $\alpha_{\btheta}(1)$---in our
simulations, we chose this value using a heuristic which ensures
that $\alpha_{\btheta}$ reduces to the natural ordering on $\bbR$ 
in the case $d=1$. Further numerical details are given in Supplement~D.2. 

\subsection{\bf Parameter Settings and Order Selection Results}
\label{sim-setting}
Our simulations are based on multinomial and multivariate location-Gaussian mixture models.
We compare the GSF under the SCAD (GSF-SCAD), MCP (GSF-MCP) 
and ALasso (GSF-ALasso) penalties to the AIC, BIC, and ICL 
\citep{BIERNACKI2000},  
as implemented in the R packages \texttt{mixtools} \citep{BENAGLIA2009} 
and \texttt{mclust} \citep{FRALEY1999}. 
ICL performed similarly to the BIC in our multinomial simulations, 
but generally underperformed in our Gaussian simulations. Therefore, below
we only discuss the performance of AIC and BIC.
 
We report the proportion of times that each method selected the correct order $K_0$, out of 500 replications, 
based on the models described below. For each simulation, we also report
detailed tables in Supplement E with the number of times each method 
incorrectly selected orders other than $K_0$. 
 We fix the upper bound $K=12$ throughout this section. 
For this choice, 
the effective number of parameters of the mixture models hereafter is less than
the smallest sample sizes considered.

{\bf Multinomial Mixture Models.} 
The density function of multinomial mixture
model of order $K$ is given by
\begin{equation}
p_G(\by) = \sum_{j=1}^K \pi_j {{M}\choose{y_1, \dots, y_d}} \prod_{l=1}^d \theta_{jl}^{y_l}
\end{equation}
with $\btheta_j=(\theta_{j1}, \dots, \theta_{jd})^\top \in (0,1)^d$, 
$\by=(y_1, \dots, y_d)^\top \in \{1, \dots, M\}^d$, where
$\sum_{l=1}^d \theta_{jl} = 1$, $\sum_l^d y_l = M$. 
We consider 7 models with true orders $K_0=2, 3, ..., 8$, dimensions $d=3,4,5$, 
and $M=35, 50$ to satisfy the strong identifiability condition $3K-1 \leq M$ described 
in Corollary \ref{corollaryMultinomial}. 
The parameter settings are given in Table \ref{tab:multinomial_models}.
The results for $M=50$ are reported in Figure \ref{fig:multinomialResults} below. 
Those for $M=35$ are similar, and are relegated 
to Supplement E.1. 
The simulation results are based on the sample sizes $n=100, 200, 400$.  
\begin{table}[H]
\centering 
\begin{tabular}{l*{4}{c}r}
\hline 
	Model & 1 & 2 & 3  \\
	\hline 
	$\pi_1, \btheta_1$ & $.2, (.2, .2, .2, .2, .2)$ & $\frac 1 3, (.2, .2, .2, .2, .2)$ &$.25, (.2, .2, .6)$   \\
	$\pi_2, \btheta_2$ & $.8, (.1, .3, .2, .1, .3)$ & $\frac 1 3, (.1, .3, .2, .1, .3)$ &$.25, (.2, .6, .2)$  \\
	$\pi_3, \btheta_3$ & &$\frac 1 3, (.3, .1, .2, .3, .1)$ &$.25,(.6, .2, .2)$  \\
	$\pi_4, \btheta_4$ & & &$.25,(.45, .1, .45)$ \\[0.1in]
\end{tabular}
\begin{tabular}{l*{5}{c}r}
 \hline
	Model & 4 & 5 & 6 & 7 \\
	\hline  
	$\pi_1, \btheta_1$ & $.2, (.2, .2, .6)$ & $\frac 1 6, (.2, .2, .6)$   & $\frac 1 7, (.2, .2, .6)$   & $.125, (.2, .2, .2, .4)$ \\
	$\pi_2, \btheta_2$ &$.2, (.6, .2, .2)$ & $\frac 1 6, (.2, .6, .2)$   & $\frac 1 7, (.2, .6, ,2)$   & $.125, (.2, .2, .4, .2)$ \\
	$\pi_3, \btheta_3$ &$.2, (.45, .1, .45)$ & $\frac 1 6, (.6, .2, .2)$   & $\frac 1 7, (.6, .2, .2)$   & $.125, (.2, .4, .2, .2)$ \\
	$\pi_4, \btheta_4$  &$.2, (.2, .7, .1)$ & $\frac 1 6, (.45, .1, .45)$ & $\frac 1 7, (.45, .1, .45)$ & $.125, (.4, .2, .2, .2)$ \\
	$\pi_5, \btheta_5$  & $.2, (.1, .7, .2)$ & $\frac 1 6, (.2, .7, .1)$   & $\frac 1 7, (.1, .7, .2)$   & $.125, (.1, .3, .1, .5)$ \\
	$\pi_6, \btheta_6$ &  & $\frac 1 6, (.1, .7, .2)$   & $\frac 1 7, (.7, .2, .1)$   & $.125, (.1, .3, .5, .1)$ \\
	$\pi_7, \btheta_7$ & &                             & $\frac 1 7, (.1, .2, .7)$   & $.125, (.1, .5, .3, .1)$ \\ 
 	$\pi_8, \btheta_8$ & &                             &                             & $.125, (.5, .1, .3, .1)$ \\
	\hline
\end{tabular}
\caption{\label{tab:multinomial_models} Parameter settings for the multinomial mixture Models 1--7.}
\end{table}

Under Model 1, all five methods selected the correct order most often,
and exhibited similar performance across all the sample sizes---the results
are reported in Table \ref{tab:supp_multinomial_1}  of Supplement E.1. 
The results for Models 2-7 with orders $K_0=2, 3, 4, 5$, are plotted by percentage of correctly selected 
orders in Figure \ref{fig:multinomialResults}. 
Under Model 2, the correct order is selected most frequently by the BIC and GSF-ALasso, 
for all the sample sizes.
Under Models 3 and 4, the GSF with all three penalties, in particular the GSF-ALasso, 
outperforms AIC and BIC. 
Under Models 5-7, all methods selected the correct order for $n=100$ fewer than 55\% of the time. For $n=200$, the GSF-SCAD and GSF-MCP 
select the correct number of components more than 55\% of the time, unlike AIC and BIC.  All three GSF penalties continue to outperform the other methods when $n=400$.

\begin{figure}[t]
  \includegraphics[width=\linewidth]{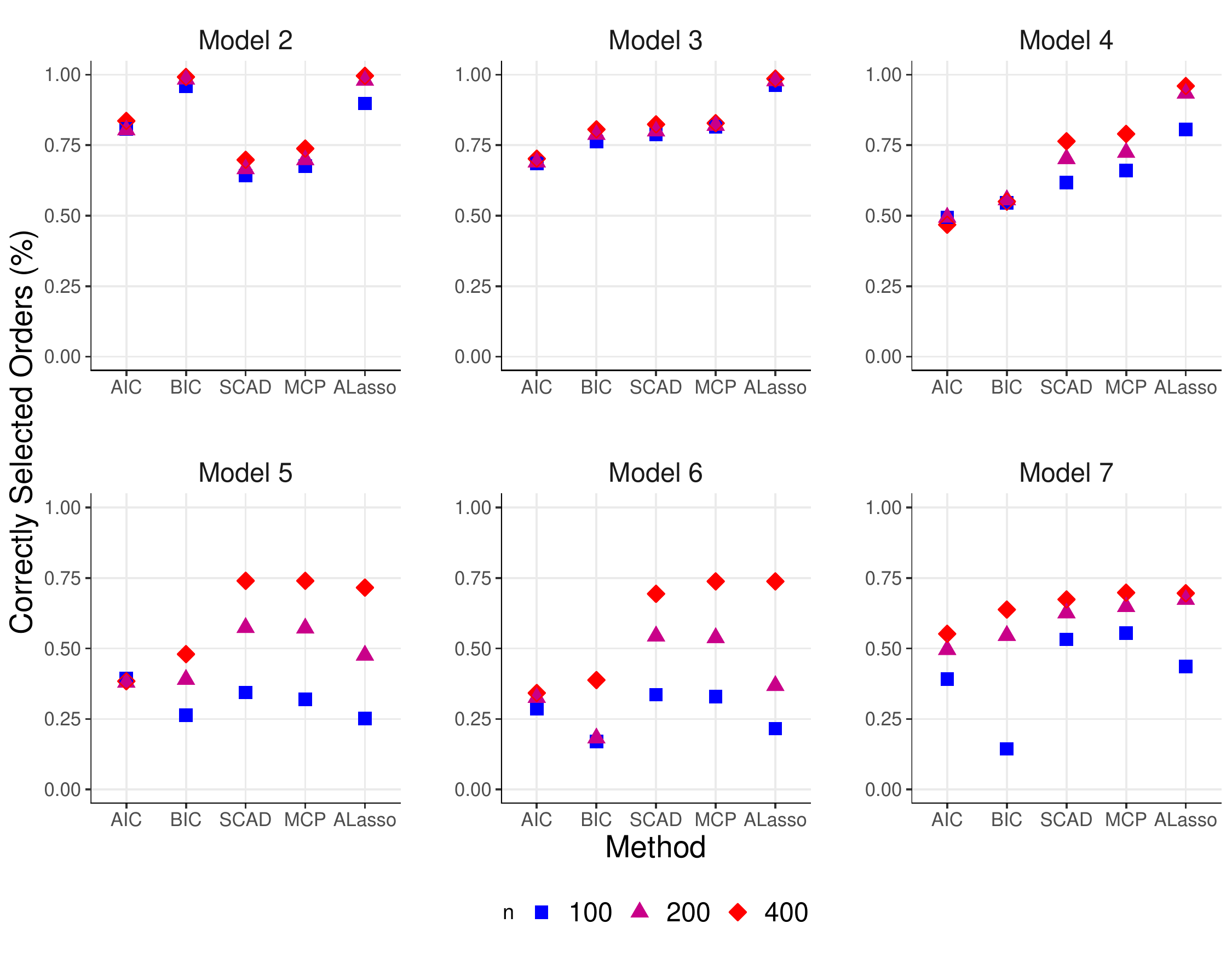}  
    \captionof{figure}{ Percentage of correctly selected orders for the multinomial mixture models.} 
    \label{fig:multinomialResults}
\end{figure}

{\bf Multivariate Location-Gaussian Mixtures 
with Unknown Covariance Matrix.}  
The density function of a multivariate Gaussian mixture model in mean,
of order $K$, is given by 
\begin{table}[t]
\centering
\resizebox{.9\textwidth}{!}{ 
\begin{tabular}{l*{7}{c}r}
\hline
	Model  & $\sigma_{ij}$ & $\pi_1, \bmu_1$ & $\pi_2, \bmu_2$ & $\pi_3, \bmu_3$ & $\pi_4, \bmu_4$ & $\pi_5, \bmu_5$ &  &   \\
	\hline \\
 1.a & $I(i=j)$ & $.5, (0, 0)^\top$ & .5, $(2, 2)^\top$ \\ 
 1.b & $(0.5)^{|i-j|}$ & $.5, (0, 0)^\top$ & .5, $(2, 2)^\top$ \\ \\
 2.a & $I(i=j)$ & $.25, (0, 0)^\top$ & $.25, (2, 2)^\top$ & $.25, (4, 4)^\top$ & $.25, (6, 6)^\top$ \\ 
 2.b & $(0.5)^{|i-j|}$ & $.25, (0, 0)^\top$ & $.25, (2, 2)^\top$ & $.25, (4, 4)^\top$ & $.25, (6, 6)^\top$ \\ \\
 $\begin{matrix} 
 \text{3.a} \\ \text{3.b}\end{matrix}$ & $\begin{matrix}I(i=j)\\ (0.5)^{|i-j|}\end{matrix}$ & $\frac 1 3, \begin{pmatrix} 0\\ 0\\ 0\\ 0 \end{pmatrix}$ & $\frac 1 3, \begin{pmatrix} 2.5\\ 1.5 \\ 2 \\ 1.5 \end{pmatrix}$ 
 																							  & $\frac 1 3, \begin{pmatrix} 1.5 \\ 3 \\ 2.75 \\ 2 \end{pmatrix}$  \\ \\
$\begin{matrix}  \text{4.a} \\ \text{4.b}\end{matrix}$ & $\begin{matrix}I(i=j)\\ (0.5)^{|i-j|}\end{matrix}$ & $\frac 1 5, \begin{pmatrix} 0 \\ 0 \\ 0 \\ 0 \\ 0 \\ 0 \end{pmatrix}$ 
 			    & $\frac 1 5, \begin{pmatrix} -1.5 \\ 2.25 \\ -1 \\ 0 \\ .5 \\ .75 \end{pmatrix}$
 			    & $\frac 1 5, \begin{pmatrix} .25 \\ 1.5 \\ .75 \\ .25 \\ -.5 \\ -1\end{pmatrix}$ 
 			    & $\frac 1 5, \begin{pmatrix} -.25 \\ .5 \\ -2.5 \\ 1.25 \\ .75 \\ 1.5 \end{pmatrix}$  		 			    	   
 			    & $\frac 1 5, \begin{pmatrix} -1 \\ -1.5 \\ -.25 \\ 1.75 \\ -.5 \\ 2\end{pmatrix}$
			    \\ \\	

$\begin{matrix}  \text{5.a} \\ \text{5.b}\end{matrix}$ & $\begin{matrix}I(i=j)\\ (0.5)^{|i-j|}\end{matrix}$ &  $\frac 1 5, \begin{pmatrix} 0 \\ 0 \\ 0 \\ 0 \\ 0 \\ 0 \\ 0 \\ 0\end{pmatrix}$ 
 			    & $\frac 1 5, \begin{pmatrix} 1 \\ 1.5 \\ 0.75 \\ 2 \\ 1.5 \\ 1.75 \\ 0.5 \\ 2.5\end{pmatrix}$
 			    & $\frac 1 5, \begin{pmatrix} 2 \\ 0.75 \\ 1.5 \\ 1 \\ 1.75 \\ 0.5 \\ 2.5 \\ 1.5 \end{pmatrix}$ 
 			    & $\frac 1 5, \begin{pmatrix} 1.5 \\ 2 \\ 1 \\ 0.75 \\ 2.5 \\ 1.5 \\ 1.75 \\ 0.5 \end{pmatrix}$  		 			    	   
 			    & $\frac 1 5, \begin{pmatrix} 0.75 \\ 1 \\ 2 \\ 1.5 \\ 0.5 \\ 2.5 \\ 1.5 \\ 1.75 \end{pmatrix}$ 	 \\	
			    \hline		     
\end{tabular}
}
\caption{\label{tab:normalModels} 
Parameter settings for the multivariate Gaussian mixture models.  
}
\end{table} 
\[
p_G(\by) = \sum_{j=1}^K \pi_j \frac{1}{\sqrt{(2 \pi)^d |\bSigma}|} 
	\exp \left\{- \frac 1 2 (\by- \bmu_j)^\top \bSigma^{-1} (\by- \bmu_j) \right\},
\]
where $\bmu_j \in \bbR^d, j=1, \dots, K$, and $\bSigma = \{\sigma_{ij}: i, j= 1, \ldots, d\}$ is a positive definite $d \times d$ covariance matrix.
We consider the 10 mixture models in Table \ref{tab:normalModels} 
with true orders $K_0 = 2, 3, 4, 5$, and with dimension $d=2, 4, 6, 8$. 
For each model, we consider both an identity and  non-identity covariance 
matrix $\bSigma$,  which is estimated as an unknown parameter. 
The simulation results are based on the sample sizes $n=200, 400, 600, 800$. 

The results for Models 1.a, 1.b, 3.a, 3.b, 4.a, 4.b are plotted by 
percentage of correctly selected orders in Figure \ref{fig:normalRes} below. 
Detailed results for the more 
challenging Models 2.a, 2.b, 5.a and 5.b are reported by percentage of selected orders 
between $1, \dots, K(=12)$ in Tables \ref{tab:supp_normal_2} and \ref{tab:supp_normal_5}
of Supplement E.2. 
 
\begin{figure}[t]
\begin{center}
  \includegraphics[width=0.95\linewidth]{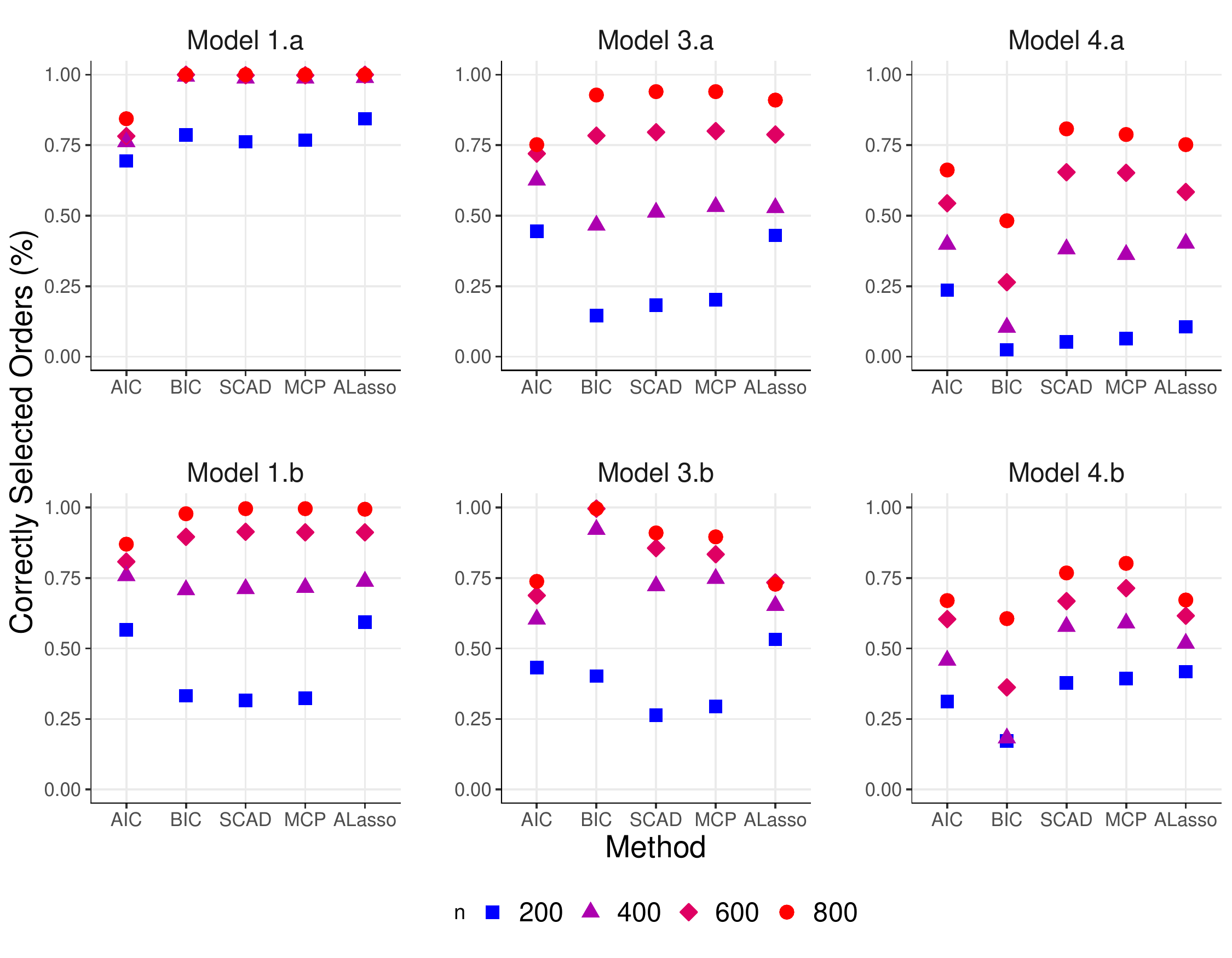}
  \caption{
  Percentage of correctly selected orders for the  multivariate Gaussian mixture models.} 
\label{fig:normalRes}	
\end{center}
\end{figure}

In Figure \ref{fig:normalRes}, 
under  
Models 1.a and 1.b with $d=2$, all the methods selected the correct number of components most frequently for $n=400, 600, 800$; however, the performance of all methods deteriorates in Model 1.b with non-identity covariance matrix when $n=200$.
Under Model 3.a with $d=4$, all methods perform similarly for $n=400, 600, 800$, but the GSF-ALasso and the AIC outperformed the other methods for $n=200$. Under Model 3.b, the BIC outperformed the other methods for $n=400, 600, 800$, but the GSF-ALasso again performed the best for $n=200$. In Models 4.a and 4.b with $d=6$, the GSF with the three penalties outperformed AIC and BIC across all sample sizes. 

From Table \ref{tab:supp_normal_2}, under Model 2.a with $d=2$ and 
identity covariance matrix, the BIC and the GSF with the three penalties 
underestimate and the AIC overestimates the true order, for sample 
sizes $n=200, 400$. The three GSF penalties significantly outperform 
the AIC and BIC, when $n=600, 800$. For the more difficult Model 2.b 
with non-identity covariance matrix, all methods underestimate across all 
sample sizes considered, but the AIC selects the correct order most frequently. 
From Table \ref{tab:supp_normal_5}, under Model 5.a, all methods apart from AIC underestimated $K_0$ for $n=200, 400, 600$, and the three GSF penalties outperformed the other methods when $n=800$. Interestingly, the performance of all methods improves for Model 5.b with non-identity covariance matrix. Though all methods performed well for $n=400, 600, 800$, the BIC did so the best, while the GSF-ALasso exhibited the best performance when $n=200$.

In summary, depending on the models and sample sizes considered here, 
in some cases AIC or BIC exhibit the best performance, while in others the GSF 
based on at least one of the penalties 
(ALasso, SCAD, or MCP) outperforms. 
The universality of information criteria 
in almost any model selection problem is in part
due to their ease of use on the investigator's 
part, while many other methods require specification of multiple tuning parameters.
Though we defined the GSF 
in its most general form, 
our empirical investigation suggests
that, other than
$\lambda$ and $K$, its tuning
parameters ($\alpha_{\bt},\varphi,\omega_j$, and choices therein) 
may not need to be tuned beyond their default choices used here.
We have shown that off-the-shelf
data-driven methods for selecting $\lambda$ 
yield reasonable performance.  
We next discuss
the choice of the bound $K$. 

\subsection{\bf Sensitivity Analysis for the Upper Bound $K$}
\label{sensit}

In this section, we assess the sensitivity of the GSF with
respect to the choice of upper bound $K$ via simulation. 
Specifically, we show the behaviour of the GSF
for a range of $K$-values which are both misspecified ($K < K_0$)
and well-specified ($K \geq K_0$).
In the former case,  
by Proposition \ref{prop:misspec_K},
the GSF is expected to select the order $K$, whereas in the
latter case, by Theorem \ref{orderConsistency}, 
the GSF selects the correct $K_0$ with high probability. 

We consider the multinomial 
Models 3 ($K_0= 4$) and 5 ($K_0=6$) with sample size $n=400$, 
and the Gaussian Models 3.a ($K_0= 3$) 
and 4.a ($K_0= 5$) with sample size $n=600$. 
The results are based on 
$80$ simulated samples from each model. 
For each sample, we apply the GSF-SCAD 
with $K= 2, \ldots, 25$,
and then report the most frequently estimated 
order $\hat K$, as well as the average 
estimated order over the $80$ samples. 
The results are given in 
Figure~\ref{fig:sensitivity_simulation}.
Detailed results are reported 
by percentage of selected orders with respect to the bounds $K= 2, \dots, 25$, 
in Tables \ref{tab:supp_multiResults1}-\ref{tab:supp_normalResults2} 
of Supplement E.3.

For all four models, 
it can be seen that the GSF estimates the order $K$
most frequently when $K < K_0$. In fact, it does so
on every replication for $K=1,2$ (resp. $K=1,2,3$)
under multinomial Model 3 (resp. Model 5).
When $K \geq K_0$, the GSF correctly
estimates the order $K_0$ most frequently
for all four models. 
Although the average selected order is seen to
slightly deviate from $K_0$ as $K$ increases 
(as was already noted in Figure \ref{fig:comparisonFigure}),
the overall behaviour of the GSF is remarkably stable 
with respect to the choice of $K$. 
The resulting elbow shape of the solid red lines in Figure 
\ref{fig:sensitivity_simulation} is anticipated 
by Theorem \ref{orderConsistency} and Proposition \ref{prop:misspec_K}.

Guided by the above results, in applications where finite mixture models ($K_0 < \infty$) 
 have meaningful interpretations in capturing population heterogeneity, we suggest to examine 
 the GSF over a range of small to large values of $K$.
 This range may be chosen with consideration of the resulting number of mixture parameters, 
 with respect to the sample size $n$.  
 An elbow-shaped scatter 
 plot of $(K, \hat K)$ can shed light on a safe choice of the bound $K$ 
 and the selected order $\hat K$. 
We illustrate such a strategy through the real
data analysis in Section \ref{sec:data}. 
\begin{figure}[H]
\begin{center}
  \includegraphics[width=.8\linewidth]{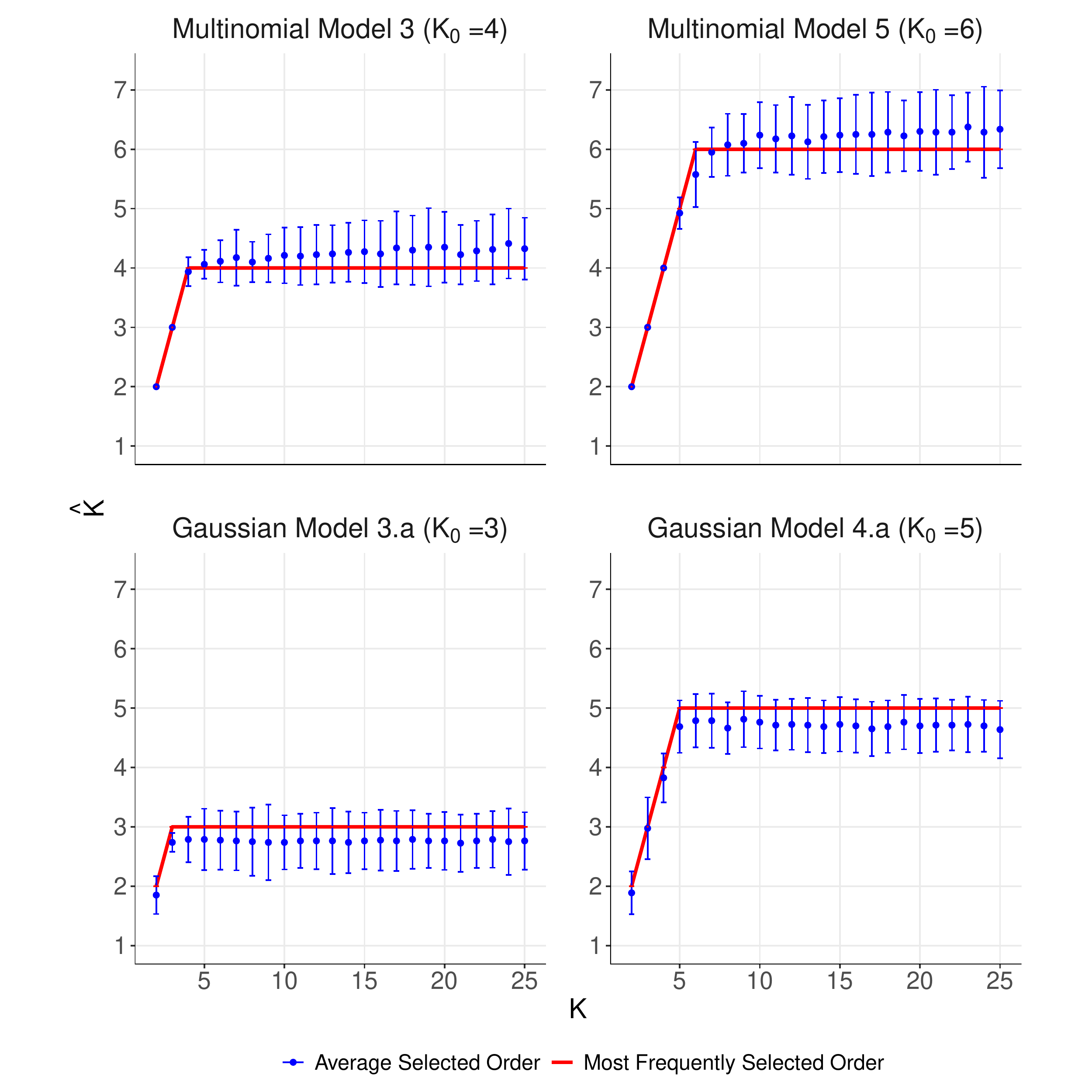}
  \caption{\label{sensitivity_simulation}
   Sensitivity analysis of the GSF with respect to the upper bound $K$.
   Error bars represent one standard deviation of the fitted
   order.} 
\label{fig:sensitivity_simulation}	
\end{center}
\end{figure}

\subsection{\bf Comparison of Merging-Based Methods}
\label{sec:sim-merging}

We now compare the GSF to alternate order
selection methods which are also 
based on merging the components of an overfitted mixture. 
Our simulations are based on location-Gaussian 
mixture models, though unlike Section \ref{sim-setting}, we now treat
the common covariance $\bSigma$ as known. 
In addition to the GSF, and to the AIC/BIC which are included  
as benchmarks, 
we consider the following two methods.
\begin{enumerate}
\item[--] The Merge-Truncate-Merge (MTM) procedure 
\citep{guha2019} described in Section 
\ref{remark}(IV), applied to posterior samples from a Dirichlet Process mixture (DPM).
\item[--] A hard-thresholding analogue of the GSF, denote
by GSF-Hard, which is obtained by first computing the estimator
$\widetilde G_n$ in \eqref{tildeG}, 
 and then merging the atoms of $\widetilde G_n$
which fall within a sufficiently small distance $\lambda > 0$
of each other
(see Algorithm \ref{alg:gsf-hard} in Supplement D.2 
for a precise description).
The GSF-Hard thus replaces the penalty $r_{\lambda}$ in the GSF 
with a post-hoc merging rule. 
By a straightforward simplification of our asymptotic
theory, the GSF-Hard estimator satisfies 
the same properties as $\hat G_n$ in Theorems
\ref{paramConsistency}--\ref{orderConsistency}.
 \end{enumerate} 
 
We fit the MTM procedure using the same algorithm and 
parameter settings as  described in Section 5 of \cite{guha2019}. 
The truncation and (second) merging stages of the MTM require a 
tuning parameter $c > 0$,
which plays a similar role as $\lambda$ in the GSF-Hard.
 The authors recommend considering various choices of $c$ in 
practice, though we are not aware of a method for tuning $c$. 
We therefore follow them by reporting
the performance
of the MTM for a range of 
$c$-values. 
For the GSF-Hard, we tune $\lambda$
using the BIC. 
Further implementation details are provided in 
Supplement~D.2.

We report the proportion of times that each method
selected the correct order
under Gaussian Models 1.b and 2.a
in Figure \ref{fig:merging_results}, 
based on $n=50, 100, 200, 400$.
More detailed results can  
be found in Supplement
E.4, including those for $n=600,800$.
For each sample size, we perform 80 replications
due to the computational burden associated with fitting
Dirichlet Process mixture models. 
The MTM results are based on the posterior mode.

The AIC, BIC, and GSF under all three penalties
exhibit improved performance under the current setting
with fixed $\bSigma$, compared to that
of Section \ref{sim-setting}. 
The GSF-Hard performs reasonably under
Model 1.b but  markedly
underperforms in Model 2.a.
Regarding the MTM, we  
report the results under four consecutive $c$-values
which were most favourable from a range 
of 16 candidate values. 
Under Model 1.b, the MTM under all four $c$-values
estimates $K_0$ most of the time, 
under most sample sizes, but underperforms compared to the remaining
methods. In contrast, under Model 2.a,
there exists a value of $c$ for which the MTM
remarkably estimates  $K_0$ on nearly all replications.
However, 
the sensitivity to $c$ is also seen to increase,
which can be problematic in the absence of a data-driven 
tuning procedure. Finally, we recall that 
the MTM is based on a nonparametric Bayes procedure, 
while the other methods are parametric and might
generally require smaller sample sizes to achieve reasonable accuracy.
\begin{figure}[t]
\begin{center}
  \includegraphics[width=.85\linewidth]{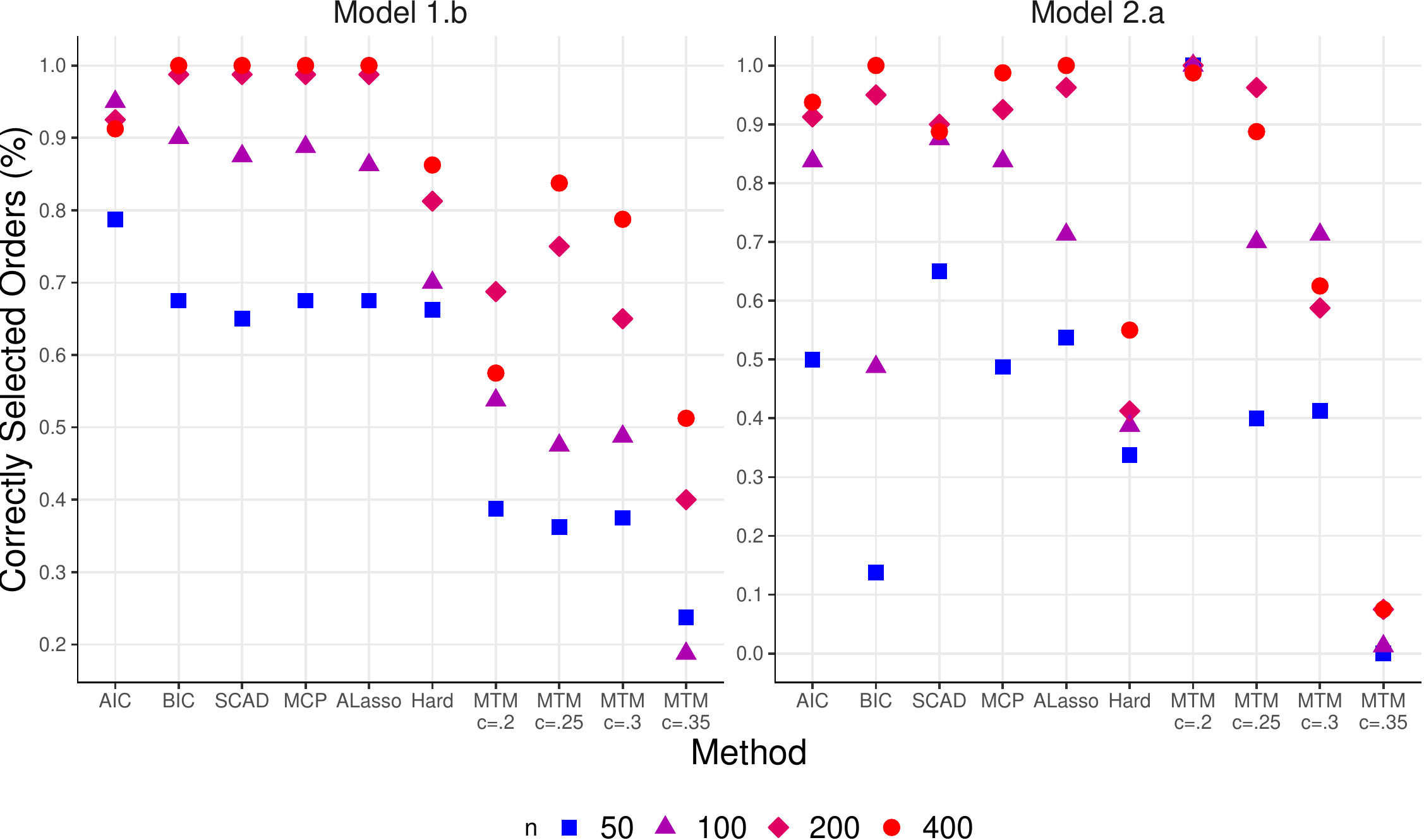}
  \caption{\label{fig:merging_results}
    Percentage of correctly selected
    orders for the multivariate Gaussian models
    with common and known covariance
    matrix.
    } 
 \end{center}
\end{figure}

We emphasize that MTM and GSF-Hard are both post-hoc
procedures for reducing the order of an 
overfitted mixing measure $G_n$, 
which is respectively equal to 
a sample from the DPM posterior, or to the estimator 
$\widetilde G_n$. 
This contrasts the GSF, which uses 
continuous penalties of 
the parameters to 
simultaneously perform order selection and mixing measure estimation,
and does not vary discretely with the tuning parameter $\lambda$. 
On the other hand, these two post-hoc procedures 
have the  practical
advantage of being computationally inexpensive
wrappers on top of the well-studied estimators $G_n$,
for which standard implementations are available. 
To illustrate this point, in Table 
\ref{tab:merging_time} we report the computational
time associated with the results from Figure \ref{fig:merging_results}, 
including also the sample sizes $n=600,800$.
It can be seen that GSF-Hard is typically computable with an
order of magnitude fewer seconds than the GSF under any of the three
penalties. The computational times for the MTM
are largely dominated by the time required to sample
the DPM posterior with the implementation we used---the
post-processing procedure itself accounts for a negligible
fraction of this time.

\begin{table}[H]\setlength\tabcolsep{2pt} 
\begin{tabular}{c || m{0.85cm}| m{1.2cm}m{1.2cm} m{1.2cm} | m{1cm} | m{1.2cm}||
 m{0.85cm}| m{1.2cm}m{1.2cm} m{1.2cm} | m{1cm} | m{1.2cm}m{0.95cm} m{0.95cm}m{0.95cm}m{0.95cm}m{0.95cm}m{0.95cm}m{0.95cm}m{0.95cm}m{0.95cm}m{0.95cm}}
\firsthline
  & \multicolumn{6}{c||}{Model 1.b}
 & \multicolumn{6}{c}{Model 2.a}\\
\hline
$n$ & AIC/ BIC  & GSF-SCAD & GSF-MCP & GSF-ALasso & GSF-Hard & MTM  
 & AIC/ BIC  & GSF-SCAD & GSF-MCP & GSF-ALasso & GSF-Hard & MTM\\
\hline
50  & 23.6 &  1.30  & 1.2  & 5.3  & 3.8 & 2830.0  &  21.1 & 2.6   & 1.8  & 5.3  & 3.9 & 2502.6 \\
\hline                                              
100 & 29.8 & 2.7  & 2.0  & 9.9   & 5.2 & 7148.2  &  25.6 & 6.3   & 3.8  & 9.9  & 5.4 & 5607.2 \\
\hline                                               
200 & 38.7 & 6.7  & 4.5  & 19.6  & 6.8 & 25428.3 &  34.9 & 17.2  & 8.6   & 19.6  & 7.0 & 21008.0 \\
\hline                                               
400 & 47.6  & 12.4 & 8.5  & 35.8 & 7.6 & 34911.9 &  45.8 & 43.5   & 16.4 & 35.8 & 9.0 & 20151.2 \\ 
\hline                                              
600 &  54.4 & 24.2 & 15.3 & 49.3 & 8.8 & 51131.0 &  51.3 & 57.8  & 21.8 & 49.3 & 10.0 & 37535.7 \\ 
\hline                                                       
800 & 60.0 & 32.2 & 22.8 & 67.1  & 9.9 & 74185.0 &  56.7 & 103.6 & 39.9 & 67.1  & 10.3 & 57469.7  \\ 
\end{tabular}%
\centering
\captionsetup{justification=centering}
\caption{\label{tab:merging_time} Average computational time (in seconds) 
per replication 
for the multivariate Gaussian models
with common and known covariance matrix. 
}
\end{table}

\section{Real Data Example}
\label{sec:data}
We consider the data analyzed by \cite{MOSIMANN1962}, arising from the study of 
the Bellas Artes pollen core from the Valley of Mexico, in view of reconstructing 
surrounding vegetation changes from the past. The data consists of $M=100$ counts on the 
frequency of occurrence of $d= 4$ kinds of fossil pollen grains, at $n=73$ different 
levels of a pollen core. A simple multinomial model provides a poor fit to this data, 
due to over-dispersion caused by clumped sampling. \cite{MOSIMANN1962} 
modelled this extra variation using a Dirichlet-multinomial distribution, and 
\cite{MOREL1993} fitted a 3-component multinomial mixture model. 

We applied the GSF-SCAD 
with upper bounds $K=2, \ldots, 25$. For each $K$, we 
fitted the GSF based on five 
different initial values for the modified EM algorithm, 
and selected the model with optimal tuning parameter
value. For $K=2$, the estimated order was $2$
and for $K\ge 3$, the most frequently selected order was $\hat K=3$. 
Given the similarity of the sample size
and dimension with those considered in the simulations, 
below we report the fitted model corresponding to the upper bound $K=12$.  
\begin{center}
\begin{figure}[H]
\centering
  \includegraphics[width=0.6\textwidth]{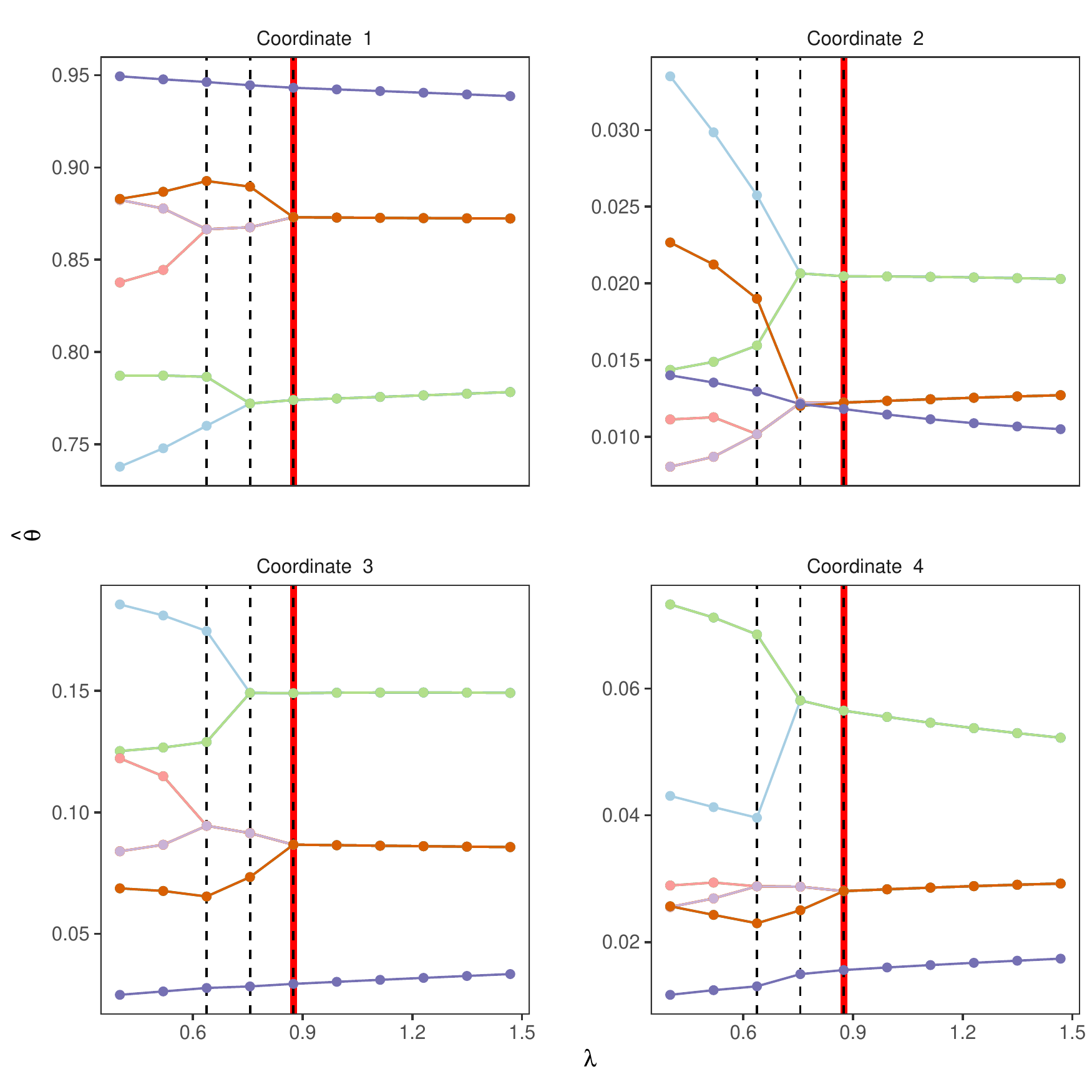}
  \caption{Coefficient plots for the GSF-SCAD on the pollen data. The vertical red lines indicate
  the selected tuning parameter.} 
  \label{fig:pollenCoef}
\end{figure}
\vspace{-.5in}
\end{center}

The models obtained by the GSF with the three 
penalties are similar---for instance, the fitted model obtained by the GSF-SCAD is
$$.15 ~ \text{Mult}(\hbtheta_1) +
  .25 ~ \text{Mult}(\hbtheta_2) +
  .60 ~ \text{Mult}(\hbtheta_3).$$
where $\text{Mult}(\btheta)$ denotes the multinomial distribution
with 100 trials and probabilities $\btheta$, 
$\hbtheta_1 = (.94, .01, .03, .02)^\top$, 
$\hbtheta_2 = (.77, .02, .15, .06)^\top$ and
$\hbtheta_3 = (.87, .01, .09, .03)^\top$.
The log-likelihood
value for this estimate is -499.87. The coefficient plots produced by the tuning parameter 
selector for GSF-SCAD are 
shown in Figure \ref{fig:pollenCoef}. 
Interestingly, the fitted order  
equals 3, for all $\lambda > 0.9$ in the range considered, coinciding with the final selected order,
and with the aforementioned sensitivity analysis
on $K$.

We also ran the AIC, BIC and ICL on this data. 
The AIC selected six components, while the BIC and ICL selected three components. 
The fitted model under the latter two methods is given by
$$
  .17 ~ \text{Mult}(\hbtheta_1) +
  .22 ~ \text{Mult}(\hbtheta_2) +
  .61 ~ \text{Mult}(\hbtheta_3).$$
where $\hbtheta_1 = (.95, .02, .03, .01)^\top$, 
$\hbtheta_2 = (.77, .02, .15, .07)^\top$ and
$\hbtheta_3 = (.87, .01, .09, .03)^\top$, with entries rounded to the nearest
hundredths.
The log-likelihood
value for this estimate is -496.39.
  
\section{Conclusion and Discussion}
\label{sec:discussion}
In this paper, we developed the
Group-Sort-Fuse (GSF) method for estimating the order of
finite mixture models with a multidimensional parameter space. 
By starting with 
a conservative upper bound $K$ on the mixture order, the GSF estimates the true 
order by applying two penalties to the overfitted log-likelihood, 
 which group and fuse redundant 
mixture components.
Under certain regularity conditions, 
the GSF is consistent in estimating the true order and it further 
provides a $\sqrt n$-consistent estimator for the true 
mixing measure (up to polylogarithmic factors). 
We examined its finite sample performance 
via thorough simulations, and illustrated its application to
 two real datasets, one of which is relegated to Supplement E.6.

We suggested the use of off-the-shelf methods, such as $v$-fold cross
validation or the BIC, for selecting the tuning parameter
$\lambda_n$ involved in the penalty $r_{\lambda_n}$. 
Properties of such choices with respect to our theoretical guidelines, 
or alternative methods specialized to the GSF, 
require further investigation.  

The methodology developed in this paper may be applicable
to mixtures which satisfy weaker notions of strong identifiability \citep{HO2016weak}.
Extending
our proof techniques to such models is, however, nontrivial. In particular, bounding the log-likelihood 
ratio statistic for the overfitted MLE $\bar G_n$ \citep{DACUNHA1999}, and the penalized log-likelihood ratio for the MPLE $\hat G_n$,
would require new insights in the absence of (second-order) strong identifiability.
Empirically, we illustrated in Section \ref{sim-setting} the promising finite sample performance
of the GSF under location-Gaussian mixtures with an unknown but common covariance matrix, which
themselves violate condition (SI). 

We have shown that the GSF achieves
a near-parametric rate of convergence
under the Wasserstein distance, 
but this rate only
holds pointwise in the true mixing measure
$G_0$. Our work leaves open the 
behaviour of the GSF when the true mixing
measure is permitted to vary with the sample
size $n$---indeed, the minimax
risk is known to scale at a rate markedly slower than 
parametric
\citep{HEINRICH2018,wu2020optimal}.

We established in
Proposition \ref{prop:misspec_K}
the asymptotic behaviour of the GSF
when the upper bound $K$ is 
underspecified.
However, our work provides no guarantees
when other aspects of the mixture model
$\calP_K=\{p_G: G \in \calG_K\}$ are misspecified,
such as the kernel density family
$\calF$. We note that the recent
work of \cite{guha2019} establishes
the asymptotic behaviour of various
Bayesian procedures under
such misspecification, in terms of
a suitable Kullback-Leibler
projection of the true mixture distribution.
While we expect the GSF to obey similar
asymptotics, we are not aware of a general
theory for maximum likelihood
estimation under 
misspecification in non-convex models
such as $\calP_K$.
We leave a careful investigation of such properties to future work.

We believe that the framework
developed in this paper
paves the way to a new class of methods for order selection problems 
in other latent-variable models, such as
mixture of regressions and Markov-switching autoregressive models
\citep{FRUE2006}. 
Results of the type developed by \cite{DACUNHA1999} in understanding 
large sample behaviour of likelihood ratio statistics for these models, and the
recent work of \cite{HO2019} in characterizing
rates of convergence for parameter estimation in over-specified Gaussian mixtures 
of experts, may provide first steps toward such extensions. 
We also mention applications of the GSF procedure to non-model-based clustering 
methods, such as the $K$-means algorithm.
While the notion of order, or true number of clusters, is generally elusive in the absence 
of a model,  extensions of the GSF may provide a natural heuristic for choosing the 
number of clusters in such methods.

{\bf Acknowledgements.}  
We would like to thank the editor, an associate editor, and two referees for 
their insightful comments and suggestions which significantly improved the quality of this paper.
We thank Jiahua Chen for discussions
related to the proof of Proposition~\ref{binomialSI}, 
Russell Steele for bringing to our attention the multinomial dataset  
analyzed in Section~\ref{sec:data},
and Aritra Guha for sharing an  
 implementation of the Merge-Truncate-Merge procedure. 
We also thank Sivaraman Balakrishnan and Larry Wasserman for useful discussions.
Tudor Manole was supported by the Natural Sciences and Engineering Research
Council of Canada  
and also by the Fonds de
recherche du Qu\'ebec--Nature et technologies. 
Abbas Khalili was supported by the Natural Sciences and Engineering Research Council
of Canada through Discovery Grant (\textsc{nserc rgpin}-2015-03805 and \textsc{nserc rgpin}-2020-05011). 
 
\clearpage 
 
{\noindent \LARGE \bf Supplementary Material}
\renewcommand{\theequation}{S.\arabic{equation}}
 
This Supplementary Material contains six sections.  
Supplement 
\hyperref[section:regularityConditions]{A} contains notation
which will be 
used throughout the sequel. 
Supplement \hyperref[section:otherPapers]{B}
states several results from other papers which are needed for our subsequent proofs.
Supplement \hyperref[section:regularityConditions]{C}
contains all proofs of the results stated in the paper, and includes the statements
and proofs of several auxiliary results.
Supplement \hyperref[section:regularityConditions]{D} 
outlines our numerical solution, and Supplement \hyperref[section:regularityConditions]{E} 
reports several
figures and tables cited in the paper. Finally, Supplement 
\hyperref[section:regularityConditions]{F} 
reports the implementation
and complete numerical results of the simulation in 
Figure \ref{fig:comparisonFigure} of the paper.

\section*{Supplement A: Notation}
Recall that $\calF = \{ f(\by; \btheta) : \btheta= (\theta_1, \theta_2, \ldots, \theta_d)^{\top} 
\in \Theta \subseteq \bbR^d, \ \by \in \mathcal{Y} \subseteq \bbR^N \}$
is a parametric density family with respect to a $\sigma$-finite measure $\nu$. Let 
\begin{equation}
\label{famden}
\calP_K = \left\{p_G(\by) = \int_{\Theta} f(\by; \btheta)dG(\btheta): G \in \calG_K\right\},
\end{equation}
where, recall, that $\calG_K$ is the set of finite mixing measures with order at most $K \geq K_0$.
Let $p_0 = p_{G_0}$ be the density of  the true finite mixture model with its 
corresponding probability distribution $P_0$. Let $\hat p_n = p_{\hat G_n}$ be the estimated mixture density
based on the MPLE $\hat G_n$, and
define the empirical measure $P_n = \frac 1 n \sum_{i=1}^n \delta_{\bY_i}$.

For any $p_G \in \calP_K$, let $\bar p_G = \frac{p_G + p_0}{2}$, and $\bar \calP_K^{\frac 1 2} = 
\left\{\bar p_G^{\frac 1 2} : p_G \in \calP_K\right\}$. For any $\delta > 0$, recall that
$$\bar\calP_K^{\frac 1 2}(\delta) = \left\{\bar p_G^{\frac 1 2} \in \bar\calP_K^{\frac 1 2}: h(\bar p_G, p_0) \leq \delta\right\}.$$
Furthermore, define the empirical process
\begin{equation}
\label{nuDef}
\nu_n(G) = \sqrt n\int_{\{p_0 > 0\}} \frac 1 2 \log\bigg\{\frac{p_G + p_0}{2p_0}\bigg \} d(P_n - P_0), \quad G \in \calG_K.
\end{equation}
We also define the following two collections of mixing measures, 
for some $0 < b_0 < 1$,
\begin{align}
\label{tildeGSet}
\calG_K(b_0) &= \left\{G \in \calG_K \setminus \calG_{K_0-1}: G = \sum_{j=1}^K \pi_j \delta_{\btheta_j}, \ \pi_j \geq b_0\right\}, \\
\calG_K(b_0; \gamma) &= \left\{G \in \calG_K(b_0): h(p_G, p_{G_0}) \leq \gamma\right\}, \quad \forall \gamma > 0.
\end{align}
Also, 
let 
\begin{equation}
\label{KL-div}
\KL(p,q) = \int \log\left(\frac{p}{q}\right)~ p ~d\nu
\end{equation}
denote
the Kullback-Leibler divergence between any two densities $p$ and $q$ 
dominated by the measure~$\nu$.

In the proofs of our main results, we will frequently work with differences of the form 
$L_n(G) - L_n(G_0)$, for $G \in \calG_K$. We therefore introduce the following constructions.
Given a generic mixing measure $G = \sum_{j=1}^K \pi_j \delta_{\btheta_j} \in \calG_K$ and
the true mixing measure 
$G_0 = \sum_{k=1}^{K_0} \pi_{0k} \delta_{\btheta_{0k}}$, 
define $\btheta = (\btheta_1, \dots, \btheta_K)$ and $\btheta_0 = (\btheta_{01}, \dots, \btheta_{0K_0})$,
and let $\bpi = (\pi_1, \dots, \pi_K)^\top$ and $\bpi_0 = (\pi_{01}, \dots, \pi_{0K_0})^\top$. 
For simplicity in notation, in what follows we write $\varphi(\pi_1, \ldots, \pi_K) = \varphi(\bpi)$.

Define the following difference between the first penalty functions
\begin{equation}
\label{zetaDef}
\zeta_n(G) = \frac 1 n \left\{\varphi(\bpi_0) - \varphi(\bpi) \right\}, \quad \forall G \in \calG_K.
\end{equation}
Furthermore, 
recall that $\alpha_{\bt}$ is a cluster ordering, and let $\alpha = \alpha_{\btheta}$ and $\alpha_0 = \alpha_{\btheta_0}$.
Recall that $\bfeta_j = \btheta_{\alpha(j+1)} - \btheta_{\alpha(j)}$, $j=1, \dots, K-1$,
and let $\bfeta_{0k} = \btheta_{0\alpha_0(k+1)} - \btheta_{0\alpha_0(k)}$, $k=1, \dots, K_0-1$. 
Likewise, given a mixing measure  $\wtG = \sum_{j=1}^K \tilde\pi_j \delta_{\tbtheta_j} \in \calG_K$, 
let $\tbtheta=(\tbtheta_1, \dots, \tbtheta_K)$. Define
$\tbfeta_j = \tbtheta_{\talpha(j+1)} - \tbtheta_{\talpha(j)}$ for all $j=1, \dots, K-1$, where
$\talpha = \alpha_{\tbtheta}$.
Let $u, v \in S_{K-1}$ be the permutations such that
$$\norm{\bfeta_{u(1)}} \geq \dots \geq \norm{\bfeta_{u(K-1)}}, \quad
  \norm{\tbfeta_{v(1)}} \geq \dots \geq \norm{\tbfeta_{v(K-1)}},$$
and similarly, let $u_0 \in S_{K_0-1}$ be such that
\begin{equation}
\label{omegaDef}
\norm{\bfeta_{0u_0(1)}} \geq \dots \geq \norm{\bfeta_{0u_0(K_0-1)}}. 
\end{equation}
Let $\psi = v \circ u^{-1}$ and $\psi_0 = v \circ u_0^{-1}$. 
Then, as in Section \ref{sec:asymptotic} of the paper, we define the weights
\begin{equation}
\label{weightDefn}
\omega_j \equiv \omega_j(\btheta, \tbtheta) = \norm{\tbfeta_{\psi(j)}}^{-\beta}, \quad 
  \omega_{0k} \equiv \omega_{0k}(\btheta_0, \tbtheta) = \norm{\tbfeta_{\psi_0(k)}}^{-\beta},
\end{equation}
 for some $\beta > 1$, and for all $j=1, \dots, K-1, \ k=1, \dots, K_0-1$.
We then set
\begin{equation}
\label{xiDef}
\xi_n(G; \wtG) = \sum_{k=1}^{K_0-1} r_{\lambda_n}(\norm{\bfeta_{0k}}; \omega_{0k}) - 
\sum_{j=1}^{K-1} r_{\lambda_n}(\norm{\bfeta_j}; \omega_j).
\end{equation}
It is worth noting that $\xi_n(G; \wtG)$ is well-defined
due to Property (i) in Definition \ref{cluster-order} of cluster orderings.
Finally, throughout the sequel, we let  
\begin{equation}
\label{two-modified}
\wtG_n =  
\argmax_{G \in \calG_K} \left\{l_n(G) - \phi(\bpi)\right\}.
\end{equation}
With this notation, the penalized log-likelihood difference $L_n(G) - L_n(G_0)$
may be written as follows for the choice of weights described in Section \ref{sec:asymptotic}
of the paper, 
$$
L_n(G) - L_n(G_0)
 = \left\{ l_n(G) - l_n(G_0)\right\} + n\zeta_n(G) + n\xi_n(G; \wtG_n),
 $$
for any $G \in \calG_K$.
 
Finally, for any matrix $\bM = (m_{ij})_{1\le i \le d_1,  1 \le j \le d_2}$,  
we write the Frobenius norm as $\norm{\bM}_{\text F} = \left( \sum_{i=1}^{d_1} \sum_{j=1}^{d_2} m_{ij}^2 \right)^{\frac 1 2}$. For any real symmetric matrix $\bM$,  $\varrho_{\text{min}}(\bM)$ and
$\varrho_{\text{max}}(\bM)$ denote its respective minimum and maximum eigenvalues.

\section*{Supplement B: Results from Other Papers}
\label{section:otherPapers}

In this section, we state several existing
results which are needed to prove our main Theorems
\ref{paramConsistency}-\ref{orderConsistency}. We begin with a simplified
statement of Theorem 
3.2 of \cite{DACUNHA1999}), which describes the behaviour
of the likelihood ratio statistic of strongly identifiable mixture models.

\begin{customthm}{B.1}[\cite{DACUNHA1999}] 
\label{thm:lrs}
Under conditions (SI) and (A3),  
the log-likelihood ratio statistic 
over the class $\calG_K$ satisfies
\[
\sup_{G \in \calG_K}l_n(G) - l_n(G_0) = O_p(1).
\]
\end{customthm}
Next, we summarize two results of \cite{HO2016strong}, 
relating the Wasserstein distance between two mixing measures to the 
Hellinger distance between their corresponding mixture densities.  
We note that these results were originally
proven in the special case where the dominating measure $\nu$ of the 
parametric family $\calF$
is the Lebesgue measure. A careful verification of Ho and Nguyen's
proof technique readily shows that $\nu$ can be 
any $\sigma$-finite measure.
The assumptions made in our statement below are stronger than necessary for part (i), 
but kept for convenience. 

\begin{customthm}{B.2}[\cite{HO2016strong}]
\label{thm:ho_long_results}
Suppose that $\calF$ satisfies conditions (SI) and (A2)  
Then, there exist $\delta_0, c_0 > 0$ depending only on $G_0, \Theta$ and $\calF$ such that
the following two statements hold.
\begin{enumerate}
\item[(i)] For all mixing measures $G$ with exactly $K_0$ atoms 
satisfying $W_1(G, G_0) < \delta_0$, we have $h(p_G, p_0) \geq c_0 W_1(G, G_0)$.
\item[(ii)] For all mixing measures $G \in \calG_{K}$ satisfying 
$W_2(G, G_0) < \delta_0$, we have $h(p_G, p_0) \geq c_0 W_2^2(G, G_0)$.
\end{enumerate}
\end{customthm}
The following result \citep[Lemma 3.1]{HO2016singularity} relates the 
convergence of a mixing measure in Wasserstein distance to
the convergence of its atoms and mixing proportions.
\begin{customlem}{B.3}[\cite{HO2016singularity}]
\label{lem:ho}
For any mixing measure $G = \sum_{j=1}^K \pi_k\delta_{\btheta_j} \in \calG_K(b_0)$, 
for some $b_0 > 0$, let
$\calI_k = \left\{j: \norm{\btheta_j - \btheta_{0k}} \leq \norm{\btheta_j - \btheta_{0l}}, \ \forall l \neq k\right\}$, for 
all $k=1, \dots, K_0$.
Then, for any $r \geq 1$,
$$W_r^r(G, G_0) \asymp \sum_{k=1}^{K_0} \sum_{j \in \calI_k} \pi_j \norm{\btheta_j - \btheta_{0k}}^r +
                       \sum_{k=1}^{K_0} \Bigg| \pi_{0k} - \sum_{j \in \calI_k} \pi_j \Bigg|,$$
as $W_r(G, G_0) \downarrow 0$.
\end{customlem}                       
The following theorem from empirical process theory is a special case of Theorem 5.11 
from \cite{GEER2000}, and will be invoked in the proof of Theorem \ref{densityConsistency}.
\begin{customthm}{B.4}[\cite{GEER2000}]
\label{thm:incrementGeer}
Let $R > 0$ be given and let
\begin{equation}
\label{calJDef}
N(R) = \left\{G \in \calG_K: h(\bar p_G, p_0) \leq R\right\},
\end{equation}
where $\bar p_G = \frac{p_G + p_0}{2}$.
Given a universal constant $C > 0$, let $a, C_1 > 0$ be chosen such that
\begin{equation}
\label{thm511_2}
a \leq C_1\sqrt n R^2 \wedge 8\sqrt n R,
\end{equation}
and,
\begin{equation}
\label{thm511_5}
a \geq\sqrt{C^2(C_1+1)} \left(\int_0^R \sqrt{H_B 
\left(\frac u {\sqrt 2}, \left\{p_G:G \in N(R)\right\}, \nu \right)} du \vee  R\right) , 
\end{equation}
Then, 
\[
\bbP\left\{\sup_{G \in N(R)} |\nu_n(G)| \geq a\right\}
 \leq C \exp\left(-\frac{a^2}{C^2(C_1+1)R^2}\right),
 \]
where $\nu_n(G)$ is defined in equation \eqref{nuDef}.
\end{customthm}

The following result shows the behavior of the likelihood ratio statistic 
for underfitted finite mixture models, and is used in the proof of 
Proposition \ref{prop:misspec_K} .

\begin{customthm}{B.5}
\label{thm:KER2000LER1992}
Let $K < K_0$. Suppose that $\calF$ satisfies condition (A3), and that $\calP_K$ is identifiable. 
\begin{enumerate}
\item[(i)]\citep{LEROUX1992}
For all $1 \leq k \leq K$, there exists a mixing measure $G^*_k \in \calG_k$ for which 
$\KL(p_{G^*_k}, p_{G_0})  = \inf_{G \in \calG_k} \KL(p_G, p_{G_0})$,
where $\KL$ is the Kullback-Leibler divergence in \eqref{KL-div}.
Furthermore, 
\begin{equation}
\label{eq:leroux}
\KL(p_{G^*_k}, p_{G_0}) > \KL(p_{G^*_{k+1}}, p_{G_0}).
\end{equation}
\item[(ii)]\citep{KERIBIN2000}
Consider the MLE 
\(
\bar G_n^{(k)} = \argmax_{G \in \calG_k} l_n(G).
\) 
For all $k =1, \dots, K$, as~$n~\to~\infty$,    
\begin{equation}
\label{eq:keribin}
\frac 1 n \Big\{ l_n(\bar G_n^{(k)}) - l_n(G_0)\Big\} 
\overset{a.s.}{\longrightarrow}
-\KL(p_{G^*_k}, p_{G_0}) < 0.
\end{equation}
\end{enumerate}
\end{customthm}

\section*{Supplement C: Proofs}
\label{section:proofs}
\subsection*{C.1. Proof of Theorem \ref{paramConsistency}}
We begin with the following Lemma, which generalizes Lemma 4.1 of \cite{GEER2000}.
\noindent 
\begin{lemma}
\label{basicInequalities}
The MPLE $\hat G_n$ satisfies
\begin{equation}
\label{basicInequality} 
h^2\left(\frac{\hp_n + p_0} 2, p_0\right) - 
\frac 1 4 \left[\zeta_n(\hat G_n) 
+
 \xi_n(\hat G_n; \wtG_n)\right] \leq \frac 1 {\sqrt n}\nu_n(\hat G_n), 
\end{equation}
for all $n\geq 1$, where $\wtG_n$ is given in \eqref{two-modified}. 
\end{lemma}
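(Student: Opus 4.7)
The plan is to adapt the classical basic inequality for nonparametric MLEs (cf.\ Lemma 4.1 of van de Geer) to the penalized setting, by first translating the maximization property of $\hat G_n$ over $L_n$ into a lower bound on the log-likelihood ratio, and then running the standard Hellinger-distance argument on the averaged density $\bar p_n = (\hp_n + p_0)/2$.

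First I would use the defining property $L_n(\hat G_n) \geq L_n(G_0)$, which, after unwinding the definitions of $\varphi$, $r_{\lambda_n}$, and the weights $\omega_j(\hat G_n)=\omega_j(\hbtheta,\tbtheta)$ attached to $\hat G_n$ via $\wtG_n$, rearranges to
\begin{equation*}
\frac{1}{n}\bigl[l_n(\hat G_n)-l_n(G_0)\bigr] \;\geq\; -\zeta_n(\hat G_n)-\xi_n(\hat G_n;\wtG_n),
\end{equation*}
using exactly the definitions \eqref{zetaDef} and \eqref{xiDef}. Next I would invoke the pointwise concavity bound $\log\bar p_n(\by)\geq \tfrac{1}{2}\log\hp_n(\by)+\tfrac{1}{2}\log p_0(\by)$, valid on $\{p_0>0\}$ (which has full $P_n$-measure a.s.), to obtain
\begin{equation*}
\int_{\{p_0>0\}}\log\sqrt{\tfrac{\bar p_n}{p_0}}\,dP_n \;\geq\; \frac{1}{4n}\bigl[l_n(\hat G_n)-l_n(G_0)\bigr]\;\geq\; -\tfrac{1}{4}\bigl[\zeta_n(\hat G_n)+\xi_n(\hat G_n;\wtG_n)\bigr].
\end{equation*}

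Then I would split the left-hand side using the empirical process $\nu_n$ of \eqref{nuDef}, writing
\begin{equation*}
\int_{\{p_0>0\}}\log\sqrt{\tfrac{\bar p_n}{p_0}}\,dP_n \;=\; \tfrac{1}{\sqrt n}\nu_n(\hat G_n)+\int_{\{p_0>0\}}\log\sqrt{\tfrac{\bar p_n}{p_0}}\,dP_0,
\end{equation*}
and control the deterministic drift term via the standard Kullback--Leibler/Hellinger inequality: $-2\int_{\{p_0>0\}}\log\sqrt{\bar p_n/p_0}\,dP_0=\KL(p_0,\bar p_n)\geq 2 h^2(\bar p_n,p_0)$, which follows from $\log x\leq 2(\sqrt x-1)$ applied with $x=\bar p_n/p_0$. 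Hence $\int_{\{p_0>0\}}\log\sqrt{\bar p_n/p_0}\,dP_0\leq -h^2(\bar p_n,p_0)$.

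Combining the last two displays gives
\begin{equation*}
-\tfrac{1}{4}\bigl[\zeta_n(\hat G_n)+\xi_n(\hat G_n;\wtG_n)\bigr] \;\leq\; \tfrac{1}{\sqrt n}\nu_n(\hat G_n)-h^2(\bar p_n,p_0),
\end{equation*}
which rearranges into \eqref{basicInequality}. The only step requiring any care is making sure the weights $\omega_j$ appearing on the left (from $\hat G_n$) and on the right (from $G_0$) in $L_n(\hat G_n)\geq L_n(G_0)$ are both computed against the same auxiliary estimator $\wtG_n$, so that the penalty differences assemble exactly into $\xi_n(\hat G_n;\wtG_n)$ as defined in \eqref{xiDef}; once that bookkeeping is checked, the remainder of the argument is a direct transcription of the classical unpenalized inequality, with $\zeta_n+\xi_n$ playing the role of the extra penalty slack.
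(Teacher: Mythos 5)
Your proposal is correct and follows essentially the same route as the paper's proof: both use the maximizing property $L_n(\hat G_n)\geq L_n(G_0)$ to lower-bound the log-likelihood ratio by $-n[\zeta_n+\xi_n]$, the concavity bound $\log\frac{\hp_n+p_0}{2p_0}\geq\frac12\log\frac{\hp_n}{p_0}$ on $\{p_0>0\}$, the decomposition of the empirical integral into $\nu_n(\hat G_n)/\sqrt n$ plus a $P_0$-integral, and the Kullback--Leibler/Hellinger inequality to bound that drift term by $-h^2(\bar p_n,p_0)$. Your bookkeeping remark about the weights is also exactly the point of the definition of $\xi_n(G;\wtG)$ in \eqref{xiDef}, which evaluates both sets of weights against the same $\wtG_n$.
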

\begin{proof}
By concavity of the log function, we have
\begin{equation}
\label{logConcave}
\log \frac{\hp_n + p_0}{2p_0}I\left\{p_0 > 0\right\} \geq \frac 1 2\log\frac{\hp_n}{p_0}I\{p_0 > 0\}.
\end{equation}
Now, note that
\begin{align*}
0 \leq \frac 1 n \left\{L_n(\hat G_n) - L_n(G_0)\right\}
 &= \int \log \frac{\hp_n}{p_0} dP_n + \zeta_n(\hat G_n) + \xi_n(\hat G_n;\wtG_n). 
\end{align*}
Thus, by \eqref{logConcave},
\begin{align*}
-\frac 1 4 &\left[\xi_n(\hat G_n;\widetilde G_n) + \zeta_n(\hat G_n)\right] \\
 &\leq \int_{\{p_0 > 0\}} \frac 1 4 \log \frac{\hp_n}{p_0} dP_n \\
 &\leq \int_{\{p_0 > 0\}} \frac 1 2 \log \frac{\hp_n + p_0}{2p_0} d(P_n-P_0) + 
       \int_{\{p_0 > 0\}} \frac 1 2 \log \frac{\hp_n + p_0}{2p_0} dP_0 \\
 &= \int_{\{p_0 > 0\}} \frac 1 2 \log \frac{\hp_n + p_0}{2p_0} d(P_n-P_0) - 
    \frac 1 2 \text{KL}\left(\frac{\hp_n + p_0}{2}, p_0 \right) \\
 &\leq  \frac 1 {\sqrt n}\nu_n(\hat G_n) - 
    h^2\left(\frac{\hp_n + p_0}{2}, p_0 \right),
\end{align*}
where  we have used the well-known inequality $h^2(q, q') \leq \frac 1 2 \text{KL}(q, q')$, 
for any densities $q$ and $q'$
with respect to the measure $\nu$. 
The claim follows. 
\end{proof}

As noted in \cite{GEER2000}, for all $G \in \calG_K$ we have
\begin{equation}
\label{simple_hellinger_bounds}
h^2(\bar p_G, p_0) \leq \frac 1 2 h^2(p_G, p_0), \quad h^2(p_G, p_0) \leq 16 h^2(\bar p_G, p_0).
\end{equation}
Combining the second of these inequalities with Lemma \ref{basicInequalities} immediately
yields an upper bound on $h(\hat p_n, p_0)$. This fact combined with the local relationship
$W_2^2 \lesssim h$ in Theorem \ref{thm:ho_long_results} leads
to the proof of Theorem \ref{paramConsistency}.
\begin{proof}[Proof (Of Theorem \ref{paramConsistency})]
We begin with Part (i). A combination of \eqref{simple_hellinger_bounds} 
and Lemma~\ref{basicInequalities} yields
\begin{align*}
h^2(\hat p_n, p_0)
 &\lesssim h^2\left(\frac{\hat p_n + p_0}{2}, p_0\right) \\
 &\leq \frac 1 4 \left[ \zeta_n(\hat G_n) + \xi_n(\hat G_n; \wtG_n)\right] + \frac 1 {\sqrt n}\nu_n(\hat G_n) \\
 &\leq \frac {\varphi(\bpi_0)} {4n} +  
        \frac 1 4 \sum_{k=1}^{K_0- 1} r_{\lambda_n}(\norm{\bfeta_{0k}}; \omega_{0k})  +
        \sup_{G \in \calG_K} \frac 1 {\sqrt n}|\nu_n(G)|.
\end{align*}
Since the elements of $\bpi_0$ are bounded away from zero, 
$\frac{\varphi(\bpi_0)}{4n} = o(1)$ by condition (F).
Furthermore, under assumption (P1) on $r_{\lambda_n}$, and using the fact
that $\wtG_{n}$ is consistent under $W_2$, and hence has at least $K_0$ atoms as $n \to \infty$ almost surely, 
we have $r_{\lambda_n}(\norm{\bfeta_{0k}}; \omega_{0k}) \overset{a.s.}{\longrightarrow} 0 $ for all $k=1, \dots, K_0-1$. 
Finally, assumption (A1)
implies that 
$\sup_{G \in \calG_K} \frac 1 {\sqrt n}|\nu_n(G)| \overset{a.s.}{\longrightarrow} 0.$
We deduce that $h(\hat p_n, p_0) \overset{a.s.}{\longrightarrow} 0.$
 
Furthermore, for any $r \geq 1$, using the interpolation equations (7.3) and (7.4) of \cite{VILLANI2003},
and Part (ii) of Theorem \ref{thm:ho_long_results} above, we have
\begin{align*}
W_r^r(\hat G_n, G_0) 
 \leq \Big(\text{diam}^{r-2}(\Theta) \vee 1\Big) W_2^2(\hat G_n, G_0) \lesssim h(\hat p_n, p_0) \overset{a.s.}{\longrightarrow} 0,
\end{align*}
due to the compactness assumption on $\Theta$. The result follows.

We now turn to Part (ii).
As a result of Part (i), the MPLE $\hat{G}_n$ has at least as many atoms as $G_0$ 
with probability tending to one. 
This implies that for every $k=1, 2, \dots, K_0$, there exists an index $1 \le j \le K$ such that 
$\big\lVert\hat \btheta_j - \btheta_{0k} \big\rVert \overset{p}{\longrightarrow} 0$, as $n \to \infty$.  
Therefore, since $\alpha$ is a bijection, there exists a set $S \subseteq \{1, \dots, K\}$
with cardinality at least $K_0-1$
such that for all $j \in S$, $\norm{\hbfeta_j}\geq \delta_0
= (1/2)\min_{1 \leq j < k \leq K_0} \norm{\btheta_{0j}-\btheta_{0k}}$, with probability tending to one.  
By compactness of $\Theta$, there must therefore exist $D > 0$ such that
$\norm{\hbfeta_k} \in [\delta_0, D]$ for all $k \in S$ in probability. 
Furthermore, notice that part (i) of this result also holds for the mixing measure
$\widetilde G_n$, thus it is also the case that for at least $K_0-1$ indices $k \in \{1, \dots, K\}$,  
$\norm{\tilde \bfeta_k}\geq \delta_0$ with probability tending to one.  
Due the definition of $\psi$ in the construction of weights $\omega_j$, we deduce that 
$\omega_k  \in [D^{-1}, \delta_0^{-1}]$ for all $k \in S$, for large $n$ in probability. 
Thus, by condition (P1), since $r_{\lambda_n} \ge 0$, and $\bar{G}_n$ is the MLE of $G$ over $\calG_K$, 
we have with probability tending to one,
\begin{align*}
0 
 &\leq L_n(\hG_n) - L_n(G_0)  \\
 &= \left\{ l_n(\hG_n) - l_n(G_0) \right\}-
	 \left\{ \varphi(\hat\bpi) - \varphi(\bpi_0) \right\}
  + n \left\{ \sum_{k=1}^{K_0-1} r_{\lambda_n} (\norm{\bfeta_{0k}}; \omega_{0k}) -
	   \sum_{j=1}^{K-1} r_{\lambda_n}(\norm{\hbfeta_j};\omega_j) \right\}  \\
 &\leq \left\{ l_n(\hG_n) - l_n(G_0) \right\}-
	 \left\{ \varphi(\hat\bpi) - \varphi(\bpi_0) \right\}
  + n \left\{ \sum_{k=1}^{K_0-1} r_{\lambda_n} (\norm{\bfeta_{0k}}; \omega_{0k}) -
	   \sum_{j\in S} r_{\lambda_n}(\norm{\hbfeta_j};\omega_j) \right\}  \\
 &\leq \left\{ l_n(\hG_n) - l_n(G_0) \right\}-
	 \left\{ \varphi(\hat\bpi) - \varphi(\bpi_0) \right\}
  + n (K_0-1) \text{diam}\big(r_{\lambda_n} ([\delta_0, D]; [D^{-1}, \delta_0^{-1}])\big) \\
 &\leq \left\{ l_n(\bar{G}_n) - l_n(G_0) \right\}- a_n \phi(\hat\bpi)  + O(a_n).
\end{align*}	
Under condition (SI) and regularity condition (A3)  
it now follows from Theorem \ref{thm:lrs} that
$$
0 \leq \phi(\hat \bpi) \leq \frac 1 {a_n} \left\{ l_n(\bar{G}_n) - l_n(G_0) \right\} +  O(1) = O_p(a_n^{-1}) + O(1) = O_p(1),$$
where we used condition (F) to ensure that $a_n \not\to 0$.
By definition of $\phi$, the estimated mixing proportions $\hat \pi_j$
are thus strictly positive in probability, as $n \to \infty$. 
It must then follow from Lemma \ref{lem:ho} that, for all 
$k= 1, 2, \ldots, K_0$, $\sum_{j \in \calI_k} \hpi_j = \pi_{0k} + o_p(1)$ up to relabelling, 
and for every $l=1, \dots, K$, there exists $k= 1, \ldots, K_0$ such that 
$\big\lVert \hbtheta_l - \btheta_{0k} \big\rVert \overset{p}{\longrightarrow} 0$, 
or equivalently
\( \max_{j \in \calI_k}\big\lVert \hbtheta_j - \btheta_{0k} \big\rVert \overset{p}{\longrightarrow} 0\)
, as $n \to \infty$. 
\end{proof}

\subsection*{C.2. Proof of Theorem \ref{densityConsistency}}
Inspired by \cite{GEER2000}, our starting point for proving
Theorem \ref{densityConsistency} is the Basic Inequality in Lemma \ref{basicInequalities}. 
To make use of this inequality, we must control the penalty
differences $\zeta_n(G)$ and $\xi_n(G; \wtG)$ for all $(G; \wtG)$ 
in an appropriate neighborhood of $G_0$. We do so by first establishing a rate of 
convergence of the estimator $\wtG_n$. 
In what follows, we write $\tilde p_n = p_{\wtG_n}$.
\begin{lemma}
\label{rate_tildeG}
For a universal constant $J > 0$, assume there exists a sequence 
of real numbers $\gamma_n \gtrsim(\log n / n)^{1/2}$
such that for all $\gamma \geq \gamma_n$,
$$\calJ_B\left(\gamma, \bar\calP_K^{\frac 1 2}(\gamma), \nu \right) \leq J \sqrt n \gamma^2.$$
Then 
$h(\tilde p_n, p_0)= O_p(\gamma_n).$    
In particular, it follows that $W_2(\wtG_n, G_0) = O_p(\gamma_n^{\frac 1 2}).$
\end{lemma}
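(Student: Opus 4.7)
The plan is to adapt the classical Wong–Shen/Van de Geer machinery for MLE rates to the penalized estimator $\wtG_n$, exploiting the fact that the penalty $\phi$ is of lower order than the empirical-process fluctuations. The main steps go as follows.

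First I would derive a basic inequality analogous to Lemma \ref{basicInequalities}, but without the $r_{\lambda_n}$ term. Starting from the defining inequality $l_n(\wtG_n) - \phi(\tilde\bpi) \geq l_n(G_0) - \phi(\bpi_0)$ and applying the concavity bound $\log\{(\tilde p_n + p_0)/(2p_0)\} \geq \tfrac12\log(\tilde p_n/p_0)$ on $\{p_0 > 0\}$, then splitting the resulting sample integral into its centered and expected parts and using $\KL(p_0,\bar{\tilde p}_n) \geq 2\,h^2(\bar{\tilde p}_n, p_0)$, I obtain
\begin{equation*}
h^2(\bar{\tilde p}_n, p_0) \;\leq\; \frac{1}{\sqrt n}\,\nu_n(\wtG_n) \;+\; \frac{1}{2n}\bigl\{\phi(\bpi_0)-\phi(\tilde\bpi)\bigr\}.
\end{equation*}
Since $\phi \geq 0$ and $\phi(\bpi_0) = O(1)$ (as $\bpi_0$ is bounded away from $\boldsymbol 0$), the penalty contribution is $O(1/n)$, which is negligible compared to $\gamma_n^2 \gtrsim \log n/n$.

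Next I would perform a peeling (slicing) argument on the shells $\{2^s\gamma_n \leq h(\bar{p}_G, p_0) < 2^{s+1}\gamma_n\}$, applying Theorem \ref{thm:incrementGeer} on each $N(2^{s+1}\gamma_n)$. The bracket-entropy hypothesis $\calJ_B(\gamma,\bar\calP_K^{1/2}(\gamma),\nu) \leq J\sqrt n\,\gamma^2$ for $\gamma \geq \gamma_n$ is precisely what is needed to verify condition \eqref{thm511_5} with $a \asymp \sqrt n\,R^2$ for $R = 2^{s+1}\gamma_n \geq \gamma_n$. Theorem \ref{thm:incrementGeer} then yields
\begin{equation*}
\bbP\!\left\{\sup_{G \in N(R)} |\nu_n(G)| \geq A\sqrt n\, R^2\right\} \;\leq\; C\exp\bigl(-c\,n R^2\bigr),
\end{equation*}
for suitable constants. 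Combining this with the basic inequality, the event $\{h(\bar{\tilde p}_n,p_0) \in [2^s\gamma_n, 2^{s+1}\gamma_n)\}$ forces $2^{2s}\gamma_n^2 \lesssim A \cdot 2^{2(s+1)}\gamma_n^2 + O(1/n)$ on a set whose complement has probability at most $C\exp(-c n 4^s\gamma_n^2)$. Summing over $s \geq s_M$ for large $M$ and using $n\gamma_n^2 \gtrsim \log n$ gives a bound of order $n^{-c\cdot 4^{s_M}}$, whence $h(\bar{\tilde p}_n,p_0) = O_p(\gamma_n)$. Then by \eqref{simple_hellinger_bounds}, $h(\tilde p_n,p_0) \leq 4\,h(\bar{\tilde p}_n,p_0) = O_p(\gamma_n)$.

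For the second claim, a straightforward adaptation of Theorem \ref{paramConsistency}(i) to $\wtG_n$ (the proof is actually simpler since there is no $r_{\lambda_n}$ term) shows $W_2(\wtG_n,G_0)\to 0$ almost surely. Hence, with probability tending to one, $W_2(\wtG_n,G_0) < \delta_0$ and Theorem \ref{thm:ho_long_results}(ii) applies, giving $c_0 W_2^2(\wtG_n,G_0) \leq h(\tilde p_n,p_0) = O_p(\gamma_n)$, so $W_2(\wtG_n,G_0) = O_p(\gamma_n^{1/2})$.

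The main obstacle is the standard bookkeeping in the peeling step (verifying that the constant $a$ in Theorem \ref{thm:incrementGeer} can be taken proportional to $\sqrt n R^2$ on every shell, uniformly in $s$, using the monotonicity of $\calJ_B$ in $\gamma$), but nothing here departs materially from the classical argument of Van de Geer once the basic inequality is in hand. The crucial observation that makes the proof work is that, unlike for $\hG_n$, no penalty on atom-distances appears, so the residual penalty term $\phi(\bpi_0)/n$ is unambiguously $O(1/n) = o(\gamma_n^2)$ and does not interfere with the rate.
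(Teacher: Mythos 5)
Your proposal is correct and follows essentially the same route as the paper: the paper invokes Lemma \ref{basicInequalities} with $\lambda_n=0$ (which is exactly your penalty-free basic inequality, up to an immaterial constant), then runs the same peeling argument over Hellinger shells via Theorem \ref{thm:incrementGeer} with $a\asymp \sqrt n R^2$ and the bracket-entropy hypothesis, and obtains the $W_2$ rate from Theorem \ref{thm:ho_long_results}(ii). Your observation that the residual term $\phi(\bpi_0)/n$ is $o(\gamma_n^2)$ and hence harmless is precisely the role played by the constant $H=\phi(\bpi_0)/4$ in the paper's computation.
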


\begin{proof}
The proof follows by the same argument as that of Theorem 7.4 in \cite{GEER2000}. 
In view of Lemma \ref{basicInequalities} with $\lambda_n = 0$, we have
\begin{align*}
\bbP\left\{h(\tilde p_n, p_0) > \gamma_n\right\} 
 &\leq 
\bbP\left\{
\sup_{\substack{G \in \calG_K \\ h(\bar p_G, p_0) > \gamma_n/4}} n^{-\frac 1 2} \nu_n(G) + \frac 1 {4n}[\phi(\bpi_0) - \phi(\bpi)] - h^2(\bar p_G, p_0)\geq 0 \right\} \\
 &\leq  
\bbP\left\{
\sup_{\substack{G \in \calG_K \\ h(\bar p_G, p_0) > \gamma_n/4}} n^{-\frac 1 2} \nu_n(G) + \frac H n - h^2(\bar p_G, p_0)\geq 0 \right\},
\end{align*}
where $H := \phi(\bpi_0)/4$.
Let $\calS = \min\{s: 2^{s+1}\gamma_n/4 > 1\}$. We have
\begin{align*}
\bbP&\left\{\sup_{\substack{G \in \calG_K \\ h(\bar p_G, p_0) > \gamma_n/4}} n^{-\frac 1 2} \nu_n(G) + \frac H n  - h^2(\bar p_G, p_0)\geq 0 \right\} \\
 &\leq \sum_{s=0}^{\calS} \bbP\left\{\sup_{\substack{G \in \calG_K \\ h(\bar p_G, p_0) \leq (2^{s+1})\gamma_n/4}} \nu_n(G)  \geq \sqrt n 2^{2s}\left(\frac{\gamma_n}{4}\right)^2 - \frac H {\sqrt n}   \right\}.
\end{align*}
We may now invoke Theorem \ref{thm:incrementGeer}. Let $R=2^{s+1}\gamma_n$, $C_1=15$, and
$$a = \sqrt n 2^{2s}\left(\frac{\gamma_n}{4}\right)^2 - \frac H {\sqrt n}.$$
To show that condition \eqref{thm511_5} holds, note that
\begin{align*}
4C &\left(\int_0^{2^{s+1}\gamma_n} \sqrt{H_B\left(\frac u {\sqrt 2}, \bar\calP_K^{\frac 1 2}\left(2^{s+1}\frac{\gamma_n}{4}\right), \nu\right)} du \vee 2^{s+1}\gamma_n\right) \\
 &\leq 4C \left(\sqrt 2 \int_0^{2^{s+\frac 1 2}\gamma_n} \sqrt{H_B\left(u, \bar\calP_K^{\frac 1 2}\left(2^{s+\frac 1 2}\gamma_n\right), \nu\right)} du \vee 2^{s+1}\gamma_n\right) \\
 &\leq  4C \left(J \sqrt 2 \sqrt n 2^{2s+\frac 1 2} \gamma_n^2 \vee 2^{s+1}\gamma_n\right) \\
 &=  4C \left(J \sqrt n 2^{2s+ 1} \gamma_n^2 \vee 2^{s+1}\gamma_n\right).
\end{align*} 
There exists $N > 0$ depending on $H$ (and hence on $G_0$)
such that the above quantity is bounded above by $a$ for all $n \geq N$, for a
universal constant $J > 0$. Invoking Theorem 1, we therefore have
\begin{align*}
\sum_{s=0}^{\calS} \bbP&\left\{\sup_{\substack{G \in \calG_K \\h(\bar p_G, p_0) \leq (2^{s+1})\gamma_n/4}} \nu_n(G) + \frac H{\sqrt n} \geq \sqrt n 2^{2s}\left(\frac{\gamma_n}{4}\right)^2 \right\} \\
 &\leq C\sum_{s=0}^{\infty} \exp\left\{-\frac 1 {16C^2 2^{2s+2}\gamma_n^2}\left[\sqrt n 2^{2s}\left(\frac {\gamma_n} 4\right)^2 - \frac H {\sqrt n}\right]^2\right\} \\
 &\leq C\sum_{s=0}^{\infty} \exp\left\{-\frac 1 {16C^2 2^{2s+2}\gamma_n^2}\left[ 
      \frac{n 2^{4s} \gamma_n^4}{(16)^2} - \frac{2^{2s+1}\gamma_n^2 H}{16}\right] \right\} \\ 
 &= C\exp\left\{\frac H {2^9 C^2}\right\} \sum_{s=0}^{\infty} \exp\left\{-\frac {n2^{2s-2}\gamma_n^2} {(16)^3 C^2}
 \right\} \\ 
 &=o(1).
\end{align*}
The claim of the first part follows. The second part follows by Theorem \ref{thm:ho_long_results}.
\end{proof}

In view of Theorem \ref{paramConsistency} and Lemma \ref{rate_tildeG}, for every $\epsilon \in (0,1)$, 
there exists $b_0 > 0$ 
such that $\wtG_n \in \calG_K(b_0; \gamma_n)$ for large enough $n$, with probability at least $1-\epsilon$.
This fact, combined with the following key proposition, 
will lead to the proof of Theorem \ref{densityConsistency}.
\begin{customprop}{C.1}
\label{xiBound}
Let $\kappa_n \geq \gamma_n \gtrsim (\log n/n)^{1/2}$. Let $0 < b_0 < \min_{1 \leq k \leq K_0} \pi_{0k}$. 
Under penalty conditions (P1) and (P2), there exists constants $c,M > 0$ 
depending on $G_0$ such that, if $\kappa_n \leq M$, then
\begin{align*}\sup & \bigg\{\xi_n(G; \wtG): 
                             G \in \calG_K(b_0; \kappa_n), \
                           \wtG \in \calG_K(b_0; \gamma_n)\bigg\}
\leq c \gamma_n^{\frac 3 2}\left(\kappa_n^{\frac 1 2} + \gamma_n^{\frac 1 2} \right) / \log n,
 \end{align*} 
and, \ \ ~$ \displaystyle \sup \bigg\{\zeta_n(G) :   G \in \calG_K(b_0; \kappa_n)\bigg\} 
\leq \frac{c\gamma_n^{\frac 3 2}\kappa_n}{ \log n}.$
 \end{customprop}
\begin{proof}
We prove the  first claim in six steps.

\noindent
\textbf{Step 0: Setup.} Let $G \in \calG_K(b_0; \kappa_n)$ and $\wtG \in \calG_K(b_0; \gamma_n)$.
The dependence of $G$ and $\wtG$ on $n$ is omitted from the notation for simplicity. 
It will suffice to prove that there exist $c, M > 0$
which do not depend on $G$ and $\wtG$, such that if $\kappa_n, \gamma_n \leq M$, then
$$ \xi_n(G; \wtG) \leq 
    c \gamma_n^{\frac 3 2}\left(\gamma_n^{\frac 1 2} + \kappa_n^{\frac 1 2}\right)/\log n .$$
Writing $G = \sum_{j=1}^K \pi_j \delta_{\btheta_j}$, define the Voronoi diagram
$$\calV_k = \left\{\btheta_j: \norm{\btheta_j - \btheta_{0k}} \leq \norm{\btheta_j - \btheta_{0l}}, \ \forall l \neq k\right\}, \quad k=1, \dots, K_0,$$
with corresponding index sets $\calI_k = \{1 \leq j \leq K: \btheta_j \in \calV_k\}$.
It follows from Lemma \ref{lem:ho} that there exists
a small enough choice of constants $M_1,c_1 > 0$ (depending on $G_0$ but not on $G$)
such that if $\kappa_n \leq M_1$, then	
\begin{equation}
\label{eq:w2bound_prop2}
W_2^2(G, G_0) > c_1 \left\{ \sum_{k=1}^{K_0} \sum_{j \in \calI_k} \pi_j \norm{\btheta_j - \btheta_{0k}}^2 +
                       \sum_{k=1}^{K_0} \left| \pi_{0k} - \sum_{j \in \calI_k} \pi_j \right|\right\}.
\end{equation}
Thus, using the fact that $\pi_j \geq b_0$ for all $j$, we have,
\begin{equation}
\label{voronoiDiam}
\norm{\btheta_j - \btheta_{0k}} < \frac {W_2(G,G_0)} {\sqrt{c_1 b_0}} \leq \frac{\kappa_n^{\frac 1 2}}{c_0\sqrt{c_1 b_0}}, \quad
\forall j \in \calI_k, \ k=1, \dots, K_0,
\end{equation}
where $c_0$ is the constant in Theorem \ref{thm:ho_long_results}. Let $c_2 = \frac{1}{c_0\sqrt{c_1 b_0}}$, and
 $\epsilon_0 = \inf\{\norm{\btheta_{0j} - \btheta_{0k}}: 1 \leq j<k \leq K_0\}$.
Choose $M_2 = \left(\frac{\epsilon_0\wedge \delta_0}{4c_2}\right)^2 \wedge \delta^2$, where $\delta$
is the constant in condition (C) on the cluster ordering $\alpha_{\bt}$. 
Fix $M = M_1 \wedge M_2$ for the rest of the proof, and assume $\kappa_n \leq M$.
In particular, we then obtain
$\norm{\btheta_j - \btheta_{0k}} < \epsilon_0/4$ for all $j \in \calI_k$, $k=1, \dots, K_0$.
It follows that for all $j,l \in \calI_k$, $k=1, \dots, K_0$, 
$$\norm{\btheta_j - \btheta_l} \leq \norm{\btheta_j - \btheta_{0k}} + \norm{\btheta_{0k} - \btheta_l} < \frac{\epsilon_0}{2},$$
and for all $k, k' =1, \dots, K_0$, $k\neq k'$, if $j\in \calI_k$ and $i \in \calI_{k'}$, 
$$\epsilon_0 \leq \norm{\btheta_{0k} - \btheta_{0k'}} \leq \norm{\btheta_{0k} - \btheta_j} + \norm{\btheta_j - \btheta_i} + \norm{\btheta_i - \btheta_{0k'}}
< \frac{\epsilon_0}{2} + \norm{\btheta_j - \btheta_i}.$$
Therefore, 
$$\max_{j,l \in \calI_k} \norm{\btheta_j - \btheta_l} < \min_{\substack{j\in \calI_k \\ i \in \calI_{k'}}} \norm{\btheta_j - \btheta_i},$$
for all $k \neq k'$, 
which implies that $\{\calV_1, \dots, \calV_{K_0}\}$ is a cluster partition, and
condition $(C)$ can be invoked on any cluster ordering over this partition.

As outlined at the beginning of Supplement \hyperref[section:proofs]{C}, recall that
 $\btheta = (\btheta_1, \dots, \btheta_K)$, $\btheta_0 = (\btheta_{01}, \dots, \btheta_{0K_0})$, 
  $\alpha =\alpha_{\btheta}$, and $\alpha_0 =\alpha_{\btheta_0}$.
For every $j=1, \dots, K$, let $k_j \in \{1, \dots, K_0\}$ be the unique
integer such that $j \in \calI_{k_j}$. Let
$$S_k = \{j: k_{\alpha(j)} =k, ~ k_{\alpha(j+1)} \neq k\} = \{j: \alpha(j) \in \calI_k, \ \alpha(j+1) \not\in \calI_k\},$$
for all $k=1, \dots, K_0,$ and let $S = \bigcup_{k=1}^{K_0} S_k$, which colloquially denotes
the set of indices for which the permutation $\alpha$
moves between Voronoi cells. 
We complete the proof in the following 5 steps.

\noindent\textbf{Step 1: The Cardinality of $S$.}
We claim that $|S| = K_0 - 1$.
Since $\alpha$ is a permutation, we must have $S_{k_{\alpha(K)}} = \emptyset$
and $S_k \neq \emptyset$ for all 
$k \neq \alpha(K)$, $k=1, \dots, K_0$. It follows that $|S| \geq K_0 - 1$. 
 
 By way of a contradiction, suppose that $|S| > K_0 - 1$. Then, 
 by the Pigeonhole Principle, there exists some $1 \leq k \leq K_0$ such that 
 for distinct $j, l \in \{1, 2, \dots, K-1\}$, 
$$\alpha(j), \alpha(l) \in \calI_k, \quad \alpha(j+1), \alpha(l+1) \not\in \calI_k,$$
which implies that $\alpha^{-1}(\calI_k)$ is not a consecutive set of integers, 
and contradicts the fact that $\alpha$ is a cluster ordering. Thus, $|S| = K_0 - 1$ as claimed. 

\noindent\textbf{Step 2:  
Bounding the distance between the atom differences of $G$ and $G_0$.
}
Using the previous step, we may write $S = \{j_1, j_2, \dots, j_{K_0-1}\}$, 
where $1\leq j_1 < \dots < j_{K_0-1} \leq K$. 
Recall that $\{\calV_1, \dots, \calV_{K_0}\}$ is a cluster partition of $\btheta$. 
Thus, it follows from the definition of cluster ordering that there exists $\tau \in S_{K_0}$ such that
$$(\btheta_{\alpha(1)}, \dots, \btheta_{\alpha(K)}) = (\calV_{\tau(1)}, \dots, \calV_{\tau(K_0)}),$$
where the right-hand side of the above display uses block matrix notation. 
Since condition (C) applies, we have $\tau = \alpha_0$. 
Colloquially, this means that the path taken by $\alpha_0$ \textit{between} the Voronoi cells
is the same as that of $\alpha$. 
Combining this fact with \eqref{voronoiDiam}, we have
$\norm{\btheta_{\alpha(j_k)} - \btheta_{0\alpha_0(k)}} \leq c_2 \kappa_n^{\frac 1 2}$ 
for all $k=1, \dots, K_0,$ and so, for all $k=1, \dots, K_0-1$, 
\begin{equation}
\label{etaDifBound}
\norm{\bfeta_{j_k} - \bfeta_{0k}} \leq 
 \norm{\btheta_{\alpha(j_k)} - \btheta_{0\alpha_0(k)}} + \norm{\btheta_{\alpha(j_k+1)} - \btheta_{0\alpha_0(k+1)}}
 \leq 2c_2\kappa_n^{\frac 1 2}.
\end{equation}

\noindent\textbf{Step 3: Bounding the distance between the atom differences of $\wtG$ and $G_0$.}
Let $\tbtheta = (\tbtheta_1, \dots, \tbtheta_K)$ and recall that $\tilde\alpha =\alpha_{\tbtheta}$, and
$\tilde\bfeta_j = \tbtheta_{\tilde\alpha(j+1)} - \tbtheta_{\tilde\alpha(j)}$, for all $j=1, \dots, K-1$.
Similarly as before, let
$$\tilde{\calV}_k = \left\{\tbtheta_j: \big\lVert\tbtheta_j - \btheta_{0k}\big\rVert \leq \big\lVert\tbtheta_j - \btheta_{0l}\big\rVert, \ \forall l \neq k\right\}, \quad k=1, \dots, K_0,$$
with corresponding index sets $\tilde\calI_k = \{1 \leq j \leq K: \tbtheta_j \in \calV_k\}$.
Using the same argument as in Step 0, and using the fact that $\kappa_n \geq \gamma_n$,
it can be shown that $\{\tilde\calV_1, \dots, \tilde\calV_{K_0}\}$ is a cluster
partition. Furthermore, 
$$\big\lVert\tbtheta_j - \btheta_{0k} \big\rVert \leq c_2 \gamma_n^{\frac 1 2}, \quad \forall j \in \tilde\calI_k, \ k=1, \dots, K_0.$$
Now, define $\tilde S = \bigcup_{k=1}^{K_0} \{j: \talpha(j) \in \tilde\calI_k, \ \talpha(j+1) \not\in \tilde\calI_k\}.$
Using the same argument as in Step 1,
we have $|\tilde S| = K_0-1$, and we may write $\tilde S = \{l_1, \dots, l_{K_0-1}\}$, 
where $l_1 < l_2 < \dots < l_{K_0-1}$. Using condition (C) on the cluster ordering $\alpha$, 
we then have as before
\begin{equation}
\label{tildeEtaDifBound}
\norm{\tilde\bfeta_{l_k} - \bfeta_{0k}} \leq 
 \big\|\tilde\btheta_{\talpha(l_k)} - 
 \btheta_{0\alpha_0(k)}\big\| + \norm{\btheta_{\talpha(l_k+1)} - \btheta_{0\alpha_0(k+1)}}
 \leq 2c_2 \gamma_n^{\frac 1 2}.
\end{equation}

On the other hand, recall that we defined $\omega_j = \norm{\tbfeta_{\psi(j)}}^{-\beta}$, 
where $\psi = v \circ u^{-1}$, and 
$u,v \in S_{K-1}$ are such that
$$\norm{\bfeta_{u(1)}} \geq \norm{\bfeta_{u(2)}} \geq \dots \geq \norm{\bfeta_{u(K-1)}}, \quad \text{and }
  \norm{\tbfeta_{v(1)}} \geq \dots \geq \norm{\tbfeta_{v(K-1)}}.$$
Since $\{\calV_1, \dots, \calV_{K_0}\}$ and $\{\tilde\calV_1, \dots, \tilde\calV_{K_0}\}$
are cluster partitions, and $|S| = |\tilde S| = K_0-1$, it is now a simple observation that $\norm{\bfeta_{u(1)}}, \dots, \norm{\bfeta_{u(K_0-1)}}$
and $\norm{\tbfeta_{v(1)}}, \dots, \norm{\tbfeta_{v(K_0-1)}}$ are the norms of the atom
differences \textit{between} Voronoi cells, which are bounded away from zero, 
and are resepectively in a $\kappa_n^{\frac 1 2}$- and $\gamma_n^{\frac 1 2}$- neighborhood of $\norm{\bfeta_{01}}, \dots, \norm{\bfeta_{0(K_0-1)}}$ up to 
reordering (also, the remaining 
$\norm{\bfeta_{u(K_0)}}, \dots, \norm{\bfeta_{u(K-1)}}$ and $\norm{\tbfeta_{v(K_0)}}, \dots, \norm{\tbfeta_{v(K-1)}}$
are precisely the norms of the atom differences \textit{within} Voronoi cells, and are therefore respectively
in a $\kappa_n^{\frac 1 2}$- and $\gamma_n^{\frac 1 2}$-neighborhood of zero). We therefore have,
$u^{-1}(j_k) = v^{-1}(l_k)$ for all $k=1, \dots, K_0-1$, and,
$$\psi(j_k) = (v \circ u^{-1})(j_k) = (v\circ v^{-1})(l_k) = l_k, \quad k=1, \dots, K_0-1.$$
Comparing this fact with \eqref{tildeEtaDifBound}, we arrive at,
\begin{equation}
\label{thm2_step2_2}
\norm{\tbfeta_{\psi(j_k)} - \bfeta_{0k}} \leq 2c_2 \gamma_n^{\frac 1 2}.
\end{equation}

\noindent\textbf{Step 4: Bounding the Weight Differences.}
The arguments of Step 3 can be repeated  to obtain 
\[
\label{psi0EtaBound}
\norm{\tbfeta_{\psi_0(k)} - \bfeta_{0k}} \leq 2c_2 \gamma_n^{\frac 1 2}, \quad k=1, \dots, K_0.
\]
In particular, by \eqref{thm2_step2_2},
\begin{equation}
\label{weightEtaDiff}
\big|\norm{\tbfeta_{\psi(j_k)}} - \norm{\tbfeta_{\psi_0(k)}}\big| \leq 4c_2 \gamma_n^{\frac 1 2}, \quad k=1, \dots, K_0.
\end{equation}
This is the key property which motivates our definition of $\omega_k$.
Now, since $\gamma_n \leq M \leq \left(\frac{\epsilon_0}{4c_2}\right)^2,$
we have
\begin{equation}
\label{etalb1}
\norm{\tilde\bfeta_{\psi(j_k)}} \geq \norm{\bfeta_{0k}} - \norm{\tilde\bfeta_{\psi(j_k)} - \bfeta_{0k}}
\geq \norm{\bfeta_{0k}} - 2c_2 \gamma_n^{\frac 1 2} \geq \frac{\epsilon_0}{2} > 0,
\end{equation} 
implying together with \eqref{weightEtaDiff} that 
\begin{align} 
\label{weightBound}
|\omega_{j_k} - \omega_{0k}| \lesssim \gamma_n^{\frac 1 2},\quad k=1, \dots, K_0-1.
\end{align}
  
\noindent\textbf{Step 5: Upper Bounding $\xi_n(G; \wtG)$.} 
We now use \eqref{etaDifBound}, \eqref{etalb1}, and \eqref{weightBound} to bound $\xi_n(G; \wtG)$. Since $r_{\lambda_n} \ge 0$
by condition (P1), we have
\begin{align}
\label{prop3interm}
\xi_n(G; \wtG)
 &= \sum_{k=1}^{K_0-1} r_{\lambda_n}(\norm{\bfeta_{0k}}; \omega_{0k}) - 
    \sum_{j \in S} r_{\lambda_n}(\norm{\bfeta_j}; \omega_{j})
  - \sum_{j \not\in S} r_{\lambda_n}(\norm{\bfeta_j}; \omega_{j}) \nonumber \\
 &\leq
  \sum_{k=1}^{K_0 - 1} r_{\lambda_n}(\norm{\bfeta_{0k}}; \omega_{0k}) - 
        \sum_{j \in S} r_{\lambda_n}(\norm{\bfeta_j}; \omega_j)  \\
\nonumber &= \sum_{k=1}^{K_0-1}\left\{
         r_{\lambda_n}(\norm{\bfeta_{0k}}; \omega_{0k}) - 
         r_{\lambda_n}(\norm{\bfeta_{j_k}}; \omega_{j_k})  
       \right\}\\
\nonumber &= \sum_{k=1}^{K_0-1}\left\{
         r_{\lambda_n}(\norm{\bfeta_{0k}}; \omega_{0k}) - 
         r_{\lambda_n}(\norm{\bfeta_{0k}}; \omega_{j_k})
       \right\}  \\ \nonumber
&  \qquad + \sum_{k=1}^{K_0 - 1} \left\{
         r_{\lambda_n}(\norm{\bfeta_{0k}}; \omega_{j_k}) - 
         r_{\lambda_n}(\norm{\bfeta_{j_k}}; \omega_{j_k})
       \right\}.
\end{align}
Now, by similar calculations as in equation \eqref{etalb1}, 
it follows that $\norm{\bfeta_{0k}}$, $\norm{\bfeta_{j_k}}$, $\omega_{0k}$, and $\omega_{j_k}$ lie
in a compact set away from zero which is constant with respect to $n$. 
Therefore, by penalty condition (P2), 
$$\xi_n(G;\wtG ) \lesssim 
\left(\gamma_n^{\frac 3 2}/\log n\right) \sum_{k=1}^{K_0-1} \left\{\norm{\bfeta_{j_k} - \bfeta_{0k}} + 
   |w_{j_k} - w_{0k}| \right\}.$$
Finally, invoking \eqref{etaDifBound} and \eqref{weightBound}, there exists $c > 0$ depending only on $G_0$ such that, 
\begin{equation}
\xi_n(G;\wtG) \leq c \gamma_n^{\frac 3 2}\left(\kappa_n^{\frac 1 2} + \gamma_n^{\frac 1 2} \right) / \log n.
\end{equation}
Since $c$ does not depend on $(G; \wtG)$, it is clear that this entire
calculation holds uniformly in the $(G; \wtG)$ under consideration,
which leads to the first claim. 
 
To prove the second claim, let $\rho_k = \sum_{j \in \calI_k} \pi_j$
for all $k=1, \dots, K_0$.
For all $G \in \calG_K(b_0;\kappa_n)$, we have $|\pi_{0k}-\rho_k| \lesssim \kappa_n$
by equation \eqref{eq:w2bound_prop2}, hence by conditions (F) and (P2),
\begin{align*}
\zeta_n(G)
 &= \frac 1 n [\varphi(\bpi_0) - \varphi(\bpi)] 
 \leq \frac 1 n [\varphi(\bpi_0) - \varphi(\rho_1, \dots, \rho_{K_0})] 
 \leq \ell_n \sum_{k=1}^{K_0} |\pi_k - \rho_k| \lesssim \gamma_n^{\frac 3 2}\kappa_n / \log n.
\end{align*}
The claim follows.  
\end{proof}
We are now in a position to prove Theorem \ref{densityConsistency}. 
\begin{proof}[Proof (Of Theorem \ref{densityConsistency})]
Let $\epsilon > 0$. By Theorem \ref{paramConsistency} and Lemma \ref{rate_tildeG}, 
there exists $b_0 > 0$ and an integer
$N_1 > 0$ such that for every $n \geq N_1$, 
$$\bbP\big(\hat G_n \in \calG_K(b_0)\big) > 1 - \frac{\epsilon}{2},  \quad 
  \bbP\big(\wtG_n \in \calG_K(b_0; \gamma_n)\big) > 1 - \frac{\epsilon}{2}.
  $$
Let $M > 0$ be the constant in the statement
of Proposition \ref{xiBound}, and let $N_2 > 0$ be a sufficiently large integer
such that $\gamma_n \leq M$ for all $n \geq N_2$. For the remainder of the proof, let $n \geq N_1 \vee N_2$.
The consistency of $\hat p_n$ with respect to the Hellinger distance
was already established in Theorem \ref{paramConsistency}, so it suffices to prove that 
$\bbP(\gamma_n < h(\hp_n, p_0) < M) \to 0$ as $n \to \infty$.
We have
\begin{align}
\bbP&\left(\gamma_n < h(\hp_n, p_0) < M\right) \nonumber \\
 &= 
  \bbP\left(\gamma_n < h(\hp_n, p_0) < M, 
    \{\hat G_n \not\in \calG_K(b_0)\} \cup \{\wtG_n\not\in \calG_K(b_0;\gamma_n)\} \right) \nonumber \\
    & \qquad + 
   \bbP\left(\gamma_n < h(\hp_n, p_0) < M, \hat G_n \in \calG_K(b_0), \wtG_n \in \calG_K(b_0; \gamma_n) \right) \nonumber \\
 &\leq \bbP\left(\hat G_n \not\in \calG_K(b_0)\right) +
                \bbP\left(\wtG_n \not\in \calG_K(b_0; \gamma_n)\right) \nonumber \\ 
 & 
 \qquad + \bbP\left(\gamma_n < h(\hp_n, p_0) < M, \hat G_n \in \calG_K(b_0), \wtG_n \in \calG_K(b_0; \gamma_n) \right)
\nonumber \\
 & \leq \frac{\epsilon}{2} + \frac{\epsilon}{2} 
 + \bbP\left(\gamma_n < h(\hp_n,p_0) < M, \hat G_n \in \calG_K(b_0),  \wtG_n \in \calG_K(b_0; \gamma_n)\right)    \nonumber \\
 & \leq \epsilon 
 + \bbP\left(\gamma_n/4 < h\left(\frac{\hp_n+p_0}{2},p_0\right) < M/\sqrt 2, \hat G_n \in \calG_K(b_0),\wtG_n \in \calG_K(b_0; \gamma_n)\right) 
    \label{thm2_presup} \\ 
 & \leq 
 \epsilon 
 + 
\bbP\left\{
\sup_{\substack{G \in \calG_K(b_0) \\ \gamma_n/4 < h(\bar p_G, p_0) < M/\sqrt 2 \\ \wtG \in \calG_K(b_0; \gamma_n)}} n^{-\frac 1 2} \nu_n(G) 
+ \frac 1 4 [\zeta_n(G) + \xi_n(G; \wtG)] - h^2(\bar p_G, p_0)\geq 0 \ \right\},
\label{thm2_sup}
\end{align}
where in \eqref{thm2_presup} we used the inequalities
in \eqref{simple_hellinger_bounds}, and in \eqref{thm2_sup} we used Lemma \ref{basicInequalities}.
It therefore suffices to prove that the right-hand side term of \eqref{thm2_sup} tends to zero.
To this end, let $\calS_n = \min\{s: 2^{s+1}\gamma_n > M/\sqrt 2 \}$. Then,
\begin{align}
\label{thm2_step}
\nonumber\bbP&\left\{
\sup_{\substack{G \in \calG_K(b_0) \\ \gamma_n/4 < h(\bar p_G, p_0) < M/\sqrt 2 \\ \wtG \in \calG_K(b_0; \gamma_n)}} n^{-\frac 1 2} \nu_n(G) + 
   \frac 1 4  [\zeta_n(G) + \xi_n(G; \wtG)] - h^2(\bar p_G, p_0)\geq 0 \ \right\} \\
 &\leq \sum_{s=0}^{\calS_n} \bbP\left\{\sup_{\substack{G \in \calG_K(b_0; (2^{s+1})\gamma_n/4) \\ \wtG \in \calG_K(b_0; \gamma_n)}} \nu_n(G) + 
 \frac{\sqrt n}{4}[\zeta_n(G) + \xi_n(G; \wtG)] \geq \sqrt n 2^{2s}\left(\frac{\gamma_n}{4}\right)^2  \right\}.
\end{align}
Thus, using Proposition \ref{xiBound} we have
\begin{align}
\label{thm2_step3}
\eqref{thm2_step}\leq \sum_{s=0}^{\calS_n} \bbP\left\{\sup_{G \in \calG_K(b_0; (2^{s+1})\gamma_n/4)} \nu_n(G) \geq \sqrt n 2^{2s}\left(\frac{\gamma_n}{4}\right)^2 -  \frac{c\sqrt n \gamma_n^2}{4\log n} \left(1+2^{\frac{s-1}{2}}  + \gamma_n^{\frac 1 2}2^{s-1} \right) \right\}.
\end{align}
We may now invoke Theorem \ref{thm:incrementGeer}. Let
$$a = \sqrt n 2^{2s}\left(\frac{\gamma_n}{4}\right)^2 - \frac{c\sqrt n \gamma_n^2}{4 \log n} \left(1+2^{\frac{s-1}{2}}
+ \gamma_n^{\frac 1 2}2^{s-1} \right).$$
We may set $R=2^{s+1}\gamma_n $ and $C_1=15$. 
It is easy to see that \eqref{thm511_2} is then satisfied. 
To show that condition \eqref{thm511_5} holds, note that
\begin{align*}
4C &\left(\int_0^{2^{s+1}\gamma_n} \sqrt{H_B\left(\frac u {\sqrt 2}, \bar\calP_K^{\frac 1 2}\left(2^{s+1}\frac{\gamma_n}{4}\right), \nu\right)} du \vee 2^{s+1}\gamma_n\right) \\
 &\leq 4C \left(\sqrt 2 \int_0^{2^{s+\frac 1 2}\gamma_n} \sqrt{H_B\left(u, \bar\calP_K^{\frac 1 2}\left(2^{s+\frac 1 2}\gamma_n\right), \nu\right)} du \vee 2^{s+1}\gamma_n\right) \\
 &\leq  4C \left(J \sqrt 2 \sqrt n 2^{2s+\frac 1 2} \gamma_n^2 \vee 2^{s+1}\gamma_n\right) \\
 &=  4C \left(J \sqrt n 2^{2s+ 1} \gamma_n^2 \vee 2^{s+1}\gamma_n\right). 
\end{align*} 
It is clear that $a \geq 2^{s+1}\gamma_n$ for sufficiently large $n$, and 
\begin{align*}
4C &J \sqrt n 2^{2s+1}\gamma_n^2 \\
 &= a - \sqrt n 2^{2s}\left(\frac{\gamma_n} 4\right)^2 + 4CJ \sqrt n 2^{2s+1} \gamma_n^2 + 
    \frac{c\sqrt n \gamma_n^2}{4\log n}  \left(1+2^{\frac{s-1}{2}} + \gamma_n^{\frac 1 2}2^{s-1} \right) \\
 &= a +  \sqrt n 2^{s} \gamma_n^2\left\{\left(8C J - \frac 1 {16}\right)2^{s} 
   + \frac c {4\log n}\left(2^{-s} + 2^{-\frac{s+1}{2}} + \gamma_n^{\frac 1 2}/2\right)\right\}. 
\end{align*}
Now, choose $J$ such that $8C J < \frac 1 {16}$. Then, for large enough $n$, since $\gamma_n \gtrsim (\log n/n)^{1/2}$, 
it is clear that
the right-hand term of the above quantity is negative, so condition \eqref{thm511_2} is satisfied.
Invoking Theorem \ref{thm:incrementGeer}, we have
\begin{align*}
\eqref{thm2_step3}
 \leq C \sum_{s=0}^{\calS_n} \exp\left\{-\frac{a^2}{16C^2R^2}\right\}.
\end{align*}
Now, a simple order assesment shows that $a$ is dominated by its first term.
Therefore, there exists $c_1 > 0$ such that $a^2 \geq c_1 n 2^{4s} \gamma_n^4$ for large enough $n$.  
Let $c_n = \frac{c_1}{64C^2}n \gamma_n^2$. Then, 
\begin{align*}
\eqref{thm2_step3}
& \lesssim  \sum_{s=0}^{\infty} \exp(-c_n2^{2s})
 \leq \exp(-c_n) - 1 + \sum_{s=0}^{\infty} \exp(-c_n s)\\
 &  =\exp(-c_n) - 1 + \frac 1 {1-\exp(-c_n)} \to 0,
\end{align*}
as $n \to \infty$, where we have used the fact that $c_n \to \infty$ because $\gamma_n \gtrsim (\log n/n)^{1/2}$.
The claim follows. 
\end{proof}

\subsection*{C.3. Proof of Theorem \ref{orderConsistency}} 
We now provide the proof of Theorem \ref{orderConsistency}.

\begin{proof}[Proof (Of Theorem \ref{orderConsistency})]
We begin with Part (i). According to Theorem \ref{paramConsistency}, the MPLE 
$\hat G_n$
of $G_0$ obtained by maximizing the penalized 
log-likelihood function $L_n$ is a consistent estimator of $G_0$ with respect to $W_r$, and therefore
has at least $K_0$ components with probability tending to one. 
It will thus suffice to prove that $\bbP(\hat K_n > K_0) = o(1)$. Furthermore, given $\epsilon > 0$, it follows from Theorems \ref{paramConsistency} 
and \ref{densityConsistency} that there exist $b_0, N > 0$ 
such that
$$\bbP(\hat G_n \in \calG_K(b_0;\gamma_n)) \geq 1 - \epsilon, \quad \forall n \geq N.$$
These facts imply
\begin{align*}
\bbP(\hat K_n > K_0) 
 &= \bbP\left\{\hat K_n > K_0, \hat G_n \not\in \calG_K(b_0;\gamma_n)\right\} +
    \bbP\left\{\hat K_n > K_0, \hat G_n \in \calG_K(b_0;\gamma_n)\right\}  \\
 &\leq \epsilon+
   \bbP\left\{\sup_{G \in \calG_K(b_0; \gamma_n) \setminus \calG_{K_0}} L_n(G) \geq \sup_{G \in \calG_{K_0}} L_n(G)\right\}.
\end{align*}
It will thus suffice to prove that the right-hand term in the above display tends to zero.
 To this end, let $G = \sum_{j=1}^K \pi_j \delta_{\btheta_j} \in \calG_K(b_0; \gamma_n) \setminus \calG_{K_0}$. 
Specifically, $G$ is any mixing measure with order 
$K>K_0$, such that $\pi_j \geq b_0$ for all $j=1, \dots, K$ and, by Theorem \ref{thm:ho_long_results},
\begin{equation}
\label{mixingNeighborhood}
W_2(G, G_0) = O(\gamma_n^{\frac 1 2}).
\end{equation} 
The dependence of $G$ on $n$ is omitted from its notation for simplicity.
Define the following Voronoi diagram  with respect to the atoms of $G$,
\[
\calV_k = \{\btheta_j: \norm{\btheta_j - \btheta_{0k}} < \norm{\btheta_j - \btheta_{0l}}, \forall l \not= k, 1 \le j \le K \}
~~,~~k=1, 2, \dots, K_0,
\]
and the corresponding index sets 
$\calI_k = \{1 \le j \le K: \btheta_j \in \calV_k \}$, for all $k=1, 2, \dots, K_0.$ 
Also, let $\rho_k = \sum_{j \in \calI_k} \pi_j$. Since the mixing proportions of $G$ are bounded below, it follows from 
\eqref{mixingNeighborhood} and Lemma \ref{lem:ho} that
\begin{equation}
\label{thetaNeighborhood}
\norm{\btheta_j - \btheta_{0k}} = O(\gamma_n^{\frac 1 2}), \quad \forall j \in \calI_k, \ k=1, \dots, K_0,
\end{equation}
and 
\begin{equation}
\label{piNeighborhood}
|\rho_k - \pi_{0k}| = O(\gamma_n^{\frac 1 2}), \quad k=1, \dots, K_0.
\end{equation}
Let $H_k$ be the following discrete measure, whose atoms are the elements of $\calV_k$,
\begin{equation}
\label{Hk_defn}
\displaystyle H_k = \frac 1 {\rho_k} \sum_{j \in \calI_k} \pi_j \delta_{\btheta_j}, \qquad k=1, 2, \dots, K_0.
\end{equation}
Note that $H_k$ is a mixing measure in its own right, and we may rewrite the mixing measure $G$ as 
\begin{equation}
\label{GDef}
G\ = \sum_{k=1}^{K_0} \rho_k  H_k.
\end{equation}
Furthermore, let $\alpha = \alpha_{\btheta}$ where $\btheta = (\btheta_1, \dots, \btheta_K)$,
and recall that $\bfeta_k = \btheta_{\alpha(k+1)} - \btheta_{\alpha(k)}$, for $k=1, 2, \dots, K-1$. 

On the other hand, 
let $\aG_n = \sum_{k=1}^{K_0} \rho_k \delta_{\abtheta_k}$ be the maximizer of $L_n(G)$ over the set of mixing 
measures in $\calG_{K_0}$ with mixing proportions fixed at $\rho_1, \dots, \rho_{K_0}$. 
 Under condition (A3), the same proof technique as Theorem 2, 
together with Theorem~\ref{thm:ho_long_results}, 
implies that the same rate holds under the Wasserstein distance, 
$$W_{2}(\aG_n, G_0) = O_p(\gamma_n^{\frac 1 2}).$$
Since $\aG_n$ has $K_0$ components, it follows that every atom of $\aG_n$ is in a $O_p(\gamma_n^{\frac 1 2})$-neighborhood
of an atom of $G_0$. Without loss of generality, we assume the atoms of $\aG_n$ are ordered such that
$\norm{\abtheta_{k} - \btheta_{0k}}=O_p(\gamma_n^{\frac 1 2})$. 
Letting $\aalpha = \alpha_{\abtheta}$, where $\abtheta = (\abtheta_1, \dots, \abtheta_K)$,
we define the differences  
$\abfeta_k = \abtheta_{\aalpha(k+1)} - \abtheta_{\aalpha(k)}$,
for $k=1, 2, \dots, K_0-1$. 

Note that
\begin{align*}
\bbP&\left\{\sup_{G \in \calG_K(b_0; \gamma_n) \setminus \calG_{K_0}} L_n(G) \geq \sup_{G \in \calG_{K_0}} L_n(G)\right\}
 \leq \bbP\left\{\sup_{G \in \calG_K(b_0; \gamma_n) \setminus \calG_{K_0}} L_n(G) - L_n(\aG_n) \geq 0\right\}.
\end{align*}
It will therefore suffice to prove that with probability tending to one, $L_n(G) < L_n(\aG_n)$, 
as $n \to \infty$. This implies that with probability tending to one, as $n \to \infty$, the MPLE 
cannot have more than $K_0$ atoms. We proceed as follows. 

Let $\bpi = (\pi_1, \dots, \pi_K)^\top$ and $\brho = (\rho_1, \dots, \rho_{K_0})^\top$. 
Consider the difference.
\begin{align}
\label{penLikDifference}
\nonumber L_n&(G) - L_n(\aG_n)  \\
  &= \left\{ l_n(G) - l_n(\aG_n) \right\} - \left\{\varphi(\bpi) - \varphi(\brho) \right\} 
   - 
   n \left\{ \sum_{k=1}^{K-1} r_{\lambda_n} (\norm{ \bfeta_k}; \omega_k) - 
   			 \sum_{k=1}^{K_0 - 1} r_{\lambda_n} (\norm{\abfeta_k}; \aomega_k) \right\} \nonumber \\
   &\leq \left\{ l_n(G) - l_n(\aG_n) \right\} 
   -
   n \left\{ \sum_{k=1}^{K-1} r_{\lambda_n} (\norm{ \bfeta_k}; \omega_k) - 
             \sum_{k=1}^{K_0 - 1} r_{\lambda_n} (\norm{\abfeta_k}; \aomega_k) \right\},
\end{align}
where the weights $\aomega_k$ are constructed in analogy to 
Section \ref{sec:asymptotic} of the paper, and where
the final inequality is due to condition (F) on $\varphi$.
We show this quantity is negative in three steps.

\noindent 
\textbf{Step 1: Bounding the Second Penalty Difference.}
We use the same decomposition as in  Proposition \ref{xiBound}. Write
\begin{equation}
\label{expandPenalty}
n\sum_{j=1}^{K-1} r_{\lambda_n}(\norm{\bfeta_j}; \omega_j) = 
n\sum_{k=1}^{K_0} \sum_{j: \alpha(j), \alpha(j+1) \in \calI_k}  r_{\lambda_n}(\norm{\bfeta_j}; \omega_j) + 
n\sum_{j \in S}  r_{\lambda_n}(\norm{\bfeta_j}; \omega_j),
\end{equation}
where, 
\begin{equation}
\label{sDef}
S = \bigcup_{k=1}^{K_0} \Big\{1 \leq j \leq K-1: \alpha(j) \in \calI_k, \ \alpha(j+1) \not\in \calI_k \Big\} = \{l_1, \dots, l_{K_0-1}\},
\end{equation}
such that $\norm{\bfeta_{l_k}-\bfeta_{0k}} = O_p(\gamma_n^{\frac 1 2})$, under condition (C). Therefore, 
\begin{align}
\label{Diffpenalty}
n&\left\{\sum_{k=1}^{K-1} r_{\lambda_n}(\norm{\bfeta_k}; \omega_k) - \sum_{K=1}^{K_0-1} r_{\lambda_n}(\norm{\abfeta_k}; \aomega_k) \right\} \nonumber \\
 &= 
n\sum_{k=1}^{K_0} \sum_{j: \alpha(j), \alpha(j+1) \in \calI_k}  r_{\lambda_n}(\norm{\bfeta_j}; \omega_j) + 
n\sum_{k=1}^{K_0-1} \Big\{  r_{\lambda_n}(\norm{\bfeta_{l_k}}; \omega_{l_k}) -   r_{\lambda_n}(\norm{\abfeta_k}; \aomega_k)\Big\}.
\end{align}

\noindent\textbf{Step 2: Bounding the Log-likelihood Difference.}
We now assess the order of $l_n(G) - l_n(\aG_n)$. We have, 
\[
l_n(G) - l_n(\aG_n) = \sum_{i=1}^n \log\Big \{1 + \Delta_i(G, \aG_n) \Big \},
\]                   
where, 
\[
\Delta_i(G, \aG_n) \equiv \Delta_i = \frac{p_G(\bY_i) - p_{\aG_n}(\bY_i)}{p_{\aG_n}(\bY_i)}.
\]
Using \eqref{GDef}, we have
\begin{equation}
\label{Delta}
\Delta_i 
 = \sum_{k=1}^{K_0} \rho_k ~\frac{p_{H_k}(\bY_i) - f(\bY_i; \abtheta_k)}{p_{\aG_n}(\bY_i)}
 = \sum_{k=1}^{K_0} \rho_k\int \frac{f(\bY_i; \btheta) - f(\bY_i; \abtheta_k)}{p_{\aG_n}(\bY_i)} dH_k(\btheta).
 \end{equation}
By the inequality $\log (1+x) \le x - x^2/2 + x^3/3$, for all $x \geq -1$, it then follows that,
\begin{equation}
\label{in-equal1}
l_n(G) - l_n(\aG_n) 
\le 
\sum_{i=1}^n \Delta_i - \frac 1 2 \sum_{i=1}^n \Delta_i^2 + \frac 1 3 \sum_{i=1}^n \Delta_i^3.
\end{equation}
We now perform an order assessment of the three terms on the right hand 
side of the above inequality.  

\noindent \textbf{Step 2.1. Bounding $\sum_{i=1}^n \Delta_i.$} We have,
\[
\sum_{i=1}^n \Delta_i  = \sum_{k=1}^{K_0} \rho_k ~\sum_{i=1}^n 
\frac{p_{H_k}(\bY_i) - f(\bY_i; \abtheta_k)}{p_{\aG_n}(\bY_i)},
\]
where the mixing measures $H_k$ are given in equation \eqref{Hk_defn}.
By a Taylor expansion, for any $\btheta=(\theta_1, \dots, \theta_d)$ close enough to each 
$\abtheta_k=(\atheta_{k1}, \dots, \atheta_{kd}), k=1, 2, \ldots, K$, 
there exists some $\bxi_k$ on the segment between $\btheta$ and $\abtheta_k$ such that, 
for all $i=1, \dots, n$,
the integrand in \eqref{Delta} can be written as,
\begin{align}
\nonumber \frac{f(\bY_i; \btheta) - f(\bY_i; \abtheta_k)}{p_{\aG_n}(\bY_i)}
=& \sum_{r=1}^d (\theta_r - \atheta_{kr}) U_{i,r}(\abtheta_k; \aG_n) \\ 
+ &
   \frac 1 2 \sum_{r=1}^d \sum_{l=1}^d 
   	(\theta_r - \atheta_{kr}) (\theta_l - \atheta_{kl}) 
   	U_{i,rl}(\abtheta_k; \aG_n)  \nonumber \\ 
+ & \frac 1 6 \sum_{r=1}^d \sum_{l=1}^d \sum_{h=1}^d 
   (\theta_r - \atheta_{kr}) (\theta_l - \atheta_{kl}) (\theta_h - \atheta_{kh})   
   U_{i,rlh}(\bxi_k; \aG_n),
   \label{tylor-exp}
\end{align}
where $U_{i,\cdot}(\btheta; G) \equiv U_{\cdot}(\bY_i;\btheta,G)$ are given in 
\eqref{U1} of the paper. 
 It then follows that
\begin{align}
\label{expansionFirstTerm}
\nonumber\sum_{i=1}^n &\frac{p_{H_k}(\bY_i) - f(\bY_i; \abtheta_k)}{p_{\aG_n}(\bY_i)}\\
 &=\nonumber \sum_{r=1}^d m_{k,r} \sum_{i=1}^n U_{i,r}(\abtheta_k; \aG_n) +
   \frac 1 2 \sum_{r=1}^d \sum_{l=1}^d 
   m_{k,rl} \sum_{i=1}^n U_{i,rl}(\abtheta_k; \aG_n) \\
 &+  \frac 1 6 \sum_{r=1}^d \sum_{l=1}^d \sum_{h=1}^d \int
   (\theta_r - \atheta_{kr}) (\theta_l - \atheta_{kl}) (\theta_h - \atheta_{kh})   
   \sum_{i=1}^n U_{i,rlh}(\bxi_k; \aG_n)  dH_k(\btheta),
\end{align}
where
$$m_{k,r} = \int (\theta_r - \atheta_{kr})dH_k(\btheta) \quad \text{and} \quad
  m_{k,rl}= \int (\theta_r - \atheta_{kr})(\theta_l - \atheta_{kl}) dH_k(\btheta)$$
for all $r, l = 1, 2, \dots, d$. 
Now, by construction we know that $\aG_n$ is a stationary point of $L_n(G)$. 
Therefore, its atoms satisfy the following equations, for all $r=1, \dots, d$,
\begin{align}
\label{systemEqns}
\nonumber \rho_{\aalpha(1)} \sum_{i=1}^n U_{i,r}(\abtheta_{\aalpha(1)}; \aG_n) + n \ \frac{\aeta_{1r}}{\norm{\abfeta_1}} \frac{\partial r_{\lambda_n}(\norm{\abfeta_1}; \aomega_1)}{\partial \eta} = 0 &, \\
\nonumber \rho_{\aalpha(k)} \sum_{i=1}^n U_{i,r}(\abtheta_{\aalpha(k)}; \aG_n) + n \ \frac{\aeta_{kr}}{\norm{\abfeta_{k}}} \frac{\partial r_{\lambda_n}(\norm{\abfeta_k}; \aomega_k)}{\partial \eta} - 
n \ \frac{\aeta_{(k-1)r}}{\norm{\abfeta_{k-1}}} \frac{\partial r_{\lambda_n}(\norm{\abfeta_{k-1}}; \aomega_{k-1})}{\partial \eta} = 0 &, \\ 
\nonumber \hfill k = 2, \dots, K_0-1,\\
\rho_{\aalpha(K_0)} \sum_{i=1}^n U_{i,r}(\abtheta_{\aalpha(K_0)}; \aG_n) - n \ \frac{\aeta_{(K_0-1)r}}{\norm{\abfeta_{K_0-1}}} \frac{\partial r_{\lambda_n}(\norm{\abfeta_{K_0-1}}; \aomega_{K_0-1})}{\partial \eta} = 0&,
\end{align}
\normalsize
where $\abfeta_k = (\aeta_{k1}, \dots, \aeta_{kd})$, for all $k=1, \dots, K_0-1$. 
Letting $u_{kr} = \aeta_{kr}/\norm{\abfeta_{k}}$, it follows that
\begin{align*}
\sum_{k=1}^{K_0} \rho_{\aalpha(k)} &\sum_{r=1}^d m_{\aalpha(k),r} \sum_{i=1}^n U_{i,r}(\abtheta_{\aalpha(k)}; \aG_n)  \\
 &= n\sum_{k=1}^{K_0-1} \sum_{r=1}^d  \Big\{m_{\aalpha(k+1),r} - m_{\aalpha(k),r}\Big\} u_{kr}
 		\frac{\partial r_{\lambda_n}(\norm{\abfeta_k}; \aomega_k)}{\partial\eta}\\
 &=  n\sum_{k=1}^{K_0-1}  \sum_{r=1}^d \left(\int \theta_r d H_{\aalpha(k+1)}(\btheta)- \int \theta_r dH_{\aalpha(k)}(\btheta) 
 -  \aeta_{kr} \right) u_{kr}\frac{\partial r_{\lambda_n}(\norm{\abfeta_k}; \aomega_k)}{\partial\eta}.
\end{align*}
Now, recall the set $S$ in \eqref{sDef} which has cardinality $K_0-1$, and was chosen such that
such that $\btheta_{l_k}$ is an atom of $H_{\aalpha(k)}$ and $\btheta_{l_{k}+1}$ is an atom of $H_{\aalpha(k+1)}$, under condition (C). Thus
\begin{align*}
\int &\theta_r d H_{\aalpha(k+1)}(\btheta) -
\int \theta_r dH_{\aalpha(k)}(\btheta) 
  \\
&=\eta_{l_kr} + \int (\theta_r -\theta_{(l_{k}+1)r})d H_{\aalpha(k+1)}(\btheta) - \int (\theta_r -\theta_{l_{k}r})dH_{\aalpha(k)}(\btheta) 
 \\
 &\leq \eta_{l_kr} + 2 \sum_{k=1}^{K_0} \sum_{h,j \in \calI_k} \norm{\btheta_h - \btheta_{j}}.
\end{align*}
We thus obtain from condition (P2) that for some constant $c > 0$, 
\begin{align}
\sum_{k=1}^{K_0} \rho_{\aalpha(k)} &\sum_{r=1}^d m_{\aalpha(k),r} \sum_{i=1}^n U_{i,r}(\abtheta_{\aalpha(k)}; \aG_n)  \nonumber \\
 &\leq \frac{cn\gamma_n^{\frac 3 2}}{\log n} \sum_{k=1}^{K_0} \sum_{h,j \in \calI_k} \norm{\btheta_h - \btheta_{j}}
+ n\sum_{k=1}^{K_0} \sum_{r=1}^d (\eta_{l_kr} - \aeta_{kr}) u_{kr} \frac{\partial r_{\lambda_n}(\norm{\abfeta_{k}}; \aomega_k)}{\partial \eta}
 \nonumber \\ 
 &= \frac{cn\gamma_n^{\frac 3 2}}{\log n}\sum_{k=1}^{K_0} \sum_{h,j \in \calI_k} \norm{\btheta_h - \btheta_{j}}
 + n\sum_{k=1}^{K_0}(\bfeta_{l_k}-\abfeta_{k})^\top \frac{\partial r_{\lambda_n}(\norm{\abfeta_{k}}; \aomega_k)}{\partial \bfeta}
 =:\Gamma_n.
 \label{Gamman}
\end{align}
We now consider the second term in \eqref{expansionFirstTerm}. 
Under condition (A3),  
$\mathbb E \{U_{i,rl}(\btheta, G_0) \} = 0$, 
by the Dominated Convergence Theorem, for 
all $\btheta \in \Theta$ and $r, l=1, 2, \dots, d$, so that
\begin{equation}
\label{secondTerm1}
\sum_{i=1}^n U_{i,rl}(\btheta, G_0) = O_p(n^{\frac 1 2}).
\end{equation}
Now, for all $k=1, \dots, K_0$, we write,
\begin{equation}
\label{secondTerm2}
\sum_{i=1}^n U_{i, rl}(\abtheta_k; \aG_n) = 
\sum_{i=1}^n U_{i, rl}(\abtheta_k; G_0) + 
\sum_{i=1}^n  \left\{ U_{i, rl}(\abtheta_k; \aG_n) - U_{i, rl}(\abtheta_k; G_0) 
\right\}
\end{equation}
The first term can be bounded as follows using \eqref{secondTerm1},
for some vectors $\tilde\btheta_{0ik}$ on the segment between $\btheta_{0k}$ and $\abtheta_k$,
\begin{align*}
\sum_{i=1}^n U_{i, rl}(\abtheta_k; G_0) 
 &= \sum_{i=1}^n U_{i, rl}(\btheta_{0k}; G_0) 
+\sum_{i=1}^n \Big[U_{i, rl}(\abtheta_k; G_0)  - U_{i, rl}(\btheta_{0k}; G_0) \Big] \\
 &= O_p(n^{1/2}) 
+\sum_{i=1}^n \sum_{s=1}^d U_{i, rls}(\tilde \btheta_{0ik}; G_0)(\atheta_{ks} - \theta_{0ks}) 
=O_p(n\gamma_n^{\frac 1 2}),
\end{align*}
where we invoked condition (A3) on the last line of the above display.
By the Cauchy-Schwarz inequality, 
we may bound the second term in \eqref{secondTerm2} as follows,
\begin{align*}
\sum_{i=1}^n & \left| U_{i, rl}(\abtheta_k; \aG_n) - U_{i, rl}(\abtheta_k; G_0) \right| \\
 &= \sum_{i=1}^n|U_{i, rl}(\abtheta_k; G_0)| \frac 1 {p_{\aG_n}(\bY_i)} \left| p_{G_0}(\bY_i) - p_{\aG_n}(\bY_i) \right| \\
 &\leq \sum_{i=1}^n |U_{i, rl}(\abtheta_k; G_0)| \times \\ 
   &\qquad\left\{\frac 1 {p_{\aG_n}(\bY_i)} \sum_{k=1}^{K_0} \pi_{0k} \big| f(\bY_i; \abtheta_k) - f(\bY_i; \btheta_{0k})\big| +
    \frac 1 {p_{\aG_n}(\bY_i)} \sum_{k=1}^{K_0} |\rho_k-\pi_{0k}|f(\bY_i; \abtheta_k) \right\}.   
\end{align*}
Now, for some $\bvarsigma_k$ on the segment joining $\btheta_{0k}$ to $\abtheta_k$, the above display is bounded above by
\begin{align*}
\phantom{\sum} 
 &\leq \sum_{i=1}^n |U_{i,rl}(\abtheta_k; G_0)| \times \\ 
  &\qquad\left\{ \frac 1 {p_{\aG_n}(\bY_i)} \sum_{k=1}^{K_0} \pi_{0k} \sum_{h=1}^d \left|\frac{\partial f(\bY_i; \bvarsigma_k)}{\partial \theta_h}\right| |\atheta_{kh} - \theta_{0kh}|  + \sum_{k=1}^{K_0} |\rho_k-\pi_{0k}|\frac {f(\bY_i; \abtheta_k)}{p_{\aG_n}(\bY_i)}   \right\}\\
&\asymp \sum_{i=1}^n |U_{i,rl}(\abtheta_k; G_0)|  
     \left\{ \sum_{k=1}^{K_0} \sum_{h=1}^d \left|U_{i,h}(\bvarsigma_k; \aG_n)\right| |\atheta_{kh} - \theta_{0kh}|  +  \sum_{k=1}^{K_0} |\rho_k-\pi_{0k}| U_i(\abtheta_k; \aG_n) \right\}\\    
 &\leq \left\{\sum_{i=1}^n |U_{i,rl}(\abtheta_k; G_0)|^2\right\}^{\frac 1 2}  \times \\ &\qquad
   \left\{\sum_{i=1}^n\left[ \sum_{k=1}^{K_0} \sum_{h=1}^d \left|U_{i,h}(\bvarsigma_k; \aG_n)\right| |\atheta_{kh} - \theta_{0kh}|  +  \sum_{k=1}^{K_0} |\rho_k-\pi_{0k}| U_i(\abtheta_k; \aG_n) \right]^2\right\}^{\frac 1 2} \\
 &\lesssim 
   \left\{
      \sum_{i=1}^n |U_{i,rl}(\abtheta_k; G_0)|^2
   \right\}^{\frac 1 2}   \times \\&\qquad
   \left\{ 
      \sum_{k=1}^{K_0} \sum_{h=1}^d \left[ |\atheta_{kh} - \theta_{0kh}|^2\sum_{i=1}^n\left|U_{i,h}(\bvarsigma_k; \aG_n)\right|^2  +  
      |\rho_k-\pi_{0k}|^2 \sum_{i=1}^n |U_i(\abtheta_k; \aG_n)|^2 \right]
   \right\}^{\frac 1 2},
\end{align*}
where $U_i(\cdot) \equiv U(\bY_i;\cdot)$ are given in \eqref{U0} of the paper, 
and we have used the Cauchy-Schwarz inequality in the second-to-last line. 
In view of condition (A3),  there exists $g \in L^3(P_0)$ such that 
$$\frac 1 n \sum_{i=1}^n |U_{i,rl}(\abtheta_k; G_0)|^2 \leq \frac 1 n \sum_{i=1}^n g^2(\bY_i) \overset{a.s.}{\longrightarrow} \bbE\left\{g^2(\bY)\right\},
~ \text{so, } \sum_{i=1}^n |U_{i,rl}(\abtheta_k; G_0)|^2 = O_p(n),$$
by Kolmogorov's Strong Law of Large Numbers. Similarly, 
by condition (A4),
$$\sum_{i=1}^n\left|U_{i,h}(\bvarsigma_k; \aG_n)\right|^2 = O_p(n), \quad \sum_{i=1}^n U_i^2(\abtheta_k; \aG_n) = O_p(n).$$
It follows that 
\begin{align}
\label{secondTerm3}
\nonumber\sum_{i=1}^n \big| &U_{i, rl}(\abtheta_k; \aG_n) - U_{i, rl}(\abtheta_k; G_0) \big|  \\
 &= O_p(n) \left\{
      \sum_{k=1}^{K_0} \sum_{h=1}^d \left[ |\atheta_{kh} - \theta_{0kh}|^2  +  
      |\rho_k-\pi_{0k}|^2 \right]
   \right\}^{\frac 1 2} = O_p(n\gamma_n^{\frac 1 2}).
\end{align}
Combining \eqref{secondTerm1}, \eqref{secondTerm2} and \eqref{secondTerm3}, we have
\begin{equation}
\label{secondTerm4}
\sum_{i=1}^n U_{i, rl}(\abtheta_k, \aG_n) = \sum_{i=1}^n U_{i, rl}(\abtheta_k, G_0) + O_p(n\gamma_n^{\frac 1 2})  = O_p(n\gamma_n^{\frac 1 2}).
\end{equation}
Regarding the third term in \eqref{expansionFirstTerm}, for all $r, l, d=1, 2, \dots, d$, 
under (A4), 
we again have
\begin{equation}
\label{thirdTerm}
\frac 1 n \sum_{i=1}^n U_{i, rld}(\bxi_k, \aG_n) = O_p(1).
\end{equation}
Thus, since all the atoms of the mixing measures $H_k$ in \eqref{GDef}
are in a $\gamma_n^{\frac 1 2}$-neighborhood of the true atoms of $G_0$,
\begin{align}
\label{thirdTerm2}
\nonumber  \frac 1 6 &\sum_{r=1}^d \sum_{l=1}^d \sum_{h=1}^d \int
(\theta_r - \atheta_{kr}) (\theta_l - \atheta_{kl}) (\theta_d - \atheta_{kh})   
 \sum_{i=1}^n U_{i,rld}(\bxi_k; \aG_n)  dH_k(\btheta)  \\
\nonumber   &= O_p(n) 
\sum_{r=1}^d \sum_{l=1}^d \sum_{h=1}^d \int |\theta_r - \atheta_{kr}| |\theta_l - \atheta_{kl}| |\theta_d - \atheta_{kh}| dH_k(\btheta)   \\
   &=|m_{2k}| O_p(n\gamma_n^{\frac 1 2})
\end{align}
where $m_{2k} = \sum_{r=1}^d \sum_{l=1}^d \int |\theta_r - \atheta_{kr}||\theta_l - \atheta_{kl}| dH_k(\btheta)$. 

Combining 
 \eqref{Gamman}, \eqref{secondTerm4} and \eqref{thirdTerm2}, we obtain
\begin{align}
\label{deltaLinear}
\sum_{i=1}^n \Delta_i 
\nonumber &= \Gamma_n + \sum_{k=1}^{K_0} \rho_k \left\{\frac 1 2 \sum_{r=1}^d \sum_{l=1}^d |m_{k, rl}| O_p(n\gamma_n^{\frac 1 2}) + \frac 1 6 |m_{2k}| O_p(n\gamma_n^{\frac 1 2})\right\}\\
 &\leq \Gamma_n +
 C_0 n \gamma_n^{\frac 1 2} \sum_{k=1}^{K_0}|m_{2k}|,
 \end{align}
in probability,
for some large enough constant $C_0 > 0$.

\noindent
\textbf{Step 2.2. Bounding $\sum_{i=1}^n \Delta_i^2$.}
By the Taylor expansion in \eqref{tylor-exp}, 
\begin{align*}
\sum_{i=1}^n \Delta_i^2 
 =& \sum_{i=1}^n \Bigg\{ \sum_{k=1}^{K_0} \rho_k \Bigg[
   \sum_{r=1}^d m_{k, r}U_{i, r}(\abtheta_k, \aG_n) +
   \frac 1 2 \sum_{r=1}^d \sum_{l=1}^d 
   m_{k,rl} U_{i,rl}(\abtheta_k; \aG_n) \\
 +& \frac 1 6 \sum_{r=1}^d \sum_{l=1}^d \sum_{h=1}^d \int
   (\theta_r - \atheta_{kr}) (\theta_l - \atheta_{kl}) (\theta_h - \atheta_{kh})   
   U_{i,rlh}(\bxi_k; \aG_n)  dH_k(\btheta)
\Bigg] \Bigg\}^2 \\
 =& (I) + (II) + (III)
\end{align*}
where, 
\begin{align*}
(I)  &=\sum_{i=1}^n \left\{
	   \sum_{k=1}^{K_0} \rho_k \left[ \sum_{r=1}^d 
 	    m_{k, r} U_{i, r}(\abtheta_k, \aG_n) + 
	   \frac 1 2 \sum_{r=1}^d \sum_{l=1}^d 
	   	m_{k,rl} U_{i,rl}(\abtheta_k; \aG_n) 
	  \right] \right\}^2  \\
(II) &= 
 \frac 1 {36} \sum_{i=1}^n \left\{\sum_{k=1}^{K_0} \rho_k \left[	  
   \sum_{r=1}^d \sum_{l=1}^d \sum_{h=1}^d \int
   (\theta_r - \atheta_{kr}) (\theta_l - \atheta_{kl})   
   (\theta_h - \atheta_{kh})   
   U_{i,rlh}(\bxi_k; \aG_n)  dH_k(\btheta)
 \right] \right\}^2   \\
(III) &= \frac 1 3 \sum_{i=1}^n \left\{ \sum_{k=1}^{K_0} \rho_k \left[
		   \sum_{r=1}^d m_{k, r} U_{i, l}(\abtheta_k, \aG_n) + 
	       \frac 1 2 \sum_{r=1}^d \sum_{l=1}^d 
	   	   m_{k,rl} U_{i,rl}(\abtheta_k; \aG_n) 
	     \right]\right\} \times  \\ &  \ \ \ \ \left\{ 
  		   \sum_{k=1}^{K_0} \rho_k 
   		   \sum_{r=1}^d \sum_{l=1}^d \sum_{h=1}^d \int
   		   (\theta_r - \atheta_{kr}) (\theta_l - \atheta_{kl})   
           (\theta_h - \atheta_{kh})   
           U_{i,rlh}(\bxi_k; \aG_n) dH_k(\btheta)
         \right\}.
\end{align*}
Define $\bM_{k1} = (m_{k,1}, \dots, m_{k, d})^\top$, $\bM_{k2} = (m_{k,11}, \dots, m_{k, dd})^\top$, 
$\bU_{i, 1}(\abtheta_k; \aG_n) 
= (U_{i, 1}(\abtheta_k; \aG_n)$, $\dots, U_{i, d}(\abtheta_k; \aG_n) )^\top$, $ \bU_{i, 2}(\abtheta_k; \aG_n)
 = (U_{i, 11}(\abtheta_k; \aG_n), \dots, U_{i, dd}(\abtheta_k; \aG_n))^\top$. Also, for $l=1, 2$, let $\bM_l = (\bM_{1l}, \dots, \bM_{K_0l})^\top$, $\bM = (\bM_1, \bM_2)^\top$, 
$$\bV_{il}(\abtheta; \aG_n) = \Big(\bU_{i,l}(\abtheta_1; \aG_n), \dots, \bU_{i,l}(\abtheta_{K_0}; \aG_n)\Big)^\top$$
for $l=1, 2$, and
$$\bV_i(\abtheta; \aG_n) = \Big(\bV_{i1}(\abtheta; \aG_n), \bV_{i2}(\abtheta; \aG_n)\Big)^\top$$
where $\abtheta = (\abtheta_1, \dots, \abtheta_{K_0})$. Then, since the $\rho_k$ are bounded away from zero in probability, we have,
\begin{align*}
(I) 
 &= \sum_{i=1}^n \left\{\sum_{k=1}^{K_0} \rho_k \left[ \bM_{k1}^\top \bU_{i1}(\abtheta_k; \aG_n) + \bM_{k2}^\top \bU_{i2}(\abtheta_k; \aG_n)\right]\right\}^2 \\
 &\asymp \sum_{i=1}^n \left\{\bM_1^\top \bV_{i1}(\abtheta; \aG_n) + \bM_2^\top \bV_{i2}(\abtheta_k; \aG_n)\right\}^2    
 = \sum_{i=1}^n \left\{\begin{pmatrix} \bM_1 &  \bM_2\end{pmatrix} \begin{pmatrix} \bV_{i1}(\abtheta_k; \aG_n) \\ \bV_{i2}(\abtheta_k; \aG_n) \end{pmatrix}\right\}^2  \\
 &= \sum_{i=1}^n \bM^\top \bV_i(\abtheta_k; \aG_n) \bV_i^\top(\abtheta_k; \aG_n) \bM 
 = \bM^\top \left( \sum_{i=1}^n \bV_i(\abtheta_k; \aG_n) \bV_i^\top(\abtheta_k; \aG_n) \right) \bM,
\end{align*} 
in probability. By \cite{serfling2002} (Lemma A, p. 253), as $n \to \infty$,  
\[
\frac 1 n \sum_{i=1}^n \bV_i(\abtheta_k; \aG_n) \bV_i^\top(\abtheta_k; \aG_n)  
 \overset{p}{\longrightarrow} \bSigma := \bbE\left\{\bV_1(\btheta_0; G_0) \bV_1^\top (\btheta_0; G_0)\right\}.
\]
It follows that for large $n$, the following holds in probability
$$\varrho_{\text{min}}(\bSigma)\norm{\bM}^2 \lesssim \frac 1 n (I)
\lesssim \varrho_{\text{max}}(\bSigma) \norm{\bM}^2.$$
By the Strong Identifiability Condition, $\bV_1(\btheta_0; G_0)$ is non-degenerate, 
so $\bSigma$ is positive definite and $\varrho_{\text{min}}(\bSigma) > 0$. Therefore, 
\begin{equation}
(I) \asymp
n \norm{\bM}^2,
\end{equation}
in probability, where $\norm{\bM}$ denotes the Frobenius norm of $\bM$.  

Using the same argument, and noting that $\big\lVert\btheta - \abtheta_k\big\rVert = o_p(1)$, 
for all $\btheta \in \Theta$ in a $\gamma_n^{\frac 1 2}$-neighborhood of an atom of $G_0$, we have,
\begin{equation*}
(II) = 
o_p(n)\norm{\bM_2}^2 =
o_p(n) \norm{\bM}^2.
\end{equation*}
By the Cauchy-Schwarz inequality, we also have,
\begin{equation*}
|(III)| \leq \sqrt{(I)(II)} = o_p(n) \norm{\bM}^2.
\end{equation*}
Combining the above inequalities, we deduce that for some constant $C' > 0$, 
\begin{equation}
\label{deltaSquared}
\sum_{i=1}^n \Delta_i^2 \geq n C' \norm{\bM}^2
 = nC'\sum_{k=1}^{K_0} \left\{\sum_{r=1}^d m_{k,r}^2 + \sum_{r=1}^d \sum_{l=1}^d m_{k, rl}^2 \right\},
\end{equation}
in probability.

\noindent
\textbf{Step 2.3. Bounding $\sum_{i=1}^n \Delta_i^3$.}
By a Taylor expansion, there exist vectors $\bxi_{ik}$   
on the segment joining $\btheta$ and $\abtheta_k$ such that
\begin{align}
\label{deltaCubed}
\sum_{i=1}^n \Delta_i^3  
 &=
  \sum_{i=1}^n \Bigg\{ \sum_{k=1}^{K_0} \rho_k \sum_{r=1}^d m_{k, r} U_{i, r}(\abtheta_k, \aG_n)  
  \\\nonumber &\qquad + 
\frac 1 2 \sum_{k=1}^{K_0} \rho_k \sum_{r=1}^d \sum_{l=1}^d \int (\theta_r - \atheta_{kr})(\theta_l
 - \atheta_{kl})dH_k(\btheta) U_{i,rl}(\bxi_{ik}, \aG_n) \Bigg\}^3 \\ \nonumber 
&=
 O_p(1) \sum_{k=1}^{K_0} \Bigg\{\sum_{j=1}^d |m_{k,r}|^3 \sum_{i=1}^n |U_{i, r}(\abtheta_k, \aG_n)|^3 
 \\ \nonumber &\qquad +
\sum_{r=1}^d \sum_{l=1}^d \int |\theta_r - \atheta_{kr}|^3 |\theta_l - \atheta_{kl}|^3 dH_k(\btheta) 
\sum_{i=1}^n U_{i, rl}^3(\bxi_{ik}, \aG_n) \Bigg\} \\
\nonumber  
&= 
O_p(n) \sum_{k=1}^{K_0} \left\{\sum_{r=1}^d |m_{k, r}|^3 
+ 
\sum_{r=1}^d \sum_{l=1}^d \int |\theta_r - \atheta_{kr}|^3 |\theta_l - \atheta_{kl}|^3 dH_k(\btheta) \right\} \\
 &= 
 o_p(n) \norm{\bM}^2,
\end{align}
where we have used Holder's inequality. Thus, 
\eqref{deltaSquared} and \eqref{deltaCubed} imply that $\sum_{i=1}^n \Delta_i^2$ 
dominates $\sum_{i=1}^n \Delta_i^3$, for large $n$. Hence, for large $n$, 
we can re-write \eqref{in-equal1} as 
\begin{equation}
\label{loglikDelta12}
l_n(G) - l_n(\aG_n) \leq \sum_{i=1}^n \Delta_i - \left(\frac 1 2 \sum_{i=1}^n \Delta_i^2 \right)(1 + o_p(1)) .
\end{equation}
Now, combining \eqref{deltaLinear} and \eqref{deltaSquared}, we have that for large $n$,
\begin{align*}
 \sum_{i=1}^n &\Delta_i - \left(\frac 1 2 \sum_{i=1}^n \Delta_i^2 \right) \\
 & \leq 
  \Gamma_n  + 
 C_0n\gamma_n^{\frac 1 2} \sum_{k=1}^{K_0}|m_{2k}|  
  - 
   n C' \sum_{k=1}^{K_0} \left( 
 	  \sum_{r=1}^dm_{k, r}^2 + \sum_{r=1}^d \sum_{l=1}^d m_{k, rl}^2
   \right) \\
 & \leq \Gamma_n + 
 C_d n\gamma_n^{\frac 1 2} \sum_{k=1}^{K_0}  \sum_{r=1}^d |m_{k,rr}| 
   - 
   n C' \sum_{k=1}^{K_0} 
 	  \sum_{r=1}^dm_{k, r}^2,\quad C_d=d^2C_0   \\
  &= \Gamma_n+ 
 C_dn\gamma_n^{\frac 1 2} \sum_{k=1}^{K_0} \sum_{r=1}^d\left\{ |m_{k, rr}| - m_{k, r}^2 \right\} - 
    n C' \sum_{k=1}^{K_0} \left( 
 		 \sum_{r=1}^d  m_{k, r}^2 - \frac{C_d \gamma_n^{\frac 1 2}}{C'} \sum_{r=1}^d m_{k,r}^2 \right).
\end{align*}
Notice that the final term of the above display is negative as $n \to \infty$, thus for large $n$,
\begin{align*}
 \sum_{i=1}^n \Delta_i - \left(\frac 1 2 \sum_{i=1}^n \Delta_i^2 \right) 
 &\leq \Gamma_n + 
 C_dn\gamma_n^{\frac 1 2} \sum_{k=1}^{K_0}\sum_{r=1}^d \left\{ |m_{k, rr}| -  m_{k, r}^2 \right\}\\
 &= \Gamma_n +  
    C_d n\gamma_n^{\frac 1 2} (1 + o_p(1)) \sum_{k=1}^{K_0} \sum_{r=1}^d  \sum_{h, i \in \calI_k} |\theta_{hr} - \theta_{ir}|^2  \\
 &= \Gamma_n +   O_p(n\gamma_n) \sum_{k=1}^{K_0} \sum_{h, i \in \calI_k} \|\btheta_{h} - \btheta_{i}\|.
\end{align*} 
Thus, returning to \eqref{loglikDelta12} and by using \eqref{Gamman}, we obtain for some constant $C_0 > 0$,
\begin{align}
\label{Step22Conclusion}
l_n(G) - l_n(\aG_n) 
& \leq 
C_0  n \gamma_n  \sum_{k=1}^{K_0} \sum_{h, i \in \calI_k} \|\btheta_{h} - \btheta_{i}\| 
+ n\sum_{k=1}^{K_0}(\bfeta_{l_k}-\abfeta_{k})^\top \frac{\partial r_{\lambda_n}(\norm{\abfeta_{k}}; \aomega_k)}{\partial \bfeta},
\end{align}
for large $n$, in probability. This concludes {\bf Step 2} of the proof.

\noindent 
\textbf{Step 3: Order assessment of the penalized log-likelihood difference. 
}

Combining \eqref{penLikDifference}, \eqref{Diffpenalty}
and \eqref{Step22Conclusion}, we obtain for a possibly different $C_0 > 0$,
\begin{align*} 
 L_n(G) &- L_n(\aG_n)  \\
&  \leq C_0  n\gamma_n \sum_{k=1}^{K_0} \sum_{h, i \in \calI_k} \|\btheta_{h} - \btheta_{i}\|
- n\sum_{k=1}^{K_0} \sum_{j: \alpha(j), \alpha(j+1) \in \calI_k}  r_{\lambda_n}(\norm{\bfeta_j}; \omega_j)
\nonumber 
\\ \nonumber
& 
- n\sum_{k=1}^{K_0-1}\left\{
 r_{\lambda_n}(\norm{\bfeta_{l_k}}; \omega_{l_k}) - 
 r_{\lambda_n}(\norm{\abfeta_k}; \aomega_{k}) - (\bfeta_{l_k}-\abfeta_{k})^\top \frac{\partial r_{\lambda_n}(\norm{\abfeta_{k}}; \aomega_k)}{\partial \bfeta}
 \right\},
 \end{align*}
for large $n$. 
Since $\omega_{l_k}=\aomega_k$ for large $n$ under condition (C), and 
$r_{\lambda_n}$ is nondecreasing and convex away from zero by (P1), the final term of the above display is negative.
Thus, 
\[
L_n(G) - L_n(\aG_n) \le 
C_0  n \gamma_n   \sum_{k=1}^{K_0} \sum_{h, i \in \calI_k} \|\btheta_{h} - \btheta_{i}\|
- n\sum_{k=1}^{K_0} \sum_{j: \alpha(j), \alpha(j+1) \in \calI_k}  r_{\lambda_n}(\norm{\bfeta_j}; \omega_j),
\]
for large $n$. 
By \eqref{thetaNeighborhood} and 
condition (P3) on $r_{\lambda_n}$, the right-hand-side of the above inequality is 
negative as $n \to \infty$. Thus any mixing measure $G$ with more than $K_0$ atoms cannot 
be the MPLE. This proves that
\begin{equation}
\label{sparsistency}
\bbP(\hat K_n = K_0) \to 1.
\end{equation}
Finally, we prove Part (ii), that is, we show that $\hat G_n$ converges to $G_0$ at the $\gamma_n$ 
rate with respect to the $W_1$ distance. 
In view of \eqref{sparsistency} and
Theorem \ref{thm:ho_long_results}, we have
\begin{align*}
\bbP&\left\{W_1(\hat G_n, G_0) > \gamma_n/k_0\right\}\\
 &= \bbP\left\{W_1(\hat G_n, G_0) > \gamma_n/k_0, \hat K_n = K_0\right\} + 
    \bbP\left\{W_1(\hat G_n, G_0) > \gamma_n/k_0, \hat K_n \neq K_0\right\} \\
 &\leq \bbP\left\{h(\hat p_n, p_0) > \gamma_n, \hat K_n = K_0\right\} + o(1) \\
 &= o(1),
\end{align*}
where the last line is due to Theorem 2. Thus, $W_1(\hat G_n, G_0) = O_p(\gamma_n)$. 
\end{proof}

\subsection*{C.4. Proofs of Strong Identifiability Results} 
In this section, we provide the proofs of Proposition \ref{binomialSI}
and Corollary \ref{corollaryMultinomial}. 

\begin{proof}[Proof (Of Proposition \ref{binomialSI})]
As in \cite{TEICHER1963}, write the probability generating function $(1-\theta + \theta z)^M$ 
of the family $\calF$ as $\psi(w;\theta) = (1 + \theta w)^M$, where $w = z-1$ for all $z \in \bbR$. 
For any fixed integer $K \geq 1$ and any distinct real numbers
$\theta_1, \dots, \theta_K \in (0,1)$, it is enough to show that if
 $\beta_{jl} \in \bbR$, $j=1, \dots, K$, $l=1, \dots, r$, satisfy
\begin{equation}
\label{binomialAssumption}
\sum_{j=1}^K \sum_{l=0}^r \beta_{jl} \frac{\partial^l \psi(w; \theta_j)}{\partial \theta^l}\\
 = 0
\end{equation}
uniformly in $w$, then $\beta_{jl} = 0$ for all $j,l$. 
Assume \eqref{binomialAssumption} holds. Writing $(m)_k = m!/(m-k)!$ for all
positive integers $m \geq k$, we have for all $w \in \bbR$,
\begin{align*}
0 =&
	\sum_{j=1}^K \sum_{l=0}^r \beta_{jl} \frac{\partial^l \psi(w; \theta_j)}{\partial \theta^l}\\
=& 
	\sum_{j=1}^K \sum_{l=0}^r \beta_{jl} (M)_l w^l(1 + w\theta_j)^{M-l}	\\
=& 
	\sum_{j=1}^K \sum_{l=0}^r  \beta_{jl} (M)_l w^l \sum_{s=0}^{M-l}{M-l \choose s} (w\theta_j)^s	\\
=& 
	\sum_{l=0}^r \sum_{s=0}^{M-l} \sum_{j=1}^K  {M-l \choose s} (M)_l \beta_{jl}w^{l+s} \theta_j^s	\\
=& 
	\sum_{l=0}^r \sum_{s=l}^M \sum_{j=1}^K {M-l \choose s-l}(M)_l  \beta_{jl}  w^s \theta_j^{s-l}	\\
=&
	 \sum_{s=0}^r \sum_{l=0}^s \sum_{j=1}^K {M-l \choose s-l}(M)_l  \beta_{jl}  w^s \theta_j^{s-l} + 
	 \sum_{s=r+1}^M \sum_{l=0}^r \sum_{j=1}^K {M-l \choose s-l}(M)_l  \beta_{jl}  w^s \theta_j^{s-l}. 
\end{align*}
This quantity is a uniformly vanishing polynomial in $w$.
It follows that its coefficients must vanish. We deduce 
\begin{align}
\label{system}
\begin{cases}
\displaystyle \sum_{l=0}^s \sum_{j=1}^K {M-l \choose s-l}(M)_l  \beta_{jl}  \theta_j^{s-l} = 0, & s = 0, \dots, r \\
\displaystyle \sum_{l=0}^r \sum_{j=1}^K {M-l \choose s-l}(M)_l  \beta_{jl}  \theta_j^{s-l} = 0, & s = r+1, \dots, M. 
\end{cases}
 \end{align}
This system of equations can
be written as $\calM_1 \bbeta = 0$,
where 
\[
\bbeta=(\beta_{10}, \dots, \beta_{K0}, 
\beta_{11}, \dots, \beta_{K1}, \beta_{12}, \dots, \beta_{Kr})^\top
\]
is a vector of length $K(r+1)$, and 
\scriptsize
\begin{align*}
\calM_1=\left(\begin{matrix}
{{M}\choose 0}(M)_0 \theta_1^0 & \dots & {{M}\choose 0}(M)_0 \theta_K^0 & 0 & \dots & 0 & \dots \\ 
{{M}\choose 1}(M)_0 \theta_1^1 & \dots & {{M}\choose 1}(M)_0 \theta_K^1 & 
 {{M-1}\choose 0}(M)_1 \theta_1^0 & \dots & {{M-1}\choose 0}(M)_1 \theta_K^0 & \dots \\
\vdots & & \vdots & \vdots & & \vdots & \\
 {{M}\choose r}(M)_0 \theta_1^r & \dots & {{M}\choose r}(M)_0 \theta_K^r & 
 {{M-1}\choose r-1}(M)_1 \theta_1^{r-1} & \dots & {{M-1}\choose r-1}(M)_1 \theta_K^{r-1} & \dots \\
 {{M}\choose r+1}(M)_0 \theta_1^{r+1} & \dots & {{M}\choose r+1}(M)_0 \theta_K^{r+1} & 
 {{M-1}\choose r}(M)_1 \theta_1^{r} & \dots & {{M-1}\choose r}(M)_1 \theta_K^{r} & \dots \\
\vdots & & \vdots & \vdots & & \vdots & \\
 {{M}\choose M}(M)_0 \theta_1^{M} & \dots & {{M}\choose M}(M)_0 \theta_K^{M} & 
 {{M-1}\choose M-1}(M)_1 \theta_1^{M-1} & \dots & {{M-1}\choose M-1}(M)_1 \theta_K^{M-1} & \dots \\
 \end{matrix}
 \right. \\ 
 \left.
 \begin{matrix}
\dots & 0 & \dots & 0 \\
\dots & 0 & \dots & 0 \\
\dots & \vdots & & \vdots\\
\dots &  {{M-r}\choose 0}(M)_r \theta_1^0 & \dots & {{M-r}\choose 0}(M)_r \theta_K^0  \\
\dots &  {{M-r}\choose 1}(M)_r \theta_1^1 & \dots & {{M-r}\choose 1}(M)_r \theta_K^1  \\
\dots &   \vdots & & \vdots \\ 
\dots & {{M-r}\choose M-r}(M)_r \theta_1^{M-r} & \dots & {{M-r}\choose M-r}(M)_r \theta_K^{M-r}  \\
 \end{matrix} \right)
\end{align*}
\normalsize
is a matrix of dimension $(M+1) \times K(r+1)$.

Now, using the fact that for all $k=0, \dots, r$ and $ 0 \leq m \leq M$,
$$(M)_k\frac{{{M-k}\choose m-k}}{{M\choose m}} = (m)_k,$$
we have that $\calM_1$ can be reduced by elementary operations to
 \scriptsize
\begin{equation*}
\calM_2 = \begin{pmatrix}
\theta_1^0 & \dots & \theta_K^0 & 0 & \dots & 0 & \dots & 0 & \dots & 0 \\
 \theta_1^1 & \dots &  \theta_K^1 & 
 (1)_1 \theta_1^0 & \dots & (1)_1 \theta_K^0 & \dots &  0 & \dots & 0 \\
\vdots & & \vdots & \vdots & & \vdots & & \vdots & & \vdots \\
\theta_1^r & \dots &  \theta_K^r & 
 (r)_1 \theta_1^{r-1} & \dots & (r)_1 \theta_K^{r-1} & \dots & 
 (r)_r \theta_1^0 & \dots & (r)_r \theta_K^0  \\
\theta_1^{r+1} & \dots &  \theta_K^{r+1} & 
 (r+1)_{1} \theta_1^{r} & \dots & (r+1)_{1} \theta_K^{r} & \dots & 
 (r+1)_r \theta_1^1 & \dots & (r+1)_{r} \theta_K^1  \\
\vdots & & \vdots & \vdots & & \vdots & & \vdots & & \vdots \\
\theta_1^{M} & \dots & \theta_K^{M} &
 (M)_{1} \theta_1^{M-1} & \dots & (M)_{1} \theta_1^{M-1} & \dots &
 (M)_{r} \theta_1^{M-r} & \dots & (M)_{r} \theta_K^{M-r}  \\
 \end{pmatrix}.
\end{equation*}
\normalsize
If $M+1 < (r+1)K$, namely if $\calM_2$ has
more columns than it has rows, the system \eqref{system} must have infinitely-many solutions, 
and so the family $\calF$ 
is not strongly identifiable in the $r$-th order. 
On the other hand, if $M+1 \geq (r+1)K$, let $\calM_3$ denote the top $(r+1)K \times (r+1)K$
block of $\calM_2$ (namely the square matrix consisting of the
first $(r+1)K$ rows of $\calM_2$). 
Then $\calM_3$ is the generalized (or confluent) Vandermonde matrix \citep{KALMAN1984}
corresponding to the polynomial
$$g(x) = \prod_{i=1}^K (x-\theta_i)^{r+1},$$
up to permutation of its columns.
It follows that 
$$|\det(\calM_3)| = \prod_{1 \leq i < j \leq K} (\theta_i - \theta_j)^{(r+1)^2}.$$
Since $\theta_1, \dots, \theta_K$ are assumed to be distinct, we deduce that $\calM_3$
is invertible, whence $\calM_2$ is full rank and the system of equations \eqref{system} has 
a unique solution $\beta=0$. The claim follows.
\end{proof}

\begin{proof}[Proof (Of Corollary \ref{corollaryMultinomial})]
Assume $3K-1 \leq M$. 
Suppose $\zeta_j \in \bbR$ and $\bbeta_j, \gamma_j \in \bbR^d$, $j=1, \dots, K$, 
are such that for any $\by$, 
\begin{align*}
\sum_{j=1}^K  \left\{ 
	\zeta_j f(\by; \btheta_j) + 
	\bbeta_j^\top \frac{\partial f(\by; \btheta_j)}{\partial \btheta} + 
	\bgamma_j^\top \frac{\partial^2 f(\by; \btheta_j)}{\partial \btheta \partial \btheta^\top} \bgamma_j
\right\}  &= 0.
\end{align*}
Then, writing $\by=(y_1, \dots, y_d)^\top$, we have for all $s \in \{1, \dots, d\}$,
\begin{align*}
\sum_{y_1, \dots, y_{s-1}, y_{s+1}, \dots, y_d=0}^M  	\sum_{j=1}^K  \left\{ 
	\zeta_j f(\by; \btheta_j) + 
	\bbeta_j^\top \frac{\partial f(\by; \btheta_j)}{\partial \btheta} + 
	\bgamma_j^\top \frac{\partial^2 f(\by; \btheta_j)}{\partial \btheta \partial \btheta^\top} \bgamma_j
\right\}  &= 0.
\end{align*}
Write $\bbeta_j = (\beta_{j1}, \dots, \beta_{jd})^\top$ and $\bgamma_j = (\gamma_{j1}, \dots, \gamma_{jd})^\top$
for all $j=1, \dots, K$. 
Letting $b(y; \theta) = {M \choose y} \theta^y (1-\theta)^{M-y}$ denote the binomial density, 
and using the fact that multinomial densities have binomial marginals, we have
\begin{align*}
\sum_{j=1}^K  \left\{ 
	\zeta_j b(y_s; \theta_{js}) + 
	\beta_{js} \frac{\partial b(y_s; \theta_{js})}{\partial \theta} + 
	\gamma_{js}^2 \frac{\partial^2 b(y_s; \theta_{js})}{\partial \theta^2}
\right\}  &= 0.
\end{align*}
Since $3K-1 \leq M$, it follows by Proposition \ref{binomialSI}  
that $\zeta_j = \beta_{js} = \gamma_{js} = 0$ for all $j=1, \dots, K$.
Since this holds for all $s=1, \dots, d$, the claim follows.
\end{proof}

\subsection*{C.5. Proof of Proposition \ref{prop:misspec_K}} 

First, we introduce some notation. 
Recall Theorem \ref{thm:KER2000LER1992}.(i), where for all $k \ge 1$, 
there is a mixing measure $G^*_k \in \calG_k$ for which 
$\KL(p_{G^*_k}, p_{G_0})  = \inf_{G \in \calG_k} \KL(p_G, p_{G_0})$.
Also, for all $k =1, \dots, K<K_0$, denote 
\[
\bar G_n^{(k)} = \argmax_{G \in \calG_k} l_n(G), \quad
\hat G_n^{(k)} = \argmax_{G \in \calG_k} L_n(G),  
\]
respectively as the MLE and MPLE with order $k$. In particular, 
$\hat G_n^{(K)} = \hat G_n$. 
We now turn to the proof.
\begin{proof}[Proof (Of Proposition \ref{prop:misspec_K})]
Notice that
\[
\bbP(\hat K_n = K) \geq \bbP\Big(L_n(\hat G_n^{(K)}) > L_n( \hat G_n^{(k)}), \ 1 \leq k \leq K-1\Big).
\]
In what follows we show that the right-hand side tends to one, as $n \to \infty$. 

For all $k=1, \dots, K-1$, by definition of the penalized likelihood 
$L_n$ in \eqref{penloglik} of the paper, and using the fact that 
$\hat G_n^{(K)}$ is the maximizer of $L_n$ over $\calG_K$, we have 
\begin{align*}
L_n(\hat G_n^{(k)})  - L_n(\hat G_n^{(K)})
  \leq l_n(\hat G_n^{(k)}) - 
  L_n(G^*_K).
\end{align*}
The above inequality, combined with the fact that $\bar G_n^{(k)}$ maximizes $l_n$
over the space $\calG_k$, for all $k=1, \dots, K-1$, leads to
\begin{align*}
L_n(\hat G_n^{(k)})& - L_n(\hat G_n^{(K)})
\\
  &\leq l_n(\bar G_n^{(k)}) - L_n(G^*_K)
  \\
  &= l_n(\bar G_n^{(k)}) - l_n(G^*_K)
  + \varphi(\bpi^*_K) 
  + n \sum_{j=1}^{K-1} r_{\lambda_n}(\|\bfeta^*_j\|; \omega_j)\\
 &= \Big[l_n(\bar G_n^{(k)}) - l_n(G_0)\Big] - 
 \Big[l_n(G^*_K) - l_n(G_0)\Big]
 \\ & \qquad\qquad\qquad\qquad\quad~~+ 
 \varphi(\bpi^*_K) 
  + n \sum_{j=1}^{K-1} r_{\lambda_n}(\|\bfeta^*_j\|; \omega_j)\\
  &= -n (1+o_p(1))\Big[\KL(p_{G^*_k}, p_{G_0}) - 
  \KL(p_{G^*_K}, p_{G_0})\Big]
 + o(n),
  \end{align*}
  for large $n$, where we invoked \eqref{eq:keribin} for the first log-likelihood difference 
in $[\cdot]$, the strong law of large numbers
for the second log-likelihood difference 
in $[\cdot]$, condition (P1) on the penalty $r_{\lambda_n}$, and 
that \eqref{eq:leroux} 
guarantees $\pi_j^* > 0$ for all $j=1,\dots, K$, whence $\varphi(\bpi^*_K) = o(n)$
by condition (F). 
By \eqref{eq:leroux}, 
the difference of the two KL divergences is strictly positive and bounded.  
It must then follow that, for all $k=1, \dots, K-1$,
\begin{align*}
L_n(\hat G_n^{(k)})& - L_n(\hat G_n^{(K)}) < 0
\end{align*}
with probability tending to one, as $n \to \infty$. The claim follows.
\end{proof}

\section*{Supplement D: Numerical Solution}

In this section, we provide computational strategies for implementing the GSF method. 
In Section \hyperref[sec:implement]{D.1}, we describe a modified EM algorithm 
to obtain an approximate solution to the optimization problem in \eqref{mple1}, 
and in Section \hyperref[sec:impel-specific]{D.2} we outline some implementation specifications.
Regularized plots and the choice of tuning parameter $\lambda$ in the penalty $r_{\lambda}$
are discussed in Section \ref{sec:simulation} of the paper.  
\subsection*{D.1 A Modified Expectation-Maximization Algorithm}
\label{sec:implement}

In what follows, we describe a numerical solution to the optimization problem in \eqref{mple1}
based on the Expectation-Maximization 
(EM) algorithm \citep{DEMPSTER1977} and the proximal gradient method
\citep{NESTEROV2004}. 

Given a fixed upper bound $K > K_0$, 
the penalized complete log-likelihood function is given by
\begin{equation}
\label{comp-like}
L_n^c (\bPsi) = 
\sum_{i=1}^n \sum_{j=1}^K Z_{ij} \left[ \log \pi_j + \log f(\by_i; \btheta_j)  \right] 
- \varphi(\pi_1, \dots, \pi_K) - n\sum_{j=1}^{K-1} r_{\lambda} (\norm{\bfeta_j}; \omega_j)
 \end{equation}
where the $Z_{ij}$ are latent variables indicating the component to 
which the $i$th observation $\by_i$ belongs, for all $i=1, 2, \dots, n$, $j=1, 2, \dots, K$, and 
$\bPsi = (\btheta_1, \dots, \btheta_K, \pi_1, \dots, \pi_{K-1})$ is the vector of 
all parameters. 
Since the $Z_{ij}$ are missing, our modified EM algorithm maximizes the 
conditional expected value (with respect to $Z_{ij}$) of the penalized complete 
log-likelihood \eqref{comp-like}, by iterating between the two steps which follow. 
We let $\bPsi^{(t)} = (\btheta_1^{(t)}, \dots, \btheta_K^{(t)}, \pi_1^{(t)}, \dots, \pi_{K-1}^{(t)})$ 
denote the parameter estimates on the $t$-th iteration of the algorithm. 
Inspired by the local linear approximation (LLA) for folded concave penalties
\citep{ZOU2008}, at the $(t+1)$-th iteration, the 
modified EM algorithm proceeds as follows. 

\noindent\textbf{E-Step.}
Compute the conditional expectation of $L_n^c (\bPsi)$  
with respect to $Z_{ij}$, given observations $\by_1, \by_2, \dots, \by_n$ and the 
current estimate $\bPsi^{(t)}$, as 
\begin{align*}
\label{pen-clike}
Q(\bPsi; \bPsi^{(t)}) &= 
\sum_{i=1}^n \sum_{j=1}^K w_{ij}^{(t)} [\log\pi_j + \log \{ f(\by_i; \btheta_j) \} ]
\\&  \qquad  -
\varphi(\pi_1, \dots, \pi_K)
- 
n \sum_{j=1}^{K-1} r'_{\lambda} (\|{\bfeta^{(t)}_j}\|; \omega_j) \norm{\bfeta_j}
\end{align*} 
where 
\[
\displaystyle w_{ij}^{(t)} = 
	\frac{\pi_j^{(t)} \log \{ f(\by_i; \btheta_j^{(t)}) \} }
	     {\displaystyle\sum_{l=1}^K \pi_l^{(t)} \log \{ f(\by_i; \btheta_l^{(t)}) \} },
  \qquad i=1, 2, \dots, n; \ j = 1, 2, \dots, K.
\]

\noindent\textbf{M-Step.} 
The updated estimate $\bPsi^{(t+1)}$ is obtained by 
minimizing $-Q(\bPsi; \bPsi^{(t)})$
with respect to $\bPsi$. The mixing proportions are updated by
$$\left(\pi_1^{(t+1)}, \dots, \pi_K^{(t+1)}\right)^\top = \bpi^{(t+1)} 
= \argmin_{\bpi} \left\{\sum_{i=1}^n \sum_{j=1}^Kw_{ij}^{(t)}  \log\pi_j - \varphi(\bpi)\right\}.
$$
For instance, if $\varphi(\bpi) = -  C \sum_{j=1}^K \log \pi_j$, 
for some constant $C =\gamma-1$ with $\gamma > 1$, we arrive at
\[
\pi_j^{(t+1)} = \frac{ \sum_{i=1}^n w_{ij}^{(t)} + C}{n + KC}, \qquad j = 1, 2, \dots, K.
\] 
On the other hand, there generally does not exist a closed form update for 
$\btheta_1, \dots, \btheta_K$. Inspired by the proximal gradient method,
we propose to locally majorize the objective function $-Q(\bPsi; \bPsi^{(t)})$, 
 holding the mixing probabilities $\pi_j$ constant. \cite{XU2015} considered a similar approach for 
 one-dimensional exponential families ${\cal F}$.

Let $\bfeta_0 = \btheta_{\alpha(1)}$ and recall 
$\bfeta_j = \btheta_{\alpha(j+1)} - \btheta_{\alpha(j)}$, 
for all $j=1, 2, \dots, K-1$. Define the matrix $\bfeta = (\bfeta_0, \dots, \bfeta_{K-1}) \in \bbR^{d \times K}$, 
and note that $\btheta_{\alpha(j)} = \sum_{l=0}^{j-1} \bfeta_l$, for all $j=1, 2, \dots, K$.  
We then rewrite the leading term of the function $-\frac 1 n Q(\bPsi; \bPsi^{(t)})$ as 
\[
\calL(\bfeta; \bPsi^{(t)}) = -\frac{1}{n} \sum_{i=1}^n \sum_{j=1}^K w_{i\alpha(j)}^{(t)} \log 
	f \left(\by_i; \textstyle \sum_{l=0}^{j-1} \bfeta_l \right).
	\]
Hence, the problem of of minimizing $-Q(\bPsi; \bPsi^{(t)})$ with respect to 
the $\btheta_j$ is equivalent to minimizing 
\[
\calQ(\bfeta; \bfeta^{(t)}) = \calL(\bfeta; \bPsi^{(t)}) + 
\sum_{j=1}^{K-1} r'_{\lambda} (\|{\bfeta^{(t)}_j}\|; \omega_j) \norm{\bfeta_j},
\] 
with respect to $\bfeta \in \bbR^{d \times K}$. Given a tuning parameter $\rho > 0$, 
we locally majorize $\calQ(\bfeta; \bfeta^{(t)})$ by the following isotropic quadratic function
\begin{align}
\nonumber 
\overline{\cal Q}(\bfeta; \bfeta^{(t)}) = \calL(\bfeta^{(t)}; \bPsi^{(t)}) &+ 
   \tr \left\{ 
	\left[\frac{\partial \calL}{\partial \bfeta} (\bfeta^{(t)})\right]^{\top} \left(\bfeta - \bfeta^{(t)} \right) 
   \right\} \\ & \qquad +  
  \frac{\rho}{2} \big\lVert \bfeta - \bfeta^{(t)} \big\rVert^2 
  +
\sum_{j=1}^{K-1} r'_{\lambda} (\|{\bfeta^{(t)}_j}\|) \norm{\bfeta_j}.
\end{align} 
Note that $\overline\calQ(\cdot; \bfeta^{(t)})$ majorizes 
$\calQ(\cdot; \bfeta^{(t)})$ at $\bfeta^{(t)}$ provided
\begin{equation}
\label{rhoBound}
\rho \geq \max\left\{
		\varrho_{\text{max}}\left(\frac{\partial^2 \calL(\bfeta^{(t)}; \bPsi^{(t)} )}
		{\partial \bfeta_j \partial \bfeta_k}\right): j, k=0, 1, \dots, K-1 \right\},
\end{equation}
where $\varrho_{\text{max}}(\bM)$ denotes the largest eigenvalue of any real 
and symmetric $d \times d$ matrix $\bM$.
The numerical choice of $\rho$ is discussed below. 
Then, setting $\bfeta^{(t, 0)} = \bfeta^{(t)}$, the $(m+1)$-th update 
of $\bfeta$ on the $(t+1)$-th iteration of the EM Algorithm is given by
\begin{equation}
\label{mmUpdate}
\bfeta^{(t+1, m+1)} 
= 
\argmin_{\bfeta \in \bbR^{d \times K}} \overline{\cal Q}(\bfeta; \bfeta^{(t+1, m)}),
\end{equation}
which has the following closed-form
\begin{eqnarray}
\label{threshold-solution}
\bfeta^{(t+1, m+1)}_0 & = & \bfeta^{(t+1, m)}_0  - \rho^{-1} \frac{\partial \calL(\bfeta^{(t+1, m)}; \bPsi^{(t)})}{\partial \bfeta_0}\\ 
\nonumber
 \\
\bfeta^{(t+1, m+1)}_j & = & S\left(\bz^{(t, m+1)}_j; \rho^{-1}r'_{\lambda} (\|{\bfeta^{(t)}_j}\|; \omega_j) \right),
\end{eqnarray}
for all $j= 1, 2, \ldots, K-1$, 
where $\bz^{(m, t+1)}_j = \bfeta^{(m, t+1)}_j  - \rho^{-1} \frac{\partial \calL(\bfeta^{(m, t+1)}; \bPsi^{(t)})}{\partial \bfeta_j}$, 
and $S(\bz; \lambda) = \left(1 - \frac {\lambda} {\norm{\bz}} \right)_+ \bz$ is the 
multivariate soft-thresholding operator (\citealt{BREHENY2015}, 
\citealt{DONOHO1994}). 

Returning to \eqref{rhoBound}, to avoid computing 
the second-order derivatives of $\calL(\bfeta; \bPsi^{(t)})$, 
we determine the value of $\rho$ by performing a line search at each 
iteration of \eqref{mmUpdate}. Specifically, given a small constant $\rho_0$, at the iteration $m+1$ 
we set $\rho = \rho_0$ and increase it by a factor 
$\gamma > 1$ until the local majorization property is satisfied:
$$\calQ(\bfeta^{(t+1, m+1)}; \bfeta^{(t+1, m)}) \leq \overline\calQ(\bfeta^{(t+1, m+1)}; \bfeta^{(t+1, m)}).$$
Let $\rho^{(t+1, m+1)}$ denote the selected value of $\rho$. 
To speed up the selection of $\rho$, we initialize it on the $(m+1)$-th iteration by 
$\max\left\{\rho_0, \gamma^{-1} \rho^{(t+1, m)}\right\}$, similarly to \cite{FAN2018}.

The update \eqref{mmUpdate} in the M-Step is iterated until an index $m$ satisfies 
$\big\lVert \bfeta^{(t+1, m+1)} - \bfeta^{(t+1, m)} \big\rVert < \epsilon$ 
for some small $\epsilon > 0$. We then set the values of the $(t+1)$-th 
iteration of the EM algorithm as $\bfeta^{(t+1)} := \bfeta^{(t+1, m_0)}$ 
and $\btheta^{(t+1)} := \bfeta^{(t+1)} \Lambda$, where $\Lambda$ 
is the triangular $K\times K$ matrix with ones above and on the 
diagonal. The iteration between the E-Step and M-Step 
is continued until a convergence criterion is met, say 
$\big\lVert \bPsi^{(t+1)} - \bPsi^{(t)} \big\rVert < \delta$, 
for some $\delta > 0$.

\begin{algorithm}[htbp!]
\SetAlgoNoLine
\init{$\btheta^{(t)}, \bpi^{(t)}, \by$}

\kwEstep{}{Compute
$w_{ij}^{(t)} \leftarrow
	\frac{\pi_j^{(t)} \log \{ f(\by_i; \btheta_j^{(t)}) \} }
	     {\sum_{l=1}^K \pi_l^{(t)} \log \{ f(\by_i; \btheta_l^{(t)}) \} },
  \qquad i=1, 2, \dots, n; \ j = 1, 2, \dots, K.$
}{}
\kwMstep{}{
  $\bpi^{(t+1)} = \argmin_{\bpi} \left\{\sum_{i=1}^n \sum_{j=1}^Kw_{ij}^{(t)}  \log\pi_j - \varphi(\bpi)\right\}$
  
  $m \leftarrow 0$
  
  $\rho^{(t+1, 0)} \leftarrow \rho_0$
    
  \Repeat{$\norm{\bfeta^{(t+1, m+1)} - \bfeta^{(t+1, m)}} \leq \epsilon$}{
    $\rho^{(t+1, m+1)} \leftarrow \max\left\{\gamma^{-1} \rho^{(t+1, m)}, \rho_0\right\}$
    
	\Repeat{$\calQ(\bfeta^{(t, m+1)}; \bfeta^{(t+1, m)}) \leq \overline \calQ(\bfeta^{(t+1, m+1)}; \bfeta^{(t+1, m)})$}{
	  $\bfeta^{(t+1, m+1)} \leftarrow \argmin_{\bfeta} \overline \calQ(\bfeta; \bfeta^{(t+1, m)})$
	  
	 \textbf{if} $\calQ(\bfeta^{(t+1, m+1)}; \bfeta^{(t+1, m)}) > \overline \calQ(\bfeta^{(t+1, m)})$ \textbf{then} $\rho^{(t+1, m+1)} \leftarrow \gamma \rho^{(t+1, m+1)}$
	}
	Set $m \leftarrow m+1$
  }
}{}
\caption{$(t+1)$-th Iteration of the Modified EM Algorithm.}
\end{algorithm}

\subsection*{D.2. Implementation Specifications}
\label{sec:impel-specific} 
Our numerical solution is implemented in the C++ programming language, 
and is publicly available in the \texttt{GroupSortFuse} R package at 
\url{https://github.com/tmanole/GroupSortFuse}. 
Currently, this package implements the GSF method for multinomial mixtures, 
multivariate and univariate location-Gaussian mixtures,  
univariate Poisson mixtures, and mixtures of exponential distributions.

In what follows, we elaborate upon
the implementation specifications for 
the simulation study in
Section \ref{sec:simulation} of the paper.
In Section \ref{sim-setting}, we analyzed the performance of the GSF under multinomial mixtures 
and multivariate location-Gaussian mixture models,
with unknown common covariance matrix.
The data for the former two models was generated using the \texttt{mixtools} R package
\citep{BENAGLIA2009}.
We used the penalty $\varphi(\pi_1, \dots, \pi_K) = -C \sum_{j=1}^K \log\pi_j$
throughout, with $C = 3 \approx \log 20$ following the suggestion of \citet{CHEN1996}. 
We set the convergence criteria to $\epsilon = 10^{-5}$ and $\delta = 10^{-8}$,
and halted the modified 
EM algorithm and the nested PGD algorithm if they did not converge after 2500 and 1000 iterations, respectively. 
We initialized the EM algorithm for multinomial mixture models
using the MCMC algorithm described by \cite{GRENIER2016} for 100 iterations.
While 100 iterations may be insufficient for this algorithm to
approach the vicinity of a global maximum of the penalized
log-likelihood function, 
we found it yields reasonable performance in our simulations.
For the Gaussian mixtures, we used a binning method which is 
analogous to that of the \texttt{mixtools} package \citep{BENAGLIA2009}. 

The tuning parameter $\lambda$ for the penalty $r_\lambda$ was chosen by minimizing the BIC
criterion over a grid of candidate values $[\lambda_{\text{min}}, \lambda_{\text{max}}]$, 
as outlined in Section \ref{sec:simulation} of the paper. Based on our asymptotic results, we chose
$\lambda_{\text{max}} = n^{-1/4} \log n$ for the SCAD and MCP penalties.
For the ALasso penalty, we found that the rate $\frac{n^{-3/4}}{\log n}$ was too small in
practice and we instead used
$\lambda_{\text{max}} = n^{-\frac 1 2} \log n, n^{-0.35}$ for the Gaussian and multinomial
simulations respectively, which fall 
within the range discussed in item
(III), Section \ref{remark}
of the paper.
For the SCAD and MCP penalties, we chose $\lambda_{\text{min}} = 0.1, 0.4$ for
the Gaussian and multinomial simulations respectively, matching the lower bounds
used in the discrete and continuous mixture models of \cite{CHENKH2008}. 
For the ALasso penalty, 
we chose $\lambda_{\text{min}} = 0.01$ across both models. 

In Section \ref{sec:sim-merging} of the paper, 
we 
compared the GSF to the AIC, BIC, GSF-Hard, and MTM 
methods under location-Gaussian mixture models
with a known scale parameter. 
Our implementation of the AIC, BIC
and GSF-Hard in this section
was based on our own implementation of the EM 
algorithm, written in the Python 3.6 programming
language. The  simulations
were predominantly
performed on Linux machines
with Intel\textregistered \  Xeon\textregistered \
CPU E5-2690  (2.90GHz) processors.
We tuned the GSF-Hard            
using the BIC over the favourable range 
$\lambda_{\mathrm{min}} = 1.25 n^{- 1/4} \log n$
and
$\lambda_{\text{max}} = 1.5 n^{-1/ 4} \log n$.
A precise description of the GSF-Hard
method is given in Algorithm \ref{alg:gsf-hard}.

Throughout our simulations, we used
the cluster ordering $\alpha_{\bt}$
defined in equation 
\eqref{alpha-example}.  
We chose $\alpha_{\bt}(1)$ according to the following
heuristic procedure, which ensures
that $\alpha_{\bt}$ reduces to the natural
ordering on the real line when $d=1$. Define
$$(m_1,m_2) = \argmax_{1 \leq i < j \leq K} \norm{\bt_i-\bt_j}.$$
Let $\alpha^{(1)}_{\bt},\alpha_{\bt}^{(2)}$
denote the cluster orderings
given in equation \eqref{alpha-example},
respectively satisfying
$\alpha_{\bt}^{(1)}(1) = m_1$
and $\alpha_{\bt}^{(2)}(1) = m_2$.
We then define,
\[
\alpha_{\bt} = \argmin_{\phi \in \{\alpha_{\bt}^{(1)},
\alpha_{\bt}^{(2)}\}} 
\sum_{j=1}^{K-1} \norm{\bt_{\phi(j)} - \bt_{\phi(j+1)}}.
\]

\begin{algorithm}[H]
\SetAlgoNoLine
\init{$\widetilde G_n = \sum_{j=1}^K \tilde \pi_j \delta_{\tilde\btheta_j}, \lambda_n$, $K$.}

Let $\bphi_j = \tilde \btheta_{\alpha_{\tilde \btheta}(j)}$,
$p_j = \pi_{\alpha_{\tilde\btheta}(j)}$,\
$j=1, \dots, K$. 

Let $\calC = \{1\}$, $j=1$, $\hat K=1$.

\Repeat{$j = K-1$}{
    \textbf{if} $\norm{\bphi_j-\bphi_{j+1}} \leq \lambda_n$
    \textbf{then}
    
        \quad $\calC \leftarrow \calC \cup \{j+1\}$,

    \textbf{else} 
    
        \quad Let $\hat\bphi_j = \frac 1 {|\calC|} \sum_{i\in \calC}\bphi_i, \ \hp_j   = \sum_{i \in \calC} p_i$
            
        \quad $\calC \leftarrow \{j+1\}, \hat K \leftarrow \hat K + 1$
        
    \textbf{end}
        
    $j \leftarrow j+1$
 }
 
\textbf{Return:} $G_n = \sum_{k=1}^{\hat K} \hat p_k \delta_{\hat\bphi_k}$
 
\caption{\label{alg:gsf-hard} The GSF-Hard algorithm.}
\end{algorithm}

\section*{Supplement E: Additional Numerical Results}
In this section, we report the complete results of the simulations presented
in the form of plots in Section \ref{sec:simulation} of the paper. 
We also report 
a second real data analysis, for the Seeds data, 
based on a Gaussian mixture model.

\subsection*
{E.1. Simulation Results for the Multinomial Mixture Models}
In this section, we report the simulation
results for all the multinomial mixture Models 1-7 with 
$M=50$ and $M=35$, respectively, 
in Tables \ref{tab:supp_multinomial_1}-\ref{tab:supp_multinomial_7} and 
Tables \ref{tab:supp_multinomial_35_1}-\ref{tab:supp_multinomial_35_7}.

\begin{table}[!htbp]
\begin{tabular}{c c | c  c   c  c  c }
\firsthline
$n$ & $\hat K_n$ & AIC & BIC & SCAD &  MCP & AL\\
\hline
100 & 1 & .000 & .000 & .000 & .000 & .002 \\
    & \textbf{2} & \textbf{.876} & \textbf{.980} & \textbf{.922} & \textbf{.924} & \textbf{.968} \\
    & 3 & .116 & .018 & .078 & .076 & .030 \\
    & 4 & .008 & .002 & .000 & .000 & .000 \\
  \hline          
200 & 1 & .000 & .000 & .000 & .000 & .000 \\
    & \textbf{2} & \textbf{.864} & \textbf{.988} & \textbf{.936} & \textbf{.944} & \textbf{1.00} \\
    & 3 & .116 & .012 & .064 & .056 & .000 \\ 
    & 4 & .012 & .000 & .000 & .000 & .000 \\
    & 5 & .006 & .000 & .000 & .000 & .000 \\
    & 6 & .002 & .000 & .000 & .000 & .000 \\
\hline
400 & 1 & .000 & .000 & .000 & .000 & .000 \\
    & \textbf{2} & \textbf{.828} & \textbf{.994} & \textbf{.948} & \textbf{.952} & \textbf{1.00} \\
    & 3 & .136 & .006 & .052 & .048 & .000 \\     
    & 4 & .036 & .000 & .000 & .000 & .000 \\
\end{tabular}%
\centering
\captionsetup{justification=centering}
\caption{\label{tab:supp_multinomial_1} 
Order selection results for multinomial mixture Model 1 ($M=50$), 
with true order $K_0= 2$ indicated in {\bf bold} in the second column.  
For each method and sample size, the most frequently selected order 
is indicated in {\bf bold}.
} 
\end{table}

\begin{table}[!htbp]
\begin{tabular}{c c | c  c  c  c  c  c }
\firsthline
$n$ & $\hat K_n$ & AIC & BIC & SCAD &  MCP & AL\\
\hline
100 & 1 & .000 & .000 & .000 & .000 & .000 \\
    & 2 & .000 & .012 & .000 & .000 & .000 \\
    & \textbf 3 & \textbf{.808} & \textbf{.958} & \textbf{.642} & \textbf{.676} & \textbf{.898} \\
    & 4 & .152 & .030 & .338 & .304 & .096 \\
    & 5 & .034 & .000 & .020 & .020 & .006 \\
    & 6 & .006 & .000 & .000 & .000 & .000 \\
  \hline         
200 & 1 & .000 & .000 & .000 & .000 & .000 \\
    & 2 & .000 & .000 & .000 & .000 & .000 \\
    & \textbf 3 & \textbf{.804} & \textbf{.984} & \textbf{.666} & \textbf{.698} & \textbf{.980} \\ 
    & 4 & .146 & .016 & .312 & .288 & .020 \\
    & 5 & .040 & .000 & .022 & .014 & .000 \\
    & 6 & .010 & .000 & .000 & .000 & .000 \\    
\hline
400 & 1 & .000 & .000 & .000 & .000 & .000 \\
    & 2 & .000 & .000 & .000 & .000 & .000 \\  
    & \textbf 3 & \textbf{.836} & \textbf{.992} & \textbf{.698} & \textbf{.738} & \textbf{.996} \\     
    & 4 & .116 & .008 & .284 & .254 & .004 \\
    & 5 & .040 & .000 & .018 & .008 & .000 \\    
    & 6 & .008 & .000 & .000 & .00 & .000 \\  
\end{tabular}%
\centering
\captionsetup{justification=centering}
\caption{\label{tab:supp_multinomial_2} 
Order selection results for multinomial mixture Model 2 ($M=50$), 
with true order $K_0= 3$ indicated in {\bf bold} in the second column.  
For each method and sample size, the most frequently selected order 
is indicated in {\bf bold}.
}
\end{table}

\begin{table}[!htbp]
\begin{tabular}{c c | c  c  c  c  c }
\firsthline
$n$ & $\hat K_n$ & AIC & BIC &  SCAD &  MCP & AL\\
\hline
100 & 3 & .000 & .000 & .000 & .000 & .002 \\
    & \textbf{4} & \textbf{.686} & \textbf{.762} & \textbf{.788} & \textbf{.816} & \textbf{.962} \\
    & 5 & .260 & .218 & .194 & .174 & .034 \\
    & 6 & .046 & .018 & .018 & .010 & .002 \\
    & 7 & .008 & .002 & .000 & .000 & .000 \\
  \hline          
200 & 3 & .000 & .000 & .000 & .000 & .000 \\
    & \textbf{4} & \textbf{.690} & \textbf{.788} & \textbf{.800} & \textbf{.820} & \textbf{.978} \\
    & 5 & .260 & .200 & .180 & .162 & .022 \\ 
    & 6 & .044 & .012 & .020 & .018 & .000 \\ 
    & 7 & .006 & .000 & .000 & .000 & .000 \\
\hline
400 & 3 & .000 & .000 & .000 & .000 & .002 \\
    & \textbf{4} & \textbf{.702} & \textbf{.806} & \textbf{.824} & \textbf{.828} & \textbf{.986} \\
    & 5 & .260 & .186 & .158 & .154 & .010 \\  
    & 6 & .030 & .008 & .018 & .018 & .002 \\   
    & 7 & .006 & .000 & .000 & .000 & .000 \\             
    & 8 & .002 & .000 & .000 & .000 & .000 \\             
\end{tabular}%
\centering
\captionsetup{justification=centering}
\caption{\label{tab:supp_multinomial_3} 
Order selection results for multinomial mixture Model 3 ($M=50$), 
with true order $K_0= 4$ indicated in {\bf bold} in the second column.  
For each method and sample size, the most frequently selected order 
is indicated in {\bf bold}.
} 
\end{table}

\begin{table}[!htbp]
\begin{tabular}{c c | c c c c c c}
\firsthline
$n$ & $\hat K_n$ & AIC & BIC & SCAD & MCP & AL\\
\hline
100 & 3 & .000 & .000 & .000 & .000 & .002 \\
    & 4 & .026 & .140 & .082 & .076 & .148 \\
    & \textbf{5} & \textbf{.494} & \textbf{.546} & \textbf{.618} & \textbf{.660} & \textbf{.806} \\
    & 6 & .312 & .264 & .262 & .238 & .042 \\
    & 7 & .138 & .048 & .034 & .024 & .002 \\
    & 8 & .030 & .002 & .002 & .002 & .000 \\
    & 9 & .000 & .000 & .002 & .000 & .000 \\
  \hline          
200 & 3 & .000 & .000 & .000 & .000 & .000 \\
    & 4 & .000 & .012 & .014 & .012 & .040 \\
    & \textbf{5} & \textbf{.494} & \textbf{.556} & \textbf{.702} & \textbf{.724} & \textbf{.934} \\ 
    & 6 & .344 & .340 & .250 & .240 & .026 \\ 
    & 7 & .140 & .084 & .034 & .024 & .000 \\
    & 8 & .020 & .006 & .000 & .000 & .000 \\ 
    & 9 & .000 & .000 & .000 & .000 & .000 \\
    &10 & .002 & .002 & .000 & .000 & .000 \\
\hline
400 & 4 & .000 & .000 & .000 & .000 & .034 \\
    & \textbf{5} & \textbf{.468} & \textbf{.550} & \textbf{.764} & \textbf{.790} & \textbf{.960} \\  
    & 6 & .356 & .340 & .200 & .178 & .006 \\     
    & 7 & .150 & .102 & .034 & .030 & .000 \\
    & 8 & .022 & .004 & .002 & .002 & .000 \\    
    & 9 & .002 & .002 & .000 & .000 & .000 \\ 
    &10 & .002 & .002 & .000 & .000 & .000 \\
\end{tabular}%
\centering
\captionsetup{justification=centering}
\caption{\label{tab:supp_multinomial_4} 
Order selection results for multinomial mixture Model 4 ($M=50$), 
with true order $K_0= 5$ indicated in {\bf bold} in the second column.  
For each method and sample size, the most frequently selected order 
is indicated in {\bf bold}.
}
\end{table}

\clearpage

\begin{table}[H]
\setlength{\tabcolsep}{0.4em}
\begin{tabular}{c | l | c c c c c | l | c c c c c}
\multicolumn{1}{c|}{$n$} &
\multicolumn{6}{c}{Model 5} &\multicolumn{6}{|c}{Model 6}\\
\firsthline
& \multicolumn{1}{c|}{$\hat K_n$} & AIC & BIC & SCAD & MCP & ALasso & \multicolumn{1}{c|}{$\hat K_n$} & AIC & BIC & SCAD & MCP & ALasso \\
\cline{2-13}
100 & 4 & .016 & .212 & .086 & .088 & .144 & 4 & .052 & \textbf{.434} & .072 & .088 & \textbf{.332} \\
    & 5 & .328 & \textbf{.478} & \textbf{.486} & \textbf{.516} & \textbf{.544} & 5 & .172 & .352 & .272 & .282 & .306 \\
    & \textbf 6 & \textbf{.394} & .264 & .344 & .320 & .252 & 6 & \textbf{.296} & .170 & \textbf{.336} & \textbf{.330} & .216 \\
    & 7 & .200 & .042 & .074 & .066 & .054 & \textbf{7} & .286 & .040 & .276 & .270 & .126 \\
    & 8 & .062 & .004 & .010 & .010 & .006 & 8 & .194 & .004 & .044& .030 & .020 \\
  \hline
200 & 4 & .002 & .028 & .006 & .006 & .024 & 5 & .028 & .448 & .088 & .100 & .316 \\
    & 5 & .126 & \textbf{.474} & .286 & .306 & .408 & 6 & .134 & \textbf{.320} & .228 & .260 & .268 \\
    & \textbf 6 & \textbf{.380} & .390 & \textbf{.574} & \textbf{.572} & \textbf{.476} & \textbf{7} & .326 & .182 & \textbf{.544} & \textbf{.538} & \textbf{.368} \\
    & 7 & .300 & .094 & .120 & .108 & .072 & 8 & \textbf{.358} & .050 & .132 & .100 & .044 \\
    & 8 & .192 & .014 & .014 & .008 & .020 & 9  \scriptsize $\geq$ & .154 & .000 & .008 & .002 & .004 \\                                          
\hline
400 & 4 & .000 & .002 & .000 & .000 & .004 & 5 \scriptsize $\geq$ & .000 & .094 & .010 & .014 & .042 \\
    & 5 & .016 & .260 & .052 & .056 & .156 & 6 & .010 & .254 & .062 & .064 & .130 \\
    & \textbf 6 & \textbf{.384} & \textbf{.480} & \textbf{.740} & \textbf{.740} & \textbf{.716} & \textbf{7} & .342 & \textbf{.388} & \textbf{.694} & \textbf{.738} & \textbf{.738} \\
    & 7 & .336 & .214 & .190 & .178 & .110 & 8 & \textbf{.380} & .232 & .202 & .158 & .076 \\
    & 8 & .264 & .044 & .018 & .026 & .014 & 9 \scriptsize $\geq$ & .268 & .032 & .032 & .026 & .014 \\
                 \hline                         
\end{tabular}%
\centering
\captionsetup{justification=centering}
\caption{\label{tab:supp_multinomial_5_6} 
Order selection results for multinomial mixture Models 5 and 6 ($M=50$), 
with true orders $K_0=6$ and $7$ indicated in {\bf bold} in their corresponding 
second columns.   
For each method and sample size, the most frequently selected order 
is indicated in {\bf bold}.
}
\end{table}

\begin{table}[!htbp]
\begin{tabular}{c l | c c c c c c}
\firsthline
$n$ & $\hat K_n$ & AIC & BIC & SCAD & MCP & ALasso \\
\hline
100 & 5 \scriptsize$\geq$  & .000 & .106 & .000 & .002 & .022 \\
    & 6 & .034 & \textbf{.412} & .028 & .036 & .148 \\
    & 7 & .258 & .334 & .246 & .246 & .274 \\
    & \textbf 8 & \textbf{.392} & .144 & \textbf{.532} & \textbf{.554} & \textbf{.436} \\
    & 9  \scriptsize$\leq$ & .316 & .004 & .194 & .162 & .120 \\
  \hline         
200 & 6  \scriptsize$\geq$& .000 & .034 & .000 & .000 & .006 \\ 
    & 7 & .014 & .308 & .016 & .012 & .046 \\
    & \textbf 8 & \textbf{.496} & \textbf{.546} & \textbf{.626} & \textbf{.648} & \textbf{.674} \\
    & 9 & .340 & .112 & .304 & .294 & .222 \\    
    &10 \scriptsize$\leq$  & .150 & .000 & .054 & .046 & .052 \\
\hline
400 & 6  \scriptsize$\geq$& .000 & .000 & .000 & .000 & .000 \\  
    & 7 & .000 & .022 & .000 & .000 & .004 \\     
    & \textbf 8 & \textbf{.552} & \textbf{.638} & \textbf{.674} & \textbf{.698} & \textbf{.696} \\
    & 9 & .314 & .312 & .284 & .264 & .224 \\    
    &10 \scriptsize$\leq$  & .134 & .028 & .042 & .038 & .076 \\  
    \hline
\end{tabular}%
\centering
\captionsetup{justification=centering}
\caption{\label{tab:supp_multinomial_7} 
Order selection results for multinomial mixture Model 7 ($M=50$), 
with true order $K_0= 8$ indicated in {\bf bold} in the second column.  
For each method and sample size, the most frequently selected order 
is indicated in {\bf bold}.
}
\end{table}

\begin{table}[!htbp]
\begin{tabular}{c c | c  c  c  c  c }
\firsthline
$n$ & $\hat K_n$ & AIC & BIC &  SCAD &  MCP & AL\\
\hline
100 & 1 & .000 & .000 & .016 & .016 & .016 \\
	& \textbf 2 & \textbf{.842} & \textbf{.982} & \textbf{.962} & \textbf{.960} & \textbf{.942} \\ 
    & 3 & .146 & .018 & .022 & .024 & .042 \\ 
    & 4 & .012 & .000 & .000 & .000 & .000 \\ 
    & 6 & .000 & .000 & .000 & .000 & .000 \\  
  \hline       
200 & \textbf 2 & \textbf{.822} & \textbf{.982} & \textbf{.990} & \textbf{.990} & \textbf{.992} \\ 
    & 3 & .156 & .018 & .010 & .010 & .008 \\ 
    & 4 & .018 & .000 & .000 & .000 & .000 \\ 
    & 5 & .004 & .000 & .000 & .000 & .000 \\ 
    & 6 & .000 & .000 & .000 & .000 & .000 \\ 
\hline
400 & \textbf 2 & \textbf{.826} & \textbf{.994} & \textbf{1.00} & \textbf{1.00} & \textbf{.996} \\ 
    & 3 & .146 & .006 & .000 & .000 & .004 \\ 
    & 4 & .026 & .000 & .000 & .000 & .000 \\  
    & 5 & .002 & .000 & .000 & .000 & .000 \\      
    & 6 & .000 & .000 & .000 & .000 & .000 \\                       
\end{tabular}%
\centering
\captionsetup{justification=centering}
\caption{\label{tab:supp_multinomial_35_1} 
Order selection results for multinomial mixture Model 1 ($M=35$), 
with true order $K_0= 2$ indicated in {\bf bold} in the second column.  
For each method and sample size, the most frequently selected order 
is indicated in {\bf bold}.}
\end{table}

\begin{table}[!htbp]
\begin{tabular}{c c | c  c  c  c  c }
\firsthline
$n$ & $\hat K_n$ & AIC & BIC &  SCAD &  MCP & AL\\
\hline
100 & 2 & .032 & .322 & .004 & .016 & .114 \\
    & \textbf 3 & \textbf{.806} & \textbf{.672} & \textbf{.868} & \textbf{.888} & \textbf{.830} \\
    & 4 & .146 & .006 & .126 & .096 & .054 \\
    & 5 & .016 & .000 & .002 & .000 & .002 \\
    & 6 & .000 & .000 & .000 & .000 & .000 \\
  \hline
200 & 2 & .000 & .040 & .000 & .000 & .014 \\
    & \textbf 3 & \textbf{.794} & \textbf{.938} & \textbf{.850} & \textbf{.892} & \textbf{.956} \\
    & 4 & .172 & .022 & .150 & .106 & .030 \\
    & 5 & .026 & .000 & .000 & .002 & .000 \\
    & 6 & .008 & .000 & .000 & .000 & .000 \\
\hline
400 & \textbf 3 & \textbf{.796} & \textbf{.988} & \textbf{.868} & \textbf{.894} & \textbf{.990} \\
    & 4 & .162 & .012 & .130 & .106 & .010 \\
    & 5 & .038 & .000 & .002 & .000 & .000 \\
    & 6 & .004 & .000 & .000 & .000 & .000 \\
    & 7 & .000 & .000 & .000 & .000 & .000 \\

\end{tabular}%
\centering
\captionsetup{justification=centering}
\caption{\label{tab:supp_multinomial_35_2} 
Order selection results for multinomial mixture Model 2 ($M=35$), 
with true order $K_0= 3$ indicated in {\bf bold} in the second column.  
For each method and sample size, the most frequently selected order 
is indicated in {\bf bold}.}

\end{table}

\begin{table}[!htbp]
\begin{tabular}{c c | c  c  c  c  c }
\firsthline
$n$ & $\hat K_n$ & AIC & BIC &  SCAD &  MCP & AL\\
\hline
100 & 2 & .000 & .000 & .000 & .000 & .000 \\
    & 3 & .000 & .008 & .000 & .000 & .002 \\
    & \textbf 4 & \textbf{.716} & \textbf{.796} & \textbf{.850} & \textbf{.876} & \textbf{.876} \\
    & 5 & .230 & .182 & .142 & .116 & .108 \\
    & 6 & .050 & .014 & .008 & .008 & .014 \\
    & 7 & .004 & .000 & .000 & .000 & .000 \\
  \hline
200 & 2 & .000 & .000 & .000 & .000 & .000 \\
    & 3 & .000 & .000 & .000 & .000 & .000 \\
    & \textbf 4 & \textbf{.698} & \textbf{.826} & \textbf{.870} & \textbf{.894} & \textbf{.892} \\
    & 5 & .262 & .162 & .120 & .100 & .094 \\
    & 6 & .038 & .012 & .010 & .004 & .014 \\
    & 7 & .002 & .000 & .000 & .002 & .000 \\
\hline
400 & 3 & .000 & .000 & .000 & .000 & .000 \\
    & 4 & \textbf{.742} & \textbf{.860} & \textbf{.880} & \textbf{.888} & \textbf{.898} \\
    & 5 & .226 & .136 & .108 & .102 & .084 \\
    & 6 & .030 & .004 & .012 & .010 & .018 \\
    & 7 & .002 & .000 & .000 & .000 & .000 \\          
\end{tabular}%
\centering
\captionsetup{justification=centering}
\caption{\label{tab:supp_multinomial_35_3} 
Order selection results for multinomial mixture Model 3 ($M=35$), 
with true order $K_0= 4$ indicated in {\bf bold} in the second column.  
For each method and sample size, the most frequently selected order 
is indicated in {\bf bold}.}
\end{table}

\begin{table}[!htbp]
\begin{tabular}{c c | c  c  c  c  c }
\firsthline
$n$ & $\hat K_n$ & AIC & BIC &  SCAD &  MCP & AL\\
\hline
100 & 2 & .000 & .000 & .000 & .000 & .000 \\
    & 3 & .000 & .010 & .006 & .004 & .004 \\
    & 4 & .060 & .306 & .322 & .330 & .260 \\
    & \textbf 5 & \textbf{.534} & \textbf{.536} & \textbf{.508} & \textbf{.522} & \textbf{.628} \\
    & 6 & .300 & .144 & .150 & .130 & .096 \\
    & 7 & .094 & .004 & .012 & .012 & .008 \\
    & 8 & .012 & .000 & .002 & .002 & .002 \\
    & 9 & .000 & .000 & .000 & .000 & .002 \\
  \hline
200 & 2 & .000 & .000 & .000 & .000 & .000 \\
    & 3 & .000 & .000 & .000 & .000 & .000 \\
    & 4 & .004 & .102 & .102 & .096 & .058 \\
    & \textbf 5 & \textbf{.532} & \textbf{.596} & \textbf{.662} & \textbf{.706} & \textbf{.788} \\
    & 6 & .352 & .276 & .224 & .188 & .134 \\
    & 7 & .100 & .026 & .012 & .010 & .020 \\
    & 8 & .012 & .000 & .000 & .000 & .000 \\
\hline
400 & 3 & .000 & .000 & .000 & .000 & .000 \\
    & 4 & .000 & .016 & .004 & .004 & .018 \\
    & \textbf 5 & \textbf{.532} & \textbf{.622} & \textbf{.766} & \textbf{.790} & \textbf{.864} \\
    & 6 & .352 & .308 & .208 & .194 & .104 \\
    & 7 & .106 & .050 & .022 & .012 & .012 \\       
    & 8 & .010 & .004 & .000 & .000 & .002 \\             
\end{tabular}%
\centering
\captionsetup{justification=centering}
\caption{\label{tab:supp_multinomial_35_4} 
Order selection results for multinomial mixture Model 4 ($M=35$), 
with true order $K_0= 5$ indicated in {\bf bold} in the second column.  
For each method and sample size, the most frequently selected order 
is indicated in {\bf bold}.
}
\end{table}

\begin{table}[!htbp]
\begin{tabular}{c c | c  c  c  c  c }
\firsthline
$n$ & $\hat K_n$ & AIC & BIC &  SCAD &  MCP & AL\\
\hline
100 & 2 & .000 & .000 & .000 & .000 & .000 \\
    & 3 & .000 & .044 & .000 & .002 & .032 \\
    & 4 & .122 & .398 & .320 & .352 & .336 \\
    & 5 & \textbf{.456} & \textbf{.430} & \textbf{.530} & \textbf{.524} & \textbf{.482} \\
    & \textbf 6 & .334 & .126 & .134 & .106 & .138 \\
    & 7 & .074 & .002 & .016 & .016 & .012 \\
    & 8 & .014 & .000 & .000 & .000 & .000 \\
  \hline
200 & 2 & .000 & .000 & .000 & .000 & .000 \\
    & 3 & .000 & .000 & .000 & .000 & .002 \\
    & 4 & .014 & .180 & .110 & .116 & .170 \\
    & 5 & .344 & \textbf{.570} & \textbf{.546} & \textbf{.566} & \textbf{.550} \\
    & \textbf{6} & \textbf{.440} & .230 & .298 & .276 & .234 \\
    & 7 & .168 & .018 & .042 & .040 & .042 \\
    & 8 & .030 & .002 & .004 & .002 & .002 \\
    & 9 & .004 & .000 & .000 & .000 & .000 \\
\hline
400 & 3 & .000 & .000 & .000 & .000 & .000 \\
    & 4 & .000 & .046 & .016 & .022 & .000 \\
    & 5 & .146 & \textbf{.520} & .370 & .372 & .062 \\
    & \textbf 6 & \textbf{.442} & .374 & \textbf{.538} & \textbf{.532} & \textbf{.462} \\
    & 7 & .316 & .058 & .072 & .072 & .418 \\       
    & 8 & .082 & .002 & .004 & .002 & .050 \\             
    & 9 & .012 & .000 & .000 & .000 & .008 \\
    & 10& .002 & .000 & .000 & .000 & .000 \\       
\end{tabular}%
\centering
\captionsetup{justification=centering}
\caption{\label{tab:supp_multinomial_35_5} 
Order selection results for multinomial mixture Model 5 ($M=35$), 
with true order $K_0= 6$ indicated in {\bf bold} in the second column.  
For each method and sample size, the most frequently selected order 
is indicated in {\bf bold}.
}

\end{table}

\begin{table}[!htbp]
\begin{tabular}{c c | c  c  c  c  c }
\firsthline
$n$ & $\hat K_n$ & AIC & BIC &  SCAD &  MCP & AL\\
\hline
100 & 2 & .000 & .000 & .000 & .000 & .000 \\
    & 3 & .000 & .000 & .000 & .000 & .000 \\
    & 4 & .238 & \textbf{.640} & .328 & .362 & \textbf{.534} \\
    & 5 & \bf .320 & .298 & \textbf{.440} & \textbf{.420} & .276 \\
    & 6 & .270 & .056 & .182 & .180 & .134 \\
    & \textbf 7 & .146 & .004 & .048 & .038 & .054 \\
    & 8 & .024 & .002 & .002 & .000 & .002 \\
    & 9 & .002 & .000 & .000 & .000 & .000 \\
  \hline
200 & 2 & .000 & .000 & .000 & .000 & .000 \\
    & 3 & .000 & .000 & .000 & .000 & .000 \\
    & 4 & .014 & \textbf{.522} & .110 & .124 & \textbf{.404} \\
    & 5 & .344 & .354 & \textbf{.406} & \textbf{.408} & .306 \\
    & 6 & \textbf{.440} & .106 & .302 & .310 & .160 \\
    & \textbf 7 & .168 & .018 & .170 & .146 & .104 \\
    & 8 & .030 & .000 & .012 & .012 & .022 \\
    & 9 & .004 & .000 & .000 & .000 & .004 \\
\hline
400 & 3 & .000 & .000 & .000 & .000 & .000 \\
    & 4 & .000 & .218 & .012 & .012 & .166 \\
    & 5 & .024 & \textbf{.378} & .184 & .192 & .262 \\
    & 6 & .162 & .262 & .304 & .324 & .224 \\
    & \textbf 7 & \textbf{.412} & .134 & \textbf{.446} & \textbf{.428} & \textbf{.282} \\       
    & 8 & .292 & .008 & .046 & .044 & .066 \\             
    & 9 & .094 & .000 & .008 & .000 & .000 \\
    & 10& .016 & .000 & .000 & .000 & .000 \\       
\end{tabular}%
\centering
\captionsetup{justification=centering}
\caption{\label{tab:supp_multinomial_35_6} 
Order selection results for multinomial mixture Model 6 ($M=35$), 
with true order $K_0= 7$ indicated in {\bf bold} in the second column.  
For each method and sample size, the most frequently selected order 
is indicated in {\bf bold}.
}

\end{table}

\begin{table}[!htbp]
\begin{tabular}{c c | c  c  c  c  c }
\firsthline
$n$ & $\hat K_n$ & AIC & BIC &  SCAD &  MCP & AL\\
\hline
100 & 3 & .000 & .086 & .000 & .000 & .000 \\
    & 4 & .002 & \textbf{.590} & .012 & .006 & .038 \\
    & 5 & .080 & .290 & .172 & .180 & .346 \\
    & 6 & .308 & .030 & \textbf{.412} & \textbf{.412} & \textbf{.350} \\
    & 7 & \textbf{.384} & .004 & .296 & .284 & .198 \\
    & \textbf 8 & .170 & .000 & .102 & .114 & .050 \\
    & 9 & .044 & .000 & .004 & .004 & .018 \\
    &10 & .010 & .000 & .002 & .000 & .000 \\
    &11 & .002 & .000 & .000 & .000 & .000 \\
  \hline
200 & 3 & .000 & .000 & .000 & .000 & .000 \\
    & 4 & .000 & .000 & .000 & .000 & .000 \\
    & 5 & .000 & .194 & .014 & .014 & .060 \\
    & 6 & .038 & \textbf{.508} & .114 & .132 & .262 \\
    & 7 & .290 & .266 & .338 & .318 & .274 \\
    & \textbf 8 & \textbf{.408} & .032 & \textbf{.450} & \textbf{.468} & \textbf{.354} \\
    & 9 & .208 & .000 & .082 & .066 & .050 \\
    &10 & .052 & .000 & .002 & .002 & .000 \\
    &11 & .004 & .000 & .000 & .000 & .000 \\
\hline
400 & 3 & .000 & .000 & .000 & .000 & .000 \\
    & 4 & .000 & .000 & .000 & .000 & .000 \\
    & 5 & .000 & .002 & .000 & .000 & .006 \\
    & 6 & .000 & .108 & .002 & .004 & .046 \\
    & 7 & .016 & .400 & .044 & .048 & .094 \\       
    & \textbf 8 & \textbf{.572} & \textbf{.470} & \textbf{.700} & \textbf{.696} & \textbf{.644} \\             
    & 9 & .316 & .020 & .216 & .216 & .172 \\
    & 10& .082 & .000 & .036 & .034 & .032 \\    
    & 11& .014 & .000 & .002 & .002 & .006 \\          
\end{tabular}%
\centering
\captionsetup{justification=centering}
\caption{\label{tab:supp_multinomial_35_7} 
Order selection results for multinomial mixture Model 7 ($M=35$), 
with true order $K_0= 8$ indicated in {\bf bold} in the second column.  
For each method and sample size, the most frequently selected order 
is indicated in {\bf bold}.}

\end{table}

\clearpage

\subsection*
{E.2. Simulation Results for the Multivariate Location-Gaussian Mixture Models}
In this section, we report the simulation
results for all the multivariate Gaussian mixture Models (1.a, 1.b), (2.a, 2.b), 
(3.a, 3.b), (4.a, 4.b), (5.a, 5.b), in Tables \ref{tab:supp_normal_1}-\ref{tab:supp_normal_5}.

\begin{table}[!htbp]
\setlength{\tabcolsep}{0.36em}
\begin{tabular}{c | l | c c c c c | l | c c c c c}
\multicolumn{1}{c|}{$n$} &
\multicolumn{6}{c}{Model 1.a} &\multicolumn{6}{|c}{Model 1.b}\\
\firsthline
& \multicolumn{1}{c|}{$\hat K_n$} & AIC & BIC & SCAD & MCP & ALasso & \multicolumn{1}{c|}{$\hat K_n$} & AIC & BIC & SCAD & MCP & ALasso \\
\cline{2-13}
200 & 1 & .006 & .212 & .236 & .230 & .118 & 1 & .094 &  \textbf{.662} &  \textbf{.680} &  \textbf{.672} & .390 \\
    & \textbf 2 &  \textbf{.694} &  \textbf{.786} &  \textbf{.762} &  \textbf{.768} &  \textbf{.844} & \textbf 2 &  \textbf{.566} & .332 & .316 & .324 &  \textbf{.594} \\
    & 3 & .088 & .002 & .002 & .002 & .036 & 3 & .158 & .006 & .004 & .004 & .016 \\
    & 4 & .080 & .000 & .000 & .000 & .000 & 4 & .064 & .000 & .000 & .000 & .000 \\
    & 5 & .040 & .000 & .000 & .000 & .002 & 5 & .044 & .000 & .000 & .000 & .000 \\
    & 6 \tiny$\leq$ & .092 & .000 & .000 & .000 & .000 & 6 \tiny$\leq$  & .074 & .000 & .000 & .000 & .000 \\
  \hline
400 & 1 & .000 & .006 & .012 & .012 & .008 & 1 & .004 & .290 & .288 & .284 & .262 \\
    & \textbf 2 &  \textbf{.762} &  \textbf{.994} &  \textbf{.988} &  \textbf{.988} &  \textbf{.990} & 2 &  \textbf{.758} &  \textbf{.708} &  \textbf{.712} &  \textbf{.716} &  \textbf{.738} \\
    & 3 & .074 & .000 & .000 & .000 & .002 & 3 & .122 & .002 & .000 & .000 & .000 \\
    & 4 & .072 & .000 & .000 & .000 & .000 & 4 & .026 & .000 & .000 & .000 & .000 \\
    & 5 & .026 & .000 & .000 & .000 & .000 & 5 & .036 & .000 & .000 & .000 & .000 \\
    & 6 \tiny$\leq$  & .066 & .000 & .000 & .000 & .000 & 6 \tiny$\leq$  & .054 & .000 & .000 & .000 & .000 \\
\hline
600 & 1 & .000 & .000 & .002 & .002 & .000 & 1 & .002 & .098 & .086 & .088 & .084 \\
    &  \textbf 2 &  \textbf{.782} &  \textbf{1.00} &  \textbf{.998} &  \textbf{.998} &  \textbf{1.00} &  \textbf 2 &  \textbf{.808} &  \textbf{.896} &  \textbf{.914} &  \textbf{.912} &  \textbf{.912} \\
    & 3 & .084 & .000 & .000 & .000 & .000 & 3 & .106 & .006 & .000 & .000 & .004 \\
    & 4 & .062 & .000 & .000 & .000 & .000 & 4 & .030 & .000 & .000 & .000 & .000 \\
    & 5 & .028 & .000 & .000 & .000 & .000 & 5 & .024 & .000 & .000 & .000 & .000 \\
    & 6 \tiny$\leq$  & .044 & .000 & .000 & .000 & .000 & 6 \tiny$\leq$  & .030 & .000 & .000 & .000 & .000 \\
\hline
800 & 1 & .000 & .000 & .000 & .000 & .000 & 1 & .000 & .012 & .004 & .004 & .006 \\
    &  \textbf 2 &  \textbf{.844} &  \textbf{1.00} &  \textbf{1.00} &  \textbf{1.00} &  \textbf{1.00} &  \textbf 2 &  \textbf{.870} &  \textbf{.978} &  \textbf{.996} &  \textbf{.996} &  \textbf{.994} \\
    & 3 & .050 & .000 & .000 & .000 & .000 & 3 & .066 & .010 & .000 & .000 & .000 \\
    & 4 & .062 & .000 & .000 & .000 & .000 & 4 & .026 & .000 & .000 & .000 & .000 \\
    & 5 & .020 & .000 & .000 & .000 & .000 & 5 & .016 & .000 & .000 & .000 & .000 \\
    & 6 \tiny$\leq$  & .024 & .000 & .000 & .000 & .000 & 6 \tiny$\leq$  & .022 & .000 & .000 & .000 & .000 \\
\end{tabular}%

\centering
\captionsetup{justification=centering}
\caption{\label{tab:supp_normal_1} 
Order selection results for multivariate Gaussian mixture Models 1.a and 1.b, 
with true order $K_0= 2$ indicated in {\bf bold} in the second column.  
For each method and sample size, the most frequently selected order 
is indicated in {\bf bold}.}
\end{table}

\begin{table}[H]
\setlength{\tabcolsep}{0.4em}
\begin{tabular}{c | l | c c c c c | l | c c c c c}
\hline
\multicolumn{1}{c|}{$n$} &
\multicolumn{6}{c}{Model 2.a} &\multicolumn{6}{|c}{Model 2.b}\\
\firsthline
& \multicolumn{1}{c|}{$\hat K_n$} & AIC & BIC & SCAD & MCP & ALasso & \multicolumn{1}{c|}{$\hat K_n$} & AIC & BIC & SCAD & MCP & ALasso \\
\cline{2-13}
200 & 1        & .000 & .028 & .012 & .016 & .000 & 1 & .002 & .112 & .012 & .012 & .008 \\  
    & 2        & .080 & \textbf{.758} & \textbf{.728} & \textbf{.720} & .368 & 2 & .258 & \textbf{.808} & \textbf{.812} & \textbf{.800} & \textbf{.494} \\  
    & 3        & .200 & .180 & .180 & .192 & \textbf{.372} & 3 & .344 & .078 & .172 & .176 & .394 \\  
    & \textbf 4& \textbf{.270} & .034 & .078 & .072 & .240 & \textbf 4 & .176 & .002 & .004 & .012 & .102 \\  
    & 5        & .168 & .000 & .002 & .000 & .020 & 5 & .082 & .000 & .000 & .000 & .002 \\  
    & 6        & .100 & .000 & .000 & .000 & .000 & 6 & .044 & .000 & .000 & .000 & .000 \\  
    & 7  \scriptsize$\leq$        & .182 & .000 & .000 & .000 & .000 & 7 \scriptsize$\leq$ & .094 & .000 & .000 & .000 & .000 \\  
\hline
400 & 2  \scriptsize$\geq$      & .004 & \textbf{.416} & \textbf{.356} & \textbf{.376} & .304 & 2 \scriptsize$\geq$ & .080 & \textbf{.806} & \textbf{.628} & \textbf{.630} & \textbf{.590} \\
    & 3        & .096 & .332 & .294 & .340 & .258 & 3 & \textbf{.456} & .190 & .354 & .354 & .382 \\
    & \textbf 4& \textbf{.382} & .242 & .338 & .280 & \textbf{.416} & \textbf 4 & .250 & .004 & .018 & .016 & .028 \\
    & 5        & .212 & .010 & .012 & .004 & .022 & 5 & .078 & .000 & .000 & .000 & .000 \\
    & 6        & .130 & .000 & .000 & .000 & .000 & 6 & .044 & .000 & .000 & .000 & .000  \\
    & 7  \scriptsize$\leq$        & .176 & .000 & .000 & .000 & .000 & 7 \scriptsize$\leq$ & .092 & .000 & .000 & .000 & .000 \\
\hline                                              
600 & 2 \scriptsize$\geq$       & .000 & .160 & .136 & .164 & .150 & 2 \scriptsize$\geq$& .028 & \textbf{.662} & .448 & .456 & \textbf{.568} \\
    & 3        & .040 & .384 & .254 & .332 & .174 & 3 & \textbf{.454} & .326 & \textbf{.526} & \textbf{.514} & .396 \\
    & \textbf 4& \textbf{.472} & \textbf{.422} & \textbf{.604} & \textbf{.500} &  \textbf{.646} & \textbf 4 & .306 & .012 & .026 & .030 & .036 \\
    & 5        & .224 & .034 & .006 & .004 & .030 & 5 & .108 & .000 & .000 & .000 & .000 \\
    & 6        & .122 & .000 & .000 & .000 & .000 & 6 & .034 & .000 & .000 & .000 & .000  \\
    & 7  \scriptsize$\leq$        & .142 & .000 & .000 & .000 & .000 & 7 \scriptsize$\leq$ & .070 & .000 & .000 & .000 & .000 \\
\hline                                              
800 & 2  \scriptsize$\geq$       & .000 & .078 & .050 & .054 & .046 & 2 \scriptsize$\geq$ & .012 & .482 & .300 & .310 & \textbf{.488} \\
    & 3        & .008 & .308 & .178 & .308 & .136 & 3 & \textbf{.446} & \textbf{.502} & \textbf{.654} & \textbf{.662} & .470 \\
    & \textbf 4        & \textbf{.492} & \textbf{.518} & \textbf{.766} & \textbf{.636} & \textbf{.766} & 
    \textbf 4 & .322 & .016 & .044 & .028 & .042 \\
    & 5        & .234 & .096 & .006 & .002 & .052 & 5 & .102 & .000 & .002 & .000 & .000 \\
    & 6        & .134 & .000 & .000 & .000 & .000 & 6 & .044 & .000 & .000 & .000 & .000  \\
    & 7   \scriptsize$\leq$       & .132 & .000 & .000 & .000 & .000 & 7 \scriptsize$\leq$ & .074 & .000 & .000 & .000 & .000 \\
\hline
\end{tabular}%
\centering
\captionsetup{justification=centering}
\caption{\label{tab:supp_normal_2} 
Order selection results for multivariate Gaussian mixture Models 2.a and 2.b, 
with true order $K_0= 4$ indicated in {\bf bold} in the second column.  
For each method and sample size, the most frequently selected order 
is indicated in {\bf bold}.
}
\end{table}

\begin{table}[!htbp]
\setlength{\tabcolsep}{0.4em}
\begin{tabular}{c | l | c c c c c | l | c c c c c}
\multicolumn{1}{c|}{$n$} &
\multicolumn{6}{c}{Model 3.a} &\multicolumn{6}{|c}{Model 3.b}\\
\firsthline
& \multicolumn{1}{c|}{$\hat K_n$} & AIC & BIC & SCAD & MCP & ALasso & \multicolumn{1}{c|}{$\hat K_n$} & AIC & BIC & SCAD & MCP & ALasso \\
\cline{2-13}
200 & 1 & .000 & .000 & .000 & .000 & .000 & 1 & .000 & .296 & .184 & .168 & .004\\
    & 2 & .080 & \textbf{.854} & \textbf{.818} & \textbf{.798} & \textbf{.552} & 2 & .008 & .300 & \textbf{.546} & \textbf{.534} & .364\\
    & \textbf 3 & \textbf{.444} & .146 & .182 & .202 & \textbf{.430} & \textbf{3} & \textbf{.432} & .402 & .264 & .294 & .532\\
    & 4 & .112 & .000 & .000 & .000 & .018 & 4 & .160 & .002 & .006 & .004 & .076\\
    & 5 & .072 & .000 & .000 & .000 & .000 & 5 & .086 & .000 & .000 & .000 & .016\\
    & 6 \tiny$\leq$  & .292 & .000 & .000 & .000 & .000 & 6 \tiny$\leq$  & .314 & .000 & .000 & .000 & .008 \\
  \hline                                             
400 & 1 & .000 & .000 & .000 & .000 & .000 & 1 & .000 & .004 & .006 & .012 & .010\\
    & 2 & .000 & \textbf{.534} & .486 & .466 & .470 & 2 & .000 & .068 & .264 & .224 & .250\\
    & \textbf 3 & \textbf{.626} & .466 & \textbf{.512} & \textbf{.532} & \textbf{.528} & \textbf 3 & \textbf{.604} & \textbf{.922} & \textbf{.722} & \textbf{.748} & \textbf{.652} \\  
    & 4 & .124 & .000 & .002 & .002 & .002 & 4 & .130 & .006 & .004 & .012 & .048\\
    & 5 & .040 & .000 & .000 & .000 & .000 & 5 & .056 & .006 & .000 & .004 & .026\\
    & 6 \tiny$\leq$  & .210 & .000 & .000 & .000 & .000 & 6  \tiny$\leq$ & .210 & .000 & .004 & .000 & .014\\     
\hline                        
600 & 1 & .000 & .000 & .000 & .000 & .000 & 1 & .000 & .000 & .000 & .000 & .000\\
    & 2 & .000 & .216 & .202 & .200 & .212 & 2 & .000 & .002 & .098 & .110 & .162\\   
    & \textbf 3 & \textbf{.720} & \textbf{.784} & \textbf{.796} & \textbf{.800} & \textbf{.788} & \textbf 3 & \textbf{.688} & \textbf{.996} & \textbf{.856} & \textbf{.834} & \textbf{.734} \\      
    & 4 & .084 & .000 & .002 & .000 & .000 & 4 & .086 & .002 & .030 & .038 & .066\\
    & 5 & .050 & .000 & .000 & .000 & .000 & 5 & .066 & .000 & .010 & .006 & .012\\     
    & 6 \tiny$\leq$  & .146 & .000 & .000 & .000 & .000 & 6 \tiny$\leq$  & .160 & .000 & .006 & .012 & .026 \\   
\hline                        
800 & 1 & .000 & .000 & .000 & .000 & .000 & 1 & .000 & .000 & .000 & .000 & .002\\
    & 2 & .000 & .072 & .058 & .058 & .090 & 2 & .000 & .002 & .028 & .034 & .128\\   
    & \textbf 3 & \textbf{.752} & \textbf{.928} & \textbf{.940} & \textbf{.940} & \textbf{.910} & \textbf 3 & \textbf{.738} & \textbf{.996} & \textbf{.910} & \textbf{.896} & \textbf{.728} \\      
    & 4 & .080 & .000 & .002 & .002 & .000 & 4 & .088 & .002 & .048 & .050 & .068\\
    & 5 & .022 & .000 & .000 & .000 & .000 & 5 & .052 & .000 & .008 & .016 & .020\\     
    & 6 \tiny$\leq$  & .146 & .000 & .000 & .000 & .000 & 6 \tiny$\leq$  & .122 & .000 & .006 & .004 & .054\\       
\end{tabular}%
\centering
\captionsetup{justification=centering}
\caption{\label{tab:supp_normal_3} 
Order selection results for multivariate Gaussian mixture Models 3.a and 3.b, 
with true order $K_0= 3$ indicated in {\bf bold} in the second column.  
For each method and sample size, the most frequently selected order 
is indicated in {\bf bold}.
}

\end{table}

\begin{table}[!htbp]
\setlength{\tabcolsep}{0.4em}
\begin{tabular}{c | l | c c c c c | l | c c c c c}
\multicolumn{1}{c|}{$n$} &
\multicolumn{6}{c}{Model 4.a} &\multicolumn{6}{|c}{Model 4.b}\\
\firsthline
& \multicolumn{1}{c|}{$\hat K_n$} & AIC & BIC & SCAD & MCP & ALasso & \multicolumn{1}{c|}{$\hat K_n$} & AIC & BIC & SCAD & MCP & ALasso \\
\cline{2-13}
200 & 1 & .000 & .304 & .102 & .084 & .000 & 1 & .000 & .000 & .000 & .000 & .000 \\
    & 2 & .000 & .070 & .286 & .250 & .078 & 2 & .000 & .000 & .002 & .002 & .000 \\
    & 3 & .000 & \textbf{.312} & \textbf{.304} & \textbf{.344} & \textbf{.406} & 3 & .000 & .030 & .006 & .002 & .008 \\
    & 4 & .090 & .290 & .256 & .256 & .402 & 4 & .112 & \textbf{.798} & \textbf{.530} & \textbf{.560} & \textbf{.510} \\
    & \textbf{5} & .236 & .024 & .052 & .064 & .106 & \textbf{5} & .312 & .172 & .378 & .394 & .418 \\
    & 6 & .108 & .000 & .000 & .002 & .006 & 6 & .154 & .000 & .078 & .036 & .050 \\
    & 7 \tiny$\leq$  & \textbf{.566} & .000 & .000 & .000 & .002 & 7 \tiny$\leq$  & \textbf{.422} & .000 & .006 & .006 & .014 \\
 \hline                                    
400 & 1 & .000 & .002 & .000 & .000 & .002 & 1 & .000 & .000 & .000 & .000 & .000 \\
    & 2 & .000 & .008 & .008 & .014 & .014 & 2 & .000 & .000 & .000 & .000 & .000 \\
    & 3 & .000 & .102 & .124 & .124 & .138 & 3 & .000 & .000 & .000 & .000 & .000 \\
    & 4 & .028 & \textbf{.784} & \textbf{.472} & \textbf{.490} & \textbf{.434} & 4 & .028 & \textbf{.814} & .326 & .346 & .398 \\
    & \textbf{5} & .398 & .104 & .382 & .362 & .402 & \textbf{5} & \textbf{.458} & .182 & \textbf{.578} & \textbf{.590} & \textbf{.518} \\
    & 6 & .160 & .000 & .012 & .010 & .010 & 6 & .172 & .004 & .080 & .050 & .070 \\
    & 7 \tiny$\leq$  & \textbf{.414} & .000 & .002 & .000 & .000 & 7 \tiny$\leq$  & .342 & .000 & .016 & .014 & .014 \\
\hline                                     
600 & 1 & .000 & .000 & .000 & .000 & .000 & 1 & .000 & .000 & .000 & .000 & .000 \\
    & 2 & .000 & .000 & .000 & .000 & .002 & 2 & .000 & .000 & .000 & .000 & .000 \\
    & 3 & .000 & .000 & .008 & .008 & .026 & 3 & .000 & .000 & .000 & .000 & .000 \\
    & 4 & .016 & \textbf{.736} & .330 & .324 & .362 & 4 & .004 & \textbf{.636} & .230 & .216 & .302 \\
    & \textbf 5 & \textbf{.544} & .264 & \textbf{.654} & \textbf{.652} & \textbf{.584} & \textbf{5} & \textbf{.604} & .362 & \textbf{.668} & \textbf{.714} & \textbf{.616} \\
    & 6 & .150 & .000 & .008 & .016 & .024 & 6 & .152 & .002 & .076 & .062 & .068 \\
    & 7 \tiny$\leq$  & .290 & .000 & .000 & .000 & .002 & 7 \tiny$\leq$  & .240 & .000 & .026 & .008 & .014 \\
\hline                                     
800 & 1 & .000 & .000 & .000 & .000 & .000 & 1 & .000 & .000 & .000 & .000 & .000 \\
    & 2 & .000 & .000 & .000 & .000 & .000 & 2 & .000 & .000 & .000 & .000 & .000 \\
    & 3 & .000 & .000 & .002 & .000 & .012 & 3 & .000 & .000 & .000 & .000 & .000 \\
    & 4 & .000 & \textbf{.518} & .174 & .190 & .216 & 4 & .000 & .392 & .136 & .146 & .272 \\
    & \textbf 5 & \textbf{.662} & .482 & \textbf{.808} & \textbf{.788} & \textbf{.752} & \textbf 5 & \textbf{.670} & \textbf{.606} & \textbf{.768} & \textbf{.802} & \textbf{.672} \\
    & 6 & .144 & .000 & .012 & .018 & .016 & 6 & .162 & .002 & .070 & .040 & .046 \\
    & 7 \tiny$\leq$  & .194 & .000 & .004 & .004 & .004 & 7 \tiny$\leq$  & .168 & .000 & .026 & .012 & .010 \\
\end{tabular}%
\centering
\captionsetup{justification=centering}
\caption{\label{tab:supp_normal_4} 
Order selection results for multivariate Gaussian mixture Models 4.a and 4.b, 
with true order $K_0= 5$ indicated in {\bf bold} in the second column.  
For each method and sample size, the most frequently selected order 
is indicated in {\bf bold}.
}
\end{table}

\clearpage

\begin{table}[H]
\setlength{\tabcolsep}{0.4em}
\begin{tabular}{c | l | c c c c c | l | c c c c c}
\hline
\multicolumn{1}{c|}{$n$} &
\multicolumn{6}{c}{Model 5.a} &\multicolumn{6}{|c}{Model 5.b}\\
\firsthline
& \multicolumn{1}{c|}{$\hat K_n$} & AIC & BIC & SCAD & MCP & ALasso & \multicolumn{1}{c|}{$\hat K_n$} & AIC & BIC & SCAD & MCP & ALasso \\
\cline{2-13}
200 & 1 & .000 & .006 & .000 & .000 & .000 & 1 & .000 & .154 & .052 & .046 & .000 \\
    & 2 & .000 & \textbf{.914} & \textbf{.704} & \textbf{.652} & .100 & 2 & .000 & .006 & .046 & .028 & .000 \\
    & 3 & .012 & .078 & .274 & .312 & \textbf{.540} & 3 & .000 & .184 & \textbf{.370} & .356 & .034 \\
    & 4 & .090 & .002 & .020 & .034 & .290 & 4 & .000 & .244 & .186 & .166 & .214 \\
    & \textbf 5 & .200 & .000 & .002 & .002 & .064 & \textbf 5 & .280 & \textbf{.412} & .324 & \textbf{.374} & \textbf{.642} \\
    & 6 & .140 & .000 & .000 & .000 & .004 & 6 & .146 & .000 & .022 & .030 & .082 \\
    & 7 \scriptsize$\leq$  & \textbf{.558} & .000 & .000 & .000 & .002 & 7 \scriptsize$\leq$ & \textbf{.574} & .000 & .000 & .000 & .028 \\
 \hline
400 & 1 & .000 & .000 & .000 & .000 & .000 & 1 & .000 & .000 & .000 & .000 & .000 \\
    & 2 & .000 & \textbf{.508} & .382 & .358 & .310 & 2 & .000 & .000 & .000 & .000 & .000 \\
    & 3 & .000 & .358 & \textbf{.504} & \textbf{.510} & \textbf{.530} & 3 & .000 & .000 & .030 & .024 & .004 \\
    & 4 & .010 & .100 & .074 & .086 & .108 & 4 & .000 & .068 & .042 & .038 & .046 \\
    & \textbf 5 &.408 & .034 & .040 & .046 & .052 & \textbf 5 & \textbf{.458} & \textbf{.910} & \textbf{.814} & \textbf{.840} & \textbf{.776} \\
    & 6 & .138 & .000 & .000 & .000 & .000 & 6 & .158 & .022 & .084 & .084 & .140 \\
    & 7 \scriptsize$\leq$  & \textbf{.444} & .000 & .000 & .000 & .000 & 7  \scriptsize$\leq$& .384 & .000 & .030 & .014 & .034 \\
\hline
600 &2 \scriptsize$\geq$& .000 & .090 & .092 & .086 & .134 &2\ \scriptsize$\geq$& .000 & .000 & .000 & .000 & .000 \\
    & 3 & .000 & \textbf{.392} & \textbf{.496} & \textbf{.470} & \textbf{.468} & 3 & .000 & .000 & .002 & .000 & .000 \\
    & 4 & .000 & .278 & .106 & .096 & .132 & 4 & .000 & .002 & .010 & .012 & .028 \\
    & \textbf 5 & \textbf{.550} & .240 & .306 & .346 & .262 & \textbf 5 & \textbf{.604} & \textbf{.964} & \textbf{.834} & \textbf{.874} & \textbf{.828} \\
    & 6 & .146 & .000 & .000 & .002 & .004 & 6 & .120 & .032 & .114 & .080 & .086 \\
    & 7 \scriptsize$\leq$  & .304 & .000 & .000 & .000 & .000 & 7 & .276 & .002 & .040 & .034 & .058 \\
\hline
800 & 2 \scriptsize$\geq$ & .000 & .002 & .006 & .002 & .012 &2\ \scriptsize$\geq$& .000 & .000 & .000 & .000 & .000 \\
    & 3 & .000 & .122 & .218 & .188 & .224 & 3 & .000 & .000 & .000 & .000 & .000 \\
    & 4 & .000 & .228 & .084 & .058 & .082 & 4 & .000 & .000 & .000 & .000 & .012 \\
    & \textbf 5 & \textbf{.664} & \textbf{.648} & \textbf{.682} & \textbf{.744} & \textbf{.676} & \textbf 5 & \textbf{.718} & \textbf{.980} & \textbf{.830} & \textbf{.858} & \textbf{.824} \\
    & 6 & .116 & .000 & .010 & .008 & .006 & 6 & .104 & .020 & .128 & .088 & .120 \\
    & 7 \scriptsize$\leq$  & .220 & .000 & .000 & .000 & .000 & 7\ \scriptsize$\leq$  & .178 & .000 & .042 & .054 & .044 \\
\hline
\end{tabular}%
\centering
\captionsetup{justification=centering}
\caption{\label{tab:supp_normal_5} 
Order selection results for multivariate Gaussian mixture Models 5.a and 5.b, 
with true order $K_0= 5$ indicated in {\bf bold} in the second column.  
For each method and sample size, the most frequently selected order 
is indicated in {\bf bold}.
} 
\end{table}

\clearpage

\subsection*{E.3. Simulation Results for Section \ref{sensit}}
In this section, we report detailed simulation results
for the sensitivity analyses performed
in Section \ref{sensit} of the paper.

 \begin{table}[H]
{\small
\begin{tabular}{c | c c c c c c c c c c c c c }
\firsthline
$\hat K_n$ & 2 & 3 & 4 & 5 & 6 & 7 & 8 & 9 & 10 & 11 & 12 & 13 \\
\hline
2 & \bf 1.000 & .0000 & .0000 & .0000 & .0000 & .0000 & .0000 & .0000 & .0000 & .0000 & .0000 & .0000 \\
3 & .0000 &  \bf 1.000 & .0625 & .0000 & .0000 & .0000 & .0000 & .0000 & .0000 & .0000 & .0000 & .0000 \\
\bf  4 & .0000 & .0000 &  \bf .9375 &  \bf .9375 &  \bf .9000 &  \bf .8625 &  \bf .9125 &  \bf .8500 & \bf  .8125 &  \bf .8375 &  \bf .8125 &  \bf .7875 \\
5 & .0000 & .0000 & .0000 & .0625 & .0875 & .1000 & .0750 & .1375 & .1625 & .1250 & .1500 & .1875 \\
6 & .0000 & .0000 & .0000 & .0000 & .0125 & .0375 & .0125 & .0125 & .0250 & .0375 & .0375 & .0250 \\
\hline
$\hat K_n$ & 14 & 15 & 16 & 17 & 18 & 19 & 20 & 21 & 22 & 23 & 24 & 25\\
\hline
\bf 4 & \bf  .7625 &  \bf .7625 &  \bf .8125 &  \bf .7250 & \bf  .7625 &  \bf .7375 & \bf  .7000 &  \bf .8125 & \bf  .7375 & \bf  .7500 &  \bf .6375 &  \bf .7000 \\
5 & .2125 & .2000 & .1500 & .2250 & .1750 & .1875 & .2625 & .1500 & .2375 & .1875 & .3125 & .2750 \\
6 & .0250 & .0375 & .0250 & .0375 & .0625 & .0625 & .0250 & .0375 & .0250 & .0625 & .0500 & .0250 \\
7 & .0000 & .0000 & .0125 & .0125 & .0000 & .0125 & .0125 & .0000 & .0000 & .0000 & .0000 & .0000 \\
    \end{tabular}}%
\centering
\captionsetup{justification=centering}
\caption{\label{tab:supp_multiResults1} Sensitivity Analysis
for Multinomial Model 3
with true order $K_0= 4$ indicated in {\bf bold} in the first column,
and with respect to the bounds $K=2, \dots, 25$.  
For each bound, the most frequently selected order 
is indicated in {\bf bold}.}
\end{table}

\begin{table}[H]
{\small
\begin{tabular}{c |  cc c c c c c c c c c c c c }
\firsthline
$\hat K_n$ & 2 & 3 & 4 & 5 & 6 & 7 & 8 & 9 & 10 & 11 & 12 & 13 \\
\hline
2 &  \bf 1.000 & .0000 & .0000 & .0000 & .0000 & .0000 & .0000 & .0000 & .0000 & .0000 & .0000 & .0000 \\
3 & .0000 &  \bf 1.000 & .0000 & .0000 & .0000 & .0000 & .0000 & .0000 & .0000 & .0000 & .0000 & .0000  \\
4 & .0000 & .0000 & \bf  1.000 & .0750 & .0250 & .0000 & .0000 & .0000 & .0000 & .0000 & .0000 & .0000  \\
5 & .0000 & .0000 & .0000 &  \bf .9250 & .3750 & .1125 & .1000 & .0750 & .0250 & .0625 & .0625 & .1125  \\
\bf 6 & .0000 & .0000 & .0000 & .0000 &  \bf .6000 &  \bf .8250 & \bf  .7250 &  \bf .7500 & \bf  .7500 & \bf  .7250 & \bf  .7000 &  \bf .6750  \\ 
7 & .0000 & .0000 & .0000 & .0000 & .0000 & .0625 & .1750 & .1750 & .1875 & .1875 & .2000 & .1875  \\
8 & .0000 & .0000 & .0000 & .0000 & .0000 & .0000 & .0000 & .0000 & .0375 & .0250 & .0250 & .0250  \\
9 & .0000 & .0000 & .0000 & .0000 & .0000 & .0000 & .0000 & .0000 & .0000 & .0000 & .0125 & .0000  \\
\hline
$\hat K_n$ & 14 & 15 & 16 & 17 & 18 & 19 & 20 & 21 & 22 & 23 & 24 & 25\\
\hline
5 & .0750 & .0500 & .0750 & .0875 & .0625 & .0750 & .0500 & .0625 & .0500 & .0125 & .0625 & .0250 \\      
\bf 6 &  \bf .6625 &  \bf .7125 &  \bf .6500 & \bf  .6375 & \bf  .6500 &  \bf .6375  &\bf  .6500 & \bf  .6500 &  \bf .6500 &  \bf .6375 & \bf  .6750 & . \bf 6750 \\      
7 & .2375 & .1875 & .2250 & .2125 & .2250 & .2750 & .2625 & .2500 & .2625 & .3125 & .2000 & .2500 \\       
8 & .0250 & .0500 & .0500 & .0625 & .0625 & .0125 & .0250 & .0125 & .0375 & .0375 & .0500 & .0375 \\      
9 & .0000 & .0000 & .0000 & .0000 & .0000 & .0000 & .0125 & .0250 & .0000 & .0000 & .0000 & .0125 \\      
10 & .0000 & .0000 & .0000 & .0000 & .0000 & .0000 & .0000 & .0000 & .0000 & .0000 & .0125 & .0000 \\ 
\end{tabular}}%
\centering
\captionsetup{justification=centering}
\caption{\label{tab:supp_multiResults2} Sensitivity Analysis
for Multinomial Model 5 
with true order $K_0= 4$ indicated in {\bf bold} in the first column,
and with respect to the bounds $K=2, \dots, 25$.  
For each bound, the most frequently selected order 
is indicated in {\bf bold}.}
\end{table}

\begin{table}[H]
{\small
\begin{tabular}{c| c  c c c c c c c c c c c c c }
\firsthline
$\hat K_n$ &2 & 3 & 4 & 5 & 6 & 7 & 8 & 9 & 10 & 11 & 12 & 13 \\
\hline
1 & .1500 & .0375 & .0000 & .0000 & .0000 & .0000 & .0000 & .0000 & .0000 & .0000 & .0000 & .0000 \\
2 &  \bf .8500 & .1875 & .2125 & .2250 & .2375 & .2500 & .2500 & .2750 & .2625 & .2375 & .2375 & .2500 \\ 
\bf 3 & .0000 &  \bf .7750 & \bf  .7875 & \bf  .7625 &  \bf .7500 &  \bf .7375 &  \bf .7500 &  \bf .7125 &  \bf .7375 &  \bf .7625 & \bf  .7625 &  \bf .7375 \\ 
4 & .0000 & .0000 & .0000 & .0125 & .0125 & .0125 & .0000 & .0125 & .0000 & .0000 & .0000 & .0125 \\     
\hline
$\hat K_n$ & 14 & 15 & 16 & 17 & 18 & 19 & 20 & 21 & 22 & 23 & 24 & 25\\
\hline
1 & .0000 & .0000 & .0000 & .0000 & .0000 & .0000 & .0000 & .0000 & .0000 & .0000 & .0000 & .0000 \\
2 & .2625 & .2500 & .2375 & .2500 & .2250 & .2500 & .2500 & .2750 & .2375 & .2375 & .2500 & .2625 \\ 
\bf 3 &  \bf .7375 &  \bf .7375 &  \bf .7500 &  \bf .7375 &  \bf .7625 &  \bf .7375 &  \bf .7375 &  \bf .7250 & \bf  .7625 &  \bf .7375 & \bf  .7500 &  \bf .7125 \\ 
4 & .0000 & .0125 & .0125 & .0125 & .0125 & .0125 & .0125 & .0000 & .0000 & .2500 & .0000 & .0250 \\ 
\end{tabular}}%
\centering
\captionsetup{justification=centering}
\caption{\label{tab:supp_normalResults1} Sensitivity Analysis
for Gaussian Model 3.a 
with true order $K_0= 4$ indicated in {\bf bold} in the first column,
and with respect to the bounds $K=2, \dots, 25$.  
For each bound, the most frequently selected order 
is indicated in {\bf bold}.}
\end{table}

\begin{table}[H]
{\small
\begin{tabular}{c | c c c c c c c c c c c c c c }
\firsthline
$\hat K_n$ & 2 & 3 & 4 & 5 & 6 & 7 & 8 & 9 & 10 & 11 & 12 & 13 \\
\hline
1 & .1125 & .0000 & .0000 & .0000 & .0000 & .0000 & .0000 & .0000 & .0000 & .0000 & .0000 & .0000 \\
 2 &  \bf .8875 & .0250 & .0000 & .0000 & .0000 & .0000 & .0000 & .0000 & .0000 & .0000 & .0000 & .0000 \\ 
3 & .0000 &  \bf .9750 & .1750 & .0250 & .0000 & .0000 & .0125 & .0000 & .0000 & .0000 & .0000 & .0000 \\ 
4 & .0000 & .0000 &  \bf .8250 & .2625 & .2500 & .2500 & .3375 & .2625 & .2500 & .2875 & .2875 & .3250 \\ 
\bf 5 & .0000 & .0000 & .0000 & \bf  .7125 &  \bf .7125 &  \bf .7125 &  \bf .6375 & \bf  .7000 & \bf  .7375 &  \bf .7125 &  \bf .7000 & \bf  .6500 \\ 
6 & .0000 & .0000 & .0000 & .0000 & .0375 & .0375 & .0000 & .0125 & .0125 & .0000 & .0125 & .0125 \\ 
7 & .0000 & .0000 & .0000 & .0000 & .0000 & .0000 & .0125 & .0125 & .0000 & .0000 & .0000 & .0125 \\ 
8 & .0000 & .0000 & .0000 & .0000 & .0000 & .0000 & .0000 & .0125 & .0000 & .0000 & .0000 & .0000 \\ 
\hline
$\hat K_n$ & 14 & 15 & 16 & 17 & 18 & 19 & 20 & 21 & 22 & 23 & 24 & 25\\
\hline
3 & .0125 & .0000 & .0125 & .0125 & .0000 & .0000 & .0125 & .0000 & .0000 & .0125 & .0250 & .0000 \\ 
4 & .3000 & .2875 & .2875 & .3250 & .3250 & .2500 & .2750 & .3000 & .2875 & .2500 & .2750 & .3625 \\ 
\bf 5 &  \bf .6750 &  \bf .7000 &  \bf .6875 &  \bf .6625 &  \bf .6625 &  \bf .7375 &  \bf .7125 & \bf  .6875 & \bf  .7125 & \bf  .7375 & \bf  .6750 &  \bf .6375 \\ 
6 & .0125 & .0125 & .0125 & .0000 & .0125 & .0125 & .0000 & .0125 & .0000 & .0000 & .0250 & .0000 \\     
\end{tabular}}%
\centering
\captionsetup{justification=centering}
\caption{\label{tab:supp_normalResults2} Sensitivity Analysis
for Gaussian Model 4.a 
with true order $K_0= 4$ indicated in {\bf bold} in the first column,
and with respect to the bounds $K=2, \dots, 25$.  
For bound, the most frequently selected order 
is indicated in {\bf bold}.}
\end{table}

\clearpage 

\subsection*{E.4. Simulation Results for Section \ref{sec:sim-merging}}
In this section, we report detailed   results
for the simulation study reported in Section \ref{sec:sim-merging}.

\begin{table}[H]\setlength\tabcolsep{3pt} 
\begin{tabular}{cc || m{1.2cm}m{1.2cm}| m{1.2cm}m{1.2cm} m{1.2cm} | m{1.2cm} | m{1.2cm}m{1.2cm} m{1.2cm}m{1.2cm}}
\firsthline
$n$ & $\hat K_n$ & AIC & BIC & GSF-SCAD & GSF-MCP & GSF-ALasso & GSF-Hard & MTM $c=.2$ & MTM $c=.25$ & MTM
$c=.3$ & MTM $c=.35$   \\
\hline
50 &  1 & .1625 & .3250 & .3375 & .3250 & .3250 & .3250 & .3750 & \bf.6000 & \bf.5875 & \bf.7500 \\
    & \bf 2 & \bf.7875 & \bf.6750 & \bf.6500 & \bf.6750 &\bf .6750 & \bf.6625 &\bf .3875 & .3625 & .3750 & .2375 \\
    & 3 & .0500 & .0000 & .0125 & .0000 & .0000 & .0125 & .2250 & .0375 & .0375 & .0125 \\
    & 4 & .0000 & .0000 & .0000 & .0000 & .0000 & .0000 & .0125 & .0000 & .0000 & .0000 \\
\hline                                                 
100 & 1 & .0000 & .1000 & .1250 & .1125 & .1250 & .2875 & .3875 &\bf .5250 & \bf.5125 &\bf .8125 \\
    & \bf 2 & \bf.9500 &\bf .9000 & \bf.8750 & \bf.8875 & \bf.8625 &\bf .7000 & \bf.5375 & .4750 & .4875 & .1875 \\
    & 3 & .0500 & .0000 & .0000 & .0000 & .0125 & .0125 & .0750 & .0000 & .0000 & .0000 \\
\hline                                                 
200 & 1 & .0000 & .0125 & .0125 & .0125 & .0125 & .1750 & .1375 & .1625 & .3500 &\bf .6000 \\
    & 2 &\bf .9250 &\bf .9875 & \bf.9875 & \bf.9875 &\bf .9875 &\bf .8125 & \bf.6875 &\bf .7500 &\bf .6500 & .4000 \\
    & 3 & .0750 & .0000 & .0000 & .0000 & .0000 & .0125 & .1750 & .0875 & .0000 & .0000 \\
\hline                                                 
400 & 1 & .0000 & .0000 & .0000 & .0000 & .0000 & .1125 & .0250 & .0750 & .2000 & .4875 \\
    &\bf 2 & \bf.9125 & \bf 1.000 & \bf 1.000 & \bf 1.000 & \bf 1.000 &  \bf .8625 &  \bf .5750 &  \bf .8375 &  \bf .7875 &  \bf .5125 \\
    & 3 & .0875 & .0000 & .0000 & .0000 & .0000 & .0250 & .4000 & .0875 & .0125 & .0000 \\
\hline                                                 
600 & 1 & .0000 & .0000 & .0000 & .0000 & .0000 & .0625 & .0250 & .0125 & .1125 & .3625 \\
    &  \bf 2 &  \bf .9125 &  \bf 1.000 & \bf  1.000 &  \bf 1.000 &  \bf 1.000 &  \bf .9125 &  .3625 &  \bf .8250 &  \bf .8875 &  \bf .6375 \\
    & 3 & .0875 & .0000 & .0000 & .0000 & .0000 & .0250 &  \bf .5875 & .1625 & .0000 & .0000 \\
    & 4 & .0000 & .0000 & .0000 & .0000 & .0000 & .0000 & .0250 & .0000 & .0000 & .0000 \\
\hline                                                 
800 & 1 & .0000 & .0000 & .0000 & .0000 & .0000 & .0500 & .0000 & .0250 & .1125 & .3000 \\
    &  \bf 2 &  \bf .9000 &  \bf 1.000 &  \bf 1.000 &  \bf 1.000 &  \bf 1.000 &  \bf .9125 &  \bf .5500 &  \bf .9125 &  \bf .8875 &  \bf .7000 \\
    & 3 & .1000 & .0000 & .0000 & .0000 & .0000 & .0375 & .4500 & .0625 & .0000 & .0000 \\
\end{tabular}%
\centering
\captionsetup{justification=centering}
\caption{Order selection results for multivariate Gaussian mixture Models 1.b
with known covariance matrix, and 
with true order $K_0= 2$ indicated in {\bf bold} in the second column.  
For each method and sample size, the most frequently selected order 
is indicated in {\bf bold}.}
\end{table}

\begin{table}[H]\setlength\tabcolsep{3pt} 
\begin{tabular}{cc || m{1.2cm}m{1.2cm}| m{1.2cm}m{1.2cm} m{1.2cm} | m{1.2cm} | m{1.2cm}m{1.2cm} m{1.2cm}m{1.2cm}}
\firsthline
$n$ & $\hat K_n$ & AIC & BIC & GSF-SCAD & GSF-MCP & GSF-ALasso & GSF-Hard & MTM $c=.2$ & MTM $c=.25$ & MTM
$c=.3$ & MTM $c=.35$   \\
\hline
50  & 1 & .0000 & .0000 & .0000 & .0000 & .0000 & .0125 & .0000 & .0000 & .0000 & .0000 \\
    & 2 & .0000 & .0125 & .0000 & .0000 & .0000 & .0750 & .0000 & .0000 & .0000 & .0000 \\
    & 3 & .4500 &\bf  .8500 & .3000 & \bf .4875 & \bf .5375 & \bf .5625 & .0000 &\bf  .6000 &\bf  .5875 & \bf 1.000 \\
    &\bf  4 & \bf \bf  .5000 & .1375 & \bf .6500 & \bf .4875 & .4250 & .3375 &\bf  1.000 & .4000 & .4125 & .0000 \\
    & 5 & .0500 & .0000 & .0500 & .0250 & .0375 & .0125 & .0000 & .0000 & .0000 & .0000 \\
\hline                                                            
100 & 1 & .0000 & .0000 & .0000 & .0000 & .0000 & .0125 & .0000 & .0000 & .0000 & .0000 \\
    & 2 & .0000 & .0000 & .0000 & .0000 & .0000 & .0375 & .0000 & .0000 & .0000 & .0000 \\
    & 3 & .0625 & \bf .5125 & .0250 & .1125 & .2750 & \bf .5500 & .0000 & .3000 & .2875 &\bf  .9875 \\
    & \bf 4 & \bf .8375 & .4875 & \bf .8750 & \bf .8375 & \bf .7125 & .3875 &\bf  1.000 & \bf .7000 & \bf .7125 & .0125 \\
    & 5 & .0875 & .0000 & .1000 & .0500 & .0125 & .0125 & .0000 & .0000 & .0000 & .0000 \\
    & 6 & .0125 & .0000 & .0000 & .0000 & .0000 & .0000 & .0000 & .0000 & .0000 & .0000 \\
\hline                                                            
200 & 1 & .0000 & .0000 & .0000 & .0000 & .0000 & .0125 & .0000 & .0000 & .0000 & .0000 \\
    & 2 & .0000 & .0000 & .0000 & .0000 & .0000 & .0250 & .0000 & .0000 & .0000 & .0000 \\
    & 3 & .0000 & .0500 & .0000 & .0250 & .0250 & \bf .5250 & .0000 & .0375 & .4125 & \bf .9250 \\
    & \bf 4 &\bf  .9125 & \bf .9500 & \bf .9000 & \bf .9250 & \bf .9625 & .4125 & \bf 1.000 & \bf .9625 &\bf  .5875 & .0750 \\
    & 5 & .0875 & .0000 & .0875 & .0500 & .0125 & .0250 & .0000 & .0000 & .0000 & .0000 \\
    & 6 & .0000 & .0000 & .0125 & .0000 & .0000 & .0000 & .0000 & .0000 & .0000 & .0000 \\
\hline                                                            
400 & 1 & .0000 & .0000 & .0000 & .0000 & .0000 & .0125 & .0000 & .0000 & .0000 & .0000 \\
    & 2 & .0000 & .0000 & .0000 & .0000 & .0000 & .0250 & .0000 & .0000 & .0000 & .0000 \\
    & 3 & .0000 & .0000 & .0000 & .0000 & .0000 & .3875 & .0000 & .1125 & .3750 & \bf .9250 \\
    & \bf  4 &\bf  .9375 &\bf  1.000 &\bf  .8875 & \bf .9875 &\bf  1.000 &\bf  .5500 &\bf .9875 &\bf  .8875 & \bf .6250 & .0750 \\
    & 5 & .0500 & .0000 & .0875 & .0125 & .0000 & .0250 & .0125 & .0000 & .0000 & .0000 \\
    & 6 & .0125 & .0000 & .0250 & .0000 & .0000 & .0000 & .0000 & .0000 & .0000 & .0000 \\
\hline                                                            
600 & 1 & .0000 & .0000 & .0000 & .0000 & .0000 & .0000 & .0000 & .0000 & .0000 & .0000 \\
    & 2 & .0000 & .0000 & .0000 & .0000 & .0000 & .0250 & .0000 & .0000 & .0000 & .0000 \\
    & 3 & .0000 & .0000 & .0000 & .0000 & .0000 & .3375 & .0000 & .0625 & .3500 & \bf .9000 \\
    &  \bf 4 & \bf .8375 & \bf 1.000 &\bf  .9625 &\bf  .9750 &\bf  1.000 & \bf .6000 & \bf 1.000 & \bf .9375 &\bf  .6500 & .1000 \\
    & 5 & .1250 & .0000 & .0250 & .0250 & .0000 & .0375 & .0000 & .0000 & .0000 & .0000 \\
    & 6 & .0375 & .0000 & .0125 & .0000 & .0000 & .0000 & .0000 & .0000 & .0000 & .0000 \\
\hline                                                            
800 & 1 & .0000 & .0000 & .0000 & .0000 & .0000 & .0000 & .0000 & .0000 & .0000 & .0000 \\
    & 2 & .0000 & .0000 & .0000 & .0000 & .0000 & .0000 & .0000 & .0000 & .0000 & .0000 \\
    & 3 & .0000 & .0000 & .0000 & .0000 & .0000 & .3500 & .0125 & .0250 & .3000 &\bf  .8500 \\
    &  \bf 4 &\bf  .8875 & \bf 1.000 &\bf  .8875 & \bf .9750 & \bf 1.000 & \bf .5875 & \bf .9625 & \bf .9750 & \bf .7000 & .1500 \\
    & 5 & .1125 & .0000 & .1000 & .0250 & .0000 & .0625 & .0250 & .0000 & .0000 & .0000 \\
    & 6 & .0000 & .0000 & .0125 & .0000 & .0000 & .0000 & .0000 & .0000 & .0000 & .0000 \\
\end{tabular}%
\centering
\captionsetup{justification=centering}
\caption{Order selection results for multivariate Gaussian mixture Models 2.a
with known covariance matrix, and 
with true order $K_0= 4$ indicated in {\bf bold} in the second column.  
For each method and sample size, the most frequently selected order 
is indicated in {\bf bold}.}\end{table}

\clearpage

\subsection*{E.5. Simulation Results for  
the Bayesian Method of Mixtures of Finite Mixtures}

In this section, we perform a simulation study comparing the GSF with 
the method of Mixtures of Finite Mixtures (MFM)  \citep{miller2018}, 
whereby in addition to 
the prior specification given in
\eqref{eq:overfitted_prior_pi}-\eqref{eq:overfitted_prior_theta}
of Section \ref{remark} of the paper, a prior is 
also placed on the mixture order.

We performed simulations for the MFM method using 
the publicly available implementation of \cite{miller2018}, for which two
models are available: 
\underline{l}ocation-\underline{s}cale Gaussian mixture models
with a \underline{c}onstrained diagonal
covariance matrix (MFM-LSC)
and \underline{l}ocation-\underline{s}cale Gaussian mixture models
with a \underline{u}nconstrained diagonal
covariance matrix (MFM-LSU).
We compare both of these models
to the GSF under location-Gaussian mixtures
with a common but unknown covariance matrix.
Due to the difference in the underlying model
presumed by these methods, our MFM
simulations are neither comparable to those in 
Section~\ref{sim-setting}, nor Section~\ref{sec:sim-merging}, thus we  
report the results in this supplement, 
Tables \ref{GSF-MFM1}-\ref{GSF-MFM5} below. 
  
The simulation results are based on 500 samples of sizes $n=200,400, 600, 800$. 
For the MFM, under each simulated sample, we consider 
the posterior mode as 
the estimated mixture order. 
For the MFM-LSC
 we ran the split-merge sampler
 described by \cite{miller2018}, Section~7.1.2., for conjugate priors,
for 100,000 iterations, 
including 10,000 burn-in iterations.
For the MFM-LSU, we used
the split-merge sampler described
by  \cite{miller2018}, 
Section 7.3.1., again 
based on 100,000 iterations.
We use 5,000 burn-in iterations and record
the full state of the chain only
once every 100 iterations to
reduce its memory burden.
The results in the tables below 
denote the relative frequency 
of estimated orders. We also
included the results of the GSF-SCAD from Section \ref{sim-setting}
for comparison, fitted under (LU) models.

From Table \ref{GSF-MFM1}, Model 1.a, we can see that for 
the sample size $n=200$,
in estimating the true order $K_0=2$, 
MFM-LSC outperforms the GSF-SCAD and MFM-LSU; and for sample sizes 
$n=400, 600, 800$, the three methods perform similarly. 
From Tables \ref{GSF-MFM2}--\ref{GSF-MFM5}, 
the MFM-LSU underestimates 
the true mixture order by one to three components across 
all the sample sizes $n=200,400,600,800$. On the other hand, 
the MFM-LSC that uses knowledge of the diagonal 
covariance matrix of the true underlying models, outperforms 
the GSF-SCAD in Models 2.a, 3.a, 5.a; and in  
Model 4a,
GSF-SCAD outperforms MFM-LSC.

\begin{table}[!htbp]
\begin{tabular}{c c | c c c c c c}
\firsthline
$n$ & $\widehat K_n$ & GSF-SCAD & MFM-LSC & MFM-LSU \\
\hline
200 & 1 & .236 & .000 & .400 \\
    & {\bf 2} & \bf .762 & \bf .976 & \bf .588 \\
    & 3 & .002 & .024 & .012 \\ 
\hline
400 & 1 & .012 & .000 & .018 \\
    & \textbf 2 & \bf .988 & \bf .978 & \bf .966 \\
    & 3 & .000 & .022 & .016 \\
\hline
600 & 1 & .002 & .000 & .000 \\
    & \textbf 2 & \bf .998 & \bf .986 & \bf .980 \\
    & 3 & .000 & .014 & .020 \\
\hline
800 & 1 & .000 & .000 & .000 \\
    & \textbf 2 & \bf 1.00 & \bf .986 & \bf .988 \\
    & 3 & .000 & .014 & .010  \\
    & 4 & .000 & .000 & .002  \\
\end{tabular}%
\centering
\captionsetup{justification=centering}
\caption{\label{GSF-MFM1} 
Order selection results for {\bf Gaussian Model 1.a}, with true order $K_0=2$ indicated 
in {\bf bold}  in the second column.
For each method and sample size, the most frequently selected order is 
indicated in {\bf bold}.}
\end{table}

\begin{table}[!htbp]
\begin{tabular}{c c | c c c c c c}
\firsthline
$n$ & $\widehat K_n$ & GSF-SCAD & MFM-LSC & MFM-LSU \\
\hline
200 & 1 & .012 & .000 & .072 \\
    & 2 & \bf .728 & .002 & \bf .918 \\
    & 3 & .180 & \bf .998 & .010 \\
    & \bf 4 & .078 & .000 & .000 \\
    & 5 & .002 & .000 & .000 \\
    & 6 & .000 & .000 & .000 \\
\hline
400 & 1 & .000 & .000 & .000 \\
    & 2 & \bf .356 & .000 & \bf .940 \\
    & 3 & .294 & \bf .632 & .060 \\
    & \bf 4 & .338 & .368 & .000 \\
    & 5 & .012 & .000 & .000 \\
    & 6 & .000 & .000 & .000 \\
\hline
600 & 1 & .000 & .000 & .000 \\
    & 2 & .136 & .000 & \bf .530 \\
    & 3 & .254 & .046 & .468 \\
    & \bf 4 & \bf .604 & \bf .954 & .002 \\
    & 5 & .006 & .000 & .000 \\
    & 6 & .000 & .000 & .000 \\
\hline
800 & 1 & .000 & .000 & .000 \\
    & 2 & .050 & .000 & .180 \\
    & 3 & .178 & .002 & \bf .818 \\
    & \bf 4 & \bf .766 & \bf .998 & .002 \\
    & 5 & .006 & .000 & .000 \\ 
\end{tabular}%
\centering
\captionsetup{justification=centering}
\caption{\label{GSF-MFM2} 
Order selection results for {\bf Gaussian Model 2.a}, with true order $K_0=4$ indicated 
in {\bf bold}  in the second column.
For each method and sample size, the most frequently selected order is 
indicated in {\bf bold}.} 
\end{table}

\begin{table}[!htbp]
\begin{tabular}{c c | c c c c c c}
\firsthline
$n$ & $\widehat K_n$ & GSF-SCAD & MFM-LSC & MFM-LSU \\
\hline
200 & 1 & .000 & .000 & .002 \\
    & 2 & \bf .818 & .412 & \bf .990 \\
    & \bf 3 & .182 & \bf .588 & .008 \\ 
\hline
400 & 1 & .000 & .000 & .000 \\
    & 2 & .486 & .024 & \bf .998 \\
    & \bf 3 & \bf .512 & \bf .974 & .002 \\
    & 4 & .002 & .002 & .000 \\
\hline
600 & 1 & .000 & .000 & .000 \\
    & 2 & .202 & .000 & \bf .998 \\
    & \bf 3 & \bf .796 & \bf .990 & .002 \\
    & 4 & .002 & .010 & .000 \\
\hline
800 & 1 & .000 & .000 & .000 \\
    & 2 & .058 & .000 & \bf 1.00 \\
    & \bf 3 & \bf .940 & \bf .998 & .000 \\
    & 4 & .002 & .002 & .000 \\
\end{tabular}%
\centering
\captionsetup{justification=centering}
\caption{\label{GSF-MFM3}
Order selection results for {\bf Gaussian Model 3.a}, with true order $K_0=3$ indicated 
in {\bf bold}  in the second column.
For each method and sample size, the most frequently selected order is 
indicated in {\bf bold}.} 
\end{table}

\begin{table}[!htbp]
\begin{tabular}{c c | c c c c c c}
\firsthline
$n$ & $\widehat K_n$ & GSF-SCAD & MFM-LSC & MFM-LSU \\
\hline
200 & 1 & .102 & .000 & .214 \\
    & 2 & .286 & .000 & \bf .786 \\
    & 3 & \bf .304 & .060 & .000 \\
    & 4 & .256 & \bf .888 & .000 \\
    & \bf 5 & .052 & .052 & .000 \\
\hline
400 & 1 & .000 & .000 & .000 \\
    & 2 & .008 & .000 & \bf .986 \\
    & 3 & .124 & .000 & .014 \\
    & 4 & \bf .472 & \bf .754 & .000 \\
    & \bf 5 & .382 & .244 & .000 \\
    & 6 & .012 & .002 & .000 \\
    & 7 & .002 & .000 & .000 \\
\hline
600 & 1 & .000 & .000 & .000 \\
    & 2 & .000 & .000 & \bf .764 \\
    & 3 & .008 & .000 & .236 \\
    & 4 & .330 & .494 & .000 \\
    & \bf 5 & \bf .654 & \bf .498 & .000 \\
    & 6 & .008 & .008 & .000 \\
\hline
800 & 1 & .000 & .000 & .000 \\
    & 2 & .000 & .000 & .408 \\
    & 3 & .002 & .000 & \bf .592 \\
    & 4 & .174 & .242 & .000 \\
    & \bf 5 & \bf .808 & \bf .750 & .000 \\
    & 6 & .012 & .008 & .000 \\
    & 7 & .004 & .000 & .000 \\
    & 8 & .000 & .000 & .000 \\
\end{tabular}%
\centering
\captionsetup{justification=centering}
\caption{\label{GSF-MFM4} 
Order selection results for {\bf Gaussian Model 4.a}, with true order $K_0=5$ indicated 
in {\bf bold}  in the second column.
For each method and sample size, the most frequently selected order is 
indicated in {\bf bold}.} 
\end{table}

\begin{table}[!htbp]
\begin{tabular}{c c | c c c c c c}
\firsthline
$n$ & $\widehat K_n$ & GSF-SCAD & MFM-LSC & MFM-LSU \\
\hline
200 & 1 & .000 & .000 & \bf .590 \\
    & 2 & \bf .704 & .010 & .410 \\
    & 3 & .274 & .396 & .000 \\
    & 4 & .020 & \bf .478 & .000 \\
    & \bf 5 & .002 & .116 & .000 \\
    & 6 & .000 & .000 & .000 \\
\hline
400 & 1 & .000 & .000 & .000 \\
    & 2 & .382 & .000 & \bf 1.00 \\
    & 3 & \bf .504 & .014 & .000 \\
    & 4 & .074 & .230 & .000 \\
    & \bf 5 & .040 & \bf .756 & .000 \\
    & 6 & .000 & .000 & .000 \\
\hline
600 & 1 & .000 & .000 & .000 \\
    & 2 & .092 & .000 & \bf .996 \\
    & 3 & \bf .496 & .000 & .004 \\
    & 4 & .106 & .022 & .000 \\
    & \bf 5 & .306 & \bf .974 & .000 \\
    & 6 & .000 & .004 & .000\\
\hline
800 & 1 & .000 & .000 & .000 \\
    & 2 & .006 & .000 & \bf .920 \\
    & 3 & .218 & .000 & .080 \\
    & 4 & .084 & .000 & .000 \\
    & \bf 5 & \bf .682 & \bf .994 & .000 \\
    & 6 & .010 & .006 & .000 \\
\end{tabular}%
\centering
\captionsetup{justification=centering}
\caption{\label{GSF-MFM5} 
Order selection results for {\bf Gaussian Model 5.a}, with true order $K_0=5$ indicated 
in {\bf bold}  in the second column. 
For each method and sample size, the most frequently selected order is 
indicated in {\bf bold}.}
\end{table}

 \clearpage
 
Finally, 
in Table \ref{tab-runtime} below we report 
average runtime (in seconds) per simulated sample (over 500 replications)
by the GSF-SCAD, MFM-LSC and MFM-LSU.

\begin{table}[!htbp]
\begin{tabular}{c c | c c c c c c}
\firsthline
 & &  & Average runtime (in seconds) &  \\
\hline
Model & $n$ & GSF-SCAD & MFM-LSC & MFM-LSU \\
\hline
1.a & 200 & 10.58 & 155.4 & 68.22 \\ 
    & 400 & 19.83 & 303.9 & 140.3 \\ 
    & 600 & 32.03 & 437.8 & 206.8 \\ 
    & 800 & 44.73 & 539.6 & 265.5 \\ 
\hline
2.a & 200 & 14.69 & 150.7 & 67.41 \\ 
    & 400 & 35.27 & 297.2 & 138.8 \\ 
    & 600 & 62.13 & 442.0 & 194.2 \\ 
    & 800 & 113.6 & 538.3 & 258.9 \\ 
\hline
3.a & 200 & 34.35 & 180.6 & 81.82 \\ 
    & 400 & 85.57 & 365.3 & 178.3 \\ 
    & 600 & 121.4 & 552.7 & 262.8 \\ 
    & 800 & 184.4 & 585.6 & 321.7 \\ 
\hline
4.a & 200 & 95.56 & 228.9 & 70.03 \\ 
    & 400 & 174.9 & 467.7 & 227.8 \\ 
    & 600 & 262.0 & 683.2 & 300.3 \\ 
    & 800 & 293.1 & 793.1 & 407.6 \\ 
\hline
5.a & 200 & 140.1 & 276.9 & 106.7 \\ 
    & 400 & 264.5 & 541.8 & 284.4 \\ 
    & 600 & 289.1 & 765.3 & 430.8 \\ 
    & 800 & 476.1 & 960.8 & 744.8 \\ 
\end{tabular}%
\centering
\captionsetup{justification=centering}
\caption{\label{tab-runtime}
Average runtime in seconds per-simulated sample (over 500 replications) 
for Gaussian Models 1.a-5.a.} 
\end{table}

\subsection*{E.6. Analysis of the Seeds Data}
We consider the seeds data of \cite{CHARYTANOWICZ2010}, in which 7 
geometric parameters were measured by X-Ray in 210 seeds. The seeds belong to three 
varieties: Kama, Rosa and Canadian. The number of seeds from each variety is 70, 
suggesting that the data may be modelled by a balanced mixture of three components. 
\cite{ZHAO2015} fitted a Gaussian mixture model to a standardization of this data, 
since its seven coordinates do not have the same units of measurement.
\cite{CHARYTANOWICZ2010} analyzed a projection of the data on its first two principal 
components using a gradient clustering algorithm, 
and \cite{LEE2013} fitted various mixtures of skewed distributions to two of the 
seven geometric parameters of the 
seeds, namely their asymmetry and perimeter. 
We used the GSF method 
to fit a bivariate Gaussian mixture model in mean, with common
but unknown covariance matrix, based on both of the latter approaches. 
In both cases, all three penalties of the GSF 
resulted in $\hat K= 3$ components. 
In what follows, we report the details of our analysis based on the 
approach of \cite{LEE2013}, namely by only fitting a mixture to 
the asymmetry and perimeter coordinates of the data.
A plot of this dataset is shown in Figure \ref{fig:seedData}.(a).

\begin{figure}[H] 
\begin{center}
	\begin{minipage}{0.45\textwidth}
  	  \includegraphics[width=\linewidth]{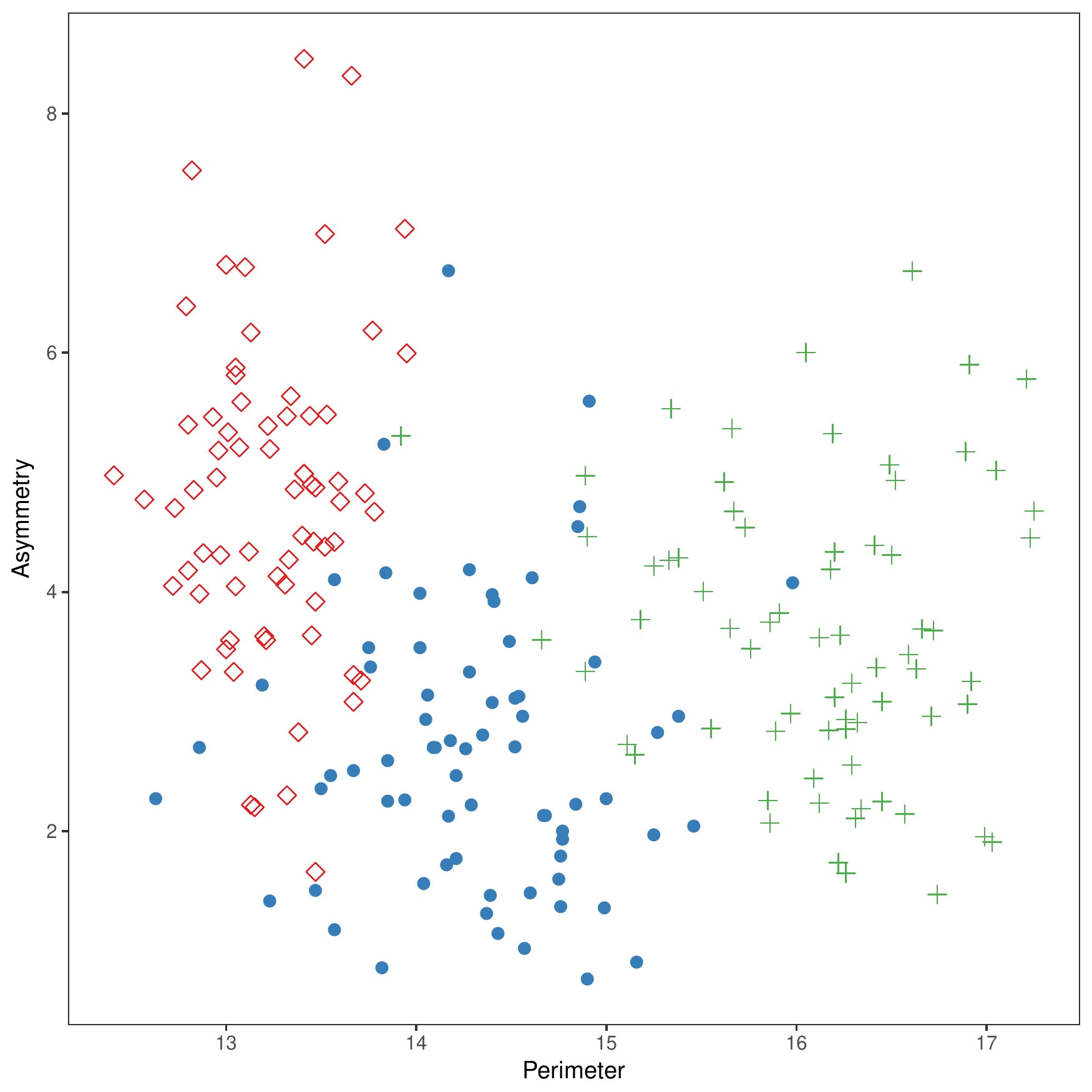}
      \caption*{(a) True Clustering.} 
    \end{minipage}
	\begin{minipage}{0.45\textwidth}
	  \includegraphics[width=\linewidth]{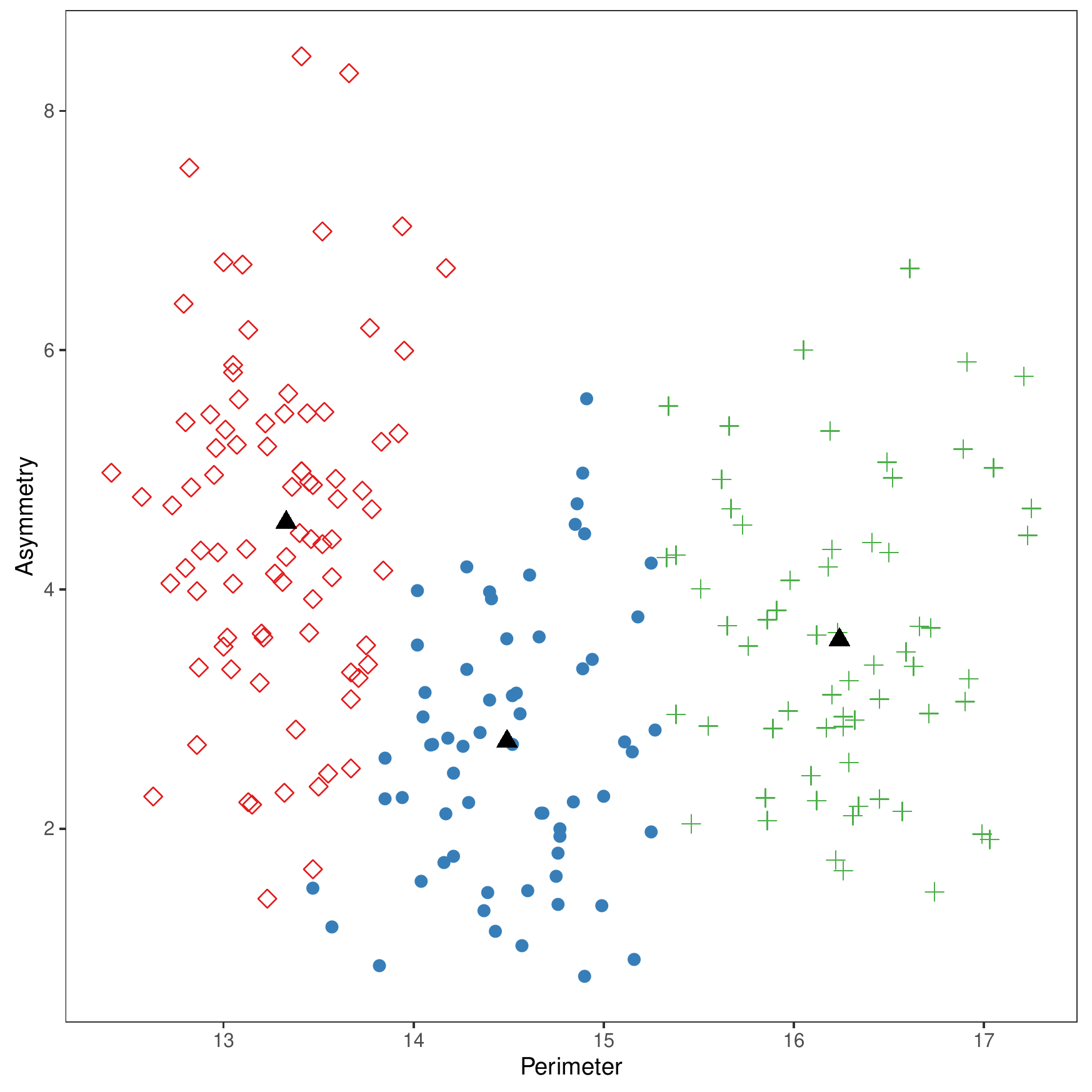}
	  \caption*{(b) Fitted Clustering.}
	 \end{minipage}
  \caption{
  Plots of the true (a) and fitted (b) clusterings of the Seeds dataset, using the GSF-MCP. The lozenges
  		   ({\color{red}$\diamond$}) indicate the Kama, the blue 
		   points ({\color{blue}$\bullet$}) indicate the Rosa
  		   and the green positive symbols ({\color{green} +}) indicate the Canadian seeds. 
		   The black triangles ($\blacktriangle$) in the right-hand plot
  		   show the means of the fitted Gaussian mixture 
		   model by the GSF method using the MCP penalty.}
\label{fig:seedData}	
\end{center}
\end{figure}

The fitted model by the GSF-MCP, 
with an upper bound $K=12$, is
\[
0.37~ {\cal N} \left({{13.33}\choose{4.56}}, \widehat \bSigma \right)
                 + 0.31 ~ {\cal N} \left({{14.50}\choose{2.73}}, \widehat \bSigma \right)   
                 + 0.32 ~ {\cal N} \left({{16.24}\choose{3.58}}, \widehat \bSigma \right)
                 \]
with 
$\widehat \bSigma = \begin{pmatrix} 0.21 & 0.04 \\ 0.04 & 1.66 \end{pmatrix}$. 
The log-likelihood value at this estimate
is -681.85, and the GSF-MCP correctly
classified 88.1\% of the data points. The corresponding 
coefficient plot is reported in Supplement E. 
We also ran the AIC, BIC and ICL, and they all selected the 
three-component model 
\[
0.40~ {\cal N} \left({{13.31}\choose{4.52}}, \widehat \bSigma \right)
                 + 0.31~ {\cal N} \left({{14.55}\choose{2.75}}, \widehat \bSigma \right)   
                 + 0.29~ {\cal N} \left({{16.29}\choose{3.58}}, \widehat \bSigma \right),
                 \]
where $ \bSigma = \begin{pmatrix} 0.20 & 0.04 \\ 0.04 & 1.70 \end{pmatrix}$.
The log-likelihood value at this estimate is given by -655.83, 
and the corresponding classification rate is $87.1\%$.

\begin{center}
\begin{figure}[H]
\centering
\vspace{-1.2in}
  \includegraphics[width=5in,height=5in]{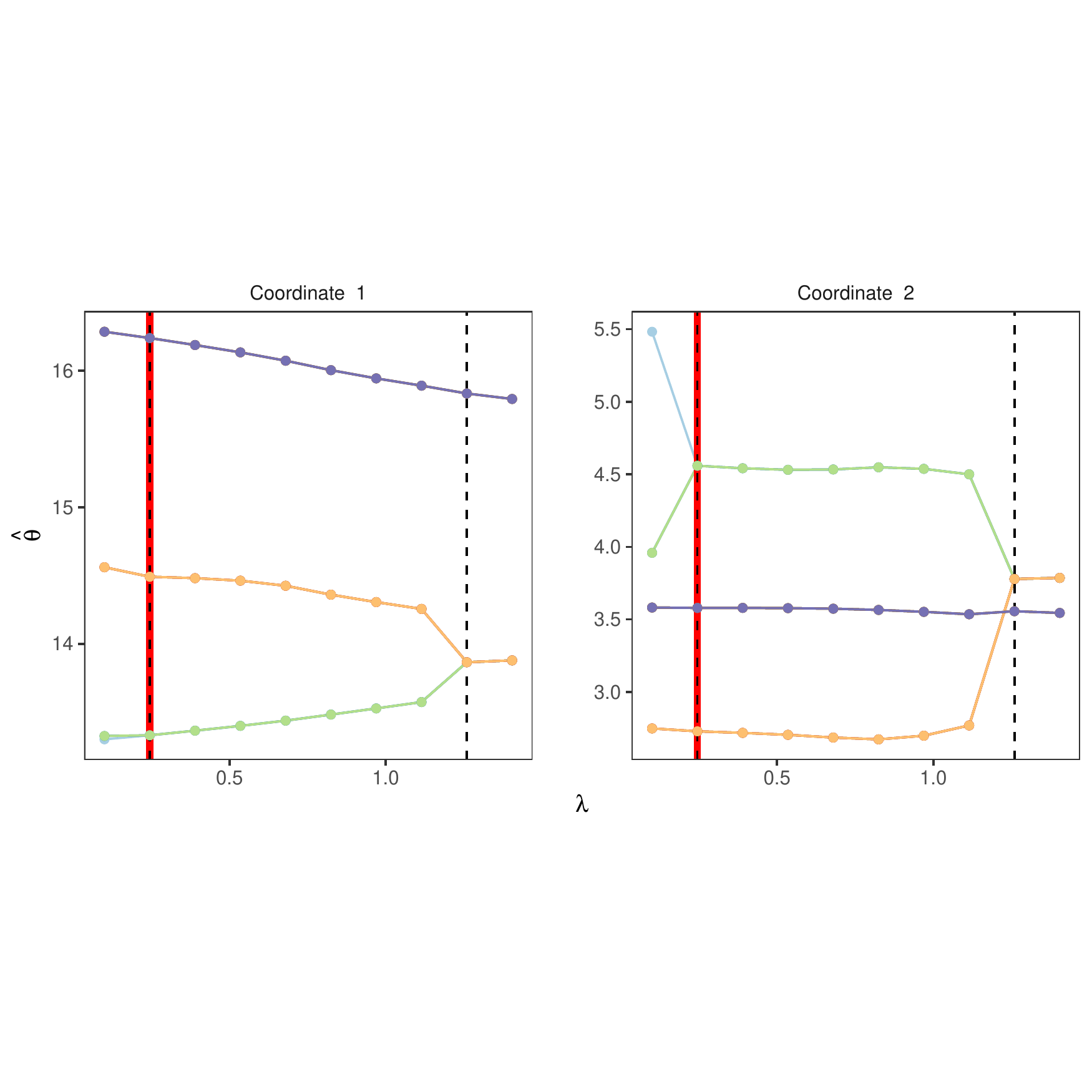}
  \vspace{-1.2in}
  \caption{Coefficient plots for the GSF-MCP on the seeds data. } 
  \label{fig:seedCoef}
\end{figure}
\end{center}
  \vspace{-0.2in}
  
\subsection*{E.7. Regularization plot based on a simulated sample}
Figure \ref{fig:tuning} shows an alternate
regularization plot 
for the  simulated sample used in Figures \ref{fig:coeff-fig1}
and \ref{fig:clusterPlot} of the paper.
\begin{figure}[H]
\centering
\includegraphics[width=.5\linewidth]
{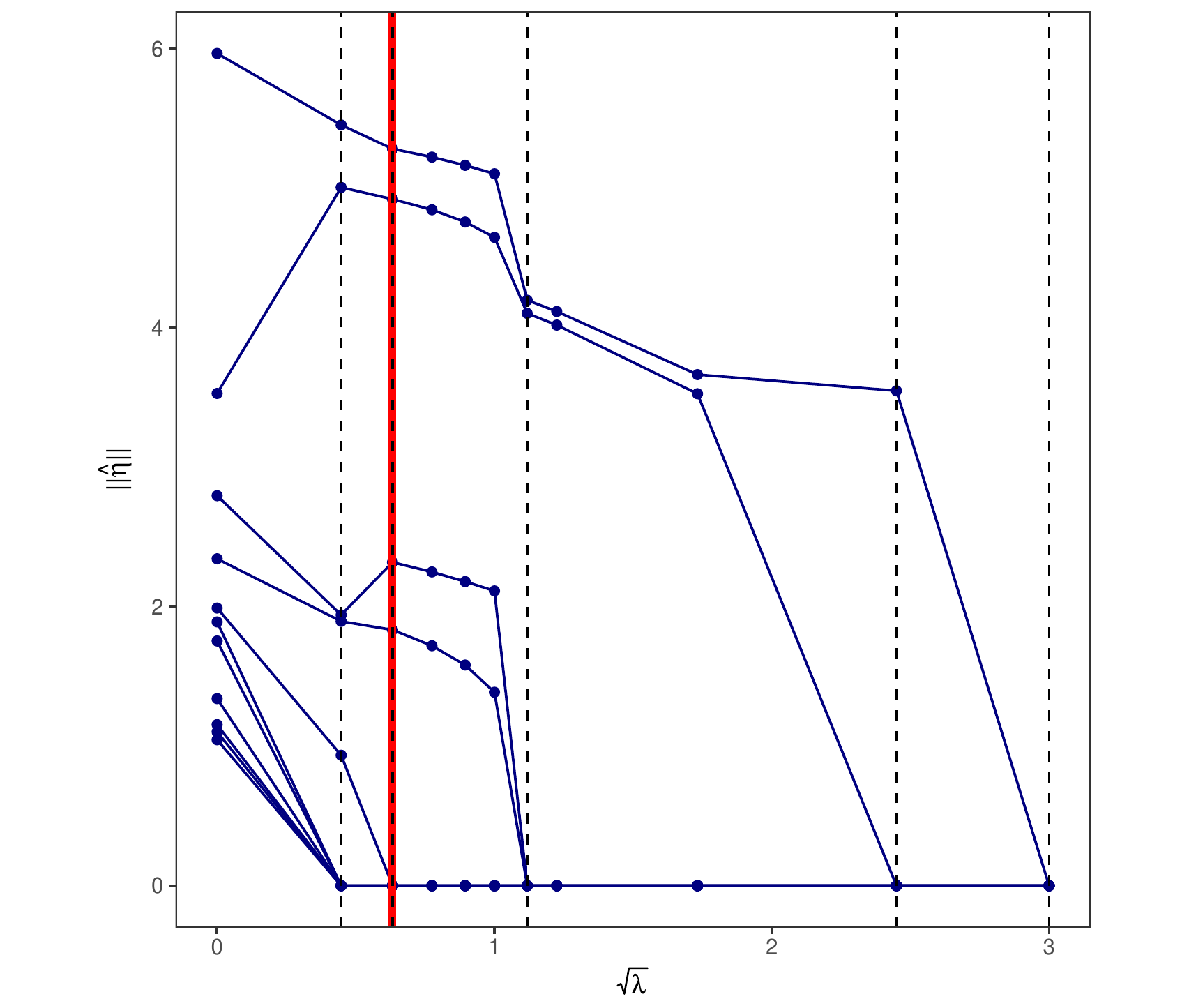}
\caption{
Regularization plot based on the same simulated data as in Figures \ref{fig:coeff-fig1}
and \ref{fig:clusterPlot} of the paper. 
The estimates $\norm{\hbfeta_{j}(\lambda)}, j=1, \dots, K-1=11$, 
are plotted against $\lambda$. The red line shows the value $\lambda^*$ chosen by the BIC. 
 Since there are four non-zero $\norm{\hbfeta_j(\lambda^*)}$, the fitted model has order $\hat K = 5$.}
\label{fig:tuning}	
\end{figure}

\clearpage

\section*{Supplement F: Comparison of the GSF and the Naive GSF}
In this section we provide Models F.1 and F.2 
cited in Figure \ref{fig:comparisonFigure} of the paper, and we elaborate
on the simulation results summarized therein. For both models, 
$\calF$ is chosen to be the family of
two-dimensional location-Gaussian densities, with common but
unknown covariance, that is
$$p_G(\by) = \sum_{j=1}^K \pi_j \frac{1}{\sqrt{(2 \pi)^d |\bSigma}|} 
	\exp \left\{- \frac 1 2 (\by- \bmu_j)^\top \bSigma^{-1} (\by- \bmu_j) \right\},$$
where $\bmu_j \in \bbR^d$, $\pi_j \geq 0$, $\sum_{j=1}^K \pi_j = 1$, 
$j=1, \dots, K$, and we choose $\bSigma = \bI_d$.				
The true mixing measure $G_0 = \sum_{j=1}^K \pi_{0j} \delta_{\mu_{0j}}$
under Models F.1 and F.2 is respectively given by
$$.5 \delta_{(-2, 0)} + .5 \delta_{(0, 1)}, \quad
\frac 1 3 \delta_{(1, 2)} + \frac 1 3  \delta_{(1, 0)} + \frac 1 3 \delta_{(-1, -1)}.$$
We implement the Naive GSF for the SCAD penalty using a modification of the EM algorithm with a
Local Quadratic Approximation (LQA) of the penalty, 
as described by \cite{FAN2001}. In this case, the M-Step of the EM algorithm admits a closed-form
solution. For fairness of comparison, we also reimplement the GSF using this numerical solution. 
All other implementation details are analogous to those listed in Section 
\hyperref[sec:impel-specific]{D.2}.

We run both the GSF and the Naive GSF on 500 samples of size $n=200$ from 
Models F.1 and F.2, for the upper bound $K$ ranging from $5$ to $30$ in increments
of 5. The simulation results are reported in Table \ref{tab:comparisonResults}.
\begin{table}[htbp]
\begin{center}
\resizebox{\textwidth}{!}{
\begin{tabular}{c c | c c c c c c | c c c c c c}
\firsthline
& &  \multicolumn{6}{c}{GSF} &\multicolumn{6}{c}{Naive GSF}\\
\cline{3-6}
\cline{7-14}
Model & $\hat{K}$ & 5 & 10 & 15 & 20 & 25 & 30 & 5 & 10 & 15 & 20 & 25 & 30 \\
\hline
F.1 & 1 & .000 & .000 & .000 & .000 & .000 & .000 & .000 & .000 & .000 & .002 & .000 & .000 \\
  &    {\bf 2} & \textbf{.964} & \textbf{.944} & \textbf{.916} & \textbf{.884} & \textbf{.896} & \textbf{.900} & \textbf{.950} & \textbf{.884} & \textbf{.858} & \textbf{.792} & \textbf{.762} & \textbf{.704} \\
  &    3 & .034 & .054 & .082 & .104 & .088 & .094 & .046 & .106 & .120 & .162 & .204 & .236 \\
  &    4 & .002 & .002 & .002 & .012 & .016 & .006 & .004 & .010 & .020 & .040 & .030 & .050 \\
  &    5 & .000 & .000 & .000 & .000 & .000 & .000 & .000 & .000 & .002 & .004 & .002 & .010 \\
  &    6 & .000 & .000 & .000 & .000 & .000 & .000 & .000 & .000 & .000 & .000 & .002 & .000 \\
\hline
F.2 &  2 & .274 & .276 & .264 & .286 & .274 & .304 & .242 & .274 & .288 & .296 & .318 & .294 \\
  &    {\bf 3} & \textbf{.690} & \textbf{.684} & \textbf{.688} & \textbf{.656} & \textbf{.664} & \textbf{.630} & \textbf{.714} & \textbf{.656} & \textbf{.616} & \textbf{.594} & \textbf{.524} & \textbf{.538} \\
  &    4 & .036 & .040 & .048 & .052 & .060 & .066 & .044 & .066 & .092 & .106 & .144 & .144 \\
  &    5 & .000 & .000 & .000 & .006 & .002 & .000 & .000 & .004 & .002 & .000 & .010 & .020 \\
  &    6 & .000 & .000 & .000 & .000 & .000 & .000 & .000 & .000 & .002 & .004 & .004 & .004 \\
\end{tabular}
}
\caption{Results of the simulation studies, for $K$ ranging from 5 to 30. \label{tab:comparisonResults}}
\end{center}
\end{table} 
 
\clearpage

\bibliography{manuscript_arxiv}

\end{document}